\pgfplotsset{compat=1.18}
\theoremstyle{plain}
\newtheorem{theorem}{Theorem}[section]
\newtheorem{lemma}[theorem]{Lemma}
\newtheorem{proposition}[theorem]{Proposition}
\theoremstyle{plain}
\newtheorem{definition}{Definition}[section] % definition numbers are dependent on theorem numbers
\newtheorem{example}[definition]{Example}
\newtheorem{remark}[definition]{Remark}
\newtheorem{observation}[definition]{Observation}
\theoremstyle{plain}
\newtheorem{assumption}{Assumption}
	\newcommand{\reals}{\mathbb{R}}
	\newcommand{\positivereals}{\tiny \reals_{+}}
\newcommand{\subalign}[1]{%
  \vcenter{%
    \Let@ \restore@math@cr \default@tag
    \baselineskip\fontdimen10 \scriptfont\tw@
    \advance\baselineskip\fontdimen12 \scriptfont\tw@
    \lineskip\thr@@\fontdimen8 \scriptfont\thr@@
    \lineskiplimit\lineskip
    \ialign{\hfil$\m@th\scriptstyle##$&$\m@th\scriptstyle{}##$\hfil\crcr
      #1\crcr
    }%
  }%
}
	\DeclareMathOperator{\OPT}{\texttt{OPT}}
	\providecommand{\given}{}
	\DeclarePairedDelimiterX{\set}[1]\{\}{\renewcommand\given{\nonscript\:\delimsize\vert\nonscript\:\mathopen{}}#1}
	\let\Pr\relax
	\DeclarePairedDelimiterXPP{\Pr}[1]{\mathbb{P}}[]{}{\renewcommand\given{\nonscript\:\delimsize\vert\nonscript\:\mathopen{}}#1}
	\DeclarePairedDelimiterXPP{\Ex}[1]{\mathbb{E}}[]{}{\renewcommand\given{\nonscript\:\delimsize\vert\nonscript\:\mathopen{}}#1}
	\newcommand{\primed}{^\dagger}
    \newcommand{\Profit}[2][]{\text{\bf Profit}\ifthenelse{\not\equal{}{#1}}{_{#1}}{}\!\left[{\def\givenn{\middle|}#2}\right]}
    \newcommand{\Cost}[2][]{\text{\bf COST}\ifthenelse{\not\equal{}{#1}}{_{#1}}{}\!\left[{\def\givenn{\middle|}#2}\right]}
    \newcommand{\CostF}[2][]{\text{\bf COST}^{\bf{F}}\ifthenelse{\not\equal{}{#1}}{_{#1}}{}\!\left[{\def\givenn{\middle|}#2}\right]}
    \newcommand{\CostC}[2][]{\text{\bf COST}^{\bf{C}}\ifthenelse{\not\equal{}{#1}}{_{#1}}{}\!\left[{\def\givenn{\middle|}#2}\right]}
	\newcommand{\Rev}[2][]{\text{\bf Rev}\ifthenelse{\not\equal{}{#1}}{_{#1}}{}\!\left[{\def\givenn{\middle|}#2}\right]}
    \newcommand{\Obj}[2][]{\text{\bf OBJ}\ifthenelse{\not\equal{}{#1}}{_{#1}}{}\!\left[{\def\givenn{\middle|}#2}\right]}
    \newcommand{\SFRP}[1]{\text{\hyperref[eq:strongly factor-revealing quadratic program]{$\mathcal{P}_{\texttt{SFR}}(#1)$}}}
    \newcommand{\WFRP}[1]{\text{\hyperref[eq:weakly factor-revealing program]{$\mathcal{P}_{\texttt{FR}}(#1)$}}}
    \newcommand{\SFRPMFLP}[1]{\text{\hyperref[eq:strongly factor-revealing program MFLP]{$\mathcal{P}_{\texttt{SFR-MFLP}}(#1)$}}}
    \newcommand{\WFRPMFLP}[1]{\text{\hyperref[eq:weakly factor-revealing program MFLP]{$\mathcal{P}_{\texttt{FR-MFLP}}(#1)$}}}
\newcommand{\LBLP}[1]{\text{\hyperref[eq:lower bound LP]{$\mathcal{P}_{\texttt{LB}}(#1)$}}}
	\newcommand{\alloc}{x}
	\newcommand{\noaccents}[1]{#1}
	\newcommand{\newagentvar}[3][\noaccents]{%
		\expandafter\newcommand\expandafter{\csname #2\endcsname}{#1{#3}}%
		\expandafter\newcommand\expandafter{\csname #2s\endcsname}{#1{\boldsymbol{#3}}}%
		\expandafter\newcommand\expandafter{\csname #2smi\endcsname}[1][i]{#1{\boldsymbol{#3}}_{-##1}}%
		\expandafter\newcommand\expandafter{\csname #2i\endcsname}[1][i]{#1{#3}\agind[##1]}%
		\expandafter\newcommand\expandafter{\csname #2ith\endcsname}[1][i]{#1{#3}_{(##1)}}%
	}
	\newcommand{\distance}{d}
\newcommand{\flow}{\tau}
\newcommand{\flowi}{\flow_{ij}}
\newcommand{\pop}{N}
\newcommand{\popi}{\pop_i}
\newcommand{\stores}{\SOL}
\newcommand{\opencost}{f}
\newcommand{\opencosti}{\opencost_i}
\newcommand{\opencosts}{\{\opencosti\}}
\newcommand{\instance}{I}
\newcommand{\SOL}{\texttt{SOL}}
\renewcommand{\OPT}{\texttt{OPT}}
\newcommand{\approxratio}{\Gamma}
\newcommand{\SFRapproxratio}{\approxratio_{\texttt{SFR}}(\discountfactor,\costscalar)}
\newcommand{\WFRapproxratio}{\approxratio_{\texttt{FR}}(\discountfactor, \costscalar)}
\newcommand{\WFRMFLPapproxratio}{\approxratio_{\texttt{FR-MFLP}}}
\newcommand{\connectioncost}{c}
\newcommand{\distanceHat}{{\distance}}
\newcommand{\costHat}{{\opencost}}
\newcommand{\distanceHatStar}{{\connectioncost}}
\newcommand{\muHat}{{\alpha}}
\newcommand{\permu}{\chi}
\newcommand{\greedyapprox}{\text{2.497}}
\newcommand{\greedyapproxMFLP}{\text{1.819}}
\newcommand{\greedyapproxLB}{2.428}
\newcommand{\Home}{\texttt{H}}
\newcommand{\Work}{\texttt{W}}
\newcommand{\TWOLFLP}{2-LFLP}
\newcommand{\edge}{e}
\newcommand{\edgehome}{\edge_\Home}
\newcommand{\edgework}{\edge_\Work}
\newcommand{\flowe}{\flow_\edge}
\newcommand{\ALG}{\texttt{ALG}}
\newcommand{\plus}[1]{{\left( #1 \right)^+}}
\newcommand{\dual}{\mu}
\newcommand{\starinstance}{S}
\newcommand{\edgeindex}{\texttt{L}}
\newcommand{\greedycounter}{\alpha}
\newcommand{\firstchanceset}{U}
\newcommand{\connectmapping}{\psi}
\newcommand{\Edge}{E}
\newcommand{\config}{\sigma}
\newcommand{\timeindex}{t}
\newcommand{\cost}{c}
\newcommand{\coststar}{\cost(\starinstance)}
\newcommand{\allocstar}{\alloc(\starinstance)}
\newcommand{\duale}{\dual(\edge)}
\newcommand{\starspace}{\mathcal{S}}
\newcommand{\singlecountedgeset}{\Edge^{(1)}}
\newcommand{\kcountedgeset}{\Edge^{(k)}}
\newcommand{\doublecountedgeset}{\Edge^{(2)}}
\newcommand{\mHat}{{m}}
\newcommand{\nHat}{{n}}
\newcommand{\qHat}{{q}}
\newcommand{\qHatBf}{\mathbf{q}}
\newcommand{\muHatBf}{\boldsymbol{\muHat}}
\newcommand{\distanceHatBf}{{\mathbf{\distance}}}
\newcommand{\distanceHatStarBf}{{\mathbf{\connectioncost}}}
\newcommand{\zerobf}{\mathbf{0}}
\newcommand{\naturals}{\mathbb{N}}
\newcommand{\discountfactor}{\gamma}
\newcommand{\nullsymbol}{\bot}
\newcommand{\hwset}{\{\Home,\Work\}}
\newcommand{\normalizefactor}{N}
\newcommand{\SFRCmonotonicity}{\texttt{(SFR.i)}}
\newcommand{\SFRCtriangleineq}{\texttt{(SFR.ii)}}
\newcommand{\SFRCcontribution}{\texttt{(SFR.iii)}}
\newcommand{\SFRCcontributionstronger}{\texttt{(SFR.iii.0)}}
\newcommand{\SFRCdistance}{\texttt{(SFR.iv)}}
\newcommand{\SFRCconnectcost}{\texttt{(SFR.v)}}
\newcommand{\SFRCtotalcost}{\texttt{(SFR.vi)}}
\newcommand{\SFRCdensity}{\texttt{(SFR.vii)}}
\newcommand{\WFRCtriangleineq}{\texttt{(FR.i)}}
\newcommand{\WFRCcontribution}{\texttt{(FR.ii)}}
\newcommand{\WFRCconnectcost}{\texttt{(FR.iii)}}
\newcommand{\WFRCtotalcost}{\texttt{(FR.iv)}}
\newcommand{\EdgeTilde}{\tilde{\Edge}}
\newcommand{\edgetoellmapping}{\kappa}
\newcommand{\empattract}{\rho}
\newcommand{\flowconstructionfactor}{\iota}
\newcommand{\costconstructionfactor}{\bar{\opencost}}
\newcommand{\zhvi}{z}
\newcommand{\CFAlg}{\texttt{2-GR}}
\newcommand{\CFAlgP}{\texttt{2-GRP}}
\newcommand{\GDH}{\texttt{GR-H}}
\newcommand{\GDW}{\texttt{GR-W}}
\newcommand{\candidatecosttext}{candidate cost}
\newcommand{\KLFLP}{$K$-LFLP}
\newcommand{\RelaxSFRapproxratio}{\approxratio_{\texttt{SFR-K}}}
\newcommand{\RelaxSFRP}[1]{\text{\hyperref[eq:relax strongly factor-revealing quadratic program]{$\mathcal{P}_{\texttt{SFR-K}}(#1)$}}}
\newcommand{\permuTGD}{\mathcal{Y}}
\newcommand{\costscalar}{\eta}
\newcommand{\argmin}{\mathop{\mathrm{arg\,min}}}
\newcommand{\argmax}{\mathop{\mathrm{arg\,max}}}
\newcommand{\prob}[2][]{\text{Pr}\ifthenelse{\not\equal{}{#1}}{_{#1}}{}\!\left[{\def\givenn{\middle|}#2}\right]}
\newcommand{\expect}[2][]{\mathbb{E}\ifthenelse{\not\equal{}{#1}}{_{#1}}{}\!\left[{\def\givenn{\middle|}#2}\right]}
\newcommand{\tparen}{\big}
\newcommand{\tprob}[2][]{\text{Pr}\ifthenelse{\not\equal{}{#1}}{_{#1}}{}\tparen[{\def\given{\tparen|}#2}\tparen]}
\newcommand{\texpect}[2][]{\mathbb{E}\ifthenelse{\not\equal{}{#1}}{_{#1}}{}\tparen[{\def\given{\tparen|}#2}\tparen]}
\newcommand{\sprob}[2][]{\text{Pr}\ifthenelse{\not\equal{}{#1}}{_{#1}}{}[#2]}
\newcommand{\sexpect}[2][]{\mathbb{E}\ifthenelse{\not\equal{}{#1}}{_{#1}}{}[#2]}
\newcommand{\indicator}[1]{{\mathbbm{1}\left\{ #1 \right\}}}
\title{Mobility Data in Operations: \\
The Facility Location Problem}
\author{Ozan Candogan\thanks{University of Chicago Booth School of Business. Email: \texttt{ozan.candogan@chicagobooth.edu}.}  \and Yiding Feng\thanks{University of Chicago Booth School of Business. Email: \texttt{yiding.feng@chicagobooth.edu}.} }
\date{}
\begin{document}

% \part{New Results}

% \input{Notes/scaled-greedy}

% \newpage

% \part{Paper}

\maketitle

\begin{abstract}

The recent large scale availability of 
mobility data, which captures individual
mobility patterns, 
poses
novel operational problems that are
exciting and
 challenging.
Motivated by this, 
we introduce and study a variant of 
the (cost-minimization) 
facility location problem 
where each individual is endowed with
two locations (hereafter, her home and work locations),
and the connection cost is the minimum distance
between any of her locations and its closest facility.
%As the main result, w
We design a 
polynomial-time algorithm whose 
approximation ratio is at most
$\greedyapprox$.
We complement this positive result by showing that
the proposed algorithm 
is at least a $\greedyapproxLB$-approximation,
and there exists no polynomial-time algorithm
with approximation ratio $2-\epsilon$
under UG-hardness.
We further extend our results and analysis to 
the model where each individual 
is endowed with $K$ locations.
% We finally evaluate the performance of our algorithms using the numerical simulations 
% on the both synthetic data and US census data.
Finally, we 
conduct numerical experiments over both 
synthetic data and
US census data
(for NYC, greater LA, greater DC, Research Triangle)
and evaluate the performance of our  algorithms.
\end{abstract}

\setcounter{page}{0}
\newpage

\section{Introduction}
\label{sec:intro}

Individual mobility patterns have a first-order impact on many operational decisions, ranging from facility location decisions to optimization of transit systems. In the past, obtaining data on individual mobility patterns was a challenging task. However, in recent years, such data have been extensively collected through mobile phones, allowing large-scale analysis. Consequently, these valuable insights have become more accessible to decision makers, thanks to the efforts of various data providers such as Safegraph and Carto \citep{Safegraph,Carto}.

% One of the main applications the data providers highlight for their data is the facility location problem\footnote{See, e.g., \url{https://carto.com/customer-stories/site-selection-asda/}}. However, there is no theoretical framework to systematically study this problem. In our paper, we fill this important and practically relevant gap in the literature, while simultaneously providing powerful new methodology and important algorithmic contributions. Specifically, we formalize the facility location problem with mobility data, shed light on how such data can improve decision making, and how valuable it is. We examine the classic algorithms used for facility location, establish their shortcomings, design new ones with theoretical guarantees, and present numerical studies. In other words, we provide a complete ``package'' for study of a practically relevant yet algorithmically nontrivial problem stemming from availability of new data sources. The ambition of this paper is to initiate the study of using mobility data to improve firm's decisions.

% One simple but useful information which can be learned from the mobility data is the most frequently visited places for each individual.

Due to privacy concerns and data limitations, many mobility datasets (e.g., see \citealp{Carto-ODMatrix}) only record the most frequently visited places by anonymous individuals, such as their home and work locations. Decision makers can use this information to improve their decision making.
One of the main applications emphasized by data providers for their mobility data is the facility location problem. An illustrative example can be found in the case of ASDA, one of the largest British supermarket chains, which leverages mobility data to make informed decisions about new facility selections \citep{Carto-ASDA}. 
Other practical examples include the last-mile delivery market, e.g., \citet{LT-22} incorporate the mobility data and build the network of public lockers in Singapore;
and the design of the bike sharing systems, for example \citet{AAFDB-21} uses mobility data to study bike sharing stations in Lisbon.

In our paper, we aim to introduce new methodologies and important algorithmic contributions. Specifically, our research undertakes the task of formalizing the facility location problem using mobility data, providing valuable insights into how such data can enhance decision-making processes and its overall significance. We thoroughly examine the shortcomings of traditional algorithms commonly employed for facility location, propose new algorithms supported by theoretical guarantees, and present extensive numerical studies.
In essence, our work delivers a comprehensive \emph{package} for the study of a challenging problem that arises from the availability of new data sources. The ambition of this paper is to initiate the study of using mobility data to improve firm's decisions.

\subsection{Our contributions and techniques}

We introduce and study the 
\emph{2-location facility location problem (\TWOLFLP)}:
A decision maker  wants to choose locations of his
facilities to minimize
%the total cost of the society --
the sum of (i) facility opening costs and (ii) 
each individual's
``connection cost''  to a facility.
In the \TWOLFLP,
each individual is endowed with 2 locations, and her 
connection cost is the \emph{minimum} of the distances
between any of her locations
and its closest facility.\footnote{The connection cost formulation 
in this paper is relevant to applications such as 
retail, food bank, COVID testing, vaccine, and dynamic fulfillment problem.}
% When every individual's endowed locations are all identical, 
% our problem recovers to the classical (single-location) 
% facility location problem.
The majority part of this paper focuses on the \TWOLFLP,
and refers to each individual's endowed locations as 
her \emph{home} and \emph{work} locations.
By restricting attention to instances where the home location 
matches the work location for every individual,
the \TWOLFLP\ recovers the classical (single-location) metric facility location problem (MFLP).
Our main result is the following:
% The main contribution of our paper is the following result.
% \vspace{0.1cm}

\begin{displayquote}
\emph{\textbf{(Main Result)} For the \TWOLFLP, we propose 
a polynomial-time algorithm (\Cref{alg:greedy modified}) that obtains a constant approximation 
ratio of $\greedyapprox$.
% Furthermore, the algorithm admits a natural extension (\Cref{alg:K greedy modified})
% with an approximation ratio of $1.836K$
% for the \KLFLP.
}
\end{displayquote}
% \vspace{0.1cm}
% \noindent 

In all versions of the facility location problem (e.g., \TWOLFLP, MFLP), it is natural to consider a greedy algorithm:
(i) initialize all individual in the \emph{unconnected} state;
then
(ii) iteratively pick the most cost-effective choice for unconnected individuals,
i.e., either connect an unconnected individual 
to a facility that is already open,
or 
open a new facility and connect a subset of unconnected individuals
to this new facility.
In fact,
\citet{JMMSV-02} shows that this greedy algorithm, hereafter
 the JMMSV algorithm,
is a 1.861-approximation for the MFLP.
We show that
the JMMSV algorithm no longer attains
a constant-approximation guarantee
in the \TWOLFLP.\footnote{\Cref{example:JMMSV failure} shows that
the JMMSV algorithm is an $\Omega(\log(n))$-approximation 
where $n$ is the number of total locations.}
Loosely speaking, the JMMSV algorithm fails to achieve a constant approximation ratio
since
the algorithm iteratively and  greedily connects each unconnected individual to a facility  \emph{based solely on one of her locations}. However, since each individual in the \TWOLFLP\ is endowed with two locations, the location used for the initial connection may (ex post) turn out to be quite suboptimal, thereby increasing overall costs substantially.
% Consequently, 
% iteratively picking choices without considering 
% previous connections
% could  lead to substantially significant costs.

Motivated by the JMMSV algorithm and its failure in the \TWOLFLP, 
we propose a new parametric family of algorithms,
referred to as the \emph{2-Chance Greedy Algorithm},
which generalizes the JMMSV algorithm 
by allowing
%as follows:
%throughout the process,
each individual to be connected \emph{twice} (once
through the home and once through the work location).
This algorithm has two tuning parameters
%is parameterized by a discount factor~
$\discountfactor\in[0, 1]$ and $\costscalar \in\reals_+$.
The discount factor $\discountfactor$ discounts the cost improvement obtained by connecting individuals who were previously connected to a facility to a new one.
Since each individual can be connected twice, the 2-Chance Greedy Algorithm may open more facilities than expected and lead to a high facility opening cost. To handle this, the opening cost scalar $\costscalar$ controls the tradeoff between the facility opening cost and the individual connection cost in the algorithm.
% controls the impact magnitude of partially connected individuals
% (i.e., individuals who have been connected once so far)
% for opening new facilities in the process.
When the discount factor $\discountfactor$
is set to zero and opening cost scalar $\costscalar$ is set to one, 
our algorithm recovers the JMMSV algorithm, but by choosing $\discountfactor>0$ appropriately, substantial cost savings can be achieved.

To characterize the performance  
of the 2-Chance Greedy Algorithm 
 with discount factor $\discountfactor$ and opening cost scalar $\costscalar$,
we design a family of \emph{strongly factor-revealing quadratic programs} 
$\{\SFRP{\nHat,\discountfactor,\costscalar}\}_{\nHat\in\naturals}$.
We show that for any parameter $n$ the 
optimal objective of the corresponding program 
upper bounds the 
 approximation ratio achieved by our 
 algorithm.
 Using this result and setting $n=2$ 
 we obtain
 a loose analytical upper bound 
 of 
 $\sfrac{6(1+\discountfactor)}{\discountfactor^2}$
 for our algorithm.
% for the 
% 2-Chance Greedy Algorithm
%with any $\discountfactor\in[0,1]$ by 
Numerically solving our program we show that the optimal objective value of $\SFRP{25, 1, 2}$ is upper bounded by 
$\greedyapprox$, thereby yielding an  approximation ratio of $\greedyapprox$ for the 2-Chance Greedy Algorithm
with $\discountfactor = 1$ and $\costscalar = 2$, as also stated above. 

One implication of our approximation ratio result is as follows. In practice, the social planner can implement the 2-Chance Greedy Algorithm by varying the discount factor $\discountfactor$ and the opening cost scalar $\costscalar$, and then select the solution with the minimum cost.
Our theoretical result ensures that there exists at least one parameter assignment (i.e., $\discountfactor = 1$ and $\costscalar = 2$) with an approximation ratio of $\greedyapprox$. Nonetheless, it is possible that for the particular instance of the social planner, a better solution can be found by varying $\discountfactor$ and $\costscalar$.

%  In turn, this ensures that
% with $\discountfactor$ is upperbounded by 
% the \emph{infimum} of program~\ref{eq:strongly factor-revealing quadratic program}
% over all $\nHat\in\naturals$.
% Consequently, 
% we upperbound the approximation ratio of the 2-Chance Greedy Algorithm
% with any $\discountfactor\in[0,1]$ by $\sfrac{6(1+\discountfactor)}{\discountfactor^2}$
% through a simple but loose analytical upperbound of program~$\SFRP{2,\discountfactor}$.
% Meanwhile, with the support of software Gurobi, 
% we numerically evaluate $\SFRP{50, 1} \leq 3.103$ and thus obtain
% the approximation ratio of 3.103 for the 2-Chance Greedy Algorithm
% with $\discountfactor = 1$ illustrated as the main result above.
% Our approach gracefully extends to $K>2$, and establishes that
%  the extension of the 2-Chance Greedy Algorithm
% (hereafter, \emph{$K$-Chance Greedy Algorithm}
% where each individual can be connected $K$ times)
% attains an approximation ratio of $1.836K$ in the \KLFLP.

It is worth highlighting that 
our results diverge from 
the prior works \citep[e.g.,][]{JMMSV-02,MYZ-06}
in the literature where
the approximation ratios are upperbounded by the \emph{supremum}
of some families of factor-revealing programs.
Consequently, unlike prior works which need to
analyze their factor-revealing programs for \emph{all} parameters,
in this work
evaluating program~\ref{eq:strongly factor-revealing quadratic program}
for an \emph{arbitrary} $\nHat\in\naturals$
provides an  upperbound on the approximation ratio (and the best upper bound can be obtained by taking infimum over $n$).
We refer to program~\ref{eq:strongly factor-revealing quadratic program}
as the \emph{strongly} factor-revealing quadratic program 
to emphasize this distinction.

\paragraph{New analysis framework: primal-dual \& strongly factor-revealing quadratic program.}
We rely on a primal-dual analysis to prove the bounds on 
the approximation ratio of the 2-Chance Greedy Algorithm.
Similar to prior work, we consider a linear programming relaxation 
of the optimal solution and its dual program.
In this dual program,
there is a non-negative dual variable 
associated with each individual.
The dual objective function 
is simply the sum of all these dual variables,
and dual constraints impose restrictions on
the sum of dual variables for every subset of individuals.
The goal of the primal-dual method 
is to construct a dual assignment 
such that (i) its objective value 
is weakly larger than the total cost 
of the solution outputted by the 2-Chance Greedy Algorithm;
and (ii) every dual constraint
is approximately satisfied.
To achieve this goal, loosely speaking,
we decompose the total cost from the algorithm over individuals,
and let each dual variable take the value equal to the  cost 
of its corresponding individual.
In this way, (i) is satisfied automatically.
We then establish a set of structural properties
of feasible execution paths generated 
by the 2-Chance Greedy Algorithm (\Cref{lem:structural lemma}),
and  use those 
to characterize 
the approximation factor of the dual constraints.
Formally, this yields an approximation ratio for our algorithm over  \emph{supremum} of  optimal objectives of factor-revealing programs \{\ref{eq:weakly factor-revealing program}\}.
%over all its parameter $(\mHat, \permu)$.
Finally, we upperbound the 
optimal objective value of
program~\ref{eq:weakly factor-revealing program}
in terms of the \emph{infimum} of 
\emph{strongly factor-revealing quadratic programs} \{\ref{eq:strongly factor-revealing quadratic program}\}.

Our method 
is (at least superficially) similar 
to the primal-dual analysis plus factor-revealing program
for the JMMSV algorithm in the MFLP \citep{JMMSV-02}.
However, there are two important differences in our setting that  make the analysis of the 2-Chance Greedy Algorithm challenging,
and differentiate our analysis from prior work
:
(i) \emph{Violation of triangle inequality:}
in the \TWOLFLP,
though the distance over locations in a metric space
satisfies the triangle inequality,
the connection cost over individuals
may not satisfy the triangle inequality,\footnote{Namely, 
suppose an individual $i$ has a low connection cost to a facility $j$ close to her home location,
and another individual~$i'$ is ``close'' to individual $i$ since they share the same work location.
Nonetheless, this does  not guarantee that 
the connection cost of individual $i'$ to facility $j$ is low as well.}
(ii) \emph{Non-monotonicity of decomposed costs:} 
the decomposed cost for each individual 
is not monotone increasing with respect to the time when each individual 
is connected (in particular, her second connection) in the 2-Chance Greedy Algorithm. 
Notably, while challenge (ii) is 
an issue for the 2-Chance Greedy Algorithm
(and possibly for greedy-style algorithms in general),
challenge (i) is an issue for the \TWOLFLP\ itself (regardless of the algorithms).
Both challenges
have been crucial for deriving structural properties 
of greedy  algorithms and their variants (including the JMMSV algorithm),
as well as obtaining
some families of factor-revealing \emph{linear} program 
that can be analytically evaluated
in prior work.

To address challenge (i), 
we replace a structural property 
induced by the triangle inequality 
over individuals in prior work 
with a weaker version induced 
by the triangle inequality over locations.
Combining this weaker structural property 
and other structural properties of the 2-Chance Greedy Algorithm,
we upperbound the approximation ratio of the algorithm in terms of  the supremum of a family of factor-revealing programs~\{\ref{eq:weakly factor-revealing program}\}.
These programs are  \emph{non-linear/non-quadratic/non-convex}, and thus hard to  solve.
To side step this difficulty, 
we further upperbound this supremum of
programs~\{\ref{eq:weakly factor-revealing program}\}
through a strongly  factor-revealing 
quadratic program~\ref{eq:strongly factor-revealing quadratic program},
which can be
both analytically analyzed (albeit leading to loose bounds) and numerically computed.\footnote{(Non-convex) quadratic programs are supported by optimization solvers
such as Gurobi, Matlab fmincon; 
or bounded
through semidefinite relaxations and then solved by common 
SP solver such as Mosek, SDPA, CSDP, SeDuMi.}

The concept of strongly factor-revealing program was originally introduced by \citet{MY-11} for the competitive ratio analysis 
for online bipartite matching with random arrivals.
To the best of our knowledge, all previous works 
with strongly factor-revealing programs
start with a family of factor-revealing \emph{linear} programs,
consider some ad-hoc relaxations, and then use
a \emph{naive batching argument} to compress multiple variables in 
the original linear program into a single variable in the relaxed linear program.
Invoking the linearity of both programs, 
the feasibility of 
the constructed solution from the naive batching argument 
is guaranteed.
As a warm-up exercise, in this paper we  illustrate how to 
use such a naive batching argument to obtain 
a strongly factor-revealing linear program and reprove the $\greedyapproxMFLP$
for the JMMSV algorithm
in the MFLP (\Cref{sec:strongly factor-revealing program MFLP}).
However, as we mentioned above, the factor-revealing program~\ref{eq:weakly factor-revealing program}
for the 2-Chance Greedy Algorithm in the \TWOLFLP\ is 
non-linear/non-quadratic/non-convex, and thus the naive batching argument fails
(\Cref{example: failure of naive batching}).
Yet, we introduce a new 
\emph{solution-dependent batching argument},
which enables us to obtain the strongly factor-revealing quadratic program~\ref{eq:strongly factor-revealing quadratic program}.
%See \Cref{sec:strongly factor-revealing program 2-LFLP}
%for more details.
Given the popularity of using factor-revealing programs in the algorithm design literature, we believe our novel solution-dependent batching idea
%for 
%deriving the strongly factor-revealing program 
might be of 
independent interest.

\paragraph{Approximation hardness.}
We complement our main results with two hardness results.

Our first hardness result is the existence of 
a \TWOLFLP\ instance 
such that 
the approximation ratio of the 2-Chance Greedy Algorithm 
with discount factor $\discountfactor = 1$ and opening cost scalar $\costscalar = 2$
is at least $\greedyapproxLB$.
We obtain this hardness result with a three-step approach.
First, we construct a linear program~$\LBLP{\nHat}$
by introducing additional constraints into 
the strongly 
factor-revealing quadratic program~$\SFRP{\nHat, 1, 2}$.
Second, we argue that the optimal solution in program~$\LBLP{\nHat}$
can be converted into a \TWOLFLP\ instance where the 
approximation ratio 
of the algorithm
% 2-Chance Greedy Algorithm with $\discountfactor = 1$
equals to the optimal objective 
of program~$\LBLP{\nHat}$. 
We then numerically evaluate program~$\LBLP{500}$
and obtain $\greedyapproxLB$ as an approximation lower bound.\footnote{We 
conjecture that the gap between lower bound $\LBLP{\nHat}$ and upper bound $\SFRP{\nHat, 1, 2}$ decreases to zero as $\nHat$ goes to infinity.
% supremum of the optimal objective value 
% in program~$\LBLP{\nHat}$ over
% all $\mHat$ 
% equals to the 
% infimum of the optimal objective value in strongly-factor revealing program~$\SFRP{\nHat, 1}$,
% and thus our approximation analysis for the 2-Chance Greedy Algorithm
% with $\discountfactor = 1$ is indeed tight (up to numerically evaluation error).
}

Our second hardness result is that 
no polynomial-time algorithm
can obtain a approximation ratio of $2-\epsilon$ 
under UG-hardness \citep{kho-02} in the \TWOLFLP.
This approximation hardness result is directly implied
by the same hardness result in 
the vertex cover problem \citep{KR-08},
since the former
generalizes the latter:
consider each vertex/edge in the latter as a location/individual in the former,
and let distances between locations be infinite.
Notably, the 2-Chance Greedy Algorithm recovers
the classic dual-fitting algorithm
for the vertex cover problem instances and attains an approximation ratio of 2 for the vertex cover instances.
% This approximation hardness result also indicates the necessity of linear
% dependence on $K$ for our algorithm in the \KLFLP.

\paragraph{Extension to $K$-location FLP.}
Our model admits a natural extension, which we refer to as the \emph{$K$-location facility location problem (\KLFLP)}. 
In \KLFLP, each individual is endowed with $K$ locations, e.g., her top $K$ most frequently visited places in the mobility data. The individual's connection cost is the minimum of the distances of any of her $K$ locations and its closest facility.
For this extension model, we introduce the $K$-Chance Greedy Algorithm (\Cref{alg:K greedy modified}).
It is a natural generalization of the 2-Chance Greedy Algorithm, where each individual can be connected at most $K$ times through each of her locations. 
By properly picking the parameters of the $K$-Chance Greedy Algorithm and invoking a similar analysis, we show that the approximation ratio of the algorithm is at most the infimum of the strongly quadratic factor-revealing programs \{\ref{eq:relax strongly factor-revealing quadratic program}\} over $\nHat\in\naturals$.
We summarize these upper bounds of the approximation ratio for $K$ between 1 and 20 in \Cref{table:approx ratio KFLP}. For example, for $K = 3, 4, 5$, the upper bounds of the approximation ratio are 3.538, 4.58, 5.611, respectively. For $K \geq 21$, the upper bound of the approximation ratio is $1.059K$.
On the other hand, using the reduction from the vertex cover problem for $K$-uniform hypergraph \citep{KR-08}, we show that there exists no polynomial-time algorithm 
with approximation ratio better than $K - \epsilon$ under UG-hardness. Therefore, the linear dependence of $K$ in our approximation ratio guarantee of the $K$-Chance Greedy Algorithm is necessary and order optimal.\footnote{We 
conjecture that the approximation ratio upper bound \ref{eq:relax strongly factor-revealing quadratic program} converges to $K + o(1)$ as $\nHat$ goes to infinity, and thus the $K$-Chance Greedy Algorithm attains asymptotic optimal approximation ratio of $K + o(1)$.}

\paragraph{Numerical simulations.}
We provide a numerical justification for the performance 
of the 2-Chance Greedy Algorithm.
We construct numerical experiments over both randomly-generated 
synthetic data and the US census data. 
For the latter, we construct \TWOLFLP\ instances 
for four cities in the US:
New York City (NYC), Los Angeles metropolitan area (greater LA), Washington metropolitan area
(greater DC), and Raleigh-Durham-Cary CSA (Research Triangle). 
% See \Cref{sec:numerical} for more details.

% \begin{comment}
We 
discretize the set of discount factors and opening cost scalars and
compute the performance of the 2-Chance Greedy Algorithm
with discount factor $\discountfactor \in \{0, 0.2, 0.4, 0.6, 0.8, 1\}$ and opening cost scalar $\costscalar\in\{1, 1 + 0.5\discountfactor, 1 + \discountfactor\}$.
We observe that different parameters $(\discountfactor,\costscalar)$ attain the best cost for different instances, which highlights the necessity of implementing the 2-Chance Greedy Algorithm with different parameter assignments.
We also obtain some structural observations: the
number of opened facilities is increasing w.r.t.\ $\discountfactor$ and decreasing w.r.t.\ $\costscalar$.
This aligns with the construction of the algorithm, that is, the algorithm with larger $\discountfactor$ (smaller $\costscalar$)
opens facilities more aggressively since $\discountfactor$
amplifies ($\costscalar$ discounts) the value of
  opening additional facilities.
Motivated by this, 
we design a post-processing step,
which myopically ``prunes'' the solution by checking if 
the objective can be improved by removing
any facility from the solution,
and combine it with the 2-Chance Greedy Algorithm. 
Theoretically, adding this post-processing step does not 
worsen the approximation guarantee.
Numerically, we observe that
combing with myopic pruning,
the 2-Chance Greedy Algorithm with myopic pruning further improves the performance.

Finally, we analyze the \emph{value of mobility data}.
Specifically, we consider a scenario where
the mobility data (recording the pair of home and work locations
for each individual) are missing, and
a decision maker
only has the residential population (resp. employment) information in each location, 
and then implements the JMMSV algorithm pretending
that each individual can only be connected through her home (resp. work) location.
In most experiments, 
we observe that the 2-Chance Greedy Algorithm that utilizes mobility data achieves substantially better performance 
than the performance of the JMMSV algorithm without 
mobility data.

\paragraph{Organization.}
We start by formalizing the model and providing necessary preliminaries and
notations in \Cref{sec:prelim}. 
In \Cref{sec:2-Chance Greedy}, we introduce the 2-Chance Greedy Algorithm and discuss its connection to the classic JMMSV algorithm. 
In \Cref{sec:greedy two location analysis}, we present the approximation results of the 2-Chance Greedy Algorithm. 
We conduct numerical experiments over both synthetic data and US census
data in \Cref{sec:numerical}. 
Finally, in \Cref{apx:k location} we extend our model, algorithm and approximation guarantee to the $K$-location FLP.

% \end{comment}
\subsection{Further related work}
There has been a long line of research on
the
(single-location) facility location problem.
The \TWOLFLP\ (and \KLFLP) generalizes 
the single-location metric facility location problem
(MFLP),
and can be thought as a special case of 
single-location non-metric facility location problem
(NMFLP).
\citet{hoc-82} presents a greedy algorithm with $O(\log n)$ 
approximation guarantee for the NMFLP.
Since the first constant-approximation algorithm
given by 
\citet{STA-97}
for the MFLP,
several techniques and improved results have been developed around this problem
\citep[e.g.,][]{KPR-00,AGKMMP-01,JMMSV-02,CS-03,MYZ-06,byr-07}.
Currently, the best approximation guarantee of 1.488 is due to \citet{li-11}; and 
\citet{GK-99} show that it is hard to approximate within a factor of 1.463.
There is another line of research on
the $K$-level facility location problem,
where each individual is endowed with a single location
and needs to be connected with $K$ facilities in a hierarchical order. 
For the 2-level (resp.\ $K$-level)
facility location problem,
\citet{zha-06} 
(resp.\ \citet{ACS-99})
proposes a 1.77-approximation 
(resp.\ 3-approximation) 
algorithm.
See survey by \citet{OCL-18}
for a comprehensive discussion.
\citet{DPW-22} study another variant of revenue-maximizing facility location problem with capacity constraint, which the authors refer as $K$-sided facility location problem.
All these variants 
are fundamentally different from 
the $K$-location facility location problem
studied in this paper.
For example,
it is hard to approximate
within a factor of $K-\epsilon$
in the \KLFLP.
In contrast, a 3-approximation 
polynomial time algorithm exists for 
the $K$-level facility location problem
\citep{ACS-99}.
Due to both theoretical and practical importance of the facility location problem, there are many other variant models studied in the literature. For example, \citet{PT-13,ABGOT-22} considering strategic individuals, \citet{Mey-01,KNR-23} considering online decision making, \citet{WLHT-22} combining FLP with other combinatorial optimization problems.

There have been numerous works on different problems where the approximation or competitive ratio is determined by the infimum of a class of factor-revealing programs across a potentially large parameter space. For example, \citet{JMMSV-02,MYZ-06} for the facility location problem, \citet{msvv-05} for the AdWords problem, \citet{MY-11,GT-12} for the online matching problem, \citet{AHNPY-18,AB-20,ABB-22} in mechanism design, \citet{CSZ-21} for the prophet secretary problem. 
Most of these works involve intricate and potentially imprecise analyses that explore the entire parameter space to obtain their final results. In contrast, \citet{MY-11,GT-12} employ the concept of strongly factor-revealing programs, which provide upper bounds on the infimum of the original factor-revealing programs by utilizing the supremum of the strongly factor-revealing programs. As a result, a single evaluation of a strongly factor-revealing program can yield the desired outcome for any given parameter. Previously, the use of strongly factor-revealing programs was primarily confined to linear programs due to their simplicity. To the best of our knowledge, our work represents the first instance of introducing a technique (the solution-dependent batching argument) that enables the derivation of a strongly non-linear factor-revealing program. Given the widespread adoption of the factor-revealing program approach, we believe that our technique holds independent interest.

% Our model also generalizes
% the weighted vertex cover problem where the 
% the optimal approximation guarantee 
% over all polynomial time algorithm is 
% upperbounded by 2 (e.g., via LP rounding),
% and lowerbounded by
% $\sqrt{2} - \epsilon$ under NP-hardness
% \citep{SMS-18},
% or 
% $2-\epsilon$ under UG-hardness
% \citep{KR-08}.
% The 2-Chance Greedy Algorithm (\Cref{alg:greedy modified}) with discount factor $\discountfactor=1$
% also achieves approximation ratio 2 in the weighted vertex cover problem.

\section{Preliminaries}
\label{sec:prelim}

In this paper, 
we study the cost-minimization
for the \emph{$2$-location (uncapacitated)
facility location problem (2-LFLP)} defined as follows.\footnote{In \Cref{apx:k location}, we
extend our model 
to a more general setting where each individual is associated with $K$ locations,
and explain how our algorithm and results carry over to this richer setting.}

There is a metric space 
$([n], \distance)$
with $n$ locations
indexed by $[n]\triangleq \{1,\dots, n\}$ and a (metric)
distance function  $\distance:[n]\times[n] 
\rightarrow
\reals_+$.
% For each location $i\in[n]$,
% let $\popi$ denote
% the number of individuals who reside 
% in location~$i$,
% and 
% $\empj$ denote 
% the number of individuals who work
% in location~$j$.
Let $\Edge \triangleq [n]^2$.
For each pair of locations $(i, j)\in\Edge$,
we use $\flowi$ to denote the  number of 
individuals who reside in location $i$
and work in location $j$.
% By definition, we have 
% $\popi = \sum_{j\in[n]}\flowi$
% and $\empj = \sum_{i\in[n]}\flowi$
% for every $i, j\in[n]$.
We also use 
the notation $\edge = (\edgehome, \edgework)
\in\Edge$
to denote individuals who reside in location~$\edgehome$ 
and work in location~$\edgework$, and
refer to $e$ as an  \emph{edge} between these two locations. 
Throughout the paper, we use the notation $(i, j)$
and $\edge$ interchangeably.
With a slight abuse of notation, 
we define $\flowe \triangleq \flow_{\edgehome\edgework}$ 
as the number of individuals on edge $\edge$,
and $\distance(\edge, i) \triangleq 
\min\{\distance(\edgehome, i),\distance(\edgework, i)\}$
as the smallest distance between location $i\in[n]$
and individuals on edge $\edge$.
% \footnote{It is easy to verify that 
% $\{\distance(\edge, i)\}_{\edge\in\Edge, i\in[n]}$ may not satisfy the triangle inequality.}

There is a social planner who
wants to choose a subset 
$\stores\subseteq[n]$
as the locations of her facilities 
to minimize the total cost.
Specifically, 
given an arbitrary solution $\stores\subseteq [n]$
as the location of facilities, 
the total cost $\Cost{\stores}$ is defined as
a combination of the facility opening cost and 
the
individual connection cost:\footnote{Namely, for each individual who resides in location $i$ and works in location $j$,
there is a connection cost which is equal to 
the minimum of the distances between her home or work location to the closest 
facility in solution $\stores$.}
\begin{align*}
    \Cost{\stores}
    \triangleq
    \displaystyle\sum_{i\in\stores}\opencosti
    +
    \sum_{\edge\in\Edge}
    \flowe \cdot 
    \min_{i\in\stores}
    \distance(e, i)
\end{align*}
where 
$\opencosti$ is the facility opening cost 
for location $i$.

% Here the connection cost 
% , and 
% $\distance_\stores(i) \triangleq \min_{k\in\stores}
% \distance(i, k)$
% is the distance between location $i$ to the closest
% facility in solution $\stores$.

% \paragraph{Mobility data.}
% % In this paper we consider 
% % the following two scenarios for the social planner
% % --
% % \emph{local data scenario} and 
% % \emph{full data scenario},
% % depending on whether she has access 
% % to the mobility data.
% % Specifically, 
% Given an arbitrary problem instance 
% $\instance = ([n], \distance, \pops, \emps, 
% \flows, \opencosts)$,
% we denote $\mobilitydata \triangleq \flows$ as the \emph{mobility data},
% and $\partialdata \triangleq ([n], \distance, \pops, \emps,  \opencosts)$ as 
% the \emph{local data}.
% We say 
% mobility data $\mobilitydata$
% is consistent with 
% local data $\partialdata$
% if $\popi = \sum_{j\in[n]}\flowi$
% and $\empj = \sum_{i\in[n]}\flowi$
% for every location $i, j\in[n]$.

\paragraph{Approximation.}
Given a problem instance $\instance$,
we say a solution $\stores\subseteq[n]$
is a \emph{$\approxratio$-approximation} to the optimal solution $\OPT$
if 
\begin{align*}
    \Cost{\stores} \leq \approxratio \cdot 
    \Cost{\OPT}
\end{align*}
where optimal solution $\OPT \triangleq 
\argmin\limits_{\stores'\subseteq[n]}\Cost{\stores'}$
is the solution minimizing the total cost.

We say an algorithm $\ALG$ is a \emph{$\approxratio$-approximation}
if for every problem instance $\instance$,
the solution $\ALG(\instance)$ computed from algorithm $\ALG$
is a $\approxratio$-approximation.
In this paper, our goal is to design 
polynomial-time algorithms with constant approximation 
guarantees.
\begin{proposition}
    In the \TWOLFLP, there exists no polynomial-time algorithm
    with a $2-\epsilon$ approximation guarantee 
    under the unique game conjecture.
\end{proposition}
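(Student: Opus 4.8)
The plan is to prove this proposition by a direct, approximation-preserving reduction from the minimum vertex cover problem, invoking the UG-hardness result of \citet{KR-08}, which rules out a polynomial-time $(2-\epsilon)$-approximation for vertex cover under the unique games conjecture. The reduction realizes a vertex cover instance as a degenerate \TWOLFLP\ instance in which every pair of distinct locations is prohibitively far apart, so that the only way to serve an individual cheaply is to open a facility at one of her two endpoints.

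Concretely, given an undirected graph $G = (V, E)$ for the vertex cover instance, I would build a \TWOLFLP\ instance as follows. Let the location set be $[n] = V$. For each graph edge $\{u,v\} \in E$ introduce a single individual on the \TWOLFLP-edge $\edge = (u, v)$ with $\flowe = 1$, so that her home and work locations are exactly the two endpoints. Set every opening cost to $\opencosti = 1$, and define the metric by $\distance(i,i) = 0$ and $\distance(i,j) = M$ for all distinct $i, j$, where $M$ is a large value (e.g.\ $M = 2\setsize{V} + 1$). This uniform metric trivially satisfies the triangle inequality, and the whole construction is of polynomial size.

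The next step is to relate the two optimization problems. For any solution $\stores \subseteq V$, the connection cost of the individual on edge $\edge = (u,v)$ equals $\min_{i \in \stores} \distance(\edge, i)$, which is $0$ when $\stores \cap \{u,v\} \neq \emptyset$ and $M$ otherwise. Hence $\Cost{\stores} = \setsize{\stores} + M \cdot (\text{number of edges left uncovered by } \stores)$. Consequently, if $\stores$ is a vertex cover then $\Cost{\stores} = \setsize{\stores}$, and otherwise $\Cost{\stores} \geq M$. In particular the optimal \TWOLFLP\ solution is exactly a minimum vertex cover, with $\Cost{\OPT}$ equal to its cardinality, which is at most $\setsize{V}$.

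Finally I would transfer the hardness. Suppose for contradiction that some polynomial-time algorithm is a $(2-\epsilon)$-approximation for the \TWOLFLP. Running it on the constructed instance yields $\stores$ with $\Cost{\stores} \leq (2-\epsilon)\Cost{\OPT} \leq (2-\epsilon)\setsize{V} < M$ by the choice of $M$. Since any uncovered edge would contribute $M$ to the cost, $\stores$ must leave no edge uncovered, i.e.\ it is a vertex cover, and then $\setsize{\stores} = \Cost{\stores} \leq (2-\epsilon)\Cost{\OPT} = (2-\epsilon)\setsize{\text{min VC}}$. This is a polynomial-time $(2-\epsilon)$-approximation for vertex cover, contradicting \citet{KR-08}. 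The reduction is conceptually simple, so the only point requiring care is the quantitative choice of $M$: it must be a fixed polynomial in $\setsize{V}$ large enough that every $(2-\epsilon)$-approximate \TWOLFLP\ solution is forced to be a genuine vertex cover, which is precisely what guarantees that the approximation factor carries over without any loss in $\epsilon$.
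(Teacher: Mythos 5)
Your proposal is correct and follows essentially the same route as the paper: Appendix A gives exactly this reduction from (weighted) vertex cover via the UG-hardness result of \citet{KR-08}, with vertices as locations, graph edges as individuals, and pairwise distances set so large that any cheap solution must open a facility at an endpoint of every edge. The only cosmetic difference is that you use a finite $M$ with an explicit approximation-preservation argument, whereas the paper sets the distances to be infinite and argues the same point more briefly.
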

This hardness result is 
directly implied by
the hardness result of the vertex cover problem
\citep{KR-08}.
Note that the \TWOLFLP\ generalizes 
the weighted vertex cover problem.
See \Cref{apx:vertex cover}
for more discussion.

\paragraph{Single-location facility location problem.}
The \TWOLFLP\ generalizes 
the classical \emph{(single-location) metric facility location problem (MFLP)} \citep{STA-97}.
In particular, the former problem becomes 
the latter problem when we further impose 
the restriction on the problem instances such that
the home location of 
every individual
is the same as her workspace,
i.e., $\flowi = 0$ for every~$i\not= j$.

The \TWOLFLP\ can also be considered as 
a special case of 
the classical \emph{(single-location) non-metric facility location problem (NMFLP)} \citep{hoc-82}.
In particular, every problem instance 
in 
the \TWOLFLP\ is equivalent to a problem instance
in the NMFLP where we consider each edge $\edge$
as a new location
with opening cost $\infty$.\footnote{Recall that 
distance $\distance$ is defined as $
\distance(\edge, i) = \min\{\distance(\edgehome,i),
\distance(\edgework,i)\}$
which may not satisfy the triangle inequality. 
To complete the instance construction in NMFLP, 
we note that the distance $\distance(\edge,\edge')$
between location $\edge$ and $\edge'$
does not matter (and thus can be set arbitrarily), 
since
their opening costs are set to be $\infty$.}

The (single-location) MFLP and NMFLP
have been studied extensively in the literature. 
Under the standard computation complexity hardness assumption, 
it is known that there exists no polynomial-time algorithm
which can compute the optimal solution even for MFLP.
% The approximation ratios achievable by polynomial-time algorithms 
% are summarized in Figure XYZ.

% \section{A $\greedyapprox$-Approximation Algorithm for 
% the \TWOLFLP}
\section{The 2-Chance Greedy Algorithm for the 
\TWOLFLP}
\label{sec:2-Chance Greedy}

% In this section we show the main result of this paper, which is 
% a $\greedyapprox$-approximation algorithm for the \TWOLFLP. 
The main result of this paper is a $\greedyapprox$-approximation algorithm for the \TWOLFLP.
In this section, 
we describe this algorithm and  
% We start by 
% formally describing this algorithm and 
all of its ingredients.
Its   approximation ratio analysis
with a novel strongly factor-revealing 
quadratic program 
% in \Cref{sec:greedy two location analysis}
is deferred
to \Cref{sec:greedy two location analysis}. 
% a novel strongly factor-revealing 
% quadratic program in \Cref{sec:greedy two location analysis}.

% \subsection{2-Chance Greedy Algorithm}
\label{sec:greedy two location}

Our algorithm is a natural generalization of 
the classic greedy algorithm 
 designed by \citet{JMMSV-02} for the 
single-location metric facility location problem.
We first present an overview
of our algorithm,
followed by 
the formal description in 
\Cref{alg:greedy modified}.
Finally, we provide  intuition behind our algorithm
by comparing it with the classic greedy algorithm
for the
 single-location metric/non-metric facility
location problems. 
% The approximation analysis is deferred into 
% \Cref{sec:greedy two location analysis}.

\paragraph{Overview of the algorithm.}
The \emph{2-Chance Greedy Algorithm}
can be described as
a continuous procedure,
where we gradually identify the facilities to open 
in
a \emph{greedy} fashion
and connect each edge $\edge \in\Edge$ to the opened facilities.
Each edge $\edge$ can be connected \emph{twice}\footnote{Therefore,
we name our algorithm \emph{2-Chance} Greedy Algorithm.}
through both its home $\edgehome$ 
and work location~$\edgework$.

In this algorithm, 
we use $\stores$ to denote the set of opened facilities.
We maintain
$\connectmapping(\edge,\edgeindex)$
to
record the opened facility to which 
edge $\edge\in\Edge$ is connected through its home/work location $\edge_\edgeindex$
for each $\edgeindex\in\hwset$.
Initially, we set $\connectmapping(\edge, \edgeindex) \gets \nullsymbol$,
meaning edge $\edge$ has not been connected to
any facility through home/work location $\edge_\edgeindex$ yet.
We further introduce an auxiliary variable $\firstchanceset
= \{\edge \in\Edge: \connectmapping(\edge,\Home) = \nullsymbol
\land \connectmapping(\edge,\Work) = \nullsymbol\}$ to denote 
the subset of edges which have not been connected to any facilities 
through their home nor work locations.
We say an edge $\edge$ is \emph{unconnected} 
if $\edge \in \firstchanceset$,
\emph{partially connected} if 
$\edge \not\in\firstchanceset$ 
but $\connectmapping(\edge,\Home) = \nullsymbol$ or
$\connectmapping(\edge,\Work) = \nullsymbol$,
and \emph{fully connected} if 
$\connectmapping(\edge,\Home) \not= \nullsymbol$ 
and $\connectmapping(\edge,\Work) \not= \nullsymbol$.
Finally, we also maintain an \emph{\candidatecosttext} 
$\greedycounter(\edge)$
for each edge $\edge \in\Edge$,
which is initialized to zero, and continuously increases over time\footnote{Suppose $\greedycounter_\timeindex(\edge)$
 	    is the value of $\greedycounter(\edge)$ at time
 	    stamp $\timeindex$,
 	    then the algorithm updates $\greedycounter_\timeindex(\edge)$
 	    with $\frac{\partial \greedycounter_\timeindex(\edge)}{
 	    \partial \timeindex} = 1$ for each edge $\edge\in\firstchanceset$.}
as long as edge $\edge$ is unconnected, i.e.,
$\edge\in\firstchanceset$. 
The algorithm terminates when all edges are partially or fully connected,
i.e., $\firstchanceset = \emptyset$.

There are two possible events which might happen
as we   increase \candidatecosttext~$\greedycounter(\edge)$
for each unconnected edge $\edge \in\firstchanceset$.

\begin{displayquote}
    \textbf{Event~(a)}: For an unconnected 
edge $\edge\in\firstchanceset$
the \candidatecosttext~$\greedycounter(\edge)$ equals
to its per-individual 
connection cost $\distance(\edge, i)$
for some \emph{opened} facility $i\in\stores$.
\end{displayquote}
When \textbf{Event~(a)} happens, 
we remove unconnected edge $e$ from $\firstchanceset$ and 
connect it with facility $i$ 
through its home/work location $\edge_\edgeindex$ 
(i.e., update $\connectmapping(\edge,\edgeindex) \gets i$)
if $\greedycounter(\edge) = \distance(\edge_\edgeindex, i)$
for each $\edgeindex\in\hwset$. By
definition,
$\greedycounter(\edge) \leq \distance(\edgehome, i)$,
$\greedycounter(\edge) \leq \distance(\edgework, i)$
and at least one of them holds with equality.

Before we define the second event, we introduce the notion of cost improvement. For
an unconnected edge $e\in U$ and 
an unopened facility $i$
we
define per-individual cost improvement as
$(\alpha(e) - d(e,i))^+$,
where  operator $\plus{x} \triangleq \max\{x, 0\}$.
Similarly, for a partially connected edge $e$, we define discounted per-individual
cost improvement as $(\gamma\alpha(e) - d(e,i))^+$ where
 discount factor $\discountfactor\in[0, 1]$ 
is a pre-specified parameter of the algorithm.
Intuitively, the per-individual cost improvement captures the reduction in the distance costs
(relative to \candidatecosttext) 
of an edge $e$ when a new facility $i$ opens and this edge is connected to it.
The second event, hereafter \textbf{Event~(b)} is as follows.

\begin{displayquote}
    \textbf{Event~(b)}: For an unopened location~$i\not\in\SOL$,
    its facility opening cost ${\opencost_i}$ scaled by ${\costscalar}$
    equals to 
    the total cost improvement associated with opening a facility at $i$. Namely,
\begin{align*}
    \displaystyle\sum\nolimits_{\edge\in \firstchanceset} 
 	    \flowe\cdot 
 	    \plus{\greedycounter(\edge) - \distance(\edge, i)}
 	    +
 	    \displaystyle\sum\nolimits_{
 	    \subalign{
 	    &\edge\not\in \firstchanceset,\\
 	    &\edgeindex\in\hwset:
 	    \connectmapping(\edge,\edgeindex) = \nullsymbol
 	    }}
 	    \flowe\cdot 
 	    \plus{\discountfactor\cdot \greedycounter(\edge) - \distance(\edge_\edgeindex,i)}
 	    = {\costscalar\cdot \opencost_i}{}
\end{align*}
\end{displayquote}
Here $\discountfactor$ and $\costscalar$ are two pre-specified parameters of the algorithm. The discount factor $\discountfactor$ controls the impact of partially connected individuals in opening new facility,
while the opening cost scalar $\costscalar$ controls the tradeoff between facility opening cost and the individual connection cost in the solution.

When \textbf{Event~(b)} happens,
we open facility $i$ 
(i.e., update $\stores \gets 
\stores \cup \{i\}$),
and connect each unconnected edge $\edge\in\firstchanceset$
(resp.\ partially connected edge $\edge\not\in \firstchanceset:
\connectmapping(\edge,\edgeindex) = \nullsymbol$ for some 
$\edgeindex\in\hwset$)
with facility $i$ if 
$\greedycounter(\edge) \geq \distance(\edge, i)$
(resp.\ $\greedycounter(\edge) \geq \distance(\edge_\edgeindex, i)$).
See \Cref{alg:greedy modified} for a formal description.

\begin{algorithm}
    \SetKwInOut{Input}{input}
    \SetKwInOut{Output}{output}

    \Input{discount factor $\discountfactor\in[0, 1]$,
    opening cost scalar $\costscalar\in\reals_+$,    \TWOLFLP\ instance $\instance = 
    (n, \distance, 
    \{\flowe\}_{\edge\in\Edge}, \opencosts_{i\in[n]})$}
%  	\vspace{1mm}
    \Output{subset $\stores \subseteq[n]$
    as the locations of opened facilities.}
 	\caption{2-Chance Greedy Algorithm}
 	\label{alg:greedy modified}
%  	\vspace{2mm}
 	
 	initialize $\stores \gets \emptyset$
 	
 	initialize $\greedycounter(\edge) \gets 0$ for each edge $\edge\in\Edge$
 	
%  	\vspace{1mm}
 	
 	initialize $\firstchanceset \gets \Edge$,
%  	\vspace{2mm}

    initialize $\connectmapping(\edge,\edgeindex) \gets \nullsymbol$
    for each edge $\edge \in\Edge$,
    each $\edgeindex\in\hwset$
 	
 	\While{$\firstchanceset \not= \emptyset$}{ 
 	  %  \vspace{1mm}
 	    increase $\greedycounter(\edge)$ by 
 	    $d\greedycounter$
 	    for every edge $\edge \in \firstchanceset$

 	    \tcc{Event (a)}
 	    
 	    \While{there exists edge $\edge \in \firstchanceset$
 	    and location $i\in\stores$ 
 	    s.t.\ $\greedycounter(\edge) = \distance(\edge, i)$ 
 	    }
 	    {
 	    
 	  %  \vspace{1mm}
 	  
 	    remove edge $\edge$ from $\firstchanceset$,
 	    i.e., $\firstchanceset\gets \firstchanceset\backslash\{\edge\}$
 	    
 	     \For{each $\edgeindex \in \hwset$}{
 	    \If{$\greedycounter(\edge) = \distance(\edge_\edgeindex,i)$}{
 	    
 	    connect edge $\edge$ and facility $i$ through 
      home/work location $\edge_\edgeindex$, i.e., 
 	    $\connectmapping(\edge,\edgeindex) \gets i$
 	    }
 	    }

 	    }

 	    \tcc{Event (b)}
 	    
 	    \While{there exists location $i\not\in\stores$ s.t.\
 	    $\sum_{\edge\in \firstchanceset} 
 	    \flowe\cdot 
 	    \plus{\greedycounter(\edge) - \distance(\edge, i)}
 	    +
 	    \sum_{
 	    \subalign{
 	    &\edge\not\in \firstchanceset,\\
 	    &\edgeindex\in\hwset:
 	    \connectmapping(\edge,\edgeindex) = \nullsymbol
 	    }}
 	    \flowe\cdot 
 	    \plus{\discountfactor\cdot \greedycounter(\edge) - \distance(\edge_\edgeindex,i)}
 	    = \costscalar\cdot \opencost_i$
 	    }{
 	    
 	    \vspace{1mm}
 	    
 	    add location $i$ into $\stores$, i.e.,
 	    $\stores \gets \stores \cup \{i\}$

 	    \For{each edge $\edge\not\in \firstchanceset$}{
 	    \For {each $\edgeindex \in \hwset$}{
 	  %  \vspace{1mm}
 	  
 	    \If{$\connectmapping(\edge,\edgeindex) = \nullsymbol$ 
 	    and $\discountfactor\cdot \greedycounter(\edge) \geq \distance(\edge_\edgeindex, i) $}{
 	  
 	    connect edge $\edge$ and facility $i$ through 
      home/work location 
      $\edge_\edgeindex$, i.e., 
 	    $\connectmapping(\edge,\edgeindex) \gets i$
 	  %  \vspace{1mm}
 	   
 	    }
 	    }
 	    }
 	    
 	    \For{each edge $\edge\in \firstchanceset$}{
 	    
 	    \If{$\greedycounter(\edge) - \distance(\edge, i)\geq 0$}{
 	    
 	    remove edge $\edge$ from $\firstchanceset$,
 	    i.e., $\firstchanceset\gets \firstchanceset\backslash\{\edge\}$
 	    
 	    \For{each $\edgeindex \in \hwset$}{
 	    \If{$\greedycounter(\edge) \geq \distance(\edge_\edgeindex,i) $}{
 	    
 	    connect edge $\edge$ and facility $i$ through 
      home/work location 
      $\edge_\edgeindex$, i.e., 
 	    $\connectmapping(\edge,\edgeindex) \gets i$
 	    }
 	    }
 	    }
 	    }
 	    }
 	   
 	}
 	\textbf{return} $\stores$
\end{algorithm}

% \begin{remark}
Although \Cref{alg:greedy modified}
is described as a deterministic continuous time procedure (due to the 
continuous update of  
\candidatecosttext~$\{\greedycounter(\edge)\}$),
it is straightforward to see that 
the algorithm 
eventually terminates by construction.
%terminates deterministically by construction.
Furthermore, it 
can be executed
in polynomial time. 
Note that 
Events~(a) and (b) 
(i.e., the conditions in 
two while loops in lines 7 and 12)
happen
$O(n^2)$ times. Thus, at any stage of the algorithm,
with polynomial running time,
we can identify the next time stamp in which 
Event (a) or (b) happens, and
update 
all variables (i.e, $\stores$, $\{\connectmapping(\edge,\edgeindex)\}$,
$\firstchanceset$, $\{\greedycounter(\edge)\}$)
accordingly.
% \end{remark}

\paragraph{Connection to the JMMSV algorithm.}
For the classic single-location facility location problem,
\citet{JMMSV-02} design a greedy-style algorithm.
Throughout the paper, we denote this algorithm by the {JMMSV} algorithm.\footnote{\citet{JMMSV-02} and
follow-up works \citep[e.g.,][]{MYZ-06} 
introduce additional technical modifications to 
the JMMSV algorithm and achieve better approximation guarantees. 
This paper mainly compares the JMMSV algorithm with 
the 2-Chance Greedy Algorithm.
Whether similar technique can be applied to
the \TWOLFLP, is left as an future direction.}
For completeness, we include a formal description of
the JMMSV algorithm in \Cref{apx:single location FLP}.
Loosely speaking, the JMMSV algorithm 
and 
the 2-Chance Greedy Algorithm 
(\Cref{alg:greedy modified})
use the same greedy procedure
to identify facilities to open.
However, the JMMSV algorithm connects each individual to a single facility, 
while \Cref{alg:greedy modified}
connects each individual to one or two locations (through both 
her home or work location).
For the single-location metric facility location problem
(MFLP) where the distance function satisfies the triangle inequality,
\citet{JMMSV-02} show that the JMMSV algorithm 
is a 1.861-approximation.\footnote{In \Cref{sec:strongly factor-revealing program MFLP}, 
we 
further show that the JMMSV algorithm is indeed a 
$\greedyapproxMFLP$-approximation in the MFLP
(\Cref{prop:approx ratio general gamma MFLP}).
We use this 
as a warm-up exercise to illustrate 
our technique (i.e., strongly factor-revealing quadratic program)
for the $\greedyapprox$-approximation guarantee of \Cref{alg:greedy modified}
in the \TWOLFLP.}

As we mentioned in \Cref{sec:prelim}, the \TWOLFLP\ 
can also be thought as a special case of 
the single-location non-metric facility location problem (NMFLP),
where we 
create a new location for each edge
(but the distance between 
the original edges and new locations and facilities may no longer 
satisfy the triangle inequality). 
In this sense, the JMMSV algorithm is also well-defined 
for the \TWOLFLP.
Moreover,
\Cref{alg:greedy modified}
with discount factor $\discountfactor = 0$ and opening cost scalar $\costscalar = 1$
recovers the JMMSV algorithm for general \TWOLFLP\ instances.
To see this, observe that 
in \Cref{alg:greedy modified} with 
$\discountfactor = 0$,
all partially connected edges have no impact on 
whether new facilities will be opened at any time stamp,
and
every edge will be connected to exactly one facility 
in the end.
\begin{observation}
\label{ob:JMMSV equivalence TWOLFLP}
In the \TWOLFLP, 
the 2-Chance Greedy Algorithm with discount factor $\discountfactor = 0$ and opening cost scalar $\costscalar = 1$
is equivalent to the JMMSV algorithm.
\end{observation}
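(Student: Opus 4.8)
The plan is to show that the entire state maintained by \Cref{alg:greedy modified} — the opened set $\stores$, the active set $\firstchanceset$, and the candidate costs $\{\greedycounter(\edge)\}$ — evolves identically to the state maintained by the JMMSV algorithm run on the \TWOLFLP\ viewed as an NMFLP instance, where each edge $\edge$ is a client whose distance to a facility $i$ equals $\distance(\edge, i) = \min\{\distance(\edgehome,i),\distance(\edgework,i)\}$. Since both algorithms grow candidate costs at unit rate for exactly the active clients and ultimately return $\stores$, it suffices to verify that the triggering conditions for Event~(a), for Event~(b), and for the removal of an edge from $\firstchanceset$ coincide step-by-step once we fix $\discountfactor = 0$ and $\costscalar = 1$.

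First I would substitute $\discountfactor = 0$ into the two places where partially connected edges enter \Cref{alg:greedy modified}. In the Event~(b) opening condition, each term of the summation over partially connected edges becomes $\plus{\discountfactor\cdot\greedycounter(\edge) - \distance(\edge_\edgeindex,i)} = \plus{-\distance(\edge_\edgeindex,i)} = 0$, since distances are non-negative. Hence the opening condition collapses to $\sum_{\edge\in\firstchanceset}\flowe\cdot\plus{\greedycounter(\edge)-\distance(\edge,i)} = \costscalar\cdot\opencosti$, and setting $\costscalar = 1$ makes this exactly the JMMSV facility-opening threshold with client-to-facility distances $\distance(\edge, i)$. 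Likewise, in the connection loop over partially connected edges, the guard $\discountfactor\cdot\greedycounter(\edge)\geq\distance(\edge_\edgeindex,i)$ becomes $0\geq\distance(\edge_\edgeindex,i)$, which can hold only when $\distance(\edge_\edgeindex,i)=0$. Consequently a partially connected edge neither contributes to nor triggers any future facility opening, and it can acquire a second connection only at distance zero.

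Next I would argue by induction on the ordered sequence of events that the two trajectories agree. Event~(a) removes $\edge$ from $\firstchanceset$ precisely when $\greedycounter(\edge)=\distance(\edge,i)$ for an open facility $i$, matching the JMMSV rule that a client connects once its budget reaches its distance to an open facility. Event~(b) opens a facility under the identical threshold derived above and removes every active edge with $\greedycounter(\edge)\geq\distance(\edge,i)$, again matching JMMSV. In both algorithms an edge drives candidate-cost growth and cost improvements only while it lies in $\firstchanceset$; once removed it is inert in the \TWOLFLP\ run, its only remaining possible action being a distance-zero second connection that alters neither $\stores$ nor any future event. Therefore $\stores$, $\firstchanceset$, and $\{\greedycounter(\edge)\}$ coincide throughout the execution, so the two algorithms return the same facility set and hence the same cost $\Cost{\stores}$.

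The main obstacle is the bookkeeping around partial connections: when an edge leaves $\firstchanceset$ through Event~(a) or Event~(b), only its home or only its work location may be matched (if the two distances differ), so the edge is formally \emph{partially} connected rather than fully connected, and I must confirm that such an edge is genuinely neutral for the remainder of the execution rather than silently influencing a later opening. This reduces precisely to the two observations above — its Event~(b) contribution term vanishes identically under $\discountfactor=0$, and any further connection it could receive occurs only at distance zero and thus affects neither the opened set nor the objective — after which the claimed equivalence follows.
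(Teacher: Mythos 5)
Your proposal is correct and follows essentially the same route as the paper, which justifies the observation by noting that with $\discountfactor = 0$ every partially connected edge contributes $\plus{0 - \distance(\edge_\edgeindex,i)} = 0$ to the Event~(b) condition and hence has no influence on future openings, so the algorithm reduces to the JMMSV algorithm on the NMFLP view of the instance. Your additional care about the degenerate distance-zero second connection and the explicit induction on events are elaborations of the same core argument rather than a different approach.
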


It is known that for general NMFLP instances, 
the JMMSV algorithm is not constant-approximation \citep{hoc-82}.
Here we present an example 
to illustrate that 
\Cref{alg:greedy modified} 
with $\discountfactor = 0$ and $\costscalar = 1$, thus
the JMMSV algorithm, is an $\Omega(n)$-approximation in the \TWOLFLP.

\begin{example}
\label{example:JMMSV failure}
Given an arbitrary $n_0\in\naturals$, 
consider a \TWOLFLP\ instance as follows:
There are $n \triangleq n_0 + 1$ locations,
and $n_0$ 
individuals. Each individual $i\in[n_0]$ 
resides at location $i$.
All of them work at location $n_0 + 1$.
Namely, the numbers of individuals $\{\flowe\}_{\edge\in\Edge}$
are
\begin{align*}
    \flowe = 
    \left\{
    \begin{array}{ll}
    1   & \quad \text{if $\edgehome\in[n_0],\edgework = n_0 + 1$} \\
    0     & \quad \text{o.w.} 
    \end{array}
    \right.
\end{align*}
The facility opening costs $\{\opencost_i\}_{i\in[n]}$ and  
distance function $\distance(\cdot,\cdot)$ are 
\begin{align*}
    \opencost_i =
    \left\{
    \begin{array}{ll}
    \frac{1}{n_0 - i + 1}  - \epsilon    & \quad \text{if $i\in[n_0]$} \\
    1     & \quad \text{if $i = n_0 + 1$} 
    \end{array}
    \right.
    \qquad 
    \distance(i, j) = 
    \left\{
    \begin{array}{ll}
    \sfrac{1}{\costscalar}   & \quad \text{if $i \not = j$} \\
    0     & \quad \text{if $i = j$} 
    \end{array}
    \right.
\end{align*}
where $\epsilon \in \reals_+$ is a sufficiently small positive constant.

The optimal solution opens a facility at the common work location
(i.e., $\OPT = \{n_0 + 1\}$) 
with optimal total cost $\Cost{\OPT} = 1$.

In contrast, consider the 2-Chance Greedy Algorithm 
with $\discountfactor = 0$ and $\costscalar\in\reals_+$.
At time stamp $\frac{1}{n_0 - 1 + 1} - \epsilon$,
for each edge $\edge$,
the \candidatecosttext~$\greedycounter(\edge) = \frac{1}{n_0 - 1 + 1} - \epsilon$, and the condition of Event (b) for opening
facility 1 is satisfied due to individual~1 residing at location 1.
Thus, the algorithm opens facility $1$ and 
partially connects individual $1$ to facility $1$ through her home.
Then, at time stamp $\frac{1}{n_0 - 2 + 1} - \epsilon$,
the \candidatecosttext~$\greedycounter(\edge) = \frac{1}{n_0 - 2 + 1} - \epsilon$
for every edge $\edge$ except for the partially connected individual $1$,
thus
facility $2$ is opened, 
and individual~$2$ is partially connected to facility $2$. Proceeding similarly, it can be seen that
  when the algorithm terminates,
it
opens $n_0$ facilities at every individual's home (i.e., $\stores = \{1, \dots, n_0\}$) with total cost $\Cost{\stores} = \Theta(\log(n)) - n_0\cdot \epsilon$. 

Putting all pieces together, as $\epsilon$ goes to zero, the approximation of the 2-Chance Greedy Algorithm 
with $\discountfactor = 0$ converges to $\Theta(\log(n))$
for this example.
\end{example}

\Cref{example:JMMSV failure} also illustrates 
the necessity of positive 
discount factor $\discountfactor$
of \Cref{alg:greedy modified}
to achieve a constant approximation in the \TWOLFLP.
Specifically,
when the discount factor $\discountfactor$ is set as zero,
\Cref{alg:greedy modified}
%(i.e., the JMMSV algorithm) 
never
opens facility $n_0 + 1$ in the optimal solution, 
since the facility opening cost at location $n_0 + 1$
never meets the total cost improvement induced by \candidatecosttext s $\{\greedycounter(\edge)\}$.
In particular, the \candidatecosttext s for all partially connected individuals 
make no positive contribution to the total cost improvement since the discount factor $\discountfactor = 0$.
Note that this issue persists 
if the discount factor $\discountfactor = o(\frac{\epsilon}{\log(n)})$.
On the other hand, it is straightforward to verify that
\Cref{alg:greedy modified} achieves a constant approximation 
 for \Cref{example:JMMSV failure}
when the discount factor $\discountfactor$ is set to a positive constant.\footnote{As a sanity check, 
when $\epsilon \leq \frac{\opencost_1}{n_0 - 1}$,
\Cref{alg:greedy modified} with discount factor $\discountfactor = 1$
returns solution $\stores = \{1, n_0 + 1\}$ with total cost 
$\Cost{\stores} = 1 + \Theta(\frac{1}{n})$,
and thus is an $(1 + \Theta(\frac{1}{n}))$-approximation 
for \Cref{example:JMMSV failure}.}
In the next subsection,
we further show its sufficiency
by proving that 
\Cref{alg:greedy modified} with positive constant
discount factor $\discountfactor$
is a constant-approximation for all \TWOLFLP\ instances,
e.g., a $\greedyapprox$-approximation for $\discountfactor = 1$
(\Cref{prop:approx ratio gamma one}, \Cref{prop:approx ratio general gamma}).\footnote{In practice, 
the social planner can implement the 2-Chance Greedy Algorithm
by varying discount factor $\discountfactor$ and opening cost scalar $\costscalar$,
and then select the solution with the minimum cost.}

We finish this subsection by noting that 
the discount factor $\discountfactor$ in \Cref{alg:greedy modified}
plays no role when we restrict attention to MFLP instances.
As we mentioned in \Cref{sec:prelim},
the MFLP can be thought of as a special case 
of the \TWOLFLP\ where every individual's home is equivalent
to her work  location, i.e., $\flowe = 0$ if 
$\edgehome \not=\edgework$.
Therefore, \Cref{alg:greedy modified} is well-defined 
for MFLP instances.
Moreover, observe that 
for MFLP instances,
no edge $\edge$ with positive $\flowe > 0$
is ever partially connected in the algorithm.
Instead, 
both its home $\edgehome$ and work location $\edgework$
will be connected to the same facility in the same period.
Therefore, 
for any fixed MFLP instance,
\Cref{alg:greedy modified}
identifies the same set of facilities $\stores$ 
regardless of its discount factor $\discountfactor$.
Thus, in this case for any discount factor $\discountfactor\in[0, 1]$,
\Cref{alg:greedy modified} with $\costscalar = 1$
is equivalent to the JMMSV algorithm.

\begin{observation}
\label{ob:JMMSV equivalence MFLP}
In the MFLP, 
the 2-Chance Greedy Algorithm with an arbitrary 
discount factor $\discountfactor \in [0, 1]$ and opening cost scalar $\costscalar = 1$
is equivalent to the JMMSV algorithm.
\end{observation}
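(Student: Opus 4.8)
The plan is to prove the stronger statement that, on any MFLP instance, the entire execution of \Cref{alg:greedy modified} with $\costscalar = 1$ --- and in particular the facility set $\stores$ it returns --- does not depend on the discount factor $\discountfactor\in[0,1]$. The observation then follows immediately by combining this with \Cref{ob:JMMSV equivalence TWOLFLP}, which already identifies the case $\discountfactor = 0$, $\costscalar = 1$ with the JMMSV algorithm. The starting point is that $\discountfactor$ enters \Cref{alg:greedy modified} only through operations on \emph{partially connected} edges: it appears in the summand
\[
\sum_{\substack{\edge\notin\firstchanceset,\\ \edgeindex\in\hwset:\ \connectmapping(\edge,\edgeindex)=\nullsymbol}}\flowe\cdot\plus{\discountfactor\cdot\greedycounter(\edge)-\distance(\edge_\edgeindex,i)}
\]
of the Event~(b) condition, and in the rule that connects a partially connected edge to a newly opened facility. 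Hence it suffices to argue that partially connected edges never influence either the set of opened facilities or the emptying of $\firstchanceset$.

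First I would record the defining feature of an MFLP instance, namely $\flowe = 0$ whenever $\edgehome\not=\edgework$. Thus every edge $\edge$ with positive flow is a self-loop with $\edgehome = \edgework =: \ell$, for which $\distance(\edgehome, i) = \distance(\edgework, i) = \distance(\ell, i) = \distance(\edge, i)$ for all facilities $i$. The key structural claim, which I would maintain as an invariant throughout the run, is that no positive-flow edge is ever partially connected; each such edge passes directly from unconnected to fully connected. To verify that the invariant is preserved, consider the step in which a positive-flow edge $\edge$ leaves $\firstchanceset$. If this occurs through Event~(a), the triggering equality $\greedycounter(\edge) = \distance(\edge, i) = \distance(\ell, i)$ makes $\greedycounter(\edge) = \distance(\edgehome, i)$ and $\greedycounter(\edge) = \distance(\edgework, i)$ hold simultaneously, so both $\connectmapping(\edge,\Home)$ and $\connectmapping(\edge,\Work)$ are set to $i$ in the same iteration. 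If it occurs through Event~(b), the inequality $\greedycounter(\edge)\geq\distance(\edge, i) = \distance(\ell, i)$ likewise forces $\greedycounter(\edge)\geq\distance(\edgehome, i)$ and $\greedycounter(\edge)\geq\distance(\edgework, i)$, again connecting both locations at once. Either way $\edge$ becomes fully connected immediately and never enters the partially connected state.

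Granting the invariant, every edge contributing to the $\discountfactor$-dependent summand above is partially connected, hence has zero flow, hence contributes a term with factor $\flowe = 0$; so the summand is identically zero and, with $\costscalar = 1$, the Event~(b) condition collapses to $\sum_{\edge\in\firstchanceset}\flowe\cdot\plus{\greedycounter(\edge)-\distance(\edge,i)} = \opencost_i$, which is free of $\discountfactor$. The second appearance of $\discountfactor$, in the partial-connection rule, only ever rewrites $\connectmapping$ for already-removed zero-flow edges; such rewrites change neither $\stores$ nor the membership of $\firstchanceset$, and thus cannot influence any later event. Since Event~(a) and the candidate-cost dynamics are manifestly $\discountfactor$-free, an induction over the finitely many events shows that the timeline of facility openings, the set $\firstchanceset$ and its emptying, and the returned $\stores$ are all identical for every $\discountfactor\in[0,1]$. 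Taking $\discountfactor = 0$ and invoking \Cref{ob:JMMSV equivalence TWOLFLP} finishes the proof.

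The main obstacle I anticipate is a mild circularity: a priori the set of partially connected edges --- and hence the invariant --- could itself depend on $\discountfactor$ through Event~(b), so the vanishing of the $\discountfactor$-term and the invariant cannot be established in isolation but must be proved together, by induction on the event sequence. The only other point demanding care is the bookkeeping for the zero-flow edges with $\edgehome\not=\edgework$, which genuinely may become partially connected; one must check explicitly that, being weighted by $\flowe = 0$ and already absent from $\firstchanceset$, they remain inert for both the facility-opening condition and the termination of the algorithm.
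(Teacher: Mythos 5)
Your proposal is correct and follows essentially the same route as the paper: the paper's justification (in the paragraph preceding the observation) likewise rests on the fact that in an MFLP instance every positive-flow edge is a self-loop and hence is never partially connected, so the $\discountfactor$-dependent terms in Event~(b) all carry weight $\flowe=0$ and the execution is independent of $\discountfactor$, after which \Cref{ob:JMMSV equivalence TWOLFLP} with $\discountfactor=0$, $\costscalar=1$ closes the argument. Your write-up merely makes explicit the induction over events and the bookkeeping for zero-flow edges that the paper leaves implicit.
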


To summarize our discussion, 
the 2-Chance Greedy Algorithm
gets the best of both worlds:
In the classic MFLP, 
for any discount factor $\discountfactor\in[0, 1]$ and opening cost scalar $\costscalar = 1$,
the 2-Chance Greedy Algorithm recovers the JMMSV algorithm 
and thus is a $\greedyapproxMFLP$-approximation.
In the \TWOLFLP,
while the JMMSV algorithm degenerates to
an $\Omega(\log(n))$-approximation,
the 2-Chance Greedy Algorithm with a positive constant 
discount factor  
$\discountfactor$ 
remains a constant-approximation, e.g., $\greedyapprox$-approximation for $\discountfactor = 1$ and $\costscalar = 2$, as the next section formalizes.

\section{Obtaining an Approximation Ratio via Strongly 
Factor-Revealing Quadratic Programs}
\label{sec:greedy two location analysis}

In this section, we analyze the approximation ratio
of the 2-Chance Greedy Algorithm. 
To obtain our result, we first introduce 
the following quadratic program \ref{eq:strongly factor-revealing quadratic program}
parameterized by 
$\nHat\in\naturals$, $\discountfactor \in[0, 1]$, and $\costscalar\in[1, 1 + \discountfactor]$:
\begin{align}
% \tag{$\mathcal{P}_{\texttt{SFR-Quad}}(\nHat,\discountfactor)$}
\tag{$\SFRP{\nHat,\discountfactor,\costscalar}$}
\label{eq:strongly factor-revealing quadratic program}
&\arraycolsep=1.4pt\def\arraystretch{2.2}
    \begin{array}{llll}
    \max\limits_{
    \substack{\costHat\geq 0,
    \\
    \qHatBf,
    \muHatBf,
    \distanceHatBf,
    \distanceHatStarBf \geq 
    \zerobf
    }} &
     \displaystyle\sum\nolimits_{a\in[\nHat],b\in[a]}
     \qHat(a,b)\cdot 
    \left(\frac{1 + \discountfactor}{\costscalar}
    \cdot
    \muHat(a,b) - 
    \left(
    \frac{1 + \discountfactor}{\costscalar} - 1\right)\cdot\distanceHatStar(a,b)\right)
    & \text{s.t.} & \\
    \SFRCmonotonicity &
    \muHat(a, b) \leq \muHat(a', b')
    &
    a, a'\in[\nHat],
    b\in[a], b'\in[a'], b < b'
    \\
    \SFRCtriangleineq &
    \discountfactor\cdot \muHat(a', b') \leq 
    \distanceHatStar(a, b) + \distanceHat(a, b)
    +
    \distanceHat(a', b')
    &
    a,a'\in[\nHat],
    b\in[a], b'\in[a'],
    a < a'
    &
    \\
    \SFRCcontribution &
    \displaystyle\sum\nolimits_{\substack{a' \in [a + 1:\nHat]\\
    b' \in [a]}}
    % \displaystyle\sum\nolimits_{a' \in [a + 1:\nHat]}
    % \displaystyle\sum\nolimits_{b' \in [a]}
    \qHat(a',b')\cdot \plus{\discountfactor\cdot \muHat(a', b') - \distanceHat(a',b')}
    &
    \\
    &
    \quad
    +\displaystyle\sum\nolimits_{
    \substack{a' \in [a+1:\nHat]\\
    b' \in [a + 1:a']}}
    % +\displaystyle\sum\nolimits_{a' \in [a+1:\nHat]}
    % \displaystyle\sum\nolimits_{b' \in [a + 1:a']}
    \qHat(a',b')\cdot \plus{\discountfactor\cdot \muHat(a, a) - \distanceHat(a',b')}
    \leq \costscalar\cdot \costHat
    \quad
    &
    a\in[\nHat - 1]
    \\
    \SFRCdistance &
    \distanceHat(a, b) \leq \muHat(a, b)
    &
    a\in[\nHat],
    b\in[a]
    \\
    \SFRCconnectcost &
    \distanceHatStar(a, b) \leq \muHat(a, b)
    &
    a\in[\nHat],
    b\in[a]
    \\
     \SFRCtotalcost &
     {\costHat}
     +
     \displaystyle\sum\nolimits_{a\in[\nHat],b\in[a]}
     \qHat(a,b)\cdot \distanceHat(a,b)
     \leq 1
     &
     \\
     \SFRCdensity &
     \displaystyle\sum\nolimits_{a \in [b:\nHat]}
     \qHat(a, b) = 1
     &
     b\in[\nHat]
    \end{array}
\end{align}
Given parameters $\nHat\in\naturals$
and $\discountfactor\in[0, 1]$,
program
\ref{eq:strongly factor-revealing quadratic program}
with non-negative variables\footnote{We reuse notation $\distance,\costHat$ of the \TWOLFLP\
and $\greedycounter$ of the 2-Chance Greedy Algorithm in program~\ref{eq:strongly factor-revealing quadratic program},
since the program is constructed from the analysis of the algorithm.}
$\costHat$, $\{\qHat(a, b), \muHat(a, b),\distanceHat(a, b),
\distanceHatStar(a, b)\}_{a\in[\nHat], b \in [a]}$
maximizes its quadratic objective 
over linear/quadratic constraints \SFRCmonotonicity --\SFRCdensity.\footnote{By introducing  auxiliary variables and inequalities,
we 
can change constraint~\SFRCcontribution\ with a quadratic constraint.}
At a high level, constraint~\SFRCtriangleineq\ is derived from the triangle inequality over locations and the condition of Event~(a) in the 2-Chance Greedy Algorithm, and
\SFRCcontribution\ is derived from the condition of Event~(b) in the algorithm.
Both the objective function and the constraints are explained 
in more detail in \Cref{sec:primal dual framework}
and \Cref{sec:strongly factor-revealing program}.
Given any program~$\mathcal{P}$,
we denote its optimal objective value by 
$\Obj{\mathcal{P}}$.

We are now ready to present the main result of this paper.
\begin{restatable}{theorem}{greedymodified}
\label{thm:greedy modified}
In the \TWOLFLP,
the approximation ratio of 
the 2-Chance Greedy Algorithm
with discount factor $\discountfactor\in[0, 1]$ and opening cost scalar $\costscalar\in[1, 1 + \discountfactor]$
is 
at most equal to $\SFRapproxratio$,
where 
\begin{align*}
    \SFRapproxratio = 
    \inf\limits_{\nHat\in\naturals}
    \Obj{\text{\ref{eq:strongly factor-revealing quadratic program}}}
\end{align*}
\end{restatable}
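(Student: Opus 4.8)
The plan is to follow the primal--dual plus two-stage factor-revealing route sketched in the introduction. First I would write the natural LP relaxation of $\Cost{\OPT}$ (an opening indicator for each facility and a fractional connection variable for each edge--facility pair) together with its dual, in which every edge $\edge$ carries a single nonnegative dual variable. Following the dual-fitting philosophy, I set this variable equal to the per-individual decomposed cost $\greedycounter(\edge)$ produced by the execution of \Cref{alg:greedy modified}, so that summing over edges gives $\sum_{\edge}\flowe\,\greedycounter(\edge)=\Cost{\stores}$ by construction; this makes the dual objective coincide with the algorithm's cost and discharges condition~(i) automatically. It then remains to show that, for a target ratio $\Gamma$, the scaled assignment $\{\greedycounter(\edge)/\Gamma\}$ is dual feasible, i.e.\ that for each candidate facility $i$ the aggregated constraint $\sum_{\edge}\flowe\,\plus{\greedycounter(\edge)-\distance(\edge,i)}\le\Gamma\cdot\opencosti$ (together with its analogue for the second-chance connection terms, discounted by $\discountfactor$ and scaled by $\costscalar$) holds. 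Weak duality then yields $\Cost{\stores}\le\Gamma\cdot\Cost{\OPT}$, so the approximation ratio is at most the smallest feasible $\Gamma$.

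The second step reduces the verification of this single dual constraint to a family of optimization problems. Fixing the candidate facility $i$ and restricting attention to the edges that contribute to its constraint, I would order those edges by their (second) connection times and invoke the structural lemma (\Cref{lem:structural lemma}) to encode every admissible configuration of candidate costs, algorithmic distances, and optimal connection costs as the feasible region of the weakly factor-revealing program~\ref{eq:weakly factor-revealing program} with its parameter set to the number of edges considered. The two flagged difficulties enter exactly here: because the connection cost over \emph{individuals} need not satisfy the triangle inequality (challenge~(i)), I would replace the individual-level inequality used in the MFLP analysis with the weaker location-level inequality that becomes constraint~\SFRCtriangleineq; and because the decomposed costs are non-monotone in the connection time (challenge~(ii)), the ordering of edges and the separate bookkeeping of partially- versus fully-connected contributions must be handled so that the Event~(b) condition turns into constraint~\SFRCcontribution. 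This shows that the approximation ratio is at most $\sup_{n}\Obj{\text{\ref{eq:weakly factor-revealing program}}}$.

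The final and most delicate step passes from this supremum of (non-linear, non-convex) programs to a \emph{single} evaluation of the strongly factor-revealing quadratic program~\ref{eq:strongly factor-revealing quadratic program}. For a fixed $\nHat$ I would take an arbitrary feasible point of~\ref{eq:weakly factor-revealing program} for any large parameter $m$ and compress its $m$ blocks of variables into the $\nHat$ blocks indexed by $(a,b)$, producing values of $\qHat,\muHat,\distanceHat,\distanceHatStar,\costHat$ that are feasible for~\ref{eq:strongly factor-revealing quadratic program} with objective no smaller than the original. Since a naive averaging would destroy feasibility of the nonlinear constraints~\SFRCtriangleineq\ and~\SFRCcontribution, the batching must be \emph{solution-dependent}: the partition of the $m$ indices into $\nHat$ groups is chosen according to the particular feasible point being compressed, so that the $\plus{\cdot}$ contribution terms and the triangle-type inequalities are preserved group-by-group rather than only on average. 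Proving that such a partition always exists and that both constraints survive the aggregation is where I expect the bulk of the effort to lie. Once it is established, $\Obj{\text{\ref{eq:strongly factor-revealing quadratic program}}}\ge\sup_{m}\Obj{\text{\ref{eq:weakly factor-revealing program}}}\ge\Gamma$ holds for \emph{every} $\nHat$, and taking the infimum over $\nHat$ gives the claimed bound on the approximation ratio, namely $\SFRapproxratio=\inf_{\nHat\in\naturals}\Obj{\text{\ref{eq:strongly factor-revealing quadratic program}}}$.
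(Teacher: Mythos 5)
Your three-stage plan (primal--dual, a weakly factor-revealing program extracted via \Cref{lem:structural lemma}, then a solution-dependent batching into a single quadratic program) is exactly the paper's architecture, and Steps~2 and~3 correctly identify both the role of the structural lemma and the reason the batching cannot be the naive uniform one. The genuine gap is in Step~1, in the dual assignment. Setting $\duale=\flowe\cdot\greedycounter(\edge)$ and asserting $\sum_{\edge}\flowe\,\greedycounter(\edge)=\Cost{\stores}$ ``by construction'' is the MFLP bookkeeping, and it fails for the 2-Chance Greedy Algorithm: an edge in $\doublecountedgeset$ pays toward the opening of \emph{two} facilities (contributing $\greedycounter(\edge)-\distance(\edge,i)$ at its first connection and $\discountfactor\cdot\greedycounter(\edge)-\distance(\edge_\edgeindex,i')$ at its second), and Event~(b) charges these contributions against $\costscalar\cdot\opencosti$ rather than $\opencosti$. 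For $\costscalar<1+\discountfactor$ the cost attributable to such an edge can strictly exceed $\flowe\cdot\greedycounter(\edge)$: with $\discountfactor=1$, $\costscalar=1$ and an edge fully connected to two facilities at distance $0$ from its two locations, the edge is responsible for $2\greedycounter(\edge)$ of opening cost. So your dual objective does not even upper-bound $\Cost{\stores}$, and condition~(i) is not automatic. The paper's assignment~\eqref{eq:dual assignment} is the nontrivial fix: it treats $\singlecountedgeset$ and $\doublecountedgeset$ separately and mixes $\greedycounter(\edge)$ with the two realized connection distances, and even then \Cref{lem:dual assignment objective value} only yields the inequality $\Cost{\stores}\le\sum_{\edge}\duale$. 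This specific form is also what produces the objective $\frac{1+\discountfactor}{\costscalar}\cdot\muHat(\ell)-\bigl(\frac{1+\discountfactor}{\costscalar}-1\bigr)\cdot\distanceHatStar(\ell)$ of \ref{eq:weakly factor-revealing program} and hence of \ref{eq:strongly factor-revealing quadratic program}; with $\duale=\flowe\cdot\greedycounter(\edge)$ you would be led to a factor-revealing program with objective $\sum_{\ell}\muHat(\ell)$ and no $\distanceHatStar$ variables, whose supremum is not $\SFRapproxratio$, so the stated bound would not follow.

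Two secondary remarks. First, the paper works with the configuration LP~\ref{eq:lp relaxtion} over service regions $\starinstance=(i,\EdgeTilde)$ rather than the ``natural'' LP you describe; the two duals are equivalent here, but the service-region form is what lets each region map cleanly to one instance of \ref{eq:weakly factor-revealing program} with $\mHat=|\EdgeTilde|$ and $\permu$ given by the connection times. Second, you correctly flag that proving the solution-dependent partition exists and preserves \WFRCtriangleineq\ and \WFRCcontribution\ is where the bulk of the effort lies, but note that the construction also has to manufacture the density constraint \SFRCdensity\ (via the decomposition step that splits mass until each row sums to one) --- without that normalization the strongly factor-revealing program is degenerate --- so the batching is a four-step argument (pivotal index subset, two-dimensional partition, averaging, decomposition-plus-rebatching), not a single aggregation.
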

We emphasize that 
approximation ratio $\SFRapproxratio$
in the above theorem is taking the \emph{infimum} of 
$\Obj{\text{\ref{eq:strongly factor-revealing quadratic program}}}$
over all possible $\nHat\in\naturals$.
Therefore, 
% the approximation ratio 
$\SFRapproxratio$
is
upperbounded by 
$\Obj{\text{\ref{eq:strongly factor-revealing quadratic program}}}$
for every $\nHat\in\naturals$.
This is different from the literature \citep[e.g.,][]{JMMSV-02,MYZ-06}
where 
the approximation ratio is upperbounded by the 
\emph{supremum} of a family of factor-revealing programs over its parameter space.
Therefore, we refer to program~\ref{eq:strongly factor-revealing quadratic program}
as the \emph{strongly} factor-revealing quadratic program
to highlight 
this distinction.
% \footnote{The concept 
% of the strongly factor-revealing linear program is 
% initially introduced by \citet{MY-11} for the competitive ratio analysis 
% for online bipartite matching with random arrivals.
% In \Cref{sec:strongly factor-revealing program}, we discuss our 
% strongly factor-revealing quadratic program 
% and compare it with the original technique of \citet{MY-11}
% in more detail.}

In practice, 
the social planner can implement the 2-Chance Greedy Algorithm
by varying the discount factor $\discountfactor$ and the opening cost scalar $\costscalar$,
and then select the solution with the minimum cost. See \Cref{sec:numerical} for more discussion.
Nonetheless, to obtain a theoretical approximation upperbound, we numerically evaluate
program $\Obj{\SFRP{25,\discountfactor,\costscalar}}$ for every $\discountfactor \in \{0.1, 0.2, \dots, 1\}$ and $\costscalar\in\{1, 1.1, \dots, 1 + \discountfactor\}$ with software Gurobi. Among all parameter choices in this grid search, we observe that the 2-Chance Greedy Algorithm with $\discountfactor = 1$ and $\costscalar = 2$ attains the best numerical approximation upperbound as follows.

\begin{restatable}{proposition}{approxratiogammaone}
\label{prop:approx ratio gamma one}
    In the \TWOLFLP,
    the approximation ratio of 
    the 2-Chance Greedy Algorithm
with discount factor $\discountfactor=1$ and opening cost scalar $\costscalar = 2$
is at most equal to 
$\SFRapproxratio \leq 
\Obj{\SFRP{25,1,2}} \leq \greedyapprox$.
\end{restatable}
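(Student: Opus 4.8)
The plan is to split the displayed chain of inequalities into its three links and dispatch them in order. The leftmost inequality---that the approximation ratio is at most $\SFRapproxratio$---is precisely \Cref{thm:greedy modified} specialized to $\discountfactor = 1$ and $\costscalar = 2$; note that these parameters are admissible since $\costscalar = 2$ lies in $[1, 1 + \discountfactor] = [1,2]$, so this link requires nothing new. The middle inequality $\SFRapproxratio \leq \Obj{\SFRP{25,1,2}}$ is immediate from the definition $\SFRapproxratio = \inf_{\nHat \in \naturals} \Obj{\SFRP{\nHat,1,2}}$, because an infimum over $\nHat$ is bounded above by the value attained at the single choice $\nHat = 25$. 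Hence the whole proposition reduces to certifying the numerical bound $\Obj{\SFRP{25,1,2}} \leq \greedyapprox$ on the final (third) link.

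Before optimizing I would exploit the specific parameter choice to prune the program. At $\discountfactor = 1$, $\costscalar = 2$ we have $\frac{1 + \discountfactor}{\costscalar} = 1$, so the coefficient $\frac{1+\discountfactor}{\costscalar} - 1$ multiplying $\distanceHatStar(a,b)$ in the objective vanishes and the objective collapses to the bilinear form $\sum_{a \in [\nHat], b\in[a]} \qHat(a,b)\,\muHat(a,b)$. Since $\distanceHatStar$ then appears only on the right-hand side of \SFRCtriangleineq\ and is upper-bounded by $\muHat$ in \SFRCconnectcost, at any optimum one may set $\distanceHatStar(a,b) = \muHat(a,b)$, which eliminates that variable family and relaxes \SFRCtriangleineq\ to $\discountfactor \cdot \muHat(a',b') \leq \muHat(a,b) + \distanceHat(a,b) + \distanceHat(a',b')$. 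What remains is a maximization in $\costHat$ and $\{\qHat, \muHat, \distanceHat\}$ whose objective and whose constraint \SFRCcontribution\ (through the products $\qHat \cdot \plus{\discountfactor \muHat - \distanceHat}$) are bilinear, hence non-convex.

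The last and hardest step is to produce a \emph{rigorous} upper bound on this non-convex quadratic program. The difficulty is directional: a local or primal solver only returns a feasible point, i.e.\ a \emph{lower} bound on a maximization, which is useless here, so I must certify from above. I would do this in one of two ways. First, I could invoke Gurobi's global (spatial branch-and-bound) solver for non-convex QPs, which terminates with a proven optimality gap, and record that the certified upper bound is below $\greedyapprox$. Alternatively, to obtain a machine-checkable certificate, I would linearize the piecewise terms $\plus{\discountfactor \muHat - \distanceHat}$ with auxiliary nonnegative variables and lift every bilinear product into entries of a single matrix variable, then pass to a semidefinite (or McCormick) relaxation whose optimal value provably dominates $\Obj{\SFRP{25,1,2}}$ and can be solved to global optimality by a conic solver, yielding a dual certificate of the $\greedyapprox$ bound. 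The main obstacle is exactly this certification---ensuring the reported number is a genuine upper bound rather than a local optimum or an artifact of solver tolerance---which is why I would favor the relaxation route that comes with an explicit dual certificate.
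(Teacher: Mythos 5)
Your proposal matches the paper's proof: the first two links are exactly \Cref{thm:greedy modified} specialized to $\discountfactor=1$, $\costscalar=2$ and the trivial bound of an infimum by its value at $\nHat=25$, and the final inequality $\Obj{\SFRP{25,1,2}}\leq\greedyapprox$ is certified numerically by solving the non-convex quadratic program to global optimality with Gurobi (the paper's footnote mentions the same SDP-relaxation alternative you propose). Your emphasis on needing a certified \emph{upper} bound rather than a feasible point is the right concern, and Gurobi's spatial branch-and-bound for non-convex QPs supplies exactly that, so no new ingredient is required.
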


We complement \Cref{prop:approx ratio gamma one} with the following 
lower bound, and defer its 
formal proof to \Cref{apx:approx ratio gamma one lower bound}.
\begin{restatable}{proposition}{approxratiogammaonelowerbound}
\label{prop:approx ratio gamma one lower bound}
There exists a \TWOLFLP\ instance
such that the approximation ratio of 
the 2-Chance Greedy Algorithm with discount factor $\discountfactor = 1$
and opening cost scalar $\costscalar = 2$
is at least $\greedyapproxLB$.
\end{restatable}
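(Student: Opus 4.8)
The plan is to run the factor-revealing analysis in reverse: instead of upper-bounding the worst-case ratio by $\Obj{\SFRP{\nHat, 1, 2}}$ as in \Cref{thm:greedy modified}, I would extract from a feasible point of a suitably constrained version of that program an explicit \TWOLFLP\ instance on which the 2-Chance Greedy Algorithm (\Cref{alg:greedy modified}) with $\discountfactor = 1$, $\costscalar = 2$ is driven into a costly execution. A convenient starting observation is that at $\discountfactor = 1$, $\costscalar = 2$ we have $\tfrac{1 + \discountfactor}{\costscalar} = 1$, so the program objective collapses to $\sum_{a,b}\qHat(a,b)\,\muHat(a,b)$ and the coefficient of $\distanceHatStar$ vanishes. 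Since $\muHat(a,b)$ plays the role of the frozen \candidatecosttext\ of a group of individuals and $\qHat(a,b)$ their mass, this objective is exactly the algorithm's total cost decomposed over individuals, while \SFRCtotalcost\ normalizes the optimal cost to $1$. Hence realizing any feasible point as an instance yields an instance whose ratio equals that point's objective value.

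First I would construct the linear program $\LBLP{\nHat}$ by adjoining to $\SFRP{\nHat, 1, 2}$ a family of linear constraints that (i) fix the signs of the $\plus{\cdot}$ expressions appearing in \SFRCcontribution\ so that each term deterministically survives or vanishes, turning that constraint into a linear equality $= \costscalar\cdot\costHat$ (reflecting that Event~(b) fires exactly at equality); and (ii) encode that the connection costs $\distanceHatStar(a,b)$ are induced by a common optimal facility per group, together with the metric-consistency relations among $\distanceHat$, $\distanceHatStar$, and $\muHat$ implied by \SFRCtriangleineq. The role of these extra constraints is to restrict attention to the subclass of program solutions that are \emph{realizable} as genuine \TWOLFLP\ instances. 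Since adjoining constraints only shrinks the feasible region, this also yields $\Obj{\LBLP{\nHat}} \leq \Obj{\SFRP{\nHat, 1, 2}}$, consistent with $\LBLP{\nHat}$ supplying a lower bound and $\SFRP{\nHat,1,2}$ an upper bound on the true worst-case ratio.

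The heart of the argument is converting an optimal solution of $\LBLP{\nHat}$ into an instance. I would read the index $b$ as the (discretized) time of an individual's first connection and $a \geq b$ as the time of her second connection, create a group of $\qHat(a,b)$ individuals for each pair $(a,b)$, and place locations for both the greedy facilities the algorithm is meant to open at each step and the cheap optimal facilities. Distances between locations would be set from $\distanceHat$ (to the greedy facility), $\distanceHatStar$ (to the optimal facility), and $\muHat$, then extended to a full metric over locations via shortest-path closure; here \SFRCtriangleineq --- derived from the triangle inequality over \emph{locations} and the Event~(a) condition --- is exactly what guarantees the prescribed numbers embed into a metric. Opening costs would be read off from $\costHat$ through the now-linear \SFRCcontribution. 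I would then verify, by induction over the discrete time steps (with an infinitesimal perturbation to fix tie-breaking, as in \Cref{example:JMMSV failure}), that the \candidatecosttext s $\greedycounter(\edge)$ rise exactly to the $\muHat$ values, that Events~(a) and~(b) trigger in the encoded order, and that each individual connects through home and then work at times $b$ and $a$; this gives $\Cost{\stores} = \sum_{a,b}\qHat(a,b)\,\muHat(a,b)$. Opening the optimal facilities yields $\Cost{\OPT} \leq \costHat + \sum_{a,b}\qHat(a,b)\,\distanceHat(a,b) \leq 1$ by \SFRCtotalcost, so the ratio is at least the objective.

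Finally, since $\LBLP{\nHat}$ is a genuine linear program, I would solve $\LBLP{500}$ numerically and read off $\Obj{\LBLP{500}} \geq \greedyapproxLB$, producing a concrete instance of ratio at least $\greedyapproxLB$. The hard part will be the correctness of the conversion step: I expect the delicate point to be showing that the continuous-time greedy execution on the discretized instance follows precisely the intended path --- candidate costs freezing at the right moments, Events firing in the right order, and no unintended early connection or premature facility opening --- while simultaneously the constructed distances form a valid metric realizing all of $\distanceHat$, $\distanceHatStar$, and $\muHat$. The additional constraints defining $\LBLP{\nHat}$ (sign fixing and the common-optimal-facility/metric relations) are exactly what make this induction and this embedding go through, and identifying constraints that are both necessary for realizability and loose enough to keep the objective near $\greedyapprox$ is where the real work lies.
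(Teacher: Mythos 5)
Your high-level plan coincides with the paper's: constrain the factor-revealing program to a realizable subclass $\LBLP{\cdot}$, convert an optimal solution into an explicit \TWOLFLP\ instance, and numerically evaluate $\LBLP{500}$. But the entire mathematical content of the proposition lives in the conversion step, and you have not carried it out --- you explicitly defer both the choice of "realizability" constraints and the induction showing the greedy follows the intended execution path, which is precisely where the paper does its work. As written, the proposal is a plan for a proof rather than a proof.

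More concretely, the realization you envision --- reading $b$ as the time of an individual's first connection and $a>b$ as the time of her second --- is substantially harder than what is needed, and it is not clear it can be made to work. Engineering an instance in which a candidate cost $\greedycounter(\edge)$ freezes at time $b$ yet the same edge triggers a \emph{later} Event~(b) at time $a$ through its frozen value requires controlling the shortest-path closure so that no unintended Event~(a) fires in between; this is the misalignment issue flagged in \Cref{remark:misalignment of greedy counter}. The paper avoids it entirely: its instance (proof of \Cref{lem:LB LP to instance}) makes each individual $i$ connect \emph{simultaneously} at time $\muHat(i)$ to two symmetric facilities, one reachable only from her home and one only from her work, each with opening cost $\muHat(i)-\distanceHatStar(i)$, while a single facility of cost $\costHat+\epsilon$ serves everyone in $\OPT$. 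This collapses $\LBLP{\mHat}$ to a one-dimensional program (effectively $\qHat(a,b)=\indicator{a=b}$ inside $\SFRP{\mHat,1,2}$), makes the greedy trace immediate, and yields algorithm cost $\sum_i\bigl(2\muHat(i)-\distanceHatStar(i)\bigr)\geq\sum_i\muHat(i)$ against $\Cost{\OPT}=1+\epsilon$. Note also that the realized ratio is only shown to be \emph{at least} the program objective (via $\distanceHatStar(i)\leq\muHat(i)$), not equal to it as you assert. Without an explicit construction playing the role of the paper's, the claimed bound of $\greedyapproxLB$ is not established.
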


In the end of \Cref{sec:greedy two location}, we discuss the necessity of having a positive constant discount factor $\discountfactor$ to obtain a constant approximation ratio in the \TWOLFLP. Here, we echo this insight by showing the following analytical but loose approximation upperbounds. See its formal proof in \Cref{apx:approx ratio general gamma}.\footnote{Interestingly, the approximation upperbound in \Cref{prop:approx ratio general gamma} does not depend on $\costscalar\in[1, 1 + \discountfactor]$. However, this upperbound is loose, and picking a proper $\costscalar$ is important both for our theoretical result (\Cref{prop:approx ratio gamma one}) and our numerical experiment (\Cref{sec:numerical}).}

\begin{restatable}{proposition}{approxratiogeneralgamma}
\label{prop:approx ratio general gamma}
    In the \TWOLFLP,
    the approximation ratio of 
    the 2-Chance Greedy Algorithm 
with discount factor $\discountfactor\in[0, 1]$ and 
opening cost scalar $\costscalar \in [1, 1 + \discountfactor]$
is at most equal to 
$\SFRapproxratio \leq 
\Obj{\SFRP{2,\discountfactor,\costscalar}} \leq \frac{6(1+\discountfactor)}{\discountfactor^2}$.
\end{restatable}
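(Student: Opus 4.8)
The plan is to leverage the infimum in \Cref{thm:greedy modified}: since $\SFRapproxratio = \inf_{\nHat\in\naturals}\Obj{\SFRP{\nHat,\discountfactor,\costscalar}}$ and $2\in\naturals$, we get $\SFRapproxratio \le \Obj{\SFRP{2,\discountfactor,\costscalar}}$ for free, so it suffices to bound the optimal value of the single program $\SFRP{2,\discountfactor,\costscalar}$ by $\frac{6(1+\discountfactor)}{\discountfactor^2}$. For $\nHat=2$ the index set $\{(a,b):a\in[2],\,b\in[a]\}$ is just the three pairs $(1,1),(2,1),(2,2)$, and \SFRCdensity\ pins down $q(2,2)=1$ and $q(1,1)+q(2,1)=1$. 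First I would write out the objective together with all of \SFRCmonotonicity--\SFRCtotalcost\ for these three pairs, obtaining a small program whose only ``coupling'' constraints are the two triangle inequalities \SFRCtriangleineq\ (both with $(a,b)=(1,1)$) and the single contribution inequality \SFRCcontribution\ (with $a=1$).

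Next I would eliminate the variables that only hurt the objective. Because $\costscalar\le 1+\discountfactor$ forces $\frac{1+\discountfactor}{\costscalar}\ge 1$, each $c(a,b)$ enters the objective with a nonpositive coefficient; since $c(2,1)$ and $c(2,2)$ appear in no constraint other than \SFRCconnectcost, setting $c(2,1)=c(2,2)=0$ is without loss, and discarding the residual term in $c(1,1)$ only enlarges the bound. This reduces the objective to $\frac{1+\discountfactor}{\costscalar}\bigl(q(1,1)\,\alpha(1,1)+q(2,1)\,\alpha(2,1)+\alpha(2,2)\bigr)$, so the task becomes bounding the three candidate costs $\alpha(1,1),\alpha(2,1),\alpha(2,2)$.

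The heart of the argument is to bound these $\alpha$'s in the right order. The triangle inequalities \SFRCtriangleineq\ give $\discountfactor\,\alpha(2,b')\le c(1,1)+d(1,1)+d(2,b')$ for $b'\in\{1,2\}$, and combining this with \SFRCdistance\ and \SFRCconnectcost\ (which yield $c(1,1),d(1,1)\le\alpha(1,1)$) produces $\discountfactor\,\alpha(2,b')\le 2\alpha(1,1)+d(2,b')$. The main obstacle --- and the step where the analysis genuinely differs from a naive ``all costs are bounded by the total cost'' argument --- is that $\alpha(1,1)$ is \emph{not} controlled by the normalization \SFRCtotalcost\ directly. Instead I would invoke \SFRCcontribution, whose second summand alone gives $\plus{\discountfactor\,\alpha(1,1)-d(2,2)}\le\costscalar f$ and hence $\alpha(1,1)\le\bigl(\costscalar f+d(2,2)\bigr)/\discountfactor$ (valid regardless of the sign inside $(\cdot)^+$). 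Substituting these bounds, the reduced objective is dominated by $\tfrac{1+\discountfactor}{\costscalar}$ times a linear functional of $f,d(1,1),d(2,1),d(2,2)$, which are in turn constrained by \SFRCtotalcost: $f+q(1,1)d(1,1)+q(2,1)d(2,1)+d(2,2)\le 1$.

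The final step is a routine linear maximization. Dropping the nonnegative term $q(1,1)d(1,1)$ only weakens \SFRCtotalcost, leaving a simplex in $(f,\,q(2,1)d(2,1),\,d(2,2))$; maximizing a linear function there places all mass on one coordinate, so it suffices to bound the three resulting vertex coefficients. The key cancellation is that the coefficient of $f$ becomes $\costscalar$-free (the factor $\costscalar$ from $\alpha(1,1)\le(\costscalar f+d(2,2))/\discountfactor$ cancels the leading $\frac{1+\discountfactor}{\costscalar}$), equalling $\frac{(1+\discountfactor)(\discountfactor+4)}{\discountfactor^2}$, while the coefficients attached to $d(2,2)$ and $q(2,1)d(2,1)$ carry a factor $1/\costscalar\le 1$ and are controlled using $\costscalar\ge 1$; the dominant one is the $d(2,2)$-coefficient, at most $\frac{2(1+\discountfactor)(\discountfactor+2)}{\discountfactor^2}$. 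Bounding each with $\discountfactor\le 1$ --- concretely $\discountfactor+4\le 5$ and $2(\discountfactor+2)\le 6$ --- shows every vertex value, and hence $\Obj{\SFRP{2,\discountfactor,\costscalar}}$, is at most $\frac{6(1+\discountfactor)}{\discountfactor^2}$. This simultaneously explains why the estimate is independent of $\costscalar$, as noted just before the statement. I expect the remaining work to be pure bookkeeping: carrying the weights $q(1,1)=1-q(2,1)$ through the crude bounds $q(1,1)\le 1$ and $q(2,1)+1\le 2$, and checking that each of the three vertex coefficients meets the target $\frac{6(1+\discountfactor)}{\discountfactor^2}$ uniformly over $\costscalar\in[1,1+\discountfactor]$.
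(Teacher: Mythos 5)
Your proposal is correct and follows essentially the same route as the paper: both arguments extract $\alpha(1,1)\le(\costscalar \costHat+\distanceHat(2,2))/\discountfactor$ from \SFRCcontribution\ at $a=1$ (using $\qHat(2,2)=1$ from \SFRCdensity), then control $\alpha(2,\cdot)$ via \SFRCtriangleineq\ combined with \SFRCdistance/\SFRCconnectcost, and close with the normalization \SFRCtotalcost. The only difference is cosmetic bookkeeping at the end — the paper invokes \SFRCmonotonicity\ to bound every $\muHat$ by $\muHat(2,2)\le 3\costscalar/\discountfactor^2$ and multiplies by the total mass $2$, whereas you keep the bounds as a linear functional and check the three simplex vertices; your coefficients $\frac{(1+\discountfactor)(\discountfactor+4)}{\discountfactor^2}$ and $\frac{2(1+\discountfactor)(\discountfactor+2)}{\discountfactor^2}$ are correct and land under the same $\frac{6(1+\discountfactor)}{\discountfactor^2}$ bound.
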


In the remainder of this section, 
we sketch the proof of \Cref{thm:greedy modified}
and defer some technical details to \Cref{apx:missing proofs}.
In \Cref{sec:primal dual framework},
we first use a primal-dual framework 
to upperbound the approximation ratio 
of the 2-Chance Greedy Algorithm as 
the supremum over
a different
(and more complicated -- see \ref{eq:weakly factor-revealing program})
family of 
factor-revealing programs,
which are
% ~\ref{eq:weakly factor-revealing program},
% which is 
non-linear, non-quadratic, and non-convex.
Then,
we upperbound this family of 
factor-revealing programs
%~\ref{eq:weakly factor-revealing program} 
through the strongly factor-revealing quadratic 
program~\ref{eq:strongly factor-revealing quadratic program}
and complete the proof of \Cref{thm:greedy modified}
in \Cref{sec:strongly factor-revealing program}.

\subsection{Construction of factor-revealing program~\ref{eq:weakly factor-revealing program}}
\label{sec:primal dual framework}

In this part,
we use the primal-dual analysis framework
(initially developed by \citealp{JMMSV-02}
for the JMMSV algorithm in the MFLP)
and 
provide  
%introduce 
three lemmas 
that shed light on the structure of  the 2-Chance Greedy Algorithm. We then leverage these to obtain our main approximation result  for \ref{eq:weakly factor-revealing program}, stated next:
\begin{lemma}
\label{lem:weakly factor-revealing program}
In the \TWOLFLP,
the approximation ratio of 
the 2-Chance Greedy Algorithm 
with discount factor $\discountfactor\in[0, 1]$
and opening cost scalar $\costscalar \in[1, 1 + \discountfactor]$
is 
at most equal to $\WFRapproxratio$ where
\begin{align*}
    \WFRapproxratio=
    \sup\limits_{\mHat\in\naturals, 
\permu
% \in[\mHat]^{\positivereals}
:[\mHat]\rightarrow\positivereals}
\Obj{\text{\ref{eq:weakly factor-revealing program}}}
\end{align*}
Here \ref{eq:weakly factor-revealing program}
is the maximization program parameterized by 
$\mHat\in\naturals$ 
and $\permu:[\mHat]\rightarrow\positivereals$
defined as follows:
\begin{align}
\tag{$\WFRP{\mHat,\permu,\discountfactor,\costscalar}$}
\label{eq:weakly factor-revealing program}
% \label{eq:weakly factor-revealing program}
&\arraycolsep=1.4pt\def\arraystretch{2.2}
    \begin{array}{llll}
    \max\limits_{
    \substack{\costHat\geq 0,
    \\
    \muHatBf,
    \distanceHatBf,
    \distanceHatStarBf \geq 
    \zerobf
    }} &
     \displaystyle\sum\nolimits_{\ell\in[\mHat]}
     \frac{1 + \discountfactor}{\costscalar}
    \cdot
    \muHat(\ell) - 
    \left(\frac{1 + \discountfactor}{\costscalar} - 1\right)\cdot\distanceHatStar(\ell)
    & \text{s.t.} & \\
    \WFRCtriangleineq &
    \discountfactor\cdot \muHat(\ell') \leq 
    \distanceHatStar(\ell) + \distanceHat(\ell)
    +
    \distanceHat(\ell')
    &
    \ell,\ell'\in[\mHat],
    \permu(\ell) < \permu(\ell')
    &
    \\
    \WFRCcontribution &
    \displaystyle\sum\nolimits_{\ell'\in[\mHat]:
    \permu(\ell') \geq \permu(\ell)}
    \plus{\discountfactor\cdot 
    \min\left\{\muHat(\ell), \muHat(\ell')\right\} - \distanceHat(\ell')}
    \leq \costscalar\cdot \costHat\quad
    &
    \ell \in[\mHat]
    \\
    \WFRCconnectcost &
    \distanceHatStar(\ell) \leq \muHat(\ell)
    &
    \ell\in[\mHat]
    \\
     \WFRCtotalcost &
     {\costHat}{} 
     +
     \displaystyle\sum\nolimits_{\ell\in[\mHat]}
     \distanceHat(\ell)
     \leq 1
     &
    \end{array}
\end{align}
\end{lemma}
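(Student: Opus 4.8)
The plan is to run a dual-fitting argument in the spirit of \citet{JMMSV-02}, adapted to the two-connection structure of the 2-Chance Greedy Algorithm. First I would write the natural LP relaxation of the \TWOLFLP, viewing each edge $\edge$ as a client of weight $\flowe$ that must be (fractionally) served by an open facility at distance $\distance(\edge,i)=\min\{\distance(\edgehome,i),\distance(\edgework,i)\}$, and pass to its dual. The dual carries one nonnegative variable per edge, has objective equal to the $\flowe$-weighted sum of these variables, and (after projecting out the facility-assignment variables) is governed by the family of constraints $\sum_{\edge}\flowe\,(\mu(\edge)-\distance(\edge,i))^{+}\le\opencosti$, one per location $i$. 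By weak duality it therefore suffices to produce an assignment that (i) accounts for the whole algorithmic cost $\Cost{\SOL}$ and (ii) satisfies these facility constraints up to the claimed factor $\WFRapproxratio$.

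For (i), I would read the dual values off the execution: assign to each edge its candidate cost $\greedycounter(\edge)$ at the instant it leaves the unconnected set, and decompose $\Cost{\SOL}$ over individuals by charging to each the distance it ultimately pays plus its (discounted) contributions to the facilities it helped open through Event~(b). The two chances---a full first connection and a $\discountfactor$-discounted second one---together with the fact that Event~(b) opens a facility only once its scaled cost $\costscalar\,\opencosti$ is met, are precisely what turn this per-individual charge into the objective coefficient $\tfrac{1+\discountfactor}{\costscalar}$ on $\muHat(\ell)$ and the correction $-(\tfrac{1+\discountfactor}{\costscalar}-1)\distanceHatStar(\ell)$ of \ref{eq:weakly factor-revealing program}; checking this accounting identity is the routine bookkeeping part of the proof.

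The substance is (ii), which I would organize through three structural lemmas on feasible execution paths (\Cref{lem:structural lemma} and its companions). Constraint \WFRCconnectcost\ is the easy statement that an edge's realized connection distance never exceeds its candidate cost. Constraint \WFRCcontribution\ is the Event~(b) budget: summing the discounted cost improvements of all individuals still ``active'' at a given time cannot exceed $\costscalar\,\costHat$; the term $\min\{\muHat(\ell),\muHat(\ell')\}$ encodes that, at the earlier time indexed by $\ell$, the candidate cost of a later individual $\ell'$ equals the smaller of the current time and its own final value---this is exactly the device that tames the non-monotonicity of the decomposed cost (challenge (ii)). I would then fix an OPT facility $i^{\star}$, let $\permu$ order the individuals it serves in OPT by their first-connection times, set $\distanceHat(\ell)$ to their OPT distances, $\distanceHatStar(\ell)$ to their algorithmic distances, and $\costHat=\opencost_{i^{\star}}$; normalizing $\Cost{\OPT}$ restricted to this group to $1$ (constraint \WFRCtotalcost) makes the extracted tuple feasible for \ref{eq:weakly factor-revealing program} with this $(\mHat,\permu)$, so the ratio between this group's contribution to $\Cost{\SOL}$ and to $\Cost{\OPT}$ is at most $\Obj{\ref{eq:weakly factor-revealing program}}$. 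Summing over the OPT facilities (each individual lies in exactly one group) and taking the supremum over $(\mHat,\permu)$ then yields $\Cost{\SOL}\le\WFRapproxratio\cdot\Cost{\OPT}$.

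The main obstacle is constraint \WFRCtriangleineq. In the single-location analysis one bounds a later client's candidate cost by a triangle inequality \emph{through the shared facility}, but here the distance over edges need not satisfy the triangle inequality (challenge (i)): the OPT connection and the algorithmic connection of the earlier individual $\ell$ may use its home and its work location respectively, so a naive path would incur the uncontrolled term $\distance(\ell_{\Home},\ell_{\Work})$. I would instead route the inequality entirely through the shared \emph{locations}, charging the path $\ell'\to i^{\star}\to(\text{location of }\ell)\to(\text{algorithmic facility of }\ell)$ so that only the controlled quantities $\distanceHatStar(\ell)$, $\distanceHat(\ell)$, and $\distanceHat(\ell')$ appear, with the discount $\discountfactor$ arising because the relevant connection of $\ell'$ would be its second one. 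Making this location-level argument go through for every admissible pair $(\ell,\ell')$ with $\permu(\ell)<\permu(\ell')$ is the delicate step, and it is the reason the resulting program is only weakly (rather than strongly) factor-revealing and is non-linear.
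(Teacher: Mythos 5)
Your architecture matches the paper's proof of \Cref{lem:weakly factor-revealing program}: an LP relaxation with one dual variable per edge, a dual assignment read off the execution path whose sum dominates $\Cost{\SOL}$ (the paper's \Cref{lem:dual assignment objective value}), and, for each facility $i^\star$ of a candidate service region, a normalized tuple $(\costHat,\muHat,\distanceHat,\distanceHatStar)$ that is feasible for \ref{eq:weakly factor-revealing program} by the structural lemma --- the location-level triangle inequality for \WFRCtriangleineq, the Event~(b) budget with the $\min$ term for \WFRCcontribution, and ``realized connection distance at most candidate cost'' for \WFRCconnectcost. One ancillary remark: you state the dual constraints in the projected per-location form $\sum_\edge\flowe\plus{\duale-\distance(\edge,i)}\le\opencosti$, but what your argument actually verifies is the per-star ratio $\sum_{\edge\in\EdgeTilde}\duale\le\Gamma\,(\opencost_{i^\star}+\sum_{\edge\in\EdgeTilde}\distance(\edge,i^\star))$; that is the constraint of the configuration LP the paper uses (\ref{eq:lp relaxtion}), whose normalization is exactly \WFRCtotalcost. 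Your direct summation over the stars of $\OPT$ still closes the argument, so this is a presentational mismatch rather than an error.

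The genuine gap is your choice of $\permu$. You order the group served by $i^\star$ by \emph{first-connection times} and set $\distanceHatStar(\ell)$ to ``their algorithmic distances.'' This breaks \WFRCtriangleineq. The structural inequality available from the algorithm is: if $\edge$ connects through location $\edge_\edgeindex$ strictly before $\edge'$ connects through $\edge'_{\edgeindex'}$, then $\discountfactor\cdot\greedycounter(\edge')\le \distance(\edge_\edgeindex,\connectmapping(\edge,\edgeindex))+\distance(\edge_\edgeindex,i^\star)+\distance(\edge'_{\edgeindex'},i^\star)$. To land on the program's variables you must take $\edgeindex=\config_{i^\star}(\edge)$ and $\edgeindex'=\config_{i^\star}(\edge')$, so that $\distance(\edge_\edgeindex,i^\star)=\distance(\edge,i^\star)=\distanceHat(\ell)$ and the first term equals $\distanceHatStar(\ell)$. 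But an edge can connect \emph{first} through its $i^\star$-far location and only much later (or never) through its $i^\star$-near one, while another edge connects through its $i^\star$-near location in between; for such a pair the first-connection order and the $\config_{i^\star}$-connection order disagree, and the only instance of the structural inequality your ordering licenses routes through the far location, producing the uncontrolled quantity $\distance(\edge_{\edgeindex},i^\star)>\distance(\edge,i^\star)$, which is not a variable of \ref{eq:weakly factor-revealing program}. The paper's fix is precisely to set $\permu(\edgetoellmapping(\edge))=\permuTGD(\edge,\config_{i^\star}(\edge))$, the time of connection \emph{through the $i^\star$-closest location}, and for doubly-connected edges to define $\distanceHatStar(\ell)$ as the distance to the facility reached through that same location; this is also the ordering under which the Event~(b) contradiction argument yields \WFRCcontribution. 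Without pinning down $\permu$ and $\distanceHatStar$ in this way, the feasibility step fails.
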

In the above lemma,
program~\ref{eq:weakly factor-revealing program}
is parameterized by a natural number $\mHat\in\naturals$,
and function~$\permu\in\positivereals^{\mHat}$ which maps 
$[\mHat]$ to ${\positivereals}$.
It has non-negative variables 
$\costHat$, $\{\muHat(\ell),\distanceHat(\ell),
\distanceHatStar(\ell)\}_{\ell\in[\mHat]}$.
Function $\permu$ as a parameter of program~\ref{eq:weakly factor-revealing program}
essentially specifies 
a total preorder\footnote{Total preorder 
is a binary relation 
satisfying reflexivity, transitivity, and is total.}
over $[\mHat]$, each of which in turn corresponds to different sets
of constraints in~\WFRCtriangleineq\ and \WFRCcontribution.

At a high level, constraint~\WFRCtriangleineq\ comes from the triangle inequality over locations and the condition of Event~(a) of the algorithm,
\WFRCcontribution\ comes from the condition of Event~(b), 
\WFRCtotalcost\ comes from the algorithm construction, and \WFRCtotalcost\ is a normalization.
It is worth highlighting that all constraints other than constraint~\WFRCcontribution ~are linear, as is the objective function.

As we discussed after \Cref{thm:greedy modified},
though \Cref{lem:weakly factor-revealing program}
already provides an upperbound $\WFRapproxratio$ 
on the approximation ratio
of the 2-Chance Greedy Algorithm, 
$\WFRapproxratio$
is hard to be evaluated,  since it is the supremum of 
non-linear/non-quadratic/non-convex\footnote{Due to term $\min\{\muHat(\ell),\muHat(\ell')\}$,
it is hard to change 
constraint~\WFRCcontribution\ with a linear/quadratic constraint by 
introducing auxiliary variables and inequalities.
Furthermore, because of constraint~\WFRCcontribution, 
the set of feasible solutions is not convex.}
programs~\{\ref{eq:weakly factor-revealing program}\}
over all possible $\mHat\in\naturals$ and $\permu\in \positivereals^{\mHat}$.
In \Cref{sec:strongly factor-revealing program},
we thus upperbound this family of factor-revealing 
programs
~\{\ref{eq:weakly factor-revealing program}\}
through the strongly factor-revealing programs~\{\ref{eq:strongly factor-revealing quadratic program}\},
which can be both analytically analyzed (albeit leading to loose bounds)
and numerically computed.

To prove \Cref{lem:weakly factor-revealing program},
we start by identifying the structural properties of
the feasible execution paths generated by the 2-Chance Greedy Algorithm.
In this structural lemma, we introduce two auxiliary notations 
$\permuTGD(\cdot,\cdot)$,
$\{\config_i(\cdot)\}_i$
as follows.
For each edge $\edge\in\Edge$ and $\edgeindex\in \hwset$,
let
$\permuTGD(\edge,\edgeindex)$
denote the time stamp when edge $\edge$ is connected 
to facility $\connectmapping(\edge,\edgeindex)$
through home/work location $\edge_\edgeindex$
in the 2-Chance Greedy Algorithm. 
If $\connectmapping(\edge,\edgeindex) = \nullsymbol$,
we let $\permuTGD(\edge,\edgeindex)$ be the time stamp of the termination of 
the algorithm.
For each location $i\in[n]$, 
let
$\config_i:\Edge\rightarrow\hwset$
be the mapping such that 
$\config_i(\edge) = h$
if and only if $\distance(\edgehome, i) <
\distance(\edgework, i)$.

\begin{restatable}[Structural properties of the 2-Chance Greedy Algorithm]{lemma}
{structural}
\label{lem:structural lemma}
Given any \TWOLFLP\ instance, 
after the termination of the 2-Chance Greedy Algorithm
with discount factor $\discountfactor \in [0, 1]$ and opening cost scalar $\costscalar\in\reals_+$:
\begin{enumerate}
\item [(i)]
for every location $i\in[n]$,
edges $\edge$, $\edge'\in \Edge$, and $\edgeindex$, 
$\edgeindex'\in\hwset$, 
if $\permuTGD(\edge,\edgeindex) < \permuTGD(\edge',\edgeindex')$,
then $\connectmapping(\edge,\edgeindex) \not=\nullsymbol$
and 
\begin{align*}
    \discountfactor\cdot \greedycounter(\edge') \leq 
\distance(\edge_\edgeindex, \connectmapping(\edge,\edgeindex))
+
\distance(\edge_\edgeindex, i)
+
\distance(\edge'_{\edgeindex'}, i)
\end{align*}

\item [(ii)] 
for every location $i\in[n]$, and edge $\edge\in \Edge$,
\begin{align*}
    \sum
    % \nolimits
    _{\edge'\in\Edge:\permuTGD(\edge',\config_i(\edge')) 
    \geq 
    \permuTGD(\edge,\config_i(\edge))}
    \plus{
    \discountfactor\cdot 
    \min\left\{
    \greedycounter(\edge),
    \greedycounter(\edge')
    \right\}
    -
    \distance(\edge'_{\config_i(\edge')}, i)
    }
    \leq
    \costscalar\cdot \opencost_i
\end{align*}

\item [(iii)] for every edge $\edge\in\Edge$
and $\edgeindex\in\hwset$,
if $\connectmapping(\edge,\edgeindex) \not=\nullsymbol$,
then 
\begin{align*}
\distance(\edge_\edgeindex, \connectmapping(\edge,\edgeindex))
\leq 
    \greedycounter(\edge)
\end{align*}
\end{enumerate}
\end{restatable}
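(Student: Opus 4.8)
The three parts reinforce one another, so I would establish them in the order (iii), (i), (ii), keeping throughout the basic invariant that the \candidatecosttext~$\greedycounter(\edge)$ of any edge grows at unit rate while $\edge\in\firstchanceset$ and freezes the instant $\edge$ leaves $\firstchanceset$ (its first connection); hence the frozen value equals the first connection time $\min\{\permuTGD(\edge,\Home),\permuTGD(\edge,\Work)\}$, and in particular $\permuTGD(\edge,\edgeindex)\ge \greedycounter(\edge)$ for each $\edgeindex\in\hwset$. Part (iii) I would dispatch first as a warm-up, directly from the connection rules of \Cref{alg:greedy modified}: a connection of $\edge$ through $\edge_\edgeindex$ to $\connectmapping(\edge,\edgeindex)$ arises either via Event (a), where $\greedycounter(\edge)=\distance(\edge_\edgeindex,\connectmapping(\edge,\edgeindex))$; via the first-connection branch of Event (b), where the through-$\edge_\edgeindex$ test forces $\greedycounter(\edge)\ge \distance(\edge_\edgeindex,\connectmapping(\edge,\edgeindex))$; or via the partially-connected branch of Event (b), where $\discountfactor\cdot\greedycounter(\edge)\ge \distance(\edge_\edgeindex,\connectmapping(\edge,\edgeindex))$ and $\discountfactor\le 1$ gives the claim. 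In every case $\distance(\edge_\edgeindex,\connectmapping(\edge,\edgeindex))\le\greedycounter(\edge)$.

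For part (i), I would first note $\connectmapping(\edge,\edgeindex)\ne\nullsymbol$: otherwise $\permuTGD(\edge,\edgeindex)$ is the termination time, contradicting $\permuTGD(\edge,\edgeindex)<\permuTGD(\edge',\edgeindex')\le$ termination. Writing $j=\connectmapping(\edge,\edgeindex)$, facility $j$ is open by some time $t_j\le\permuTGD(\edge,\edgeindex)<\permuTGD(\edge',\edgeindex')$. The heart of the argument is the single-location claim $\discountfactor\cdot\greedycounter(\edge')\le\distance(\edge'_{\edgeindex'},j)$, which I would prove by splitting on the status of $\edge'$ when $j$ opens. If $\edge'\in\firstchanceset$ at $t_j$, then while $\edge'$ stays unconnected with $j$ open, Event (a) prevents $\greedycounter(\edge')$ from exceeding $\distance(\edge',j)\le\distance(\edge'_{\edgeindex'},j)$, so the frozen value obeys the bound. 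If instead $\edge'\notin\firstchanceset$ at $t_j$, then $\greedycounter(\edge')$ is already frozen and $\edge'_{\edgeindex'}$ is still unconnected at $t_j$ (since $\permuTGD(\edge',\edgeindex')>t_j$); the Event (b) opening step at $t_j$ would have attached $\edge'$ to $j$ through $\edge'_{\edgeindex'}$ had $\discountfactor\cdot\greedycounter(\edge')\ge\distance(\edge'_{\edgeindex'},j)$, so its absence yields $\discountfactor\cdot\greedycounter(\edge')<\distance(\edge'_{\edgeindex'},j)$. Finally I would convert this into the stated three-term bound by applying the triangle inequality over \emph{locations} twice, $\distance(\edge'_{\edgeindex'},j)\le\distance(\edge'_{\edgeindex'},i)+\distance(i,\edge_\edgeindex)+\distance(\edge_\edgeindex,j)$; routing through the location $i$ rather than comparing edges directly is exactly the device for sidestepping challenge (i).

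Part (ii) is where the real difficulty lies, since the Event (b) quantity of a location is non-monotone in time (challenge (ii)) and cannot simply be read off at termination. Fixing $i$ and $\edge$ and setting $\tau:=\permuTGD(\edge,\config_i(\edge))$, I would evaluate the Event (b) quantity of $i$ at the instant $\tau^-$ just before $\tau$, using that $\distance(\edge'_{\config_i(\edge')},i)=\distance(\edge',i)$ by definition of $\config_i$ and that every $\edge'$ in the sum has $\config_i(\edge')$ unconnected at $\tau^-$ (it connects at time $\ge\tau$), so each such $\edge'$ is either in $\firstchanceset$ or partially connected through $\config_i(\edge')$. If $i$ is already open at $\tau^-$, then for every such $\edge'$ the persistence of an unconnected $\config_i(\edge')$ against an open $i$ caps the relevant candidate cost (by Event (a) for $\edge'\in\firstchanceset$, by the Event (b) connection step for partial $\edge'$), forcing each summand to vanish and making the bound trivial. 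If $i$ is not open at $\tau^-$, its Event (b) quantity is $\le\costscalar\cdot\opencost_i$, and I would dominate the part-(ii) sum by it term-by-term: using $\min\{\greedycounter(\edge),\greedycounter(\edge')\}\le\greedycounter(\edge)\le\tau$ for edges still in $\firstchanceset$ and $\min\{\greedycounter(\edge),\greedycounter(\edge')\}\le\greedycounter(\edge')$ for partially connected edges, each summand is at most the corresponding home/work contribution of $\edge'$ to the Event (b) quantity of $i$. The main obstacle throughout is precisely pinning down this evaluation time $\tau^-$ and verifying the term-by-term domination across both regimes; the two bounds above are what force the $\min\{\greedycounter(\edge),\greedycounter(\edge')\}$ in \SFRCcontribution. (The flows $\flowe$ are carried through by treating each individual on an edge as a separate unit-flow edge, to which the algorithm's behavior is invariant.)
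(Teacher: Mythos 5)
Your proposal is correct and takes essentially the same route as the paper's proof: part (iii) directly from the connection rules, part (i) by first establishing $\discountfactor\cdot\greedycounter(\edge')\le\distance(\edge'_{\edgeindex'},\connectmapping(\edge,\edgeindex))$ from the Event (a)/(b) connection tests and then applying the triangle inequality over locations through $i$, and part (ii) by evaluating the Event (b) opening condition for $i$ at the instant just before $\permuTGD(\edge,\config_i(\edge))$. The only difference is presentational: the paper argues part (ii) by contradiction (a violated sum would have forced $i$ to open and some $\edge'$ to connect earlier), whereas you dominate the sum term-by-term against the Event (b) quantity directly; both rest on the same facts.
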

At a high-level, property~(i) exploits the triangle inequality 
(of metric distance $\distance$ over locations) and condition of Event~(a)
in the algorithm;
property~(ii) exploits the condition of Event~(b) in the algorithm;
and property~(iii) is implied directly from
the updating rule of $\connectmapping(\edge,\edgeindex)$.
We defer the formal proof of \Cref{lem:structural lemma}
to \Cref{apx:structural lemma}.

\begin{remark}
Properties~(i) -- (ii) in \Cref{lem:structural lemma}
share similar formats as
constraints~\WFRCtriangleineq--\WFRCconnectcost\
of program~\ref{eq:weakly factor-revealing program}.
Loosely speaking, program~\ref{eq:weakly factor-revealing program},
uses constraints~\WFRCtriangleineq--\WFRCconnectcost\
to capture possible execution paths in the 2-Chance Greedy Algorithm.
As we will see subsequently, its objective and constraint~\WFRCtotalcost\ 
is relevant for characterizing the approximation guarantee for a feasible execution path.
\end{remark}

\begin{remark}
\label{remark:misalignment of greedy counter}
In the MFLP, \citet{JMMSV-02} identify a 
structural lemma with similar properties~(i) and (ii)
for the JMMSV algorithm.
The main difference, which becomes 
a new significant technical challenge  in our setting,
is the \emph{misalignment between 
$\greedycounter(\edge)$ and
$\permuTGD(\edge,\edgeindex)$}.
Specifically, 
in the 2-Chance Greedy Algorithm
with strictly positive discount factor $\discountfactor \in (0, 1]$
in the \TWOLFLP,
each edge $\edge$ may be connected to two facilities,
and thus $\greedycounter(\edge)$
is not monotone with respect to the total preorder
specified by 
$\permuTGD(\edge,\edgeindex)$.
Due to the non-monotonicity of $\greedycounter(\edge)$,
the property~(ii) in the structural lemma and
constraint \WFRCcontribution\ in the factor-revealing program~\ref{eq:weakly factor-revealing program}
include
a non-linear term $\min\{\greedycounter(\edge), \greedycounter(\edge')\}$.
Consequently, program~\ref{eq:weakly factor-revealing program}
becomes harder to analyze.
We side step this difficulty
through a major technical contribution of~\Cref{sec:strongly factor-revealing program},
which allows us to obtain a
 strongly factor-revealing quadratic  program~\ref{eq:strongly factor-revealing quadratic program} that can be both analytically analyzed and numerically computed.
By contrast, 
the JMMSV algorithm
only connects each edge to a single facility,
and thus $\greedycounter(\edge) \equiv
\permuTGD(\edge,\edgeindex)$ 
% for 
% every edge $\edge\in\Edge$ and $\edgeindex\in\hwset$
% such that 
if $\connectmapping(\edge,\edgeindex)\not=
\nullsymbol$.
Therefore, the monotonicity of $\greedycounter(\edge)$
with respect to $\permuTGD(\edge,\edgeindex)$
is guaranteed, and
the non-linear term 
$\min\{\greedycounter(\edge), \greedycounter(\edge')\}$
in property~(ii)
simply becomes the linear term $\greedycounter(\edge)$.
Consequently, 
both relatively simple analytical analysis and numerically-aided
analysis are possible.
\end{remark}

\begin{remark}
    Recall that 
    the JMMSV algorithm is equivalent to 
    the 2-Chance Greedy Algorithm with an arbitrary 
    discount factor $\discountfactor\in[0,1]$ and an opening cost scalar $\costscalar = 1$
    (\Cref{ob:JMMSV equivalence MFLP})
    in the MFLP.
    Incorporating the monotonicity of $\greedycounter(\edge)$
    discussed in \Cref{remark:misalignment of greedy counter},
    the factor-revealing program~\ref{eq:weakly factor-revealing program}
    with $\discountfactor = 1$ and $\opencost = 1$
    recovers the factor-revealing program from \citet{JMMSV-02}
    and guarantees the $\greedyapproxMFLP$-approximation 
    of the JMMSV algorithm in the MFLP.
\end{remark}

In what follows, we sketch a 
three-step argument for~\Cref{lem:weakly factor-revealing program},
and defer some details to \Cref{apx:missing proofs}.

\paragraph{Step 1- lower bound of the optimal scaled cost via configuration LP.} 
Given a \TWOLFLP\ instance,
the optimal solution can be formulated as an integer program as 
follows.
We define a \emph{service region}
$\starinstance = (i,\EdgeTilde)$
as a tuple that 
consists a facility at location $i$ 
and an edge subset $\EdgeTilde\subseteq\Edge$
that are served by facility $i$.
Let $\starspace \triangleq [n]\times 2^{\Edge}$
be the set of all possible service regions.
With a slight abuse of notation, for every service region
$\starinstance = (i, \EdgeTilde)\in\starspace$,
we define its cost as
$\cost(\starinstance)
\triangleq {\opencost_i}
    +
    {
    \sum_{\edge\in\EdgeTilde}
    \flowe\cdot \distance(\edge, i)}$
    where the first term is the facility opening cost in location $i$,
    and the second term is the total 
    connection cost between the edge $\edge$ and facility $i$
    over all edge~$\edge\in\EdgeTilde$.
To capture the optimal solution, consider an integer program
where each service region $\starinstance=(i,\EdgeTilde)$ 
is associated with 
a binary variable $\allocstar$
indicating whether the optimal solution 
opens facility $i$ and serves every edge $\edge\in\EdgeTilde$
through facility $i$.
Below we present its linear program relaxation~\ref{eq:lp relaxtion}
as well as its dual program.\footnote{The optimal solution 
can also be formulated as an alternative integer program with polynomial number of
variables: 
each location $i$ is associated with a binary variable indicating 
whether the optimal solution opens facility $i$,
and each pair of edge $\edge$ and location $i$ is
associated with a binary variable indicating 
whether the optimal solution serves this edge $\edge$
through facility $i$.
Compared with its LP relaxation, the program~\ref{eq:lp relaxtion}
with service regions enables a relatively better dual assignment construction
for the approximation analysis.
}
\begin{align}
\tag{$\mathcal{P}_{\texttt{OPT}}$}
\label{eq:lp relaxtion}
    \arraycolsep=5.4pt\def\arraystretch{1}
    &\begin{array}{llllll}
    \min\limits_{\mathbf\alloc\geq \zerobf} &
    \displaystyle\sum\nolimits_{\starinstance\in\starspace} 
    \coststar \cdot \allocstar 
    & \text{s.t.} &
    \quad
    \max\limits_{\boldsymbol{\dual}\geq \zerobf} &
    \displaystyle\sum\nolimits_{\edge\in\Edge} 
    \duale
    & \text{s.t.} \\
    & \displaystyle\sum\nolimits_{\starinstance=(i, \EdgeTilde):
     \edge \in \EdgeTilde} 
     \allocstar \geq 1
     \quad
     & 
     \edge \in \Edge
     &
    &\displaystyle\sum\nolimits_{
     \edge \in \EdgeTilde} 
     \duale \leq \coststar
     \quad
     & 
     \starinstance = (i, \EdgeTilde)\in\starspace
    \end{array}
\end{align}
We formalize the connection between the optimal solution 
and program~\ref{eq:lp relaxtion}
in \Cref{lem:lp relaxation} 
and defer its formal proof into \Cref{apx:lp relaxation}.
\begin{restatable}{lemma}{lprelaxation}
\label{lem:lp relaxation}
Given any \TWOLFLP\ instance,
the optimal cost $\Cost{\OPT}$ is at least $\Obj{\text{\ref{eq:lp relaxtion}}}$.
\end{restatable}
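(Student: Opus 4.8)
The plan is to prove the bound $\Cost{\OPT}\geq\Obj{\text{\ref{eq:lp relaxtion}}}$ by the standard "relaxation" argument: since \ref{eq:lp relaxtion} is a \emph{minimization} over a feasible region that contains all integral covering assignments, it suffices to exhibit a single integral feasible point whose objective equals exactly $\Cost{\OPT}$; the LP optimum is then no larger, which is precisely the claim. Concretely, I would start from the optimal facility set $\OPT\subseteq[n]$ (nonempty for any instance with positive total demand, since an empty set incurs infinite connection cost) and route every edge to its nearest open facility: for each $\edge\in\Edge$ set $i(\edge)\triangleq\argmin_{i\in\OPT}\distance(\edge,i)$, breaking ties by a fixed rule, and for each $i\in\OPT$ collect the edges routed to it, $\EdgeTilde_i\triangleq\{\edge\in\Edge:i(\edge)=i\}$, so that $\{\EdgeTilde_i\}_{i\in\OPT}$ partitions $\Edge$.

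Next I would define the candidate primal point by activating exactly the service regions induced by this routing: set $\allocstar=1$ for each $\starinstance=(i,\EdgeTilde_i)$ with $i\in\OPT$, and $\allocstar=0$ for all other $\starinstance\in\starspace$. This point is nonnegative, hence admissible for the feasible set $\{\mathbf\alloc\geq\zerobf\}$ of \ref{eq:lp relaxtion}. For the covering constraints, note that for any edge $\edge$ the service region $\starinstance=(i(\edge),\EdgeTilde_{i(\edge)})$ satisfies $\edge\in\EdgeTilde_{i(\edge)}$ and has $\allocstar=1$, so $\sum_{\starinstance=(i,\EdgeTilde):\edge\in\EdgeTilde}\allocstar\geq1$ holds for every $\edge\in\Edge$. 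Thus the constructed point is feasible for \ref{eq:lp relaxtion}.

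The final step is the cost accounting, which is the only place that needs care. Using $\cost(\starinstance)=\opencost_i+\sum_{\edge\in\EdgeTilde}\flowe\cdot\distance(\edge,i)$ and summing over the activated regions,
\begin{align*}
\sum_{\starinstance\in\starspace}\cost(\starinstance)\cdot\allocstar
&=\sum_{i\in\OPT}\Big(\opencost_i+\sum_{\edge\in\EdgeTilde_i}\flowe\cdot\distance(\edge,i)\Big)\\
&=\sum_{i\in\OPT}\opencost_i+\sum_{\edge\in\Edge}\flowe\cdot\distance(\edge,i(\edge)),
\end{align*}
where the second equality uses that $\{\EdgeTilde_i\}_{i\in\OPT}$ partitions $\Edge$. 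Since each edge was routed to its closest open facility, $\distance(\edge,i(\edge))=\min_{i\in\OPT}\distance(\edge,i)$, so the right-hand side equals $\sum_{i\in\OPT}\opencost_i+\sum_{\edge\in\Edge}\flowe\cdot\min_{i\in\OPT}\distance(\edge,i)=\Cost{\OPT}$. Hence I have produced a feasible point of \ref{eq:lp relaxtion} with objective value exactly $\Cost{\OPT}$, giving $\Obj{\text{\ref{eq:lp relaxtion}}}\leq\Cost{\OPT}$. The main (and essentially only) obstacle is ensuring this decomposition is \emph{exact}: each opening cost $\opencost_i$ must be charged once and each edge's connection cost must be charged at its true nearest-facility distance, which the partition $\{\EdgeTilde_i\}$ together with the tie-broken nearest-facility routing guarantees; edges with $\flowe=0$ are harmlessly assigned to their nearest facility and contribute $0$.
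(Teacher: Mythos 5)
Your proposal is correct and follows essentially the same argument as the paper: the paper's (much terser) proof also constructs the integral solution that sets $\allocstar = 1$ exactly for the service regions $(i,\EdgeTilde_i)$ induced by the optimal facility set and its nearest-facility assignment, observes feasibility, and notes the objective equals $\Cost{\OPT}$. Your write-up simply spells out the routing, the partition, and the cost accounting that the paper leaves to the reader.
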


\paragraph{Step 2-
dual assignment construction.}
In this step, we construct the dual assignment~\eqref{eq:dual assignment}
of program~\ref{eq:lp relaxtion}
based on the execution path of the 2-Chance Greedy Algorithm.
Then we argue that the total cost of the algorithm
is at most the objective value of the constructed dual assignment (\Cref{lem:dual assignment objective value}).
Combining with an argument that the constructed dual assignment is also 
approximately feasible presented in the step~3 (\Cref{lem:dual assignment approx feasible}), 
the weak duality of the linear program
completes the proof of \Cref{lem:weakly factor-revealing program}.

Let $\stores$ be the solution computed by the 2-Chance Greedy Algorithm.
Let $\{\greedycounter(\edge)\}$
and $\{\connectmapping(\edge,\edgeindex)\}_{
\edge\in\Edge,\edgeindex\in\hwset}$
be the values of these variables at the termination of the algorithm.
We partition all edges $\edge\in\Edge$ into two disjoint subsets  
$\singlecountedgeset$
and $\doublecountedgeset$ 
% and 
 as follows: 
\begin{align*}
    \doublecountedgeset &\triangleq
    \{\edge\in\Edge: \connectmapping(\edge,\Home) \not= \nullsymbol
    \land 
    \connectmapping(\edge,\Work)  \not= \nullsymbol
    \land 
    \connectmapping(\edge,\Home) \not = \connectmapping(\edge,\Work)\}
    \qquad
    \singlecountedgeset \triangleq
    \Edge\backslash\doublecountedgeset
\end{align*}
Namely, subset $\singlecountedgeset$
contains every edge $\edge\in\Edge$ 
that is either partially connected or fully connected to 
a single facility;
while subset $\doublecountedgeset$
contains every edge $\edge\in\Edge$ 
that is fully connected to 
two facilities.

To simplify the presentation, 
we assume $\connectmapping(\edge,\Home)\in\stores$ for every edge $\edge \in \singlecountedgeset$,
and $\distance(\edgehome, \connectmapping(\edge,\Home)) \leq 
 \distance(\edgework, \connectmapping(\edge,\Work))$
 for edge $\edge \in \doublecountedgeset$.
 This is without loss of generality, since the role of home location and work location are ex ante symmetric in our model.
Now, consider the dual assignment (which is not feasible in the dual problem in general)
constructed
as follows,
\begin{align}
\label{eq:dual assignment}
\begin{split}
    \edge\in\singlecountedgeset:&\qquad
    \duale \gets 
    \flowe\cdot 
    \left(\frac{1+\discountfactor}{\costscalar}
    \cdot \greedycounter(\edge)
    -
    \left(
    \frac{1 + \discountfactor}{\costscalar} - 1
    \right)
    \cdot 
    \distance(\edgehome, \connectmapping(\edge,\Home))
    \right)
    \\
    \edge\in\doublecountedgeset:&\qquad
    \duale \gets \flowe\cdot 
    \left(
    \frac{1 + \discountfactor}{\costscalar} \cdot \greedycounter(\edge) 
    - 
    \frac{1}{\costscalar}
    \left(
    \distance(\edgehome, \connectmapping(\edge,\Home))
    +
    \distance(\edgework, \connectmapping(\edge,\Work))
    \right)
    +
    \distance(\edgehome, \connectmapping(\edge,\Home))
    \right)
\end{split}
\end{align}
The construction of \Cref{alg:greedy modified}
ensures that the total cost of solution $\stores$
is upperbounded by the objective value of the constructed dual assignment.
We formalize this in \Cref{lem:dual assignment objective value}
and defer its formal proof to \Cref{apx:dual assignment objective value}.
\begin{restatable}{lemma}{dualassignmentobjectivevalue}
\label{lem:dual assignment objective value}
Given any \TWOLFLP\ instance,
the total cost $\Cost{\stores}$
of solution $\stores$ computed by the 2-Chance Greedy Algorithm
with discount factor $\discountfactor\in[0, 1]$ and opening cost scalar $\costscalar\in\reals_+$
is at most the objective value of program~\ref{eq:lp relaxtion}
with dual assignment \eqref{eq:dual assignment},
i.e., $\Cost{\stores} \leq \sum_{\edge\in\Edge}\duale$.
\end{restatable}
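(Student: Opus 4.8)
The plan is to show that the total cost $\Cost{\stores}$ decomposes into a sum over edges in a way that matches the dual assignment in \eqref{eq:dual assignment}, edge by edge. The total cost has two parts: the facility opening cost $\sum_{i\in\stores}\opencost_i$ and the connection cost $\sum_{\edge\in\Edge}\flowe\cdot\min_{i\in\stores}\distance(\edge,i)$. The key structural fact from \Cref{alg:greedy modified} is that Event~(b) opens a facility $i$ exactly when the scaled opening cost $\costscalar\cdot\opencost_i$ equals the total cost improvement attributed to that opening, combining full-value contributions $\plus{\greedycounter(\edge)-\distance(\edge,i)}$ from unconnected edges and discounted contributions $\plus{\discountfactor\cdot\greedycounter(\edge)-\distance(\edge_\edgeindex,i)}$ from partially connected edges. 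This exact-balance condition is what lets me ``charge'' each facility's opening cost back to the edges connected to it.

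First I would fix an arbitrary facility $i\in\stores$ opened at some time stamp, and write down the Event~(b) balance equation at the moment $i$ is opened. I would rewrite each $\plus{\cdot}$ term using the fact that edges connected at that instant satisfy $\greedycounter(\edge)\geq\distance(\edge,i)$ (for first connections) or $\discountfactor\cdot\greedycounter(\edge)\geq\distance(\edge_\edgeindex,i)$ (for second connections), so the positive parts can be dropped and the equation becomes a clean identity relating $\costscalar\cdot\opencost_i$ to a sum of $\greedycounter$ and distance terms. The goal is to distribute $\opencost_i$ among the edges so that, after adding each edge's own connection cost $\flowe\cdot\distance(\edge,\cdot)$, the per-edge total collapses precisely into the right-hand side of \eqref{eq:dual assignment}. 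For an edge $\edge\in\singlecountedgeset$ connected to only one facility, this yields the single-facility formula $\frac{1+\discountfactor}{\costscalar}\greedycounter(\edge)-(\frac{1+\discountfactor}{\costscalar}-1)\distance(\edgehome,\connectmapping(\edge,\Home))$; for $\edge\in\doublecountedgeset$, the two facility openings each contribute a share, producing the symmetric two-distance formula.

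The main obstacle is the \emph{bookkeeping across two connection events for doubly-connected edges}: an edge in $\doublecountedgeset$ contributes to the opening balance of two distinct facilities (once at full value through its first connection, once at discounted value through its second), and I must verify that the combined charges, together with the edge's actual connection cost $\flowe\cdot\distance(\edgehome,\connectmapping(\edge,\Home))$, sum to exactly $\duale$ as written. The discount factor $\discountfactor$ and scalar $\costscalar$ enter asymmetrically here, so the algebra must carefully track which contribution is discounted and confirm that the $\frac{1}{\costscalar}(\distance(\edgehome,\cdot)+\distance(\edgework,\cdot))$ cross-term emerges correctly. A secondary subtlety is handling edges that are only \emph{partially} connected at termination (in $\singlecountedgeset$ but never fully connected): their $\greedycounter(\edge)$ froze when the first connection occurred, and I must check that the balance argument still attributes their cost consistently. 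Once each facility's opening cost is fully allocated and each edge's connection cost is accounted for, summing over all edges gives $\Cost{\stores}\leq\sum_{\edge\in\Edge}\duale$, with the inequality (rather than equality) arising because the $\min_{i\in\stores}\distance(\edge,i)$ in the true connection cost may be smaller than the distance to the facility used in the charging argument.
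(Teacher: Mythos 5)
Your proposal is correct and is essentially the paper's own proof read in the opposite direction: both arguments rest on exactly the same three facts, namely the Event~(b) exact-balance identity (which lets the total facility opening cost $\costscalar\sum_{i\in\stores}\opencost_i$ be written as the sum over edges of their cost-improvement contributions), the algorithmic guarantee $\greedycounter(\edge)\geq\distance(\edgehome,\connectmapping(\edge,\Home))$, and the fact that the true connection cost $\min_{i\in\stores}\distance(\edge,i)$ is at most $\distance(\edgehome,\connectmapping(\edge,\Home))$. One small imprecision worth fixing: for $\edge\in\singlecountedgeset$ the per-edge charge does \emph{not} collapse exactly to $\duale$ — the dual value puts coefficient $\frac{1+\discountfactor}{\costscalar}$ on the improvement term $\greedycounter(\edge)-\distance(\edgehome,\connectmapping(\edge,\Home))$ whereas the charging argument only needs $\frac{1}{\costscalar}$ of it, so there is additional (harmless, correctly-signed) slack there beyond the slack coming from the $\min$ in the connection cost.
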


\paragraph{Step 3- approximate feasibility of dual assignment.}
It is straightforward to verify that the constructed dual assignment is non-negative, i.e., 
$\duale \geq 0$
for every edge $\edge\in\Edge$.
It remains to show that for each service region $\starinstance
=(i, \EdgeTilde)\in\starspace$,
the dual constraint associated with primal variable 
$\allocstar$
is approximately satisfied (with an approximation factor of $\Obj{\text{\ref{eq:weakly factor-revealing program}}}$
for some parameters $\mHat,\permu$ depending on $\starinstance$.
We formalize this
 in \Cref{lem:dual assignment approx feasible}.

\begin{restatable}{lemma}{dualassignmentapproxfeasible}
\label{lem:dual assignment approx feasible}
Given any \TWOLFLP\ instance, any $\discountfactor \in[0, 1]$ and $\costscalar\in[1, 1 + \discountfactor]$,
for each service region $\starinstance
=(i, \EdgeTilde)\in\starspace$,
the dual assignment \eqref{eq:dual assignment}
in program~\ref{eq:lp relaxtion}
is approximately feasible up to 
multiplicative factor $\WFRapproxratio$ where
$$\WFRapproxratio = \sup\nolimits_{\mHat\in\naturals,
\permu:[\mHat]\rightarrow\positivereals}
\Obj{\text{\ref{eq:weakly factor-revealing program}}}$$
In particular,  
fix an arbitrary
service region $\starinstance
=(i, \EdgeTilde)\in\starspace$,
let
$\edgetoellmapping:\EdgeTilde\rightarrow[\mHat]$ 
be an arbitrary bijection 
from $\EdgeTilde$ to $[\mHat]$.
Consider $\mHat = |\EdgeTilde|$,
% \footnote{For ease of presentation, in step 3 we make the assumption
% that $\flowe \equiv 1$ for every edge $\edge\in\EdgeTilde$. 
% This assumption is without loss of generality, since our argument
% can be directly extended by resetting $\mHat$
% equal to the total populations over all edges in $\EdgeTilde$,
% i.e., $\mHat= \sum_{\edge\in\EdgeTilde}\flowe$, and treat each
% of those individuals separately.}
and $\permu(\edgetoellmapping(\edge)) = 
\permuTGD(\edge,\config_i(\edge))$
for every edge $\edge\in\EdgeTilde$.\footnote{Recall
that $\permuTGD$, $\config_i$
are defined in \Cref{lem:structural lemma}.
For ease of presentation, in step 3 we make the assumption
that $\flowe \equiv 1$ for every edge $\edge\in\EdgeTilde$. 
This assumption is without loss of generality, since our argument
can be directly extended by resetting $\mHat$
equal to the total populations over all edges in $\EdgeTilde$,
i.e., $\mHat= \sum_{\edge\in\EdgeTilde}\flowe$, and treat each
of those individuals separately.}
\begin{align*}
% \label{eq:dual feasibility approx}
    \sum\nolimits_{\edge\in\EdgeTilde}\duale \leq \Obj{\text{\ref{eq:weakly factor-revealing program}}}
    \cdot \coststar
\end{align*}
\end{restatable}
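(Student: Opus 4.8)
The plan is to prove the stated inequality for a fixed service region $\starinstance = (i, \EdgeTilde)$ by exhibiting a \emph{single} feasible point of program~\ref{eq:weakly factor-revealing program} (with the prescribed parameters $\mHat = |\EdgeTilde|$, bijection $\edgetoellmapping$, and $\permu(\edgetoellmapping(\edge)) = \permuTGD(\edge, \config_i(\edge))$) whose objective value is at least $\frac{1}{\coststar}\sum_{\edge\in\EdgeTilde}\duale$. Since $\Obj{\ref{eq:weakly factor-revealing program}}$ is the maximum over feasible points, this immediately yields $\sum_{\edge\in\EdgeTilde}\duale \leq \Obj{\ref{eq:weakly factor-revealing program}}\cdot\coststar$. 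Writing $\edge=\edgetoellmapping^{-1}(\ell)$, I would define the candidate point by normalizing the algorithm's quantities by $\coststar$: set $\distanceHat(\ell) = \distance(\edge, i)/\coststar$ (noting $\distance(\edge, i) = \distance(\edge_{\config_i(\edge)}, i)$ by definition of $\config_i$), $\muHat(\ell) = \greedycounter(\edge)/\coststar$, $\costHat = \opencost_i/\coststar$, and $\distanceHatStar(\ell) = \distance(\edge_{\config_i(\edge)}, \connectmapping(\edge, \config_i(\edge)))/\coststar$ whenever $\connectmapping(\edge, \config_i(\edge)) \neq \nullsymbol$; for the remaining edges, which necessarily attain the maximal $\permu$-value (the termination time), I would instead set $\distanceHatStar(\ell)$ equal to the normalized distance term appearing in $\duale$.

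Next I would verify feasibility using the three structural properties of \Cref{lem:structural lemma}. Constraint~\WFRCtotalcost\ holds with equality, since $\costHat + \sum_\ell \distanceHat(\ell) = \big(\opencost_i + \sum_{\edge\in\EdgeTilde}\distance(\edge,i)\big)/\coststar = 1$ by the definition of $\coststar$. Constraint~\WFRCconnectcost\ ($\distanceHatStar(\ell) \leq \muHat(\ell)$) follows from property~(iii), which gives $\distance(\edge_\edgeindex, \connectmapping(\edge, \edgeindex)) \leq \greedycounter(\edge)$. Constraint~\WFRCcontribution\ follows from property~(ii): its left-hand side is exactly the sum in property~(ii) restricted to the subset $\EdgeTilde \subseteq \Edge$ and divided by $\coststar$; because every summand $\plus{\discountfactor\min\{\muHat(\ell),\muHat(\ell')\} - \distanceHat(\ell')}$ is non-negative, restricting to $\EdgeTilde$ only decreases the sum, so it stays bounded by $\costscalar\opencost_i/\coststar = \costscalar\costHat$. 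Finally, constraint~\WFRCtriangleineq\ follows from property~(i) applied with $\edgeindex = \config_i(\edge)$ and $\edgeindex' = \config_i(\edge')$: whenever $\permu(\ell) < \permu(\ell')$, i.e.\ $\permuTGD(\edge,\config_i(\edge)) < \permuTGD(\edge',\config_i(\edge'))$, property~(i) guarantees $\connectmapping(\edge,\config_i(\edge)) \neq \nullsymbol$ (so $\distanceHatStar(\ell)$ is given by the first branch) and yields exactly $\discountfactor\muHat(\ell') \leq \distanceHatStar(\ell) + \distanceHat(\ell) + \distanceHat(\ell')$ after dividing by $\coststar$.

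The crux, and the step I expect to be the main obstacle, is showing the objective dominates $\frac{1}{\coststar}\sum_{\edge}\duale$ term by term. The difficulty is a genuine mismatch: the $\ell$-th objective term subtracts $\left(\frac{1+\discountfactor}{\costscalar}-1\right)\distanceHatStar(\ell)$ with the $\config_i$-based distance $\distance(\edge_{\config_i(\edge)}, \connectmapping(\edge, \config_i(\edge)))$ that is \emph{forced} by \WFRCtriangleineq, whereas $\duale$ is defined using $\distance(\edgehome, \connectmapping(\edge,\Home))$ (single-counted) or the combination $\frac{1}{\costscalar}(d_H + d_W) - d_H$ with $d_H = \distance(\edgehome, \connectmapping(\edge,\Home))$ and $d_W = \distance(\edgework, \connectmapping(\edge,\Work))$ (double-counted). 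I would handle this by a case analysis, crucially using that $\costscalar \in [1, 1+\discountfactor]$ makes the coefficient $\frac{1+\discountfactor}{\costscalar}-1$ non-negative, that $\discountfactor \leq 1$, and the WLOG labelling $d_H \leq d_W$. For double-counted edges the claim reduces to $(1+\discountfactor-\costscalar)\distance(\edge_{\config_i(\edge)},\connectmapping(\edge,\config_i(\edge))) \leq (1-\costscalar)d_H + d_W$, which I would check in both sub-cases $\config_i(\edge)\in\hwset$ via $\discountfactor d_H \leq d_W$ and the ordering of the (non-positive) coefficients in $\discountfactor,\costscalar$. For single-counted edges, those with maximal $\permu$ (where $\distanceHatStar(\ell)$ was matched directly to the dual) contribute with equality, while the home/work symmetry lets me arrange $\distance(\edge_{\config_i(\edge)},\connectmapping(\edge,\config_i(\edge))) \leq \distance(\edgehome,\connectmapping(\edge,\Home))$ for the rest.

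Finally, combining feasibility with the objective bound gives $\frac{1}{\coststar}\sum_{\edge\in\EdgeTilde}\duale \leq \Obj{\ref{eq:weakly factor-revealing program}}$ for the constructed point, hence $\sum_{\edge\in\EdgeTilde}\duale \leq \Obj{\ref{eq:weakly factor-revealing program}}\cdot\coststar$, which is the claimed approximate feasibility. The reduction to $\flowe \equiv 1$ assumed in the statement is handled, as indicated in the lemma, by splitting each edge into $\flowe$ identical unit-population individuals and taking $\mHat = \sum_{\edge\in\EdgeTilde}\flowe$, which leaves every structural property and the entire argument above unchanged.
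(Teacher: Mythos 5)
Your proposal follows essentially the same route as the paper's proof: normalize the algorithm's quantities by $\coststar$ to build a feasible point of \ref{eq:weakly factor-revealing program}, derive constraints \WFRCtriangleineq--\WFRCtotalcost\ from the three properties of \Cref{lem:structural lemma} (plus non-negativity of the summands for \WFRCcontribution\ and the normalization for \WFRCtotalcost), and then compare the objective to $\sum_\edge\duale/\coststar$ term by term via the same single-counted/double-counted case split using $\costscalar\in[1,1+\discountfactor]$ and the labelling $d_H\le d_W$. The only (immaterial) difference is that for single-counted edges the paper sets $\distanceHatStar$ directly to the normalized $\distance(\edgehome,\connectmapping(\edge,\Home))$ so that the objective term matches $\duale$ exactly, whereas you use the $\config_i$-based connection distance and absorb the discrepancy into an inequality justified by the home/work relabelling.
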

The formal proof of \Cref{lem:dual assignment approx feasible}
is deferred to \Cref{apx:dual assignment approx feasible}.
At a high level, 
for every \TWOLFLP\ instance,
and every service region $\starinstance
=(i,\EdgeTilde)$,
we construct a solution of program~\ref{eq:weakly factor-revealing program}
based on the \TWOLFLP\ instance
as well as the execution path of the 
2-Chance Greedy Algorithm as follows,
\begin{align*}
    &\qquad\costHat\primed \gets \normalizefactor \cdot {\opencost_i},
    \\
    \edge\in\EdgeTilde:&
    \qquad
    \muHat\primed(\edgetoellmapping(\edge)) \gets 
    \normalizefactor \cdot {\greedycounter(\edge)},
    \quad 
    \distanceHat\primed(\edgetoellmapping(\edge)) \gets 
    \normalizefactor \cdot 
    {\distance(\edge, i)},
    \\
    \edge\in\EdgeTilde\cap \singlecountedgeset:
    &\qquad
    \distanceHatStar\primed(\edgetoellmapping(\edge)) \gets 
    \normalizefactor\cdot \distanceHat(\edgehome,\connectmapping(\edge,\Home)),
    \\
    \edge\in\EdgeTilde\cap \doublecountedgeset:
    &\qquad
    \distanceHatStar\primed(\edgetoellmapping(\edge)) \gets 
    \normalizefactor\cdot \distance(
    \edge_{\config_i(\edge)}, \connectmapping(\edge, \config_i(\edge)))
\end{align*}
where $\normalizefactor = \frac{1}{\opencost_i+\sum_{\edge\in\EdgeTilde}
\distance(\edge, i)}$
is the normalization factor\footnote{If $\opencost_i+\sum_{\edge\in\EdgeTilde}\distance(\edge, i) = 0$,
the algorithm ensures that $\greedycounter(\edge) = 0$ and thus the dual constraint is satisfied with equality trivially.}
This constructed solution 
of program~\ref{eq:weakly factor-revealing program}
is feasible:
constraints~\WFRCtriangleineq--\WFRCconnectcost\ is satisfied due to \Cref{lem:structural lemma},
and constraint~\WFRCtotalcost\ 
is satisfied due to the normalization factor $\normalizefactor$.
Furthermore, its objective value equals to 
the ratio between $\sum_i\duale$ over
$\coststar$.

% Putting all pieces together, we are ready to prove \Cref{lem:weakly factor-revealing program}.
\begin{proof}[Proof of \Cref{lem:weakly factor-revealing program}]
% Consider linear program~\ref{eq:lp relaxtion}.
Invoking 
% \Cref{lem:lp relaxation}, \Cref{lem:dual assignment objective value},
% and
% \Cref{lem:dual assignment approx feasible},
\Cref{lem:lp relaxation,lem:dual assignment objective value,lem:dual assignment approx feasible}
and weak duality in linear programming finishes the proof.
\end{proof}

\subsection{Construction of strongly factor-revealing quadratic program~\ref{eq:strongly factor-revealing quadratic program}}

\label{sec:strongly factor-revealing program}

In \Cref{lem:weakly factor-revealing program}, we 
establish an
upper bound on the approximation ratio of 
the 2-Chance Greedy Algorithm with the supremum of optimal objectives of
factor revealing programs~\{\ref{eq:weakly factor-revealing program}\}
over all its possible parameters
$\mHat\in\naturals,\permu\in\positivereals^{\mHat}$.
However, program~\ref{eq:weakly factor-revealing program}
is nontrivial and
does not lend itself to a straightforward characterization of  this supremum. 
To overcome this obstacle, 
we upperbound the original factor-revealing 
program~\ref{eq:weakly factor-revealing program}
through 
a new strongly factor-revealing quadratic program~\ref{eq:strongly factor-revealing quadratic program},
and prove that the latter upperbounds 
the supremum of
the former
and, in turn, provides an approximation ratio for the algorithm. 
We formalize this in \Cref{lem:from weakly to strongly factor revealing program}. 

\begin{restatable}{lemma}{weaklytostronglyFRP}
\label{lem:from weakly to strongly factor revealing program}
For any $\discountfactor\in[0, 1]$,
$\nHat\in\naturals$,
$\mHat\in\naturals$, and $\permu:[\mHat] \rightarrow \positivereals$,
the optimal objective value of program~\ref{eq:weakly factor-revealing program}
is at most equal to the optimal objective value of program~\ref{eq:strongly factor-revealing quadratic program},
i.e.,
\begin{align*}
\Obj{\text{\ref{eq:weakly factor-revealing program}}}
\leq \Obj{\text{\ref{eq:strongly factor-revealing quadratic program}}}
\end{align*}
\end{restatable}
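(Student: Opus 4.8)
The plan is to prove the stated inequality constructively. Fixing $\discountfactor$, $\costscalar$, $\nHat$, $\mHat$, and $\permu$, I would take an optimal solution $(\costHat,\{\muHat(\ell),\distanceHat(\ell),\distanceHatStar(\ell)\}_{\ell\in[\mHat]})$ of \ref{eq:weakly factor-revealing program} and exhibit a feasible solution $(\costHat,\{\qHat(a,b),\muHat(a,b),\distanceHat(a,b),\distanceHatStar(a,b)\})$ of \ref{eq:strongly factor-revealing quadratic program} whose objective value is at least that of the weak optimum. Since \ref{eq:strongly factor-revealing quadratic program} is a maximization, this immediately gives $\Obj{\text{\ref{eq:weakly factor-revealing program}}}\le\Obj{\text{\ref{eq:strongly factor-revealing quadratic program}}}$. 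Before constructing the map I would normalize the weak optimum so that \WFRCtotalcost\ is tight, which is without loss because every other weak constraint and the objective are positively homogeneous.

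The core of the argument is a \emph{solution-dependent} batching that compresses the index set $[\mHat]$ onto the triangular set $\{(a,b):1\le b\le a\le\nHat\}$. The preorder that $\permu$ induces on $[\mHat]$ (which governs \WFRCtriangleineq\ and \WFRCcontribution) need not agree with the order of the values $\{\muHat(\ell)\}$ -- this is precisely the misalignment highlighted in \Cref{remark:misalignment of greedy counter} -- so a one-dimensional batching by $\permu$-rank alone is doomed (cf.\ \Cref{example: failure of naive batching}). I would therefore assign to each weak index $\ell$ a pair $(a(\ell),b(\ell))$ in which $a(\ell)$ records the group of $\ell$ in the $\permu$-order (cut into $\nHat$ consecutive groups) and $b(\ell)$ records the rank of $\muHat(\ell)$, chosen so that $\muHat$ becomes nondecreasing in $b$ (yielding \SFRCmonotonicity) and so that the support is triangular, $b(\ell)\le a(\ell)$. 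I would then set $\qHat(a,b)$ to the normalized mass of indices landing in cell $(a,b)$ and set $\muHat(a,b),\distanceHat(a,b),\distanceHatStar(a,b)$ to the corresponding mass averages, with the diagonal cell $(a,a)$ playing the role of the base index for each contribution constraint. The normalizations \SFRCdensity\ and \SFRCtotalcost\ are then bookkeeping consequences of the batch masses together with the tight \WFRCtotalcost, and the objective value is preserved because it is affine in $(\muHat,\distanceHatStar)$, so replacing individual indices by their mass-averages within cells and weighting by $\qHat$ leaves the total unchanged.

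With the batching fixed, the affine constraints transfer by averaging. Constraint \SFRCtriangleineq\ follows from \WFRCtriangleineq\ because $a<a'$ forces $\permu(\ell)<\permu(\ell')$ for every underlying pair, so the weak triangle inequality holds for all such pairs and survives the (double) averaging over the two cells; similarly \SFRCconnectcost\ follows from \WFRCconnectcost, while the remaining bound \SFRCdistance\ (which has no exact analogue in \ref{eq:weakly factor-revealing program}) would be arranged by first replacing the weak solution with one satisfying $\distanceHat(\ell)\le\muHat(\ell)$, a reduction I would justify separately. The decisive step is \SFRCcontribution, the batched analogue of the nonconvex \WFRCcontribution. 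Here I would use convexity of $x\mapsto\plus{x}$: for a base index in $\permu$-group $a$ with $\muHat$-rank $a$, the strictly later indices ($a'>a$) split according to their $\muHat$-rank into those with $b'\le a$, for which $\min\{\muHat(\ell),\muHat(\ell')\}=\muHat(\ell')$ reproduces the first sum of \SFRCcontribution\ with argument $\muHat(a',b')$, and those with $b'>a$, for which the $\min$ equals $\muHat(a,a)$ and reproduces the second sum. Since $\plus{\cdot}$ is convex and its argument is affine in the averaged quantities, Jensen's inequality bounds each batched summand by the average of the corresponding weak summands, so the batched contribution is at most the weak contribution of the base index, which is $\le\costscalar\cdot\costHat$ by \WFRCcontribution.

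The main obstacle is that a \emph{single} batching must simultaneously validate all of \SFRCmonotonicity--\SFRCdensity. The requirement that $\muHat$ be monotone in $b$ pulls toward sorting by $\muHat$-value, whereas \SFRCtriangleineq\ and the outer index of \SFRCcontribution\ pull toward respecting the $\permu$-order, and these two orders conflict exactly where the misalignment of \Cref{remark:misalignment of greedy counter} bites. Reconciling them forces the assignment of indices to cells to depend on the $\muHat$-values of the given weak solution -- hence ``solution-dependent'' rather than the naive index-based batching that fails in \Cref{example: failure of naive batching} -- and for this single assignment one must verify that the triangular support $b\le a$ holds, that the $\min$ is resolved with the correct sign in every cell so that the two sums of \SFRCcontribution\ are faithfully reproduced, and that the convexity bound indeed lands below $\costscalar\cdot\costHat$. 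Establishing these structural facts is where essentially all the difficulty lies; once the batching is in hand, the objective comparison and the remaining affine constraints are routine.
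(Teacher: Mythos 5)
Your high-level plan coincides with the paper's: a solution-dependent two-dimensional batching in which one coordinate tracks the $\permu$-order and the other tracks the order of the $\muHat$-values, with the positive-part convexity (Jensen) step resolving the $\min$ in \WFRCcontribution\ exactly as you describe, and a separate preprocessing lemma enforcing $\distanceHat(\ell)\le\muHat(\ell)$. However, there are two places where what you call "arranging" or "bookkeeping" is actually the substance of the proof. First, you never say how the cells are defined so that the support is triangular ($b\le a$), $\muHat$ is monotone in $b$, and the $a$-coordinate is still compatible with the $\permu$-order. Cutting the $\permu$-order into $\nHat$ consecutive groups and letting $b$ be the $\muHat$-rank does not work: an index early in the $\permu$-order may carry a large $\muHat$-value, which would force $b>a$. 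The paper's resolution is a \emph{pivotal index subset}: the cut points $\ell_1<\dots<\ell_k$ are the successive record values of $\muHat$ along the $\permu$-order (built backwards from the argmax), the $a$-coordinate is the inter-pivot block containing $\permu(\ell)$, and the $b$-coordinate is the interval $(\muHat(\ell_{b-1}),\muHat(\ell_b)]$ containing $\muHat(\ell)$. Triangularity then holds \emph{because} every index in block $a$ has $\muHat(\ell)\le\muHat(\ell_a)$ by the record property, and the diagonal cell $(a,a)$ contains the pivot $\ell_a$, which is what lets the base index of each contribution constraint be taken at $\ell_a$. This record construction is the key idea your write-up is missing.

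Second, \SFRCdensity\ is not a bookkeeping consequence of the batch masses: it requires $\sum_{a\in[b:\nHat]}\qHat(a,b)=1$ for \emph{every} column $b$ separately, and the raw partition gives neither equal column masses nor the right number of columns (the record construction produces $k$ cells with $k$ determined by the solution, not $\nHat$). The paper needs an entire additional stage: rescale so the total mass is $\nHat$, run a decomposition procedure that splits heavy columns until every column mass is below any $\epsilon$, locate cut points $b_0<b_1<\dots<b_{\nHat}$ at which the cumulative column mass hits the integers, and rebatch along those cuts --- and the feasibility of this second rebatching itself leans on the monotonicity \SFRCmonotonicity\ established by the record construction. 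Without this stage your constructed point is simply infeasible for \ref{eq:strongly factor-revealing quadratic program}, so the claimed inequality does not follow. In short: right architecture, but the two load-bearing constructions (the pivotal-index partition and the mass-equalizing rebatch) are exactly the parts left unspecified.
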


The formal proof of \Cref{lem:from weakly to strongly factor revealing program} is deferred to \Cref{apx:from weakly to strongly factor revealing program}.
In the remainder of this subsection,
we highlight the key steps of upperbounding 
factor-revealing program~\ref{eq:weakly factor-revealing program}
through the strongly factor-revealing quadratic program~\ref{eq:strongly factor-revealing quadratic program}.
Specifically, in \Cref{sec:strongly factor-revealing program MFLP},
as a warm-up exercise, 
we present a simple batching argument 
which upperbounds the factor-revealing program~\ref{eq:weakly factor-revealing program MFLP}
designed in \citet{JMMSV-02} for the MFLP
through the strongly factor-revealing \emph{linear} program~\ref{eq:strongly factor-revealing program MFLP}.
Consequently, we reprove the \greedyapproxMFLP-approximation 
in the MFLP.
Next, we discuss the main technical challenge 
for the \TWOLFLP, and our additional treatment 
which
upperbounds program~\ref{eq:weakly factor-revealing program}
through a strongly factor-revealing \emph{quadratic} program~\ref{eq:strongly factor-revealing quadratic program}
in \Cref{sec:strongly factor-revealing program 2-LFLP}.

\subsubsection{Warm-up: construction in the MFLP}
\label{sec:strongly factor-revealing program MFLP}

In the MFLP, the 2-Chance Greedy Algorithm with 
an arbitrary discount factor $\discountfactor \in[0, 1]$ and 
opening cost scalar $\costscalar = 1$
is equivalent to the JMMSV algorithm (\Cref{ob:JMMSV equivalence MFLP}),
whose approximation ratio is captured by the following 
factor-revealing program.

\begin{theorem}[adopted from \citealp{JMMSV-02}]
In the MFLP, the approximation ratio of the 2-Chance Greedy Algorithm
with discount factor $\discountfactor\in[0, 1]$ and opening cost scalar $\costscalar = 1$ is equal to $\WFRMFLPapproxratio$ where
\begin{align*}
    \WFRMFLPapproxratio=
    \sup\limits_{\mHat\in\naturals}
\Obj{\text{\ref{eq:weakly factor-revealing program MFLP}}}
\end{align*}
Here \ref{eq:weakly factor-revealing program MFLP}
is the maximization program parameterized by 
$\mHat\in\naturals$ 
defined as follows:
\begin{align}
\tag{$\WFRPMFLP{\mHat}$}
\label{eq:weakly factor-revealing program MFLP}
&\arraycolsep=1.4pt\def\arraystretch{2.2}
    \begin{array}{llll}
    \max\limits_{
    \substack{\costHat\geq 0,
    % \\
    \muHatBf,
    \distanceHatBf \geq 
    \zerobf
    }} &
     % \displaystyle
     \sum\nolimits_{\ell\in[\mHat]}
    \muHat(\ell)
    & \text{s.t.} & \\
    &
    \muHat(\ell) \leq \muHat(\ell')
    &
    \ell,\ell'\in[\mHat]
    \\
     &
    \muHat(\ell') \leq 
    \muHat(\ell) + \distanceHat(\ell)
    +
    \distanceHat(\ell')
    &
    \ell,\ell'\in[\mHat]
    &
    \\
    &
    % \displaystyle
    \sum\nolimits_{\ell'\in[\ell:\mHat]}
    \plus{\muHat(\ell) - \distanceHat(\ell')}
    \leq \costHat\qquad
    &
    \ell \in[\mHat]
    \\
      &
     \costHat 
     +
     % \displaystyle
     \sum\nolimits_{\ell\in[\mHat]}
     \distanceHat(\ell)
     \leq 1
     &
    \end{array}
\end{align}
\end{theorem}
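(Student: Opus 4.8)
The plan is to reduce the statement to the classical dual-fitting analysis of the JMMSV algorithm. By \Cref{ob:JMMSV equivalence MFLP}, on every MFLP instance the 2-Chance Greedy Algorithm with opening cost scalar $\costscalar = 1$ opens exactly the same facilities $\stores$ as the JMMSV algorithm, for \emph{every} discount factor $\discountfactor\in[0,1]$; hence its cost $\Cost{\stores}$---and therefore its approximation ratio---is independent of $\discountfactor$ and equals that of JMMSV. It thus suffices to recover the exact characterization $\WFRMFLPapproxratio=\sup_{\mHat\in\naturals}\Obj{\WFRPMFLP{\mHat}}$ of the JMMSV ratio.

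For the upper bound I would specialize the primal--dual framework of \Cref{sec:primal dual framework} to the MFLP, which is precisely the $\costscalar=1$, $\doublecountedgeset=\emptyset$ regime. The key simplification, flagged in \Cref{remark:misalignment of greedy counter}, is that $\edgehome=\edgework$ forces each positive-flow edge to connect to a single facility through both of its locations at the same instant: no edge is ever partially connected, so the discount factor $\discountfactor$---which in Event~(b) multiplies \emph{only} the contributions of partially connected edges---is inert, and the opening test collapses to the undiscounted JMMSV condition $\sum_{\edge}\flowe\,\plus{\greedycounter(\edge)-\distance(\edge,i)}=\opencost_i$. Likewise the greedy counter becomes aligned with the connection time, $\greedycounter(\edge)\equiv\permuTGD(\edge,\config_i(\edge))$, so the non-linear term $\min\{\greedycounter(\edge),\greedycounter(\edge')\}$ of \Cref{lem:structural lemma}(ii) and of \WFRCcontribution\ reduces to the linear $\greedycounter(\edge)$. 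Assigning to each city the (lighter, undiscounted) JMMSV dual value $\greedycounter(\edge)$, the total cost equals $\sum_{\edge}\flowe\,\greedycounter(\edge)$, and the three structural properties of \Cref{lem:structural lemma}, after normalizing by the star cost and relabeling the cities in increasing order of connection time, become exactly the monotonicity, triangle-inequality, and facility-contribution constraints of $\WFRPMFLP{\mHat}$, with the normalization supplying \WFRCtotalcost. Weak duality then yields $\Cost{\stores}\le\big(\sup_{\mHat}\Obj{\WFRPMFLP{\mHat}}\big)\cdot\Cost{\OPT}$.

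For the reverse inequality, which upgrades the bound to equality, I would invoke the tightness construction of \citet{JMMSV-02}: given $\mHat$ and an almost-optimal solution $(\costHat,\muHat,\distanceHat)$ of $\WFRPMFLP{\mHat}$, one places cities so that their pairwise distances, their distances to a single cheap facility, and their greedy connection times realize $\distanceHat$, $\muHat$, and $\costHat$; on the resulting MFLP instance the JMMSV cost approaches $\Obj{\WFRPMFLP{\mHat}}$ times the optimum. Letting $\mHat\to\infty$ shows the ratio is at least $\sup_{\mHat}\Obj{\WFRPMFLP{\mHat}}$, matching the upper bound.

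I expect the reverse direction to be the crux: realizing the abstract LP variables as genuine metric distances that simultaneously satisfy the triangle inequality and are consistent with the greedy execution is the delicate part of \citet{JMMSV-02}, whereas the upper bound is routine once the MFLP collapse of the $\min$ term and the inertness of $\discountfactor$ are observed. As the whole statement is adopted from \citet{JMMSV-02}, both directions may be cited; recording the specialization above makes the subsequent strongly-factor-revealing batching argument self-contained.
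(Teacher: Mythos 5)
The paper does not actually prove this statement: it is imported wholesale from \citet{JMMSV-02} (hence the ``adopted from'' tag), and the only in-text justification is \Cref{ob:JMMSV equivalence MFLP} together with the citation. Your proposal therefore does more than the paper does, and what you propose is sound: the reduction to JMMSV via \Cref{ob:JMMSV equivalence MFLP}, the collapse of the $\min\{\greedycounter(\edge),\greedycounter(\edge')\}$ term once $\greedycounter(\edge)$ is aligned with the connection time, the inertness of $\discountfactor$ when no edge is ever partially connected, and the appeal to the JMMSV tight-instance construction for the lower-bound (equality) direction all match how \citet{JMMSV-02} and the paper's own \Cref{remark:misalignment of greedy counter} frame the MFLP special case. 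You also correctly identify that the equality (rather than mere upper bound) is the part that genuinely requires the external tightness construction.

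One subtlety you gloss over: you cannot literally ``specialize'' \Cref{lem:structural lemma}(i), because as stated it only yields $\discountfactor\cdot\greedycounter(\edge')\le \distance(\edge_\edgeindex,\connectmapping(\edge,\edgeindex))+\distance(\edge_\edgeindex,i)+\distance(\edge'_{\edgeindex'},i)$, which for $\discountfactor<1$ is strictly weaker than the undiscounted triangle-inequality constraint $\muHat(\ell')\le\muHat(\ell)+\distanceHat(\ell)+\distanceHat(\ell')$ appearing in \ref{eq:weakly factor-revealing program MFLP}. The fix is exactly the observation you make informally elsewhere: in the MFLP no edge is ever partially connected, so the relevant comparison at time $\permuTGD(\edge,\edgeindex)$ is the \emph{undiscounted} Event~(a) condition for the still-unconnected edge $\edge'$, giving $\greedycounter(\edge')\le\distance(\edge',\connectmapping(\edge,\edgeindex))$ with no $\discountfactor$; combining with the metric triangle inequality and property~(iii) then yields the constraint of \ref{eq:weakly factor-revealing program MFLP} verbatim. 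You should state this re-derivation explicitly rather than cite property~(i) as is. With that repair, your sketch is a faithful reconstruction of the JMMSV argument, and since the paper itself only cites, including it would indeed make \Cref{sec:strongly factor-revealing program MFLP} self-contained.
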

Compared to program~\ref{eq:weakly factor-revealing program}
in the \TWOLFLP,
program~\ref{eq:weakly factor-revealing program MFLP}
has an additional constraint of 
the monotonicity of $\muHat(\ell)$,
and thus term $\plus{\min\{\muHat(\ell),\muHat(\ell')\} - \distanceHat(\ell')}$
in
constraint~\WFRCcontribution \ 
of program~\ref{eq:weakly factor-revealing program}
becomes $\plus{\muHat(\ell) - \distanceHat(\ell')}$,
which  is relatively easier to handle.
As a warm-up exercise, here we convert program~\ref{eq:weakly factor-revealing program MFLP}
into a strongly factor-revealing program~\ref{eq:strongly factor-revealing program MFLP}.
\begin{lemma}
\label{lem:from weakly to strongly factor revealing program MFLP}
For any 
$\nHat\in\naturals$,
$\mHat\in\naturals$,
\begin{align*}
\Obj{\text{\ref{eq:weakly factor-revealing program MFLP}}}
\leq \Obj{\text{\ref{eq:strongly factor-revealing program MFLP}}}
\end{align*}
where program~\ref{eq:strongly factor-revealing program MFLP} is defined as follows:
\begin{align}
\tag{$\SFRPMFLP{\nHat}$}
\label{eq:strongly factor-revealing program MFLP}
&\arraycolsep=1.4pt\def\arraystretch{2.2}
    \begin{array}{llll}
    \max\limits_{
    \substack{\costHat\geq 0,
    % \\
    \muHatBf,
    \distanceHatBf \geq 
    \zerobf
    }} &
     % \displaystyle
     \sum\nolimits_{\ell\in[\nHat]}
    \muHat(\ell)
    & \text{s.t.} & \\
    &
    \muHat(\ell) \leq \muHat(\ell')
    &
    \ell,\ell'\in[\nHat]
    \\
     &
    \muHat(\ell') \leq 
    \muHat(\ell) + \distanceHat(\ell)
    +
    \distanceHat(\ell')
    &
    \ell,\ell'\in[2:\nHat]
    &
    \\
    &
    % \displaystyle
    \sum\nolimits_{\ell'\in[\ell + 1:\nHat]}
    \plus{\muHat(\ell) - \distanceHat(\ell')}
    \leq \costHat\qquad
    &
    \ell \in[\nHat]
    \\
      &
     \costHat 
     +
     % \displaystyle
     \sum\nolimits_{\ell\in[\nHat]}
     \distanceHat(\ell)
     \leq 1
     &
    \end{array}
\end{align}
\end{lemma}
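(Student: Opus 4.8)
The plan is to prove the inequality by a \emph{batching} (solution-aggregation) argument: given an optimal solution of the weak program~\ref{eq:weakly factor-revealing program MFLP}, I will construct a feasible solution of the strong program~\ref{eq:strongly factor-revealing program MFLP} of equal objective value, so that $\Obj{\text{\ref{eq:weakly factor-revealing program MFLP}}}\le\Obj{\text{\ref{eq:strongly factor-revealing program MFLP}}}$. Fix such an optimal solution $(\costHat,\{\muHat(\ell)\},\{\distanceHat(\ell)\})_{\ell\in[\mHat]}$; by the monotonicity constraint I may assume the indices are sorted so that $\muHat(1)\le\cdots\le\muHat(\mHat)$, and by the normalization constraint I scale so that $\costHat+\sum_{\ell}\distanceHat(\ell)=1$, hence the objective equals $V\triangleq\sum_{\ell}\muHat(\ell)$. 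Treating first the case $\nHat\mid\mHat$, set $k\triangleq\mHat/\nHat$ and partition $[\mHat]$ into consecutive blocks $B_a\triangleq\{(a-1)k+1,\dots,ak\}$ for $a\in[\nHat]$. I aggregate by \emph{summing} within each block,
\[
\muHat\primed(a)\triangleq\sum_{\ell\in B_a}\muHat(\ell),\qquad \distanceHat\primed(a)\triangleq\sum_{\ell\in B_a}\distanceHat(\ell),\qquad \costHat\primed\triangleq\costHat .
\]

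The objective and normalization constraints then hold by construction, since $\sum_a\muHat\primed(a)=\sum_\ell\muHat(\ell)=V$ and $\costHat\primed+\sum_a\distanceHat\primed(a)=\costHat+\sum_\ell\distanceHat(\ell)=1$. Monotonicity of $\muHat\primed$ is immediate because the blocks are equal-sized and $\muHat$ is sorted. For the triangle constraint I pair the $t$-th element $(a-1)k+t$ of $B_a$ with the $t$-th element $(a'-1)k+t$ of $B_{a'}$: for $a<a'$ these indices are ordered, so the weak triangle inequality applies to each pair, and summing over $t\in[k]$ gives $\muHat\primed(a')\le\muHat\primed(a)+\distanceHat\primed(a)+\distanceHat\primed(a')$ (the cases $a\ge a'$ follow from monotonicity and nonnegativity, and the restriction of the strong constraint to $[2:\nHat]$ only makes it easier). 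So far everything is routine.

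The main obstacle is the contribution constraint, where simply summing would naively inflate the left-hand side by a factor of $k$. The key manoeuvre is to rewrite, for $a'>a$, $\muHat\primed(a)-\distanceHat\primed(a')=\sum_{j\in B_{a'}}\bigl(\tfrac1k\muHat\primed(a)-\distanceHat(j)\bigr)$, apply subadditivity of the positive part, and bound the block-average by the block-top value, $\tfrac1k\muHat\primed(a)\le\max_{i\in B_a}\muHat(i)=\muHat(ak)$ (here sortedness is essential), to obtain
\[
\plus{\muHat\primed(a)-\distanceHat\primed(a')}\;\le\;\sum_{j\in B_{a'}}\plus{\tfrac1k\muHat\primed(a)-\distanceHat(j)}\;\le\;\sum_{j\in B_{a'}}\plus{\muHat(ak)-\distanceHat(j)} .
\]
Summing over $a'\in[a+1:\nHat]$ collapses the right-hand side to $\sum_{j>ak}\plus{\muHat(ak)-\distanceHat(j)}$, which is dominated by the weak contribution constraint evaluated at $\ell=ak$, namely $\sum_{j\ge ak}\plus{\muHat(ak)-\distanceHat(j)}\le\costHat=\costHat\primed$. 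It is crucial that the strong constraint ranges only over blocks \emph{strictly after} $a$ (the index set $[a+1:\nHat]$, i.e.\ indices $j>ak$), which is exactly the range the weak constraint controls at its block-top index; including the diagonal block would reach indices $j<ak$ that are not controlled. This diagonal exclusion is precisely the structural difference between the two programs, and it is what makes the aggregation go through.

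Finally, to discharge the divisibility assumption $\nHat\mid\mHat$, I replicate each of the $\mHat$ weak variables $\nHat$ times (resetting $\costHat\gets\nHat\costHat$ to keep the contribution constraint satisfied, then renormalizing): this yields a feasible solution of~\ref{eq:weakly factor-revealing program MFLP} with $\nHat\mHat$ sorted variables and the same objective $V$, so that $\Obj{\text{\ref{eq:weakly factor-revealing program MFLP}}}$ at $\mHat$ is at most its value at $\nHat\mHat$; applying the equal-block construction above to this $\nHat\mHat$-variable solution then proves the lemma. I expect the contribution constraint to be the only delicate point: the aggregation must preserve the objective (favoring summation) while keeping the per-facility contribution bounded (naively favoring averaging), and the ``block-average $\le$ block-top'' estimate together with the diagonal exclusion is exactly what reconciles these two requirements.
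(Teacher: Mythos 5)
Your proof is correct and follows essentially the same route as the paper's: batch consecutive blocks by summation, use monotonicity to bound the block-average of $\muHat$ by its block-top value, and invoke subadditivity/convexity of $\plus{\cdot}$ together with the weak contribution constraint at the block-top index. The only cosmetic difference is how divisibility is handled -- you replicate each variable $\nHat$ times (with the appropriate rescaling of $\costHat$ and renormalization), whereas the paper allows the first block to be smaller than the others; both devices work.
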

Both program~\ref{eq:weakly factor-revealing program MFLP} and 
program~\ref{eq:strongly factor-revealing program MFLP} admit
similar structures. The only difference is the index range 
in the second and third constraints. 
In fact, for any $\mHat\in\naturals$,
program~\SFRPMFLP{\mHat} 
by construction 
is 
a relaxation of program~\ref{eq:weakly factor-revealing program MFLP},
and thus $\Obj{\WFRPMFLP{\mHat}} \leq \Obj{\SFRPMFLP{\mHat}}$.
\Cref{lem:from weakly to strongly factor revealing program MFLP}
is a stronger statement which establishes that $\Obj{\WFRPMFLP{\mHat}} \leq \Obj{\SFRPMFLP{\nHat}}$ \emph{for every $\mHat,\nHat\in\naturals$}.
Its proof is based on a simple batching argument.
\begin{proof}[Proof of \Cref{lem:from weakly to strongly factor revealing program MFLP}]
Fix arbitrary $\nHat,\mHat\in \naturals$. 
Without loss of generality,
we assume that $\mHat$ is sufficiently large\footnote{Given any feasible solution $(\costHat, \{\muHat(\ell), \distanceHat(\ell)\}_{\ell\in[\mHat]})$
in program~\ref{eq:weakly factor-revealing program MFLP},
consider a new solution 
$(\costHat\primed, \{\muHat\primed(\ell), \distanceHat\primed(\ell)\}_{\ell\in[2\mHat]})$
in $\WFRPMFLP{2\mHat}$
where $\costHat\primed = \costHat$,
$\muHat\primed(\ell) = \muHat(\lceil\sfrac{\ell}{2}\rceil)$,
and $\distanceHat\primed(\ell) = \distanceHat(\lceil\sfrac{\ell}{2}\rceil)$.
The latter solution is also feasible and has 
the same objective value as the former solution.
Hence,
$\Obj{\WFRPMFLP{\mHat}}
\leq \Obj{\WFRPMFLP{2\mHat}}$ for 
every $\mHat\in\naturals$.
Therefore, it is without loss of generality to consider sufficiently large $\mHat\in\naturals$.}
so that
$\lceil\frac{\mHat}{\nHat}\rceil\cdot (\nHat - 1) \leq \mHat$.
Let $k = \lceil\frac{\mHat}{\nHat}\rceil$.
Define sequence $L = \{\ell_1, \ell_2, \dots, \ell_{\nHat}, \ell_{\nHat + 1}\}$
where 
\begin{align*}
\ell_a = \left\{
\begin{array}{ll}
 1    &  \text{$\;~a = 1$}\\
 1 + \mHat - k \cdot (\nHat + 1 - a)    & 
 \text{$\forall a \in [2:\nHat + 1]$}
\end{array}
\right.
    % &\qquad \ell_1 = 1,
    % \\
    % a\in[2:\nHat + 1]:&\qquad
    % \ell_a = 1 + \nHat - k \cdot (\nHat + 1 - a)
\end{align*}
By definition, $\ell_2 - \ell_ 1 = \mHat - k\cdot (\nHat - 1) \leq k$,
and
$\ell_{a + 1} - \ell_a  = k$ for each $a \in[2:\nHat]$.
Given an arbitrary feasible solution $(\costHat,\{\muHat(\ell),\distanceHat(\ell)\})$ of 
program~\ref{eq:weakly factor-revealing program MFLP},
we construct a feasible solution $(\costHat\primed,\{\muHat\primed(\ell),\distanceHat\primed(\ell)\})$
through a batching procedure as follows:\footnote{Here 
we use superscript $\dagger$ to denote the solution in
program~\ref{eq:strongly factor-revealing program MFLP}.}
\begin{align*}
    &\qquad\costHat\primed \gets \costHat \\
    a\in[\nHat]:&\qquad
    \muHat\primed(a) \gets 
    \displaystyle\sum\nolimits_{\ell \in [\ell_a:\ell_{a + 1} - 1]}
    \muHat(\ell),
    \quad
    \distanceHat\primed(a) \gets 
    \displaystyle\sum\nolimits_{\ell \in [\ell_a:\ell_{a + 1} - 1]}
    \distanceHat(\ell)
\end{align*}
See a graphical illustration of batching in \Cref{fig:batching MFLP}.

It is straightforward to verify that the objective value remains unchanged
after the batching procedure, and all constraints are satisfied
in program~\ref{eq:strongly factor-revealing program MFLP}.
In particular, the third constraint in program~\ref{eq:strongly factor-revealing program MFLP} for each $a\in[\nHat]$ 
is 
implied by the third constraint in program~\ref{eq:weakly factor-revealing program MFLP} 
at $\ell = \ell_{a + 1} - 1$
and the convexity of $\plus{\cdot}$.
\end{proof}

\begin{figure}
    \centering
    \begin{tikzpicture}[scale=0.5, transform shape]
\begin{axis}[
axis line style=gray,
axis lines=middle,
        ytick = \empty,%{4.7},
        yticklabels = \empty,%{$b^*$},
        xtick = {0.7, 3.7, 7.7},
        xticklabels = {$\ell_1$, $\ell_2$, $\ell_3$},
x label style={at={(axis description cs:1.0,-0.01)},anchor=north},
y label style={at={(axis description cs:0.05,0.95)},anchor=south},
xlabel = {$\ell$},
ylabel = {$\muHat(\ell)$},
label style={font=\LARGE},
xmin=0,xmax=11.5,ymin=0,ymax=6.5,
width=0.9\textwidth,
height=0.5\textwidth,
samples=50]

\addplot[fill=white!90!black, postaction={
        pattern=crosshatch,
        pattern color=white!100!black
    }] coordinates {
(0.4, 0.)(0.4, 1.0)(1.0, 1.0)(1.0, 0.0)(0.4, 0.)
};

\addplot[fill=white!90!black, postaction={
        pattern=crosshatch,
        pattern color=white!100!black
    }] coordinates {
(1.4, 0.)(1.4, 2.0)(2.0, 2.0)(2.0, 0.0)(1.4, 0.)
};

\addplot[fill=white!90!black, postaction={
        pattern=crosshatch,
        pattern color=white!100!black
    }] coordinates {
(2.4, 0.)(2.4, 2.5)(3.0, 2.5)(3.0, 0.0)(2.4, 0.)
};

\addplot[dashed, line width=0.8mm] coordinates {
(0.3, 0)(0.3, 2.6)(3.1, 2.6) (3.1, 0)(0.3, 0)
};

\addplot[fill=white!50!black, postaction={
        pattern=north east lines,
        pattern color=white!100!black,
    }]coordinates {
(3.4, 0.)(3.4, 3)(4.0, 3)(4.0, 0.0)(3.4, 0.)
};

\addplot[fill=white!50!black, postaction={
        pattern=north east lines,
        pattern color=white!100!black,
    }] coordinates {
(4.4, 0.)(4.4, 4.0)(5.0, 4.0)(5.0, 0.0)(4.4, 0.)
};

\addplot[fill=white!50!black, postaction={
        pattern=north east lines,
        pattern color=white!100!black,
    }] coordinates {
(5.4, 0.)(5.4, 4.5)(6.0, 4.5)(6.0, 0.0)(5.4, 0.)
};

\addplot[fill=white!50!black, postaction={
        pattern=north east lines,
        pattern color=white!100!black,
    }] coordinates {
(6.4, 0.)(6.4, 4.5)(7.0, 4.5)(7.0, 0.0)(6.4, 0.)
};

\addplot[dashed, line width=0.8mm] coordinates {
(3.3, 0)(3.3, 4.6)(7.1, 4.6) (7.1, 0)(3.3, 0)
};

\addplot[fill=white!0!black] coordinates {
(7.4, 0.)(7.4, 5.0)(8.0, 5.0)(8.0, 0.0)(7.4, 0.)
};

\addplot[fill=white!0!black] coordinates {
(8.4, 0.)(8.4, 5.0)(9.0, 5.0)(9.0, 0.0)(8.4, 0.)
};

\addplot[fill=white!0!black] coordinates {
(9.4, 0.)(9.4, 6.0)(10.0, 6.0)(10.0, 0.0)(9.4, 0.)
};

\addplot[fill=white!0!black] coordinates {
(10.4, 0.)(10.4, 6.25)(11.0, 6.25)(11.0, 0.0)(10.4, 0.)
};

\addplot[dashed, line width=0.8mm] coordinates {
(7.3, 0)(7.3, 6.35)(11.1, 6.35) (11.1, 0)(7.3, 0)
};

% \addplot[dotted, white!10!black] coordinates {
% (0, 3) (12, 3)};
% \addplot[dotted, white!10!black] coordinates {
% (0, 5) (12, 5)};
% \addplot[dotted, white!10!black] coordinates {
% (0, 6) (12, 6)};

\end{axis}

\end{tikzpicture}
    \caption{Graphical illustration of
    batching in \Cref{lem:from weakly to strongly factor revealing program MFLP}:
    converting a feasible solution in program~\WFRPMFLP{11}
    into a feasible solution in program~\SFRPMFLP{3}.
    }
    \label{fig:batching MFLP}
\end{figure}

Note that the strongly factor-revealing program~\ref{eq:strongly factor-revealing program MFLP}
can be converted into a linear program by introducing additional auxiliary variables and inequalities. 
By numerically computing it using Gurobi, 
we obtain the following approximation
ratio for the 2-Chance Greedy Algorithm in the MFLP.
\begin{proposition}
\label{prop:approx ratio general gamma MFLP}
In the MFLP, the approximation ratio of 
the 2-Chance Greedy Algorithm 
with discount factor $\discountfactor\in[0, 1]$ and opening cost scalar $\costscalar = 1$
is at most equal to $\WFRMFLPapproxratio \leq 
\Obj{\SFRPMFLP{500}} \leq \greedyapproxMFLP$.
\end{proposition}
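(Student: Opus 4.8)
The plan is to assemble two results already in hand and then certify a single numerical bound. By the theorem adopted from \citet{JMMSV-02} stated just above, the approximation ratio of the 2-Chance Greedy Algorithm in the MFLP (with $\discountfactor\in[0,1]$ and $\costscalar=1$) equals exactly $\WFRMFLPapproxratio=\sup_{\mHat\in\naturals}\Obj{\text{\ref{eq:weakly factor-revealing program MFLP}}}$. Applying \Cref{lem:from weakly to strongly factor revealing program MFLP} with $\nHat=500$ gives $\Obj{\text{\ref{eq:weakly factor-revealing program MFLP}}}\leq\Obj{\SFRPMFLP{500}}$ for \emph{every} $\mHat\in\naturals$; since every element of the family is bounded by the same quantity, taking the supremum over $\mHat$ yields the first inequality $\WFRMFLPapproxratio\leq\Obj{\SFRPMFLP{500}}$. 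This is precisely where the ``strongly'' factor-revealing structure earns its keep: a single program at $\nHat=500$ dominates the whole family $\{\text{\ref{eq:weakly factor-revealing program MFLP}}\}_{\mHat}$, so no supremum over $\mHat$ needs to be analyzed directly.

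It then remains to show $\Obj{\SFRPMFLP{500}}\leq\greedyapproxMFLP$. First I would recast \ref{eq:strongly factor-revealing program MFLP} as a finite linear program. The only nonlinearity is the term $\plus{\muHat(\ell)-\distanceHat(\ell')}$ in the third constraint, so I would introduce auxiliary variables $s_{\ell,\ell'}\geq 0$ with $s_{\ell,\ell'}\geq\muHat(\ell)-\distanceHat(\ell')$ and replace that constraint by $\sum_{\ell'\in[\ell+1:\nHat]}s_{\ell,\ell'}\leq\costHat$. Because the objective $\sum_{\ell}\muHat(\ell)$ does not involve the $s_{\ell,\ell'}$ and these variables appear only in an upper-bound constraint, at any optimum each $s_{\ell,\ell'}$ may be taken equal to $\plus{\muHat(\ell)-\distanceHat(\ell')}$; hence the linearized program and \ref{eq:strongly factor-revealing program MFLP} share the same optimal value.

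Having a genuine finite-dimensional linear program, I would solve it with Gurobi at $\nHat=500$ and read off the optimum as $\greedyapproxMFLP$. The subtle point, and the one place where care is needed, is that a solver's reported optimum is not by itself a proof of an upper bound. The clean fix exploits the fact that the linearized program is a \emph{maximization} LP: by weak LP duality, any dual-feasible objective value upper bounds $\Obj{\SFRPMFLP{500}}$, so it suffices to exhibit one dual-feasible solution whose value is at most $\greedyapproxMFLP$. I expect the main obstacle to be exactly this certification step, namely confirming that the solver's output corresponds to a bona fide dual certificate (ideally re-verified in exact rational arithmetic to rule out floating-point error), rather than the structural argument, which is immediate from the two cited results.
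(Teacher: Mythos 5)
Your proposal is correct and follows essentially the same route as the paper: invoke the JMMSV factor-revealing theorem, apply \Cref{lem:from weakly to strongly factor revealing program MFLP} with $\nHat=500$ uniformly over all $\mHat$, linearize the $\plus{\cdot}$ terms via auxiliary variables, and solve the resulting LP with Gurobi. Your added remark about certifying the numerical optimum through a dual-feasible solution is a sensible strengthening, but it is a refinement of the same argument rather than a different one.
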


\subsubsection{Construction in the \TWOLFLP}
\label{sec:strongly factor-revealing program 2-LFLP}

In this part, we discuss the main technical ingredients for the proof of \Cref{lem:from weakly to strongly factor revealing program}.
We start by highlighting the additional difficulty 
in the analysis  of \TWOLFLP,
and explaining the failure of the naive batching argument used in
the  proof of \Cref{lem:from weakly to strongly factor revealing program MFLP}
in the MFLP.
Then we  focus on obtaining 
the strongly factor-revealing quadratic program~\ref{eq:strongly factor-revealing quadratic program}, and provide one of our main technical contributions.

\paragraph{Failure of the naive batching argument.} 
In the proof of \Cref{lem:from weakly to strongly factor revealing program MFLP},
we use a naive batching argument 
that groups an arbitrary feasible solution of 
the factor-revealing program~\ref{eq:weakly factor-revealing program MFLP}
into a feasible solution in 
strongly factor-revealing program~\ref{eq:strongly factor-revealing program MFLP}
with $\nHat \leq \mHat$.
In particular, it divides the index set $[\mHat]$ 
into $\nHat$ \emph{consecutive} subsets with almost \emph{uniform} size, and
groups (i.e., sums up) solutions in each consecutive subsets separately.
The objective value remains unchanged,
and constraints remain satisfied due to the linearity or convexity 
(for the third constraint in program~\ref{eq:strongly factor-revealing program MFLP}).

As we highlighted at the end of \Cref{sec:primal dual framework},
in the \TWOLFLP,
the \candidatecosttext\ $\greedycounter(\edge)$
of
the 2-Chance Greedy Algorithm 
is not monotone in the time stamp when each edge $\edge$
is connected, since edge $\edge$ can be connected twice, and its 
\candidatecosttext\ $\greedycounter(\edge)$ stops increasing after its 
first connection.
Therefore, 
the factor-revealing program
program~\ref{eq:weakly factor-revealing program}
does not admit the monotonicity
property of $\muHat(\ell)$,
and constraint~\WFRCcontribution\ 
features the non-convex term $\min\{\muHat(\ell), \muHat(\ell')\}$.
Consequently, the naive batching argument 
fails.\footnote{To the best of our knowledge,
this naive batching argument 
is used in 
all previous works with strongly factor-revealing program
\citep[e.g.,][]{MY-11,GT-12}.}
Here we provide an example
to illustrate the failure of 
the naive batching argument for program~\ref{eq:weakly factor-revealing program}.

\begin{example}
\label{example: failure of naive batching}
Let $\permu(\ell) = \ell$ for each $\ell$.
Consider a feasible solution $(\costHat, \{\muHat(\ell), \distanceHat(\ell), \distanceHatStar(\ell)\}_{\ell\in[4]})$
of program~$\WFRP{4, \permu, 1, 1}$ as follows:
$\costHat = 3\normalizefactor$;
$\muHat(\ell) = 9\normalizefactor,
9\normalizefactor, 4\normalizefactor, 14\normalizefactor$;
$\distanceHat(\ell) = 9\normalizefactor, 9\normalizefactor, 4\normalizefactor, 6\normalizefactor$;
and 
$\distanceHatStar(\ell) = 9\normalizefactor, 9\normalizefactor, 4\normalizefactor, 14\normalizefactor$
for $\ell= 1, 2, 3, 4$, respectively.
% \begin{align*}
% &\def\arraystretch{1.7}
% \begin{array}{llll}
%      \costHat = 3\normalizefactor, & & &  \\
%      \muHat(1) = 9\normalizefactor, \quad& \muHat(2) = 9\normalizefactor,
%      \quad&
%      \muHat(3) = 4\normalizefactor, \quad& \muHat(4) = 14\normalizefactor,
%      \\
%      \distanceHat(1) = 9\normalizefactor, \quad& \distanceHat(2) = 9\normalizefactor,
%      \quad&
%      \distanceHat(3) = 4\normalizefactor, \quad& \distanceHat(4) = 6\normalizefactor,
%      \\
%      \distanceHatStar(1) = 9\normalizefactor, \quad& \distanceHatStar(2) = 9\normalizefactor,
%      \quad&
%      \distanceHatStar(3) = 4\normalizefactor, \quad& \distanceHatStar(4) = 14\normalizefactor,
% \end{array}
% \end{align*}
Here $\normalizefactor$ is the normalization factor such that constraint~\WFRCtotalcost\ is satisfied with equality.
It is straightforward to verify that all other constraints are also satisfied. 
In particular, 
constraint~\WFRCcontribution\ at $\ell = 2$ is 
\begin{align*}
    \displaystyle\sum\nolimits_{\ell'\in[2:4]}
    \plus{ 
    \min\left\{\muHat(2), \muHat(\ell')\right\} - \distanceHat(\ell')}
    =
    \left(9\normalizefactor - 9\normalizefactor\right) + 
    \left(4\normalizefactor - 4\normalizefactor\right) + 
    \left(9\normalizefactor - 6\normalizefactor\right) 
    =
    3\normalizefactor = \costHat
\end{align*}
Now, suppose we use the naive batching modification (defined in the proof of \Cref{lem:from weakly to strongly factor revealing program MFLP})
to $\nHat\primed = 2$, the solution $(\costHat\primed, \{\muHat\primed(\ell), \distanceHat\primed(\ell), \distanceHatStar\primed(\ell)\}_{\ell\in[2]})$
after the batching modification
is $\costHat\primed \gets \costHat$;
$\muHat\primed(1) \gets \muHat(1) + \muHat(2) = 18\normalizefactor$;
$\muHat\primed(2) \gets \muHat(3) + \muHat(4) = 18\normalizefactor$;
$\distanceHat\primed(1) \gets \distanceHat(1) + \distanceHat(2) = 18\normalizefactor$;
$\distanceHat\primed(1) \gets \distanceHat(1) + \distanceHat(2) = 18\normalizefactor$;
$\distanceHatStar\primed(1) \gets \distanceHatStar(1) + \distanceHatStar(2) = 18\normalizefactor$;
and 
$\distanceHatStar\primed(2) \gets \distanceHatStar(3) + \distanceHatStar(4) = 18\normalizefactor$.
% \begin{align*}
% &\def\arraystretch{1.7}
% \begin{array}{llll}
%      \costHat\primed \gets \costHat =  3\normalizefactor, & & &  \\
%      \muHat\primed(1) \gets \muHat(1) + \muHat(2) = 18\normalizefactor, 
%      \quad& 
%      \muHat\primed(2) \gets \muHat(3) + \muHat(4) = 18\normalizefactor,
%      \\
%      \distanceHat\primed(1) \gets \distanceHat(1) + \distanceHat(2) = 18\normalizefactor, 
%      \quad& 
%      \distanceHat\primed(2) \gets \distanceHat(3) + \distanceHat(4) = 10\normalizefactor,
%      \\
%      \distanceHatStar\primed(1) \gets \distanceHatStar(1) + \distanceHatStar(2) = 18\normalizefactor, 
%      \quad& 
%      \distanceHatStar\primed(2) \gets \distanceHatStar(3) + \distanceHatStar(4) = 18\normalizefactor
% \end{array}
% \end{align*}
Notably, constraint~\WFRCcontribution\ at $\ell = 1$ is violated (even if 
we consider the similar relaxed on indexes as the one in program~\ref{eq:strongly factor-revealing program MFLP}),
\begin{align*}
    \displaystyle\sum\nolimits_{\ell'\in[2:2]}
    \plus{ 
    \min\left\{\muHat\primed(1), \muHat\primed(\ell')\right\} - \distanceHat\primed(\ell')}
    =
    18\normalizefactor - 10\normalizefactor
    =
    8\normalizefactor > \costHat\primed
\end{align*}
\end{example}

\paragraph{A solution-dependent batching argument.}

To prove \Cref{lem:from weakly to strongly factor revealing program},
we introduce a new batching argument that enables deriving
the strongly factor-revealing program.
Similar to the naive batching argument,  
given a feasible solution in 
the original factor-revealing program~\ref{eq:weakly factor-revealing program},
the batching procedure 
partitions the index set
and groups the variables of the original feasible solution
to construct a feasible solution in the strongly factor-revealing program~\ref{eq:strongly factor-revealing quadratic program}.
Then we show that its objective value weakly increases and all constraints 
are satisfied, which ensures that the objective value of~\ref{eq:strongly factor-revealing quadratic program} upper bounds that of ~\ref{eq:weakly factor-revealing program}.
Our new batching argument is \emph{solution-dependent}, i.e.,
it partitions the index set 
into \emph{non-consecutive} index subsets with \emph{non-uniform} size
\emph{based on 
the original feasible solution}.
Below we sketch the four major steps in our batching argument
and provide intuition on our approach.
All missing details and the formal proof of \Cref{lem:from weakly to strongly factor revealing program}
is in \Cref{apx:from weakly to strongly factor revealing program}.

\begin{figure}
    \centering
    \subfloat[
    % pivot index subset $L=\{\ell_1,\ell_2,\ell_3\}$ 
    % and
    % index partition $\{L(a, b)\}_{a\in[3],b\in[a]}$
    ]{\begin{tikzpicture}[scale=0.35, transform shape]
\begin{axis}[
axis line style=gray,
axis lines=middle,
        ytick = \empty,%{4.7},
        yticklabels = \empty,%{$b^*$},
        xtick = {0.7, 4.7, 9.7},
        xticklabels = {$\ell_1$, $\ell_2$, $\ell_3$},
x label style={at={(axis description cs:1.0,-0.01)},anchor=north},
y label style={at={(axis description cs:0.05,0.95)},anchor=south},
xlabel = {$\ell$},
ylabel = {$\muHat(\ell)$},
label style={font=\LARGE},
xmin=0,xmax=12.5,ymin=0,ymax=6.5,
width=0.9\textwidth,
height=0.5\textwidth,
samples=50]

\addplot[fill=white!90!black, postaction={
        pattern=crosshatch,
        pattern color=white!100!black
    }] coordinates {
(0.4, 0.)(0.4, 3.0)(1.0, 3.0)(1.0, 0.0)(0.4, 0.)
};

\addplot[fill=white!90!black, postaction={
        pattern=crosshatch,
        pattern color=white!100!black
    }] coordinates {
(1.4, 0.)(1.4, 2.0)(2.0, 2.0)(2.0, 0.0)(1.4, 0.)
};

\addplot[fill=white!90!black, postaction={
        pattern=crosshatch,
        pattern color=white!100!black
    }] coordinates {
(2.4, 0.)(2.4, 1.0)(3.0, 1.0)(3.0, 0.0)(2.4, 0.)
};

\addplot[fill=white!90!black, postaction={
        pattern=crosshatch,
        pattern color=white!100!black
    }]coordinates {
(3.4, 0.)(3.4, 2.5)(4.0, 2.5)(4.0, 0.0)(3.4, 0.)
};

\addplot[fill=white!50!black, postaction={
        pattern=north east lines,
        pattern color=white!100!black,
    }] coordinates {
(4.4, 0.)(4.4, 5.0)(5.0, 5.0)(5.0, 0.0)(4.4, 0.)
};

\addplot[fill=white!50!black, postaction={
        pattern=crosshatch,
        pattern color=white!100!black,
    }] coordinates {
(5.4, 0.)(5.4, 1.5)(6.0, 1.5)(6.0, 0.0)(5.4, 0.)
};

\addplot[fill=white!50!black, postaction={
        pattern=north east lines,
        pattern color=white!100!black,
    }] coordinates {
(6.4, 0.)(6.4, 4.0)(7.0, 4.0)(7.0, 0.0)(6.4, 0.)
};

\addplot[fill=white!50!black, postaction={
        pattern=crosshatch,
        pattern color=white!100!black
    }] coordinates {
(7.4, 0.)(7.4, 3.0)(8.0, 3.0)(8.0, 0.0)(7.4, 0.)
};

\addplot[fill=white!50!black, postaction={
        pattern=north east lines,
        pattern color=white!100!black,
    }] coordinates {
(8.4, 0.)(8.4, 5.0)(9.0, 5.0)(9.0, 0.0)(8.4, 0.)
};

\addplot[fill=white!0!black] coordinates {
(9.4, 0.)(9.4, 6.0)(10.0, 6.0)(10.0, 0.0)(9.4, 0.)
};

\addplot[fill=white!0!black, postaction={
        pattern=crosshatch,
        pattern color=white!100!black
    }] coordinates {
(10.4, 0.)(10.4, 1.25)(11.0, 1.25)(11.0, 0.0)(10.4, 0.)
};

\addplot[fill=white!0!black, postaction={
        pattern=north east lines,
        pattern color=white!100!black,
    }] coordinates {
(11.4, 0.)(11.4, 4.25)(12.0, 4.25)(12.0, 0.0)(11.4, 0.)
};

\addplot[dotted, white!10!black] coordinates {
(0, 3) (12, 3)};
\addplot[dotted, white!10!black] coordinates {
(0, 5) (12, 5)};
\addplot[dotted, white!10!black] coordinates {
(0, 6) (12, 6)};

\end{axis}

\end{tikzpicture}}
    \;\;\;\;
    \subfloat[
    % averaging variables in each index partition
    ]{\begin{tikzpicture}[scale=0.35, transform shape]
\begin{axis}[
axis line style=gray,
axis lines=middle,
        ytick = \empty,%{4.7},
        yticklabels = \empty,%{$b^*$},
        xtick = {0.7, 4.7, 9.7},
        xticklabels = {$\ell_1$, $\ell_2$, $\ell_3$},
x label style={at={(axis description cs:1.0,-0.01)},anchor=north},
y label style={at={(axis description cs:0.05,0.95)},anchor=south},
xlabel = {$\ell$},
ylabel = {$\muHat(\ell)$},
label style={font=\LARGE},
xmin=0,xmax=12.5,ymin=0,ymax=6.5,
width=0.9\textwidth,
height=0.5\textwidth,
samples=50]

\addplot[fill=white!90!black, postaction={
        pattern=crosshatch,
        pattern color=white!100!black
    }] coordinates {
(0.4, 0.)(0.4, 2.125)(1.0, 2.125)(1.0, 0.0)(0.4, 0.)
};

\addplot[fill=white!90!black, postaction={
        pattern=crosshatch,
        pattern color=white!100!black
    }] coordinates {
(1.4, 0.)(1.4, 2.125)(2.0, 2.125)(2.0, 0.0)(1.4, 0.)
};

\addplot[fill=white!90!black, postaction={
        pattern=crosshatch,
        pattern color=white!100!black
    }] coordinates {
(2.4, 0.)(2.4, 2.125)(3.0, 2.125)(3.0, 0.0)(2.4, 0.)
};

\addplot[fill=white!90!black, postaction={
        pattern=crosshatch,
        pattern color=white!100!black
    }]coordinates {
(3.4, 0.)(3.4, 2.125)(4.0, 2.125)(4.0, 0.0)(3.4, 0.)
};

\addplot[fill=white!50!black, postaction={
        pattern=north east lines,
        pattern color=white!100!black,
    }] coordinates {
(4.4, 0.)(4.4, 4.67)(5.0, 4.67)(5.0, 0.0)(4.4, 0.)
};

\addplot[fill=white!50!black, postaction={
        pattern=crosshatch,
        pattern color=white!100!black,
    }] coordinates {
(5.4, 0.)(5.4, 2.25)(6.0, 2.25)(6.0, 0.0)(5.4, 0.)
};

\addplot[fill=white!50!black, postaction={
        pattern=north east lines,
        pattern color=white!100!black,
    }] coordinates {
(6.4, 0.)(6.4, 4.67)(7.0, 4.67)(7.0, 0.0)(6.4, 0.)
};

\addplot[fill=white!50!black, postaction={
        pattern=crosshatch,
        pattern color=white!100!black
    }] coordinates {
(7.4, 0.)(7.4, 2.25)(8.0, 2.25)(8.0, 0.0)(7.4, 0.)
};

\addplot[fill=white!50!black, postaction={
        pattern=north east lines,
        pattern color=white!100!black,
    }] coordinates {
(8.4, 0.)(8.4, 4.67)(9.0, 4.67)(9.0, 0.0)(8.4, 0.)
};

\addplot[fill=white!0!black] coordinates {
(9.4, 0.)(9.4, 6.0)(10.0, 6.0)(10.0, 0.0)(9.4, 0.)
};

\addplot[fill=white!0!black, postaction={
        pattern=crosshatch,
        pattern color=white!100!black
    }] coordinates {
(10.4, 0.)(10.4, 1.25)(11.0, 1.25)(11.0, 0.0)(10.4, 0.)
};

\addplot[fill=white!0!black, postaction={
        pattern=north east lines,
        pattern color=white!100!black,
    }] coordinates {
(11.4, 0.)(11.4, 4.25)(12.0, 4.25)(12.0, 0.0)(11.4, 0.)
};

\addplot[dotted, white!10!black] coordinates {
(0, 2.125) (12, 2.125)};
\addplot[dotted, white!10!black] coordinates {
(0, 4.67) (12, 4.67)};
\addplot[dotted, white!10!black] coordinates {
(0, 6) (12, 6)};

\end{axis}

\end{tikzpicture}}
    % \\
    % \vspace{0.5cm}
    \;\;\;\;
    \subfloat[
    % batching variables in each index partition,
    % from one-dimensional index set \text{$\{\ell\in[\mHat]\}$
    % }
    % to two-dimensional index set \text{$\{(a,b):a\in[3],b\in[a]\}$}
    ]
    {\begin{tikzpicture}[scale=0.19444444444, transform shape]
\begin{axis}[
axis line style=gray,
axis lines=middle,
        ytick = \empty,
        yticklabels = \empty,
        xtick = \empty,
        xticklabels = \empty,
x label style={at={(axis description cs:0.95,-0.01)},anchor=north},
y label style={at={(axis description cs:0.03,0.95)},anchor=south},
xlabel = {$a$},
ylabel = {$b$},
label style={font=\LARGE},
xmin=0.2,xmax=3.5,ymin=0.2,ymax=3.2,
width=0.9\textwidth,
height=0.9\textwidth,
samples=50]

\addplot[fill=white!90!black, postaction={
        pattern=crosshatch,
        pattern color=white!100!black
    }] coordinates {
(0.4, 0.4)(0.4, 1.2)(1.2, 1.2)(1.2, 0.4)(0.4, 0.4)
};
\addplot[fill=white!50!black, postaction={
        pattern=crosshatch,
        pattern color=white!100!black,
    }] coordinates {
(1.4, 0.4)(1.4, 1.2)(2.2, 1.2)(2.2, 0.4)(1.4, 0.4)
};
\addplot[fill=white!50!black, postaction={
        pattern=north east lines,
        pattern color=white!100!black,
    }] coordinates {
(1.4, 1.4)(1.4, 2.2)(2.2, 2.2)(2.2, 1.4)(1.4, 1.4)
};
\addplot[fill=white!0!black, postaction={
        pattern=crosshatch,
        pattern color=white!100!black,
    }] coordinates {
(2.4, 0.4)(2.4, 1.2)(3.2, 1.2)(3.2, 0.4)(2.4, 0.4)
};
\addplot[fill=white!0!black, postaction={
        pattern=north east lines,
        pattern color=white!100!black,
    }] coordinates {
(2.4, 1.4)(2.4, 2.2)(3.2, 2.2)(3.2, 1.4)(2.4, 1.4)
};
\addplot[fill=white!0!black] coordinates {
(2.4, 2.4)(2.4, 3.2)(3.2, 3.2)(3.2, 2.4)(2.4, 2.4)
};
\end{axis}

\end{tikzpicture} }
    \caption{Graphical example illustration of steps~1, 2 and 3
    in the solution-dependent batching argument for \Cref{lem:from weakly to strongly factor revealing program}.
    Each combination of colors (white, gray, black) and patterns (solid, stripe, crosshatch)
    corresponds to an index subset $L(a, b)$.
    Subplot (a): pivot index subset $L=\{\ell_1,\ell_2,\ell_3\}$ 
    and
    index partition $\{L(a, b)\}_{a\in[3],b\in[a]}$.
    Subplot (b): averaging variables in each index partition.
    Subplot (c): batching variables in each index partition,
    from one-dimensional index set \text{$\{\ell\in[\mHat]\}$
    }
    to two-dimensional index set \text{$\{(a,b):a\in[3],b\in[a]\}$}.}
    \label{fig:pivot ell}
\end{figure}

\paragraph{Step 1- identifying pivotal index subset $L$ with monotone $\muHat(\ell)$.}

Fix an arbitrary $\discountfactor\in[0, 1]$, $\costscalar\in[1, 1 + \discountfactor]$, and
$\mHat\in\naturals$.
For ease of presentation, in this part we only consider $\permu(\cdot)$
such that $\permu(\ell) = \ell$ for each $\ell \in[\mHat]$.
The formal argument which covers general $\permu\in\positivereals^{\mHat}$
is defined in \Cref{apx:from weakly to strongly factor revealing program}.
Suppose $(\costHat, \{\muHat(\ell), \distanceHat(\ell),\distanceHatStar(\ell)\})$
is an arbitrary feasible solution of program~\ref{eq:weakly factor-revealing program}.
Consider a sequence of $k$ indexes 
$1 = \ell_1 < \ell_2 < \dots < \ell_k \leq \mHat$ 
for some $k\in\naturals$
such that 
\begin{align*}
\forall a\in[k-1]:
\qquad 
\muHat(\ell_a) < \muHat(\ell_{a + 1})
        \\
     \forall a \in [k],~
     \forall\ell \in [\ell_a:\ell_{a + 1} - 1]: 
    \qquad \muHat(\ell) \leq 
    \muHat(\ell_a)
\end{align*}
where $\ell_{k + 1} = m + 1$.
We define pivotal index subset $L \triangleq 
\{\ell_a\}_{a\in[k]}$.
See \Cref{fig:pivot ell}
for a graphical illustration.
It is easy to show both 
the existence and uniqueness of 
pivotal index subset~$L$.

\paragraph{Step 2- partitioning index set $[\mHat]$ 
based on pivotal index subset $L$.}

For each $a\in[k]$, $b\in[a]$, let
% define set 
% $L(a, b)$
% as the subset of $[\ell_a: \ell_{a + 1} - 1]$
% whose $\muHat(\ell)$ is between $\muHat(\ell_{b-1})$ and 
% $\muHat(\ell_b)$,
% i.e., 
\begin{align*}
    L(a,b) \triangleq \{
\ell \in [\ell_a: \ell_{a + 1} - 1]:
\muHat(\ell_{b - 1}) < \muHat(\ell) \leq \muHat(\ell_b)
\}
\end{align*}
where $\muHat(\ell_0) = 0$.
By definition of pivot index subset $L$, 
$\{L(a, b)\}_{b\in[a]}$ is a partition of
$[\ell_a: \ell_{a+1} - 1]$ for each $a\in[k]$,
and $\{L(a, b)\}_{a\in[k],b\in[a]}$
is a partition of index set $[\mHat]$.
See \Cref{fig:pivot ell}
for a graphical illustration.

Though $\{L(a, b)\}_{a\in[k],b\in[a]}$
is a partition
with non-uniform size 
and each $L(a, b)$ 
may not contain a consecutive subset of indexes,
it ensures 
the following desired monotonicity property on $\muHat(\ell)$:
Fix  $a,a'\in[k],~ b\in[a],~ b'\in[a']$ such that
  $b < b'$. We have
\begin{align*}
\muHat(\ell) \leq \muHat(\ell') \qquad
    &\forall \ell\in L(a, b),~
    \ell'\in L(a', b')
    %:&
    %\qquad 
% \end{align*}
\intertext{
Moreover, constraint~\WFRCtriangleineq\ in program~\ref{eq:weakly factor-revealing program}
is satisfied across index subsets $L(a, b)$ and $L(a', b')$
such that $a < a'$. 
Namely, for every $a,a'\in[k], ~b\in[a],~ b'\in[a']$,
if $a < a'$ then}
% \begin{align*}
\discountfactor\cdot \muHat(\ell') 
    \leq \distanceHatStar(\ell) + \distanceHat(\ell) + \distanceHatStar(\ell')
    \qquad
    &\forall \ell\in L(a, b),~
    \ell'\in L(a', b'):&
    % \qquad 
\end{align*}
Finally, 
the monotonicity of $\muHat(\ell)$
across partitions $\{L(a, b)\}_{a\in[k],b\in[a]}$ 
also enables
us to
remove non-convex term $\min\{\muHat(\ell),\muHat(\ell')\}$
in
constraint~\WFRCcontribution\ in program~\ref{eq:weakly factor-revealing program}
at $\ell = \ell_a$.
Namely, for every $a\in[k - 1]$, the aforementioned constraint can be rewritten as follows:
\begin{align*}
    \sum_{a'\in[a:k]}\sum_{b'\in[a]}\sum_{\ell'\in L(a', b')}
    \plus{
    \discountfactor\cdot \muHat(\ell') - \distanceHat(\ell')
    }
    +
    \sum_{a'\in[a:k]}\sum_{b'\in[a + 1: a']}\sum_{\ell'\in L(a', b')}
    \plus{
    \discountfactor\cdot \muHat(\ell_a) - \distanceHat(\ell')
    }
    \leq \costHat.
\end{align*}

\paragraph{Step 3- batching variables based on partitions
$\{L(a, b)\}_{a\in[k],b\in[a]}$}

Partitions $\{L(a, b)\}_{a\in[k],b\in[a]}$
nicely preserve the feasibility
of all constraints (for most indexes) in program~\ref{eq:weakly factor-revealing program} (as illustrated in the previous step). This suggests a  natural batching procedure for constructing a solution to
\SFRP{k, \discountfactor, \costscalar}
%\ref{eq:strongly factor-revealing quadratic program}
that involves   grouping variables 
in each $L(a, b)$ separately. 
Since the size of each $L(a, b)$ is not identical, we 
introduce additional variables $\{\qHat(a, b)\}_{a\in[k],b\in[a]}$
to keep track of the size.
Consequently, we construct a solution $(\costHat\primed, \{\qHat\primed(a, b), \muHat\primed(a, b),\distanceHat\primed(a, b),
\distanceHatStar\primed(a, b)\})$
for program~\SFRP{k, \discountfactor, \costscalar} as follows,\footnote{For ease of presentation, in 
this part we assume $L(a,b) \not=\emptyset$
for every $a\in[k],b\in[a]$.
The formal argument which covers general cases
is defined in \Cref{apx:from weakly to strongly factor revealing program}.}
\begin{align*}
    & 
    \costHat\primed\gets \costHat
    \\
    a\in[k], b\in[a]:
    \qquad 
    &\muHat\primed(a, b) \gets 
    \sum_{\ell\in L(a, b)} 
    \frac{
    \muHat(\ell)
    }{
    |L(a,b)|
    },
    \quad
    \distanceHat\primed(a, b) \gets 
    \sum_{\ell\in L(a, b)} 
    \frac{ \distanceHat(\ell)
    }{
    |L(a,b)|
    },
    \\
    &\distanceHatStar\primed(a, b) \gets 
    \sum_{\ell\in L(a, b)} 
    \frac{\distanceHatStar(\ell)
    }{
    |L(a,b)|
    },
    \quad
    \qHat\primed(a, b) \gets |L(a,b)|
\end{align*}
% Note that $|L(a, a)| \geq 1$ for each $a\in[k]$,
% since $\ell_a\in L(a, a)$ by definition.
% Therefore, the above construction (in particular, the part to handle $|L(a, b)| = 0$) is well-defined. 
See \Cref{fig:pivot ell} for a graphical illustration.

It can be verified that 
the objective value remains unchanged,
and
all constraints of program~\SFRP{k,\discountfactor,\costscalar}
(except constraints~\SFRCdistance\ and \SFRCdensity)
are satisfied.
In the formal proof in \Cref{apx:from weakly to strongly factor revealing program}, 
we show that imposing 
constraint~\SFRCdistance,
does not change the optimal objective value.
We discuss how to handle constraint~\SFRCdensity, in the next step.\footnote{If we remove \SFRCdistance\ or 
\SFRCdensity, \ref{eq:strongly factor-revealing quadratic program}
becomes degenerate and has unbounded optimal objective value.}

\paragraph{Step 4- converting into a feasible solution of program~\ref{eq:strongly factor-revealing quadratic program}.}
In the last step, 
we further convert the solution obtained in step 3 for 
program~\SFRP{k,\discountfactor,\costscalar}
into a feasible solution for program~\ref{eq:strongly factor-revealing quadratic program}
for an arbitrary $\nHat\in\naturals$.
The formal argument 
is 
% quite involved 
not difficult but quite detailed.
All technical details can be found in 
\Cref{apx:from weakly to strongly factor revealing program}.

The main technical difficulty in this step
is to design a batching procedure that guarantees
the feasibility of constraint~\SFRCdensity,
i.e.,
% \begin{align*}
    $\forall b \in[\nHat]:
    % \qquad 
    \sum_{a \in [b:\nHat]}
     \qHat(a, b) = 1$.
% \end{align*}
To overcome this difficulty, we first apply a 
\emph{decomposition 
procedure} (see \Cref{fig:stretch}  in the appendix)
which converts the solution in program~\SFRP{k,\discountfactor,\costscalar}
into a solution in program~\SFRP{k\primed,\discountfactor,\costscalar}
with $k\primed \geq k$ such that the total mass $\sum_{a\in[b:k\primed]}\qHat(a, b)$ is sufficiently small for each $b\in[k\primed]$.
This enables us to identify a sequence 
$0 = b_0 < b_1 < b_2 < \dots <b_{\nHat} = k\primed$
such that for every $t \in[\nHat]$, $\sum_{b\in [b_{t - 1} + 1: b_t]}
\sum_{a\in[b:n]}\qHat(a, b) = 1$. 
Then, we apply another \emph{batching procedure} (see \Cref{fig:compression} in the appendix)
based on sequence $\{b_t\}_{t\in[\nHat]}$,
and obtain
a feasible solution in program~\ref{eq:strongly factor-revealing quadratic program},
whose objective value is weakly higher than the the objective value 
of the original solution in step 1, as desired.
It is worthwhile highlighting that 
both the decomposition and the batching procedure 
in this step
heavily rely on the 
the monotonicity property imposed in constraint~\SFRCmonotonicity,
which is established in Steps 1 and 2 
based on pivot index subset $L$ and its
induced solution-dependent index partitions.

\section{Numerical Experiments}
\label{sec:numerical}

To 
provide numerical justifications
for the performance of our proposed algorithm,
we
performed numerical experiments on
both synthetic data (\Cref{sec:numerical synthetic}) and
US census data (\Cref{sec:numerical us cities}).

\subsection{Experiments over synthetic data}
\label{sec:numerical synthetic}
We first discuss the numerical experiment
over randomly-generated synthetic data.

\paragraph{Experimental setup.}
In our test problem, there are $n = 30$ locations. 
Each location $i\in[n]$ is 
associated with a two-dimensional coordinate 
$(x_i, y_i)$, 
population $\pop_i$,
and 
facility opening cost $\opencost_i$.
Here, coordinates $x_i, y_i$ are drawn i.i.d.\ 
from the normal distribution $\texttt{Normal}(0, 1)$,
population $\pop_i$
is drawn i.i.d.\ from the exponential distribution 
$\texttt{Exponential}(\sfrac{1}{100})$,
and facility opening cost $\opencost_i$ is drawn 
i.i.d.\ from the exponential distribution 
$\texttt{Exponential}(\sfrac{1}{\costconstructionfactor})$ with $\costconstructionfactor\in\{20, 100\}$.
We consider the Euclidean distance based on 
coordinates $\{(x_i, y_i)\}$
as the
distance function $\distance:[n]\times[n]\rightarrow
\reals_+$.

The number of individuals $\flowi$
for each edge $(i, j)\in\Edge$ is generated as follows.
Each location~$i\in[n]$ is associated with an 
employee attractiveness $\empattract_i$
drawn i.i.d.\ from the exponential distribution 
$\texttt{Exponential}(1)$.
For each edge $(i, j) \in \Edge$,
we set
\begin{align*}
    \flowi \triangleq 
    \displaystyle
    \popi \cdot \frac{
    \empattract_j\cdot 
    \exp\left(-\flowconstructionfactor
    \cdot \distance(i, j)\right)
    }{\sum_{k\in[n]}
    \empattract_k\cdot 
    \exp\left(-\flowconstructionfactor
    \cdot \distance(i, k)\right)
     }
\end{align*}
In this construction, $\flowi$
is inline with the standard MNL model 
by interpreting 
$\log(\empattract_j) - \flowconstructionfactor\cdot \distance(i, j)$
% $\log(\flowconstructionfactor_j)$
as the value of working at location $j$ for individuals who reside in location $i$.
Consequently, $\flowi$
increases as the employee attractiveness $\empattract_j$
of location $j$
increases.
Parameter~$\flowconstructionfactor$ 
controls the impact of distance between 
locations $i$ and $j$ on $\flowi$.
Here we present results   
for $\flowconstructionfactor = \sfrac{1}{5}$.\footnote{We also
% and 
% refer them as 
% \emph{uniform $\{\flowi\}$} and \emph{distance-discounted 
% $\{\flowi\}$}, respectively.
ran our experiments by varying all parameters in our synthetic instances. 
We obtained similar results 
and verified the robustness of our numerical findings.}
% In the main text, we present the results for distance-discounted $\{\flowi\}$. 
% We include some additional results in \Cref{apx:numerical additional results}.

\newcommand{\CFAlgStar}{\CFAlg^*}
\newcommand{\CFAlgPStar}{\CFAlgP^*}
\newcommand{\Param}{\texttt{Param}}

\paragraph{Policies.}
In this numerical experiment, 
we compare three different classes of policies:
\begin{enumerate}
    \item \textsl{The 2-Chance Greedy Algorithm}:
    this policy is \Cref{alg:greedy modified}
    parameterized by 
    discount factor~$\discountfactor$ and opening cost scalar $\costscalar$.
    Its approximation ratio is upperbounded by 
    $\SFRapproxratio$.
    We implement this policy with discount factor 
    $\discountfactor\in\{0, 0.2, 0.4, 0.6, 0.8, 1\}$ and opening cost scalar $\costscalar\in\{1, 1 + 0.5 \discountfactor, 1 + \discountfactor\}$,
    and refer it as $\CFAlg(\discountfactor,\costscalar)$.
    Recall that the JMMSV algorithm is a special case, i.e., $\CFAlg(0, 1)$ (\Cref{ob:JMMSV equivalence TWOLFLP}).
    Hereafter, we use $\Param\triangleq \{(\discountfactor,\costscalar):\discountfactor\in\{0, 0.2, 0.4, 0.6, 0.8, 1\}, \costscalar\in\{1, 1 + 0.5 \discountfactor, 1 + \discountfactor\}\}$ to denote the space of discretized $\discountfactor$ and $\costscalar$ in our experiments.
    
    Moreover, for each randomly generated instance, we compute the total cost of $\CFAlg(\discountfactor,\costscalar)$ for all discretized $(\discountfactor,\costscalar)\in\Param$ described above, and then refer the best one as $\CFAlgStar$.
    Namely, for each randomly generated instance $\instance$, 
    $\CFAlgStar \triangleq \CFAlg(\discountfactor^*,\costscalar^*)$ with
    \begin{align*}
        (\discountfactor^*,\costscalar^*) = \argmin_{(\discountfactor,\costscalar)\in\Param} \Cost[\instance]{\CFAlg(\discountfactor,\costscalar)}
    \end{align*}
    where $\Cost[\instance]{\CFAlg(\discountfactor,\costscalar)}$ is the total cost of algorithm $\CFAlg(\discountfactor,\costscalar)$ on instance $\instance$.
    \item \textsl{The 2-Chance Greedy Algorithm
    with Myopic Pruning}:
    this policy combines \Cref{alg:greedy modified} 
    with an additional post-processing step 
    (i.e., myopic pruning) as follows.
    Given $\SOL$ returned by \Cref{alg:greedy modified},
    this policy iteratively checks whether 
    the total cost can be reduced by removing 
    (a.k.a., pruning) 
    a facility 
    from current solution $\SOL$. 
    If such a facility exists, 
    it \emph{greedily prunes} the one with the highest cost reduction, and repeats.
    By construction, the performance
    of this policy 
    is weakly better than the original 2-Chance Greedy Algorithm,
    and its approximation ratio 
    is upperbouned by $\SFRapproxratio$ as well.
    We implement this policy with discretized discount factor 
    $\discountfactor$ and opening cost scalar $\costscalar$ for all $(\discountfactor,\costscalar)\in\Param$,
    and refer it as $\CFAlgP(\discountfactor,\costscalar)$.
    Similar to $\CFAlgStar$, we introduce $\CFAlgPStar$ to denote algorithm $\CFAlgP(\discountfactor,\costscalar)$ with the best discretized $(\discountfactor, \costscalar)$ for each randomly generated instance.
    \item \textsl{The Greedy Algorithm 
    with Home (resp.\ Work) Location}:
    this policy is the JMMSV algorithm 
    (\citealp{JMMSV-02}, \Cref{alg:JMMSV})
    assuming that each individual can only be connected 
    through her home (resp.\ work) location.
    This policy requires the knowledge of 
    population $\{\popi\}$
    (resp.\ employment) for each location,
    and the knowledge of $\{\flowi\}$
    is unnecessary.
    Its approximation ratio in the \TWOLFLP\ 
    is unbounded.
    We refer this policy as $\GDH$ (resp.\ $\GDW$).
\end{enumerate}

\paragraph{Results.}
In order to compare different policies,
we sample 100 randomly generated synthetic instances.
For each instance and each policy, 
we normalize its performance by 
computing the ratio between
the cost of this policy on
this instance 
and 
the cost of policy $\CFAlgPStar$.
Below we discuss our numerical results in detail.

\begin{table}[t]
    \centering
    \caption{Average costs for 
    different policies in \Cref{sec:numerical synthetic}.}
    % (dist.-discounted $\{\flowi\}$).}
    \label{table:synthetic performance discount}
    \begin{tabular}[t]{ccccc}
    \toprule
    & 
    $\CFAlgStar$ & $\CFAlgPStar$ & $\GDH$ & $\GDW$
     \\
     \midrule
     $\costconstructionfactor = 20$ & 236.23 & 235.59 & 302.03 & 298.63\\
     $\costconstructionfactor = 100$ & 599.27 & 597.94 & 682.28 & 680.71\\
    \bottomrule
    \end{tabular}
\end{table}

\begin{figure}[ht]
  \centering
       \subfloat[$\costconstructionfactor = 20$]
      {\includegraphics[width=0.45\textwidth]{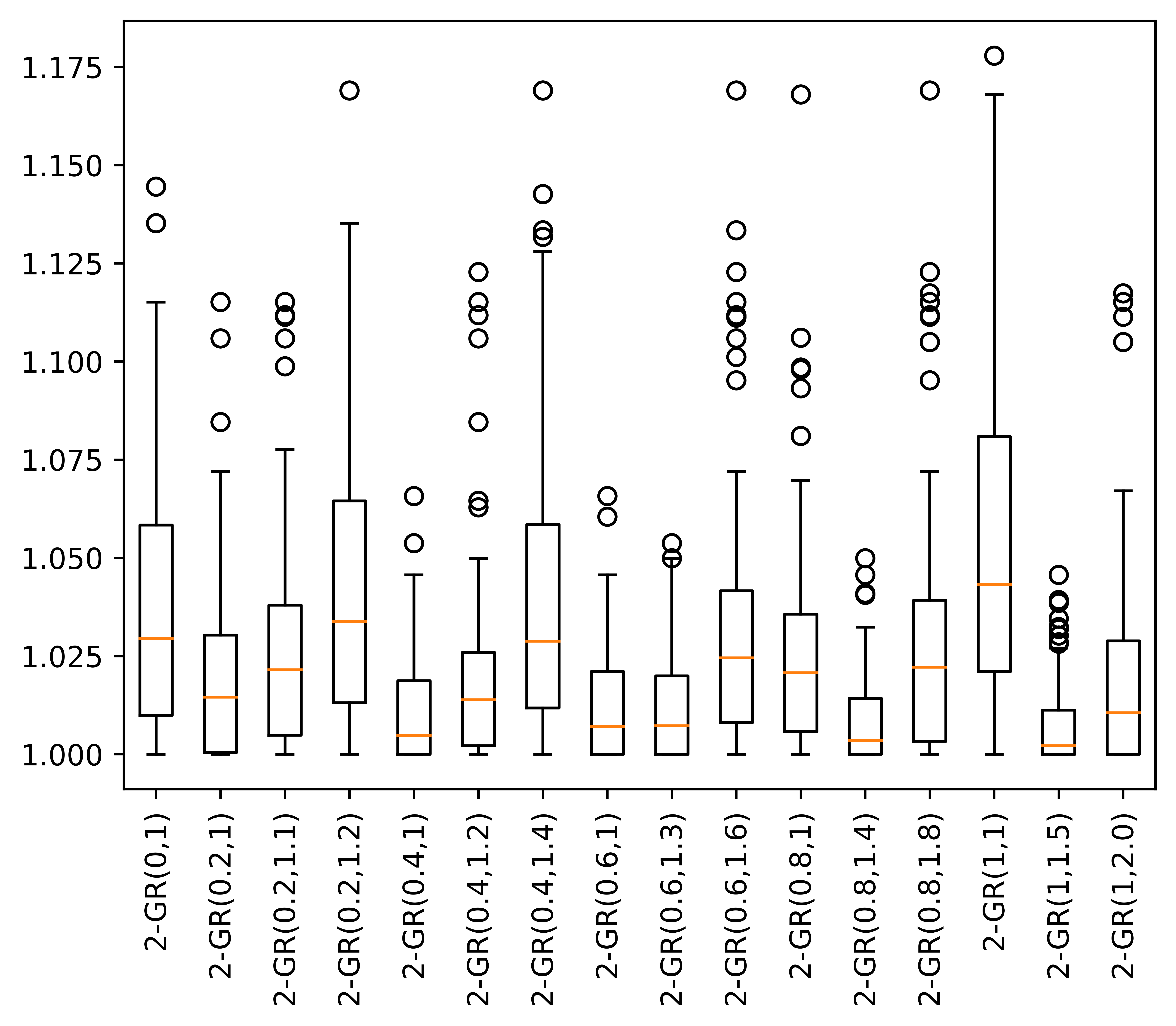}}
      ~~~~
      ~~~~
      \subfloat[$\costconstructionfactor = 100$]
      {\includegraphics[width=0.45\textwidth]{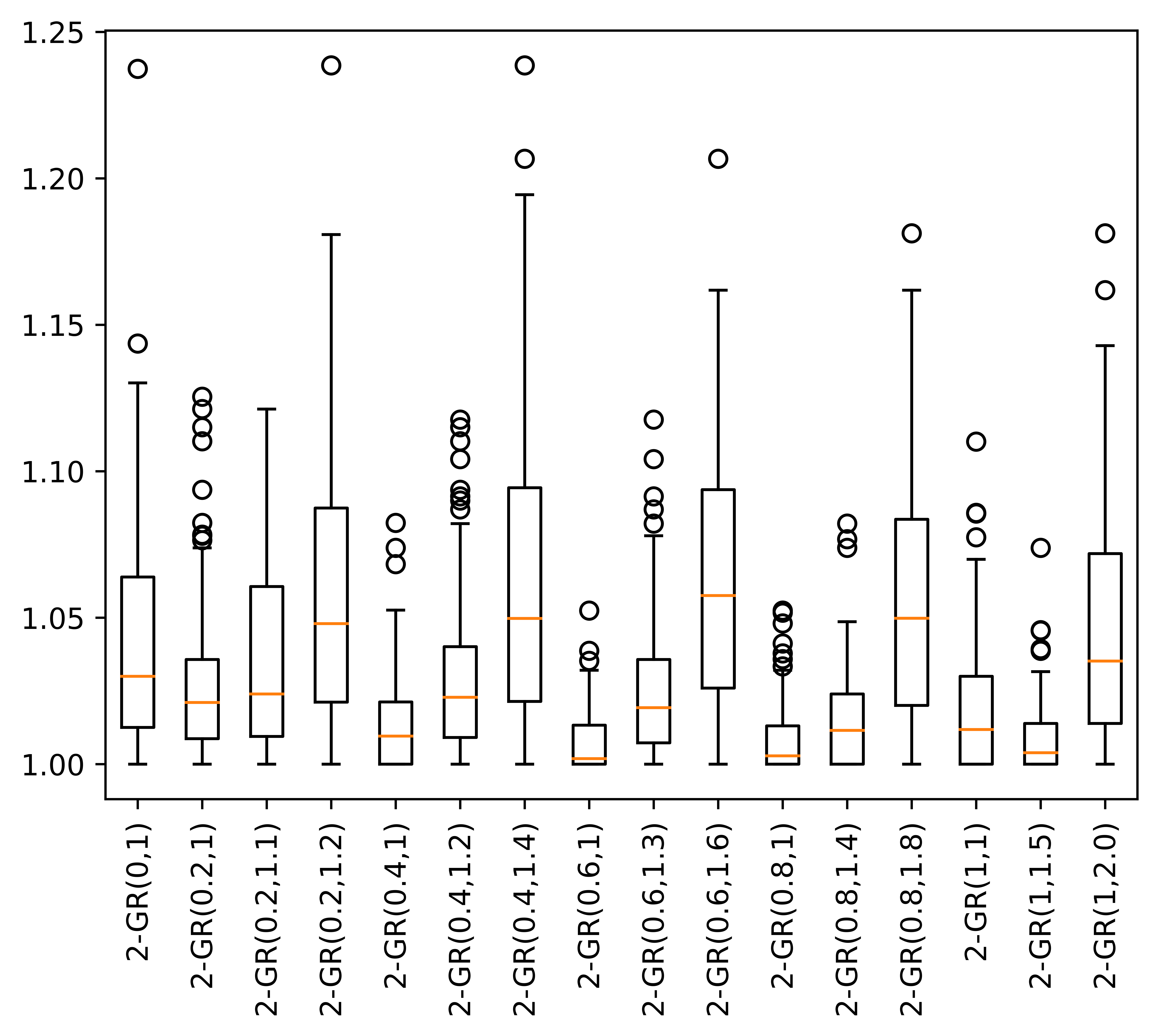}}
  \caption{``Box and whisker'' plots to compare $\CFAlg(\discountfactor,\costscalar)$ with different $(\discountfactor,\costscalar)\in\Param$ in terms of the normalized performance in \Cref{sec:numerical synthetic}.
  % (dist.-discounted $\{\flowi\}$).
  }
   \label{fig:numerical synthetic ratio discount}
\end{figure}

\begin{figure}[ht]
  \centering
       \subfloat[$\costconstructionfactor = 20$]
      {\includegraphics[width=0.45\textwidth]{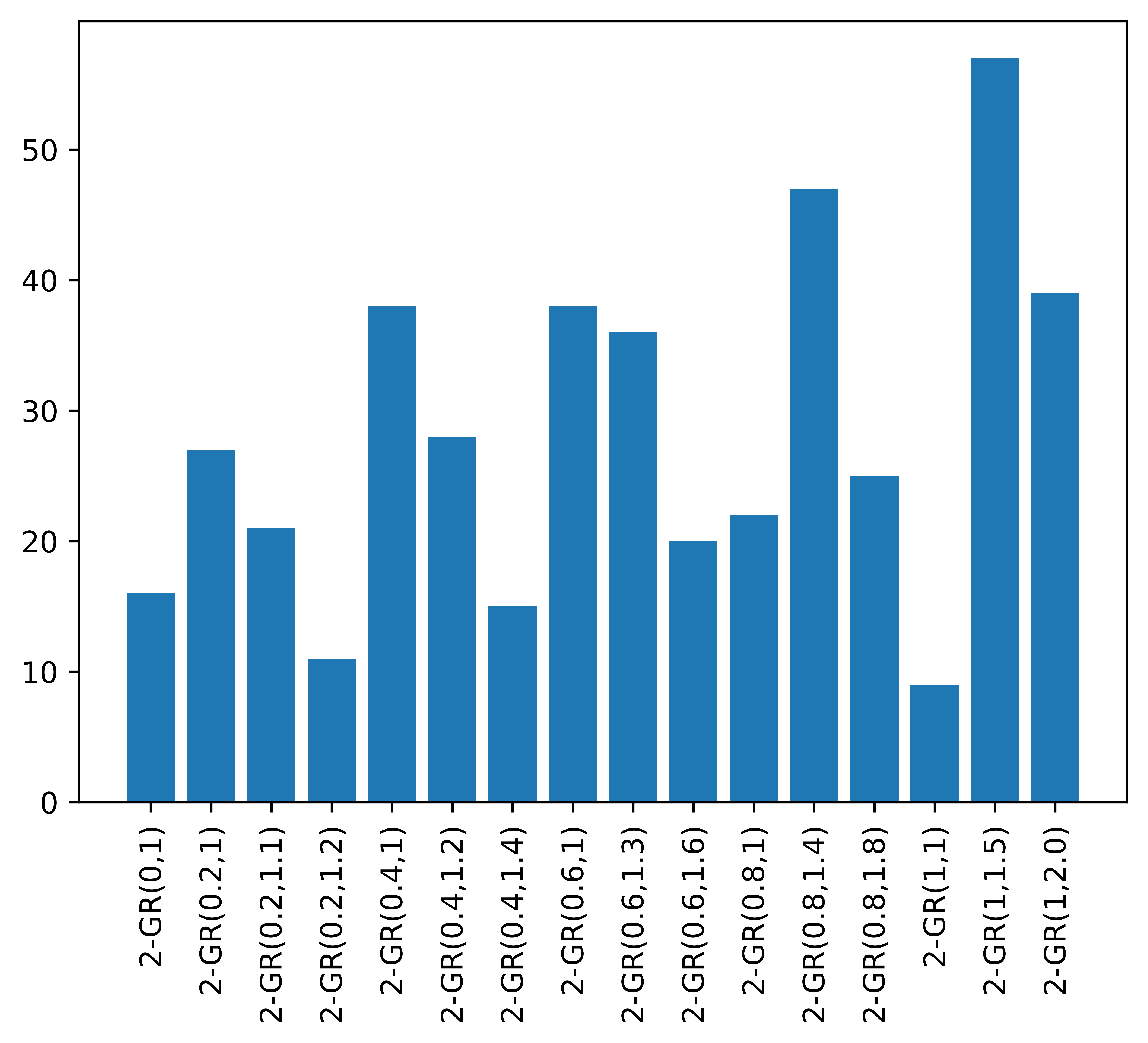}}
      ~~~~
      ~~~~
      \subfloat[$\costconstructionfactor = 100$]
      {\includegraphics[width=0.45\textwidth]{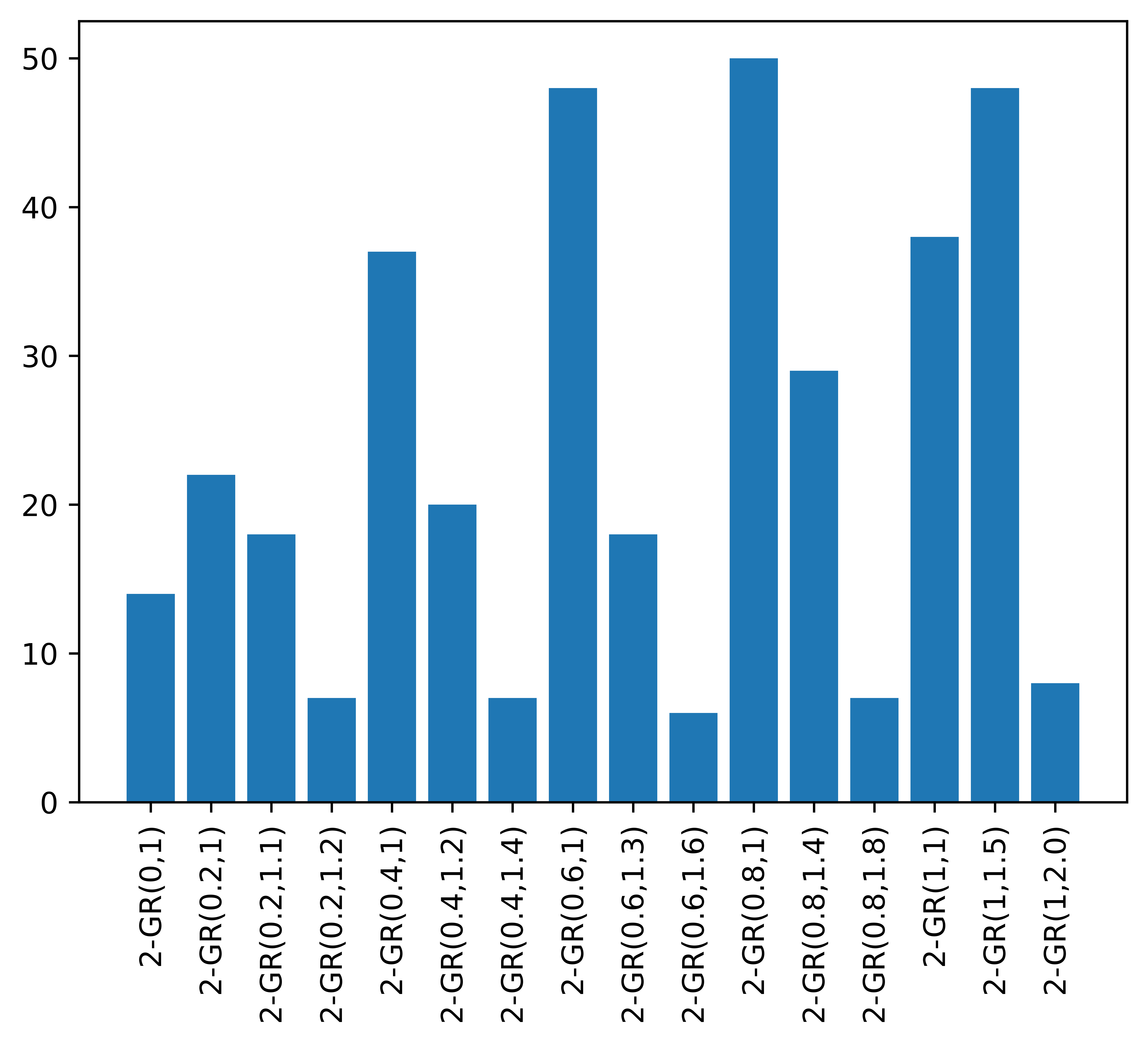}}
  \caption{The frequency that each $(\discountfactor,\costscalar)\in\Param$ achieves minimum cost among all $\CFAlg(\discountfactor,\costscalar)$) in \Cref{sec:numerical synthetic}.
  % (dist.-discounted $\{\flowi\}$).
  }
   \label{fig:numerical synthetic freq discount}
\end{figure}

\noindent\textsl{\underline{The performance of the 2-Chance Greedy $\CFAlg(\discountfactor,\costscalar)$ with different parameters $(\discountfactor,\costscalar)$}} 
In \Cref{sec:greedy two location analysis}, we obtain the main theoretical result, i.e., the approximation ratio of $\greedyapprox$ for the 2-Chance Greedy Algorithm, by considering discount factor $\discountfactor = 1$ and opening cost scalar $\costscalar = 2$. 
However, for a particular instance, using $\discountfactor = 1$ and $\costscalar = 2$ may not minimize the cost among all $(\discountfactor,\costscalar)$ assignments. In practice, the social planner can implement the 2-Chance Greedy Algorithm by varying $(\discountfactor,\costscalar)$ and then select the solution with the minimum cost. 
From \Cref{fig:numerical synthetic ratio discount}, we observe that the best average normalized performance among all $\CFAlg(\discountfactor,\costscalar)$ is obtained under $(\discountfactor = 1,\costscalar = 1.5)$ and $(\discountfactor = 0.6, \costscalar = 1)$ for different experiment setups, respectively. In \Cref{fig:numerical synthetic freq discount}, we also plot the frequency that each $(\discountfactor,\costscalar)\in\Param$ achieves minimum cost among all $\CFAlg(\discountfactor,\costscalar)$.
% Below we shed lights on some numerical insights about the impact of discount factor $\discountfactor$ and opening cost scalar $\costscalar$ in the 2-Chance Greedy Algorithm.

\noindent\textsl{\underline{The impact of discount factor $\discountfactor$ and opening cost scalar $\costscalar$ in $\CFAlg(\discountfactor,\costscalar)$}}
% in the 2-Chance Greedy Algorithm.}}
Recall that
the approximation ratio of the 2-Chance Greedy Algorithm
becomes $\Omega(\log n)$
when discount factor $\discountfactor$ approaches zero
(\Cref{example:JMMSV failure}),
while the approximation is constant 
with positive constant discount factor $\discountfactor$,
e.g., a $\greedyapprox$-approximation for $\discountfactor = 1,\costscalar = 2$ 
(\Cref{prop:approx ratio gamma one,prop:approx ratio general gamma}).
From \Cref{fig:numerical synthetic ratio discount},
it can be observed that 
when we fix $\costscalar = 1$, 
the average normalized performance of $\CFAlg(\discountfactor, 1)$ 
is first decreasing and then increasing
for $\discountfactor\in\{0, 0.2, 0.4, 0.6, 0.8, 1\}$.
To understand this behavior,
note that the discount factor $\discountfactor$
controls the impacts 
of partially connected edges 
for opening a new facility.
Loosely speaking, 
$\CFAlg(\discountfactor,\costscalar)$ 
with larger discount factor $\discountfactor$
opens facilities more aggressively
(see \Cref{fig:numerical synthetic size discount}),
and some of those facilities are unnecessary
(i.e., removing some facilities from the final 
solution
reduces the total cost).
On the other direction, since opening cost scalar $\costscalar$ controls the tradeoff between the facility opening cost and individual connection cost, the number of opened facilities decreases (and thus less facility opening cost occurs) when we increase opening cost scalar $\costscalar$ while holding discount factor $\discountfactor$ fixed.

\begin{figure}[ht]
  \centering
       \subfloat[$\costconstructionfactor = 20$]
      {\includegraphics[width=0.45\textwidth]{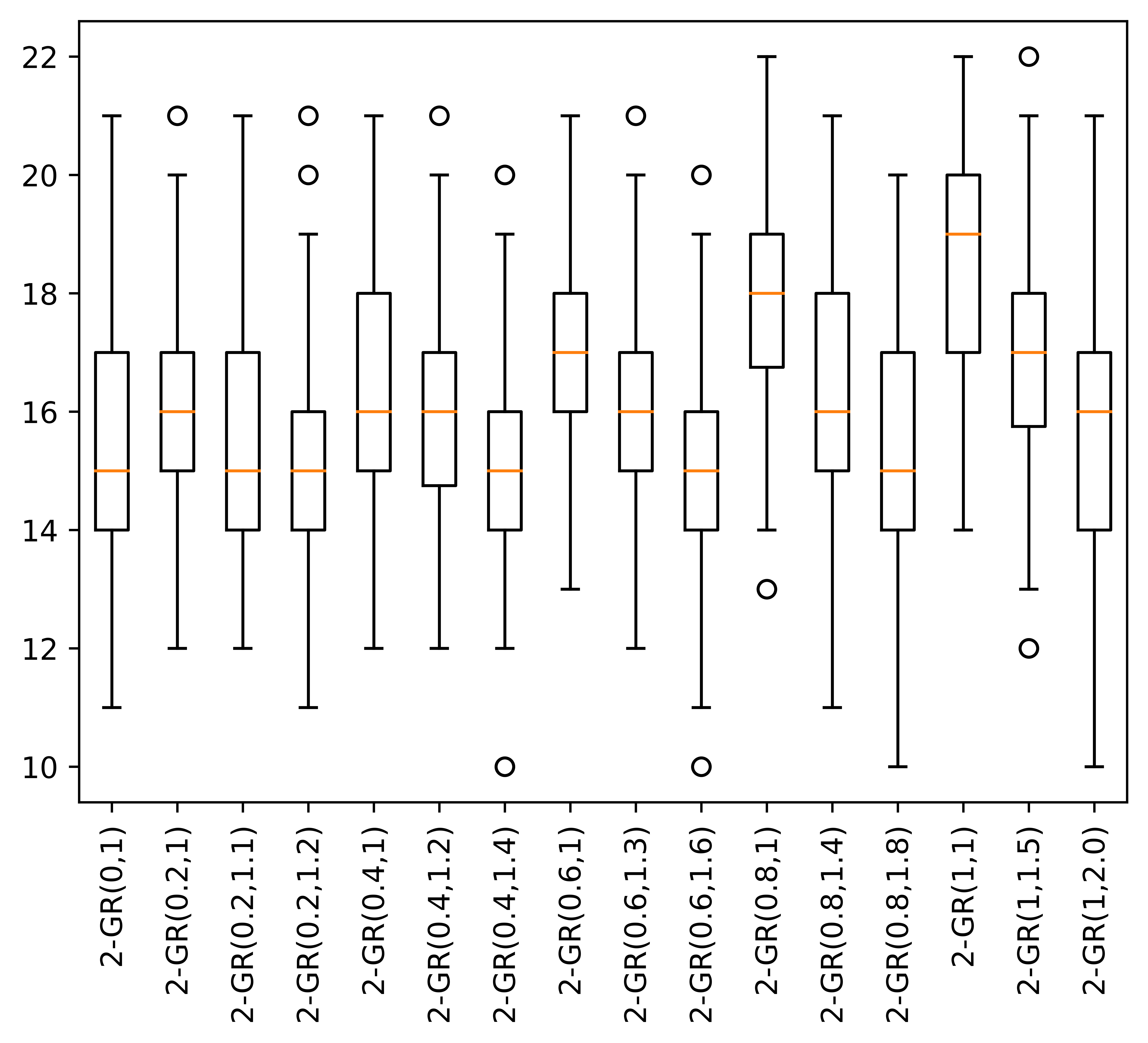}}
      ~~~~
      ~~~~
       \subfloat[$\costconstructionfactor = 100$]
      {\includegraphics[width=0.45\textwidth]{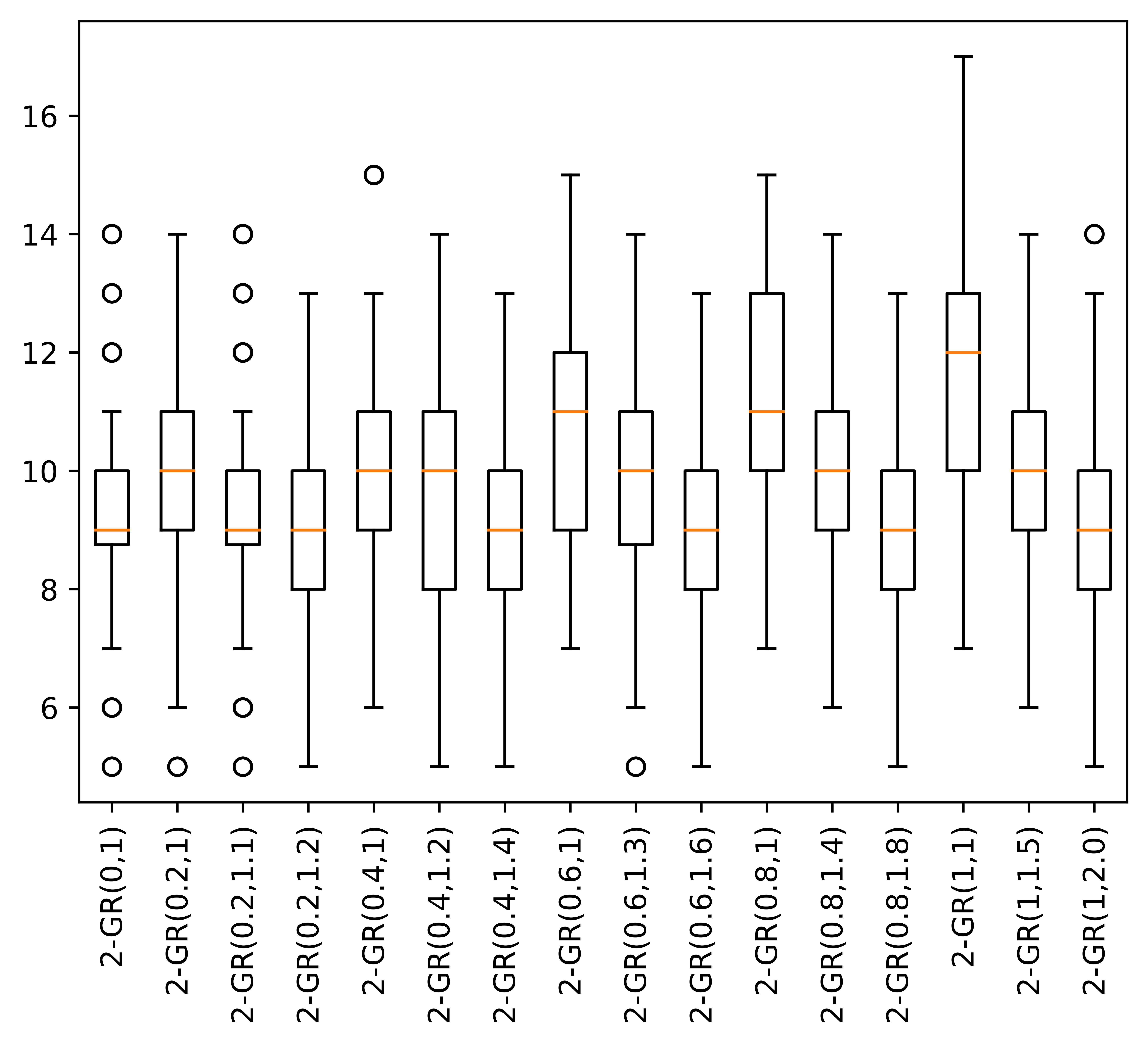}}
  \caption{``Box and whisker'' plots to compare $\CFAlg(\discountfactor,\costscalar)$ with different $(\discountfactor,\costscalar)\in\Param$ in terms of the size of solutions in \Cref{sec:numerical synthetic}.
  % (dist.-discounted $\{\flowi\}$).
  }
   \label{fig:numerical synthetic size discount}
\end{figure}

The observation that the 2-Chance Greedy Algorithm might open some unnecessary facilities in fact motivates
the 2-Chance Greedy Algorithm 
with Myopic Pruning ($\CFAlgP(\discountfactor, \costscalar)$)
described above.
From \Cref{table:synthetic performance discount},
% (and Figure XYZ in Appendix),
we observe that 
the average normalized performance of $\CFAlgPStar$ 
% (and $\CFAlgP(\discountfactor,\costscalar)$)
slightly improves. 
Moreover, $\CFAlgPStar$ not only 
outperforms all other policies
on average, but also beats them in 66 (resp.\ 73) out of 100 instances under the experiment setup $\costconstructionfactor = 20$ (resp.\ $\costconstructionfactor = 100$).

% \noindent\textsl{\underline{The impact of opening cost scalar $\costscalar$
% in the 2-Chance Greedy Algorithm.}}
% While the value of $\costscalar$ is not relatively less crucial for obtaining constant approximation ratio of the 2-Chance Greedy Algorithm (\Cref{prop:approx ratio general gamma}),
% it plays an important role both theoretically (for obtaining approximation ratio of $\greedyapprox$) and numerically (see Figure XYZ). It is worth comparing the impact of $\costscalar$ in different experiment setups $\costconstructionfactor = 20$ and $\costconstructionfactor = 100$. Note that the average normalized performance of $\CFAlg(\discountfactor,1)$ is worse than $\CFAlg(\discountfactor, 1 + \discountfactor)$ for $\costconstructionfactor = 20$; while the former is better than the latter for $\costconstructionfactor = 100$. To understand this behavior, recall that the opening cost scalar $\costscalar$ controls the tradeoff between the facility opening cost and the individual connection cost. 

\noindent\textsl{\underline{The value of mobility data.}}
When the mobility data $\{\flowi\}$
is not available and the 
social planner only has the 
population (resp.\ employment) 
information
in each location, 
one natural heuristic policy 
is pretending that
each individual can only be connected 
through her home (resp.\ work) location
and then implement $\GDH$ (resp.\ $\GDW$).
It is clear from \Cref{table:synthetic performance discount}
that both $\CFAlgStar$
and $\CFAlgPStar$
perform noticeably better than $\GDH$ and $\GDW$.
Specifically, the average performance gap between
$\CFAlgP(1)$ and $\GDH$ (resp.\ $\GDW$)
is 28.4\%, 14.1\% (resp.\ 27.7\%, 14.1\%), respectively.\footnote{In the 
single-location facility problem,
the solution outputted from JMMSV algorithm 
does not allow local improvement, i.e., 
it is impossible to reduce the total cost 
by removing a facility from this solution.
Therefore, the comparison between 
$\CFAlgPStar$
and $\GDH$, $\GDW$ is fair.}
Such
performance gap highlights the value of mobility 
data $\{\flowi\}$
in this numerical experiment over 
synthetic data.

\subsection{Experiments over US census data}
\label{sec:numerical us cities}
Our second numerical experiment is constructed 
through US census data.

\paragraph{Experimental setup.}
We construct four sets of
\TWOLFLP\ 
instances for 
four cities
in the US, including  
New York City (NYC), Los Angeles metropolitan area (greater LA),
Washington metropolitan area (greater DC),
and Raleigh-Durham-Cary CSA (Research Triangle).

In each instance, 
a location corresponds to a 
Zip Code Tabulation Area (ZCTA).\footnote{ZCTAs are closely 
related to zip codes. The ZCTA code of a block is the most common zip code contained in it; see \citet{ZCTA}.}
For each pair of locations, 
we define their distance as 
the Euclidean distance 
between the centroids of their corresponding 
ZCTAs. 
We set the facility opening cost $\opencost_i$
for each location $i$ as 
\begin{align*}
    \opencost_i \triangleq 
    \costconstructionfactor \cdot 
    \zhvi_i
\end{align*}
where $\zhvi_i$ is the Zillow Home Value Index\footnote{In each instance, 
there are less than 5\% of locations whose 
Zillow Home Value Index (ZHVI) is missing. We set their 
$\zhvi_i$'s as the average ZHVI of other locations.
We also explore other specifications and verify the robustness of our findings.}
recorded in 
\citet{ZHVI},
and $\costconstructionfactor$ is a cost normalization parameter 
which controls the tradeoff between facility opening costs and 
individual connection cost
in our experiments.
Our results are presented for 
different values of $\costconstructionfactor$.

Finally, \citet{LODES} records the total number of individuals 
who reside in one census block (CB) and work in another.
Aggregating it at the ZCTA level, we complete the construction of 
$\{\flowi\}$.
We summarize the number of locations and 
individuals in \Cref{table:instance construction}.

\begin{table}[t]
    \centering
    \caption{The number of locations and individuals
    in \Cref{sec:numerical us cities}.}
    \label{table:instance construction}
    \begin{tabular}[t]{ccccc}
    \toprule
    & 
    NYC & greater LA & greater DC & Research Tri.
     \\
     \midrule
     total locations & 177 & 386 & 313 & 102\\
     total population & 3166075 & 5338918 & 1934411 & 771249\\
    \bottomrule
    \end{tabular}
\end{table}

\paragraph{Policies.}
In this numerical experiment, we consider the same 
set of policies as in \Cref{sec:numerical synthetic}:
$\CFAlg(0, 1)$ (i.e., JMMSV algorithm)
$\CFAlgStar$,
$\CFAlgPStar$,
% for $\discountfactor\in\{0, 0.3, 0.5, 0.7, 1\}$ 
$\GDH$, and $\GDW$.

\paragraph{Results.}
For each city, we 
construct the \TWOLFLP\ instances
under different values of 
cost normalization parameter $\costconstructionfactor$.
Loosely speaking, 
as parameter $\costconstructionfactor$ increases, 
the magnitude of facility opening cost increases and
consequently the sizes of solutions
in all policies decreases.
We consider two regimes:
\emph{low $\costconstructionfactor$ regime} 
(resp.\ 
\emph{high $\costconstructionfactor$ regime})
such that the facilities are opened 
in 40\% to 80\%
(resp.\ 5\% to 30\%) 
of total locations 
in policy $\CFAlgP(1, 1)$.
For each city and each regime, our results
are qualitatively similar.
\Cref{table:numerical us cities ratio low}
and 
\Cref{table:numerical us cities ratio high}
illustrate the average normalized performance
of different policies 
for each city in the low $\costconstructionfactor$ regime
and the high $\costconstructionfactor$ regime, respectively.
Similar to the experiments over synthetic data in 
\Cref{sec:numerical synthetic},
the normalized performance 
of a policy is defined as 
the ratio between the cost of this policy
and the cost of policy $\CFAlgPStar$.
 Below we discuss our numerical results in detail.

\begin{table}[t]
    \centering
    \caption{The average normalized (by $\CFAlgPStar$) performance 
    in the low $\costconstructionfactor$ regime.}
    \label{table:numerical us cities ratio low}
    \begin{tabular}[t]{ccccc}
    \toprule
    & 
    $\CFAlg(0, 1)$ & $\CFAlgStar$ & $\GDH$ & $\GDW$
     \\
     \midrule
     NYC 
     & 1.041 & 1.005 & 1.23 & 1.1 \\
     greater LA
     & 1.018 & 1.009 & 1.244 & 1.096  \\
     greater DC
     & 1.003 & 1.002 & 1.37 & 1.094 \\
     Research Tri.
     & 1.006 & 1.001 & 1.278 & 1.126  \\
    \bottomrule
    \end{tabular}
\end{table}

\begin{table}[t]
    \centering
    \caption{The average normalized (by $\CFAlgPStar$) performance 
    in the high $\costconstructionfactor$ regime.}
    \label{table:numerical us cities ratio high}
    \begin{tabular}[t]{ccccc}
    \toprule
    & 
    $\CFAlg(0, 1)$ & $\CFAlgStar$ & $\GDH$ & $\GDW$
     \\
     \midrule
     NYC 
     & 1.071 & 1.003 & 1.147 & 1.031  \\
     greater LA
     & 1.045 & 1.01 & 1.131 & 1.035  \\
     greater DC
     & 1.031 & 1.008 & 1.258 & 1.042  \\
     Research Tri.
     & 1.053 & 1.008 & 1.223 & 1.065  \\
    \bottomrule
    \end{tabular}
\end{table}

\noindent\textsl{\underline{JMMSV algorithm vs.\ 2-Chance Greedy Algorithm.}}
Similar to the observation in \Cref{sec:numerical synthetic},
% experiments over synthetic data, 
the 2-Chance Greedy Algorithm $\CFAlg(\discountfactor,\costscalar)$ may open some unnecessary facilities and thus $\CFAlgP(\discountfactor,\costscalar)$ using myopic pruning as a post-processing step reduces the cost. 
Nonetheless, it can be observed in \Cref{table:numerical us cities ratio low,table:numerical us cities ratio high} that such cost reduction is  marginal (i.e., less than 1\%).
On the other hand, there is a performance gap between $\CFAlgStar$ (and thus $\CFAlgPStar$) with the JMMSV algorithm (i.e., $\CFAlg(0, 1)$). For NYC, the gap is 4.1\% (7.1\%) in the low (high) $\costconstructionfactor$ regime. For the other three cities, in the high $\costconstructionfactor$ regime, the gap is 4.5\%, 3.1\% and 5.3\%, respectively. 

\noindent\textsl{\underline{The value of mobility data.}}
Similar to the observation in \Cref{sec:numerical us cities},
% experiments over synthetic data, 
$\CFAlgStar$ (and thus $\CFAlgPStar$) outperforms 
$\GDH$ and $\GDW$
for all four cities in both regimes.\footnote{Recall that {low $\costconstructionfactor$ regime} 
(resp.\ 
{high $\costconstructionfactor$ regime})
is defined 
such that the facilities are opened 
in 40\% to 80\%
(resp.\ 5\% to 30\%) 
of total locations 
in policy $\CFAlgP(1)$.
In the two extremes (i.e., 
$\costconstructionfactor$
goes to zero
and infinite),
the mobility data 
has no value. 
However,
theoretically speaking,
the value could be substantial
for $\costconstructionfactor$
in between.}
In the low $\costconstructionfactor$ regime,
the performance gap between $\CFAlgPStar$
and $\GDH,\GDW$ is more than 23\%, 9.4\%,
respectively 
(\Cref{table:numerical us cities ratio low}).
In the high $\costconstructionfactor$ regime,
the performance gap between $\CFAlgPStar$
and $\GDH$ is still more than 13.1\%,
but the performance gap between $\CFAlgPStar$
and $\GDW$ becomes 3.1\% to 6.5\%.
(\Cref{table:numerical us cities ratio high}).
Overall our results indicate that in the absence of mobility data, facility placements based on work locations may make sense. However, the value of mobility data is nontrivial in practical settings, and there is significant gain (3-23\% based on assumed parameters) for firms for acquiring and leveraging mobility data in their facility location decisions.

\section{Conclusion and Future Directions}
\label{sec:conclusion}

Motivated by practical applications that utilize mobility data, in this paper, we introduced the 2-location facility location problem. 
We illustrate the shortcomings of the classic greedy algorithm for facility location in our setup and the APX-hardness of computing the optimal solution.
As the main algorithmic contribution of the paper, we propose the 2-Chance Greedy Algorithm. 
By first conducting a primal-dual analysis and then introducing the strongly factor-revealing quadratic program, we prove that the approximation ratio of the 2-Chance Greedy Algorithm is between $\greedyapproxLB$ and $\greedyapprox$. We believe that our new analysis framework, in particular the solution-dependent batching argument for obtaining the strongly factor-revealing program, might be of independent interest. We also present extensive numerical studies that justify the performance of our proposed algorithm in practically relevant settings and highlight the value of mobility data. Finally, we extend our model, algorithm, and analysis to the $K$-location facility location problem.

Several open questions arise from this work. 
The first question is to identify the use of mobility data in other operational problems, such as optimizing transportation systems and scheduling on ride-hailing platforms. Second, inspired by our numerical results, one might want to study (and bound) the value of mobility data theoretically. Since collecting mobility data might be costly, is it possible for a social planner to characterize how much mobility data can improve her decision in isolation or in competitive environments?
What are the distinguishing features of settings where this value is expected to be large versus small?
Third, there is still a gap (i.e., $[2, \greedyapprox]$) for the optimal approximation ratio among polynomial time algorithms for the \TWOLFLP. One can investigate if other algorithmic techniques (e.g., myopic pruning introduced in the numerical section) can be used to further reduce or even close the gap. Fourth, the \TWOLFLP\ in this paper assumes that the individual's connection cost is the \emph{minimum} distance between any of her locations to the closest facility. One can consider other connection cost models\footnote{One example is the ``maximum model'' where the value users derive from facilities depend on the maximum of the distance between a facility and home/work locations. This may be a more appropriate model to consider in the optimization of transit systems, where the users value the proximity of the transit stops both to their origin and to their destination.}
and adapt the machinery and framework
we developed
to these variants.

% \input{Paper/disjunctive-model}

% \input{Paper/conjunctive-model}

% \newpage

\bibliography{refs}

% % \chapter{Facility Location Problem}
% \input{Notes/facility-location-model}
% \input{Notes/facility-location-information}
% \input{Notes/two-circle}

% \section{Delivery Regions Optimization}
% \input{delivery-model}
% \input{old-model}

% \input{lower-bound}
% \input{LP-relxation}
% \input{value-of-information}

% \newpage

\appendix

\section{Connection to  
Vertex Cover Problem}
\label{apx:vertex cover}

In the (weighted) vertex cover problem,
given a (vertex-weighted) undirected graph,
% $G = (V, \Edge, )$,
the goal is to find a subset of vertices 
with minimum size (total weights)
such that every edge intersects this subset
(i.e., at least one end point is in this subset).
\citet{KR-08}
establish the following hardness result
for the vertex cover problem.

\begin{theorem}[\citealp{KR-08}]
In the vertex cover problem, there exists no polynomial-time 
with a $2-\epsilon$ approximation guarantee 
under the unique game conjecture.
\end{theorem}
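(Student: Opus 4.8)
Since this is a deep, externally established result rather than something I would reprove from scratch, the plan is to follow the reduction of \citet{KR-08}, which derives the $2-\epsilon$ hardness from the \emph{Unique Games Conjecture} \citep{kho-02} via a long-code (dictatorship-test) construction. Concretely, I would start from a Unique Games instance $\mathcal{U}=(G=(V,E),[L],\{\pi_{uv}\}_{(u,v)\in E})$, where each edge carries a bijection $\pi_{uv}:[L]\to[L]$ and a labeling $\ell:V\to[L]$ satisfies $(u,v)$ iff $\pi_{uv}(\ell(u))=\ell(v)$, and build a weighted graph $H$ whose minimum vertex cover (equivalently, the complement of its maximum independent set) encodes the quality of the best labeling. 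I would replace each $v\in V$ by a copy of the Boolean cube $\{0,1\}^L$ under the $p$-biased product measure $\mu_p$ (vertex weights proportional to $\mu_p$), and join $(u,x)$ to $(v,y)$ across an edge $(u,v)$ exactly when $x$ and $y$, after relabeling coordinates by $\pi_{uv}$, violate a prescribed disjointness/intersection condition. The bias $p$ is the one free parameter, tuned so that the final ratio tends to $2$; standard vertex-duplication clears the rational weights if the literally unweighted statement is wanted.

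The completeness direction is the routine half. Given a labeling satisfying a $(1-\varepsilon_0)$-fraction of the edges, I would take the induced ``dictator'' sets — for each block the subcube fixing coordinate $\ell(v)$ — verify that their union is essentially independent in $H$, and bound its $\mu_p$-measure below by a quantity that I drive toward $\tfrac{1}{2}$ through the choice of $p$. This exhibits, in the ``yes'' case, a vertex cover of fractional weight approaching $\tfrac{1}{2}$.

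The soundness direction is where the real work sits and is the step I expect to be the main obstacle. I would argue the contrapositive: if $H$ had an independent set of fractional weight bounded away from the target soundness value (which I arrange to tend to $0$), then a Fourier/influence analysis on the $p$-biased cube forces the indicator of that set to have an influential coordinate in a non-negligible fraction of the blocks. Sampling such influential coordinates yields a randomized labeling of $G$, and the coordinate consistency enforced by the permutations $\pi_{uv}$ across edges guarantees that it satisfies a constant fraction of the Unique Games constraints, contradicting the soundness promise of the conjecture. The technical heart is exactly this dictatorship-test estimate — an invariance-principle / majority-is-stablest-style bound adapted to the biased measure — which penalizes every independent set that is not close to a dictator. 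Putting the two halves together, the ratio between the soundness and completeness cover weights approaches $\tfrac{1}{\,1/2\,}=2$, yielding $2-\epsilon$ inapproximability for every $\epsilon>0$.
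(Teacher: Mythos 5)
The paper does not prove this statement at all — it is imported verbatim as a cited theorem of \citet{KR-08} — and your proposal correctly recognizes that the only sensible course is to defer to that reduction, whose architecture (blocks given by the $p$-biased cube over the label set, dictator subcubes for completeness, influence-decoding of a too-large independent set into a good Unique Games labeling for soundness) you describe accurately. The one small inaccuracy is that the soundness analysis in \citet{KR-08} rests on Friedgut's junta theorem applied to cross-intersecting families on the biased cube (in the tradition of \citealp{GK-99} and Dinur--Safra) rather than on an invariance-principle/majority-is-stablest bound, but this does not change the overall account.
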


It is straightforward to observe that the \TWOLFLP\ 
generalizes the weighted vertex cover problem.
Given an arbitrary instance in weighted vertex cover problem,
we can construct an instance in the \TWOLFLP\ as follows.
Each vertex is a location, whose facility opening cost
equals to the weight of the vertex.
The number of individuals between each pair of locations is one 
if there is an edge between their corresponding vertices,
and zero otherwise.
The distance between every pair of distinct locations is infinite.
Due to this distance construction, in the \TWOLFLP\ instance,
it is necessary for any solution with bounded total cost
to open a facility at each individual's home or work location.
Therefore, such solution is a valid vertex cover.

In the 
weighted vertex cover problem,
the 2-Chance Greedy Algorithm
with discount factor $\discountfactor = 1$
recovers the classic 
primal-dual algorithm and is a 2-approximation \citep{vaz-01}. 
We formalize this in the theorem below and present 
a proof for completeness.
\begin{theorem}
    In the weighted vertex cover problem,
    the approximation ratio 
    of 
    the 2-Chance Greedy Algorithm 
    with discount factor $\discountfactor = 1$ 
    is 2.
\end{theorem}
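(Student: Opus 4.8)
The plan is to run a dual-fitting argument that reinterprets the candidate costs $\greedycounter(\edge)$ produced by the algorithm as (scaled) dual variables for the vertex cover LP, exactly as in the classic primal--dual analysis. Recall the reduction from \Cref{apx:vertex cover}: every pair of distinct locations is at distance $\infty$, every location $i$ has opening cost $\opencosti$ equal to the weight of the corresponding vertex, and each graph edge $\edge=(\edgehome,\edgework)$ carries $\flowe=1$. Consequently $\distance(\edge,i)=0$ when $i\in\{\edgehome,\edgework\}$ and $\distance(\edge,i)=\infty$ otherwise, so the only facilities an edge can ever connect to are its two endpoints. I would set $y_\edge \triangleq \greedycounter(\edge)/\costscalar$ as the candidate dual solution and aim to show $\Cost{\stores}\le 2\sum_{\edge}y_\edge\le 2\,\Cost{\OPT}$.

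First I would verify that the returned set $\stores$ is a genuine vertex cover with zero connection cost. Since the algorithm halts only when $\firstchanceset=\emptyset$, every edge $\edge$ is at least partially connected; a slot $\edge_\edgeindex$ can be connected only to a facility at distance $0$, i.e.\ only to the location $\edge_\edgeindex$ itself, so at least one endpoint of $\edge$ lies in $\stores$. Hence $\stores$ covers every edge, each edge sits at distance $0$ from $\stores$, and $\Cost{\stores}=\sum_{i\in\stores}\opencosti$.

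The heart of the argument is to relate $\sum_{i\in\stores}\opencosti$ to $\sum_{\edge\in\Edge}\greedycounter(\edge)$ via the opening rule. Fix a location $i$ at the instant it opens under Event~(b). Because the $i$-slot of any incident edge can be connected only when $i$ itself opens, at that instant every incident edge either lies in $\firstchanceset$ or is partially connected with its $i$-slot still free; since $\discountfactor=1$ suppresses the discount, each such edge contributes exactly $\greedycounter(\edge)$, and this equals its final frozen value (an unconnected incident edge freezes precisely at this opening, a partially connected one froze earlier). The Event~(b) equality therefore reads $\costscalar\,\opencosti=\sum_{\edge:\,i\in\{\edgehome,\edgework\}}\greedycounter(\edge)$ for each $i\in\stores$. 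Summing over $\stores$ and using that each edge has at most two endpoints in $\stores$ yields $\sum_{i\in\stores}\opencosti\le \tfrac{2}{\costscalar}\sum_{\edge\in\Edge}\greedycounter(\edge)=2\sum_{\edge}y_\edge$.

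It remains to establish dual feasibility, $\sum_{\edge:\,v\in\{\edgehome,\edgework\}}\greedycounter(\edge)\le \costscalar\,\opencost_v$ for every location $v$, and I expect the case $v\notin\stores$ to be the main obstacle. Here I would argue that, with $\discountfactor=1$, the left-hand side of $v$'s Event~(b) condition is monotone nondecreasing in time: an incident edge contributes $\greedycounter(\edge)$ while unconnected and the \emph{same} frozen value once it is connected through its other endpoint. Thus, had this quantity ever reached $\costscalar\,\opencost_v$, the algorithm would have opened $v$; since it did not, the terminal value $\sum_{\edge\ni v}\greedycounter(\edge)$ stays at most $\costscalar\,\opencost_v$ (for $v\in\stores$ the tightness above gives equality). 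Therefore $\{y_\edge\}$ is feasible for the vertex cover LP dual, so $\sum_{\edge}y_\edge$ lower-bounds the LP relaxation optimum, which is at most $\Cost{\OPT}$. Chaining the inequalities gives $\Cost{\stores}\le 2\sum_{\edge}y_\edge\le 2\,\Cost{\OPT}$, and one checks that $\costscalar$ cancels so the bound is insensitive to the opening-cost scalar. The delicacy of the non-opened-vertex step is exactly where $\discountfactor=1$ is essential: any $\discountfactor<1$ would let an edge's contribution drop after its first connection, destroying the monotonicity that makes this a clean $2$-approximation.
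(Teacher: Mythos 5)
Your proposal is correct and follows essentially the same route as the paper: the crux in both is the observation that for every location $v$ the quantity $\sum_{\edge\in\Edge_v}\greedycounter(\edge)$ (with all $v$-slots necessarily unconnected while $v$ is closed, and frozen values unchanged by $\discountfactor=1$) increases continuously and monotonically, so it can never exceed $\costscalar\cdot\opencost_v$ without triggering Event~(b) — exactly the paper's contradiction argument for dual feasibility. The only difference is presentational: you work directly with the vertex-cover LP dual and re-derive the cost-versus-dual-objective bound from the tightness of Event~(b) at opened facilities, whereas the paper plugs the same feasibility claim into its general configuration-LP machinery (\Cref{lem:lp relaxation}, \Cref{lem:dual assignment objective value}, and the dual assignment~\eqref{eq:dual assignment}).
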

\begin{proof}
    % We first argue that the approximation ratio is 
    % at most 2.
    For each location $i$, let $\Edge_i$ denote
    the subset of edges whose home or work location is $i$,
    i.e., $\Edge_i = \{\edge\in\Edge: \edgehome = i \lor 
        \edgework = i\}$.
    Now consider the same primal-dual analysis framework
    as the one in \Cref{sec:primal dual framework}.
    Specifically, 
    given \Cref{lem:lp relaxation},
    \Cref{lem:dual assignment objective value} 
    and dual assignment construction~\eqref{eq:dual assignment},
    it is sufficient to show that 
    for each location $i\in[n]$,
    % \begin{align*}
        $\sum_{\edge\in\Edge_i} \greedycounter(\edge) \leq \opencost_i$.
    % \end{align*}
    We prove this by contradiction. Suppose 
    there exists location $i\in[n]$ such that 
    the inequality is violated.
    Consider the time stamp $\timeindex$
    when $\sum_{\edge\in\Edge_1} \greedycounter_t(\edge) = \opencost_i$.
    Here $\greedycounter_t(\edge)$ is the value of $\greedycounter(\edge)$
    at time stamp $t$.
    Since the distance in the vertex cover problem is infinite,
    no edge $\edge\in\Edge_i$ has been connected through location $i$.
    Therefore, the condition of Event~(b) for opening location $i$ 
    is satisfied in the algorithm, and thus $\greedycounter(\edge)$
    stops increasing after time stamp $t$ for every edge $\edge\in\Edge_1$.
    This leads to a contradiction.
    % To lowerbound the approximation ratio, consider 
    % the following vertex cover instance:
    % there are three vertices $[3]$ 
    % with weights $1-\epsilon, 1-\epsilon, 1$, respectively.
    % There are two edges $(1, 3)$ and $(2, 3)$.
    % It is straightforward to verify that 
    % for sufficiently small $\epsilon$,
    % the algorithm 
    % selects the first two vertices with total weights $2-2\epsilon$,
    % while the optimal vertex cover only selects the third vertices
    % with weight~$1$.
\end{proof}

% \section{Connection to  
% Single Facility Location Problem}
\section{The JMMSV Algorithm}
\label{apx:single location FLP}
See \Cref{alg:JMMSV} 
for a formal description of the JMMSV algorithm. 
\begin{algorithm}
    \SetKwInOut{Input}{input}
    \SetKwInOut{Output}{output}

    \Input{
    single location facility location problem instance $\instance = 
    (n, \distance, 
\{\flow_j\}_{j\in[n]}, \opencosts_{i\in[n]})$}
%  	\vspace{1mm}
    \Output{subset $\stores \subseteq[n]$
    as the locations of opened facilities.}
 	\caption{The JMMSV algorithm \citep{JMMSV-02}}
 	\label{alg:JMMSV}
%  	\vspace{2mm}
 	
 	initialize $\stores \gets \emptyset$
 	
 	initialize $\greedycounter(j) \gets 0$ for each location $j\in[n]$
 	
%  	\vspace{1mm}
 	
 	initialize $\firstchanceset \gets [n]$,
%  	\vspace{2mm}

    initialize $\connectmapping(j) \gets \nullsymbol$
    for each location $j \in[n]$,
    % each $\edgeindex\in\hwset$
 	
 	\While{$\firstchanceset \not= \emptyset$}{ 
 	  %  \vspace{1mm}
 	    increase $\greedycounter(j)$ by 
 	    $d\greedycounter$
 	    for every location $j \in \firstchanceset$

 	    \tcc{Event (a)}
 	    
 	    \While{there exists location $j \in \firstchanceset$
 	    and location $i\in\stores$ 
 	    s.t.\ $\greedycounter(j) = \distance(j,i)$ 
 	    }
 	    {
 	    
 	  %  \vspace{1mm}
 	  
 	    remove location $j$ from $\firstchanceset$,
 	    i.e., $\firstchanceset\gets \firstchanceset\backslash\{j\}$

 	    connect location $j$ and facility $i$, i.e., 
 	    $\connectmapping(j) \gets i$
 	    }

 	    \tcc{Event (b)}
 	    
 	    \While{there exists location $i\not\in\stores$ s.t.\
 	    $\sum_{j\in \firstchanceset} 
 	    \flow_j\cdot 
 	    \plus{\greedycounter(j) - \distance(j, i)}
 	    = \opencost_i$
 	    }{
 	    
 	    \vspace{1mm}
 	    
 	    add location $i$ into $\stores$, i.e.,
 	    $\stores \gets \stores \cup \{i\}$

 	    \For{each location $j\in \firstchanceset$}{
 	    
 	    \If{$\greedycounter(j) - \distance(j, i)\geq 0$}{
 	    
 	    remove location $j$ from $\firstchanceset$,
 	    i.e., $\firstchanceset\gets \firstchanceset\backslash\{j\}$
 	    
 	    connect location $j$ and facility $i$, i.e., 
 	    $\connectmapping(j) \gets i$
 	    
 	    }
 	    }
 	    }
 	   
 	}
 	\textbf{return} $\stores$
\end{algorithm}

\section{Extensions to $K$-Location 
Facility Location Problem}
\label{apx:k location}

The 2-location facility location problem
(\TWOLFLP)
admits a natural extension,
which we refer as the 
\emph{$K$-location facility location problem
(\KLFLP)}.
The \KLFLP\ is the same as the \TWOLFLP\ 
same except that each individual is
associated with $K$ locations.
In this model, let $\Edge = [n]^K$.
We  
use notation 
$\edge = (\edge_1, \dots, \edge_K) \in \Edge$ to denote 
individuals associated with $K$ locations $e_1, \dots, e_K$,
and refer to $\edge$ as a hyperedge.
The distance between a hyperedge $\edge$ and location $i$ 
is defined as $\distance(\edge, i) = \min_{\edgeindex \in[K]}
\distance(e_\edgeindex, i)$.

The 2-Chance Greedy Algorithm also admits a natural extension,
i.e., $K$-Chance Greedy Algorithm where each hyperedge $e$ 
can be connected $K$ times through each of its $K$ 
associated locations. 
Similar to the 2-Chance Greedy Algorithm,
the $K$-Chance Greedy Algorithm 
is also parameterized by a vector of discount factors 
$1 = \discountfactor_0 \geq \discountfactor_1 \geq \dots 
\geq \discountfactor_K = 0$ and opening cost scalar $\costscalar\in\reals_+$.
We say a hyperedge $\edge$ is $k$-partially connected 
if it has been connected through $k$ associated locations.
The discount factor $\discountfactor_k$ control 
the impact of $k$-partially connected hyperedge 
for opening a new facility.
See \Cref{alg:K greedy modified} for a formal description.

\begin{algorithm}
    \SetKwInOut{Input}{input}
    \SetKwInOut{Output}{output}

    \Input{discount factor $1 = 
    \discountfactor_0
    \geq \discountfactor_1
    \geq \dots \geq 
    \discountfactor_K = 0$, opening cost scalar $\costscalar\in\reals_+$,
    \KLFLP\ instance $\instance = 
    (n, \distance, 
    \{\flowe\}, \opencosts)$}
%  	\vspace{1mm}
    \Output{subset $\stores \subseteq[n]$
    as the locations of opened facilities.}
 	\caption{$K$-Chance Greedy Algorithm}
 	\label{alg:K greedy modified}
%  	\vspace{2mm}
 	
 	initialize $\stores \gets \emptyset$
 	
 	initialize $\greedycounter(\edge) \gets 0$ for each 
  hyperedge $\edge\in\Edge$
 	
%  	\vspace{1mm}
 	
 	initialize $\firstchanceset \gets \Edge$,
%  	\vspace{2mm}

    initialize $\connectmapping(\edge,\edgeindex) \gets \nullsymbol$
    for each hyperedge $\edge \in\Edge$,
    each $\edgeindex\in[K]$
 	
 	\While{$\firstchanceset \not= \emptyset$}{ 
 	  %  \vspace{1mm}
 	    increase $\greedycounter(\edge)$ by 
 	    $d\greedycounter$
 	    for every hyperedge $\edge \in \firstchanceset$

 	    \tcc{Event (a)}
 	    
 	    \While{there exists hyperedge $\edge \in \firstchanceset$
 	    and location $i\in\stores$ 
 	    s.t.\ $\greedycounter(\edge) = \distance(\edge, i)$ 
 	    }
 	    {
 	    
 	  %  \vspace{1mm}
 	  
 	    remove hyperedge $\edge$ from $\firstchanceset$,
 	    i.e., $\firstchanceset\gets \firstchanceset\backslash\{\edge\}$
 	    
 	     \For{each $\edgeindex \in [K]$}{
 	    \If{$\greedycounter(\edge) = \distance(\edge_\edgeindex,i)$}{
 	    
 	    connect hyperedge $\edge$ and facility $i$ through location $\edge_\edgeindex$, i.e., 
 	    $\connectmapping(\edge,\edgeindex) \gets i$
 	    }
 	    }

 	    }

 	    \tcc{Event (b)}
 	    
 	    \While{there exists location $i\not\in\stores$ s.t.\
 	    $\sum\limits_{\edge\in \firstchanceset} 
 	    \flowe\cdot 
 	    \plus{\greedycounter(\edge) - \distance(\edge, i)}
 	    +
 	    \sum\limits_{
 	    \subalign{
 	    &\edge\not\in \firstchanceset
 	    }}
 	    \flowe\cdot 
 	    \plus{
        \discountfactor_{\|\connectmapping(\edge,\cdot)\|_0}
        \cdot \greedycounter(\edge) - 
        \min\limits_{\edgeindex\in[K]:
        \connectmapping(\edge,\edgeindex) = \nullsymbol}
        \distance(\edge_\edgeindex,i)}
 	    = \costscalar\cdot \opencost_i$
 	    }{
 	    \vspace{1mm}

      \tcc{subscript $\|\connectmapping(\edge,\cdot)\|_0$
      is defined as $\|\connectmapping(\edge,\cdot)\|_0 \triangleq
    |\{\edgeindex \in[K]: \connectmapping(\edge, \edgeindex) = \nullsymbol\}|$}
    
 	    \vspace{1mm}
 	    
 	    add location $i$ into $\stores$, i.e.,
 	    $\stores \gets \stores \cup \{i\}$

 	    \For{each hyperedge $\edge\not\in \firstchanceset$}{
 	    \For {each $\edgeindex \in [K]$}{
 	  %  \vspace{1mm}
 	  
 	    \If{$\connectmapping(\edge,\edgeindex) = \nullsymbol$ 
 	    and $\discountfactor\cdot \greedycounter(\edge) \geq \distance(\edge_\edgeindex, i) $}{
 	  
 	    connect hyperedge $\edge$ and facility $i$ through location $\edge_\edgeindex$, i.e., 
 	    $\connectmapping(\edge,\edgeindex) \gets i$
 	  %  \vspace{1mm}
 	   
 	    }
 	    }
 	    }
 	    
 	    \For{each hyperedge $\edge\in \firstchanceset$}{
 	    
 	    \If{$\greedycounter(\edge) - \distance(\edge, i)\geq 0$}{
 	    
 	    remove hyperedge $\edge$ from $\firstchanceset$,
 	    i.e., $\firstchanceset\gets \firstchanceset\backslash\{\edge\}$
 	    
 	    \For{each $\edgeindex \in [K]$}{
 	    \If{$\greedycounter(\edge) \geq \distance(\edge_\edgeindex,i) $}{
 	    
 	    connect hyperedge $\edge$ and facility $i$ through location $\edge_\edgeindex$, i.e., 
 	    $\connectmapping(\edge,\edgeindex) \gets i$
 	    }
 	    }
 	    }
 	    }
 	    }
 	   
 	}
 	\textbf{return} $\stores$
\end{algorithm}

Inspired by the parameter choice of the 2-Chance Greedy Algorithm, in the remaining of this section, we focus on
the $K$-Chance Greedy Algorithm with discount factors 
$\discountfactor_0 = \discountfactor_1 = \dots 
\discountfactor_{K - 1} = 1$ and opening cost scalar $\costscalar = K$,
and refer it as the \emph{Canonical $K$-Chance Greedy Algorithm}
for simplicity. 
The main theoretical guarantee of this algorithm is as follows.

\begin{theorem}
\label{thm:K greedy modified}
In the \KLFLP,
the approximation ratio of 
the Canonical $K$-Chance Greedy Algorithm
is 
at most equal to $\RelaxSFRapproxratio(K)$,
where 
\begin{align*}
    \RelaxSFRapproxratio(K) = 
    \inf\limits_{\nHat\in\naturals}
    \Obj{\text{\ref{eq:relax strongly factor-revealing quadratic program}}}
\end{align*}
Here \ref{eq:relax strongly factor-revealing quadratic program}
is the maximization program parameterized by $\nHat\in\naturals$
defined as follows:
\begin{align}
% \tag{$\mathcal{P}_{\texttt{SFR-Quad}}(\nHat,\discountfactor)$}
\tag{$\RelaxSFRP{\nHat}$}
\label{eq:relax strongly factor-revealing quadratic program}
% &\arraycolsep=1.4pt\def\arraystretch{2.2}
    \begin{array}{clll}
    \max\limits_{
    \substack{\costHat\geq 0,
    \\
    \qHatBf,
    \muHatBf,
    \distanceHatBf,
    \distanceHatStarBf \geq 
    \zerobf
    }} &
    %  \displaystyle\sum\nolimits_{a\in[\nHat]}
    %  \displaystyle\sum\nolimits_{b \in[a]}
     \displaystyle\sum\nolimits_{a\in[\nHat],b\in[a]}
     \qHat(a,b)\cdot 
    \muHat(a,b)  
    &  & \\
    \text{s.t.} &\SFRCmonotonicity \text{--}
    \SFRCdensity ~~
    \text{with $\discountfactor = 1$ and $\costscalar = K$}
    &
    \end{array}
\end{align}
\end{theorem}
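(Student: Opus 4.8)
The plan is to replicate, essentially step by step, the primal–dual plus strongly factor-revealing quadratic program machinery used to prove \Cref{thm:greedy modified} for the \TWOLFLP, now applied to \Cref{alg:K greedy modified}. The one structural change driving the whole argument is that a hyperedge can now be connected to as many as $K$ distinct facilities (rather than two), and the Event~(b) opening threshold is $\costscalar\cdot\opencost_i=K\cdot\opencost_i$ rather than $2\opencost_i$. First I would record a structural lemma for the Canonical $K$-Chance Greedy Algorithm that generalizes \Cref{lem:structural lemma} at $\discountfactor=1$: property~(i) again follows from the triangle inequality over \emph{locations} together with the freezing of $\greedycounter(\edge)$ at the first connection and the Event~(a) condition; property~(ii) is the Event~(b) budget inequality, now with right-hand side $K\cdot\opencost_i$ and the summation ranging over all $K$ potential connections through $\config_i$; and property~(iii) is immediate from the connection rule.

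Next I would set up the configuration LP exactly as in program~\ref{eq:lp relaxtion}, reading $\Edge=[n]^K$ and $\distance(\edge,i)=\min_{\edgeindex\in[K]}\distance(\edge_\edgeindex,i)$. Crucially, \Cref{lem:lp relaxation} carries over verbatim, since its proof only uses that service regions cover every (hyper)edge and never the arity of the edges. I would then construct a dual assignment in the spirit of \eqref{eq:dual assignment}: classify each hyperedge by the number $k\in[K]$ of \emph{distinct} facilities to which it is ultimately connected, and bound its dual value by $\flowe\cdot\greedycounter(\edge)$. Because the Canonical algorithm uses $\costscalar=K$, the factor $K$ in the opening threshold exactly compensates for the up-to-$K$ connections, so the analogue of \Cref{lem:dual assignment objective value} (total cost $\leq$ dual objective) goes through; and bounding the dual value by the candidate cost is precisely what lets the exact objective be \emph{relaxed} to $\sum_{a,b}\qHat(a,b)\cdot\muHat(a,b)$, the objective appearing in program~\ref{eq:relax strongly factor-revealing quadratic program}.

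With this dual assignment in hand, I would prove an approximate-feasibility statement analogous to \Cref{lem:dual assignment approx feasible}: for each service region $\starinstance=(i,\EdgeTilde)$, normalizing by $\opencost_i+\sum_{\edge\in\EdgeTilde}\distance(\edge,i)$ produces a feasible point of a weakly factor-revealing program of the same form as program~\ref{eq:weakly factor-revealing program}, with constraint~\WFRCcontribution\ now carrying the threshold $K\costHat$. Combined with weak LP duality this yields the analogue of \Cref{lem:weakly factor-revealing program}, an upper bound on the approximation ratio by the \emph{supremum} of these programs. Finally, I would invoke the solution-dependent batching argument of \Cref{lem:from weakly to strongly factor revealing program} — identifying the pivotal index subset on which $\muHat$ is monotone, partitioning into the non-consecutive, non-uniform blocks $L(a,b)$, and batching — to pass from this supremum to the \emph{infimum} over $\nHat$ of program~\ref{eq:relax strongly factor-revealing quadratic program}, whose constraints are exactly \SFRCmonotonicity--\SFRCdensity\ instantiated at $\discountfactor=1,\costscalar=K$.

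I expect the main obstacle to be the dual assignment and its approximate feasibility under the richer ``$k$-counted'' classification. In the two-location case there was a clean dichotomy between single- and double-counted edges; here a hyperedge's candidate cost $\greedycounter(\edge)$ may appear in up to $K$ distinct dual constraints, so the charging of connection costs against the dual value must be apportioned correctly across all $k\leq K$ connections while keeping every dual constraint approximately satisfied. The non-monotonicity of $\greedycounter(\edge)$ flagged in \Cref{remark:misalignment of greedy counter} is, if anything, more pronounced, so the $\min\{\greedycounter(\edge),\greedycounter(\edge')\}$ term in \WFRCcontribution\ persists; verifying that the solution-dependent batching preserves feasibility of \SFRCcontribution\ and the density constraint~\SFRCdensity\ when each block can feed up to $K$ constraints is the delicate step. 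The remaining bookkeeping — the decomposition and re-batching producing $\sum_{a\in[b:\nHat]}\qHat(a,b)=1$ — is routine once the monotone pivotal partition is in place.
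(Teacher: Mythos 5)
Your proposal is correct and follows essentially the same route as the paper: the paper's proof of \Cref{thm:K greedy modified} likewise reuses the primal--dual plus strongly factor-revealing machinery of \Cref{thm:greedy modified}, changing only the dual assignment (classifying hyperedges by the number $k\le K$ of distinct facilities they connect to and charging $\duale \le \flowe\cdot\greedycounter(\edge)$) and the solution construction in the analogue of \Cref{lem:dual assignment approx feasible} (setting $\distanceHatStar\primed(\edgetoellmapping(\edge))\gets\normalizefactor\cdot\greedycounter(\edge)$, which collapses the objective to $\sum_{a,b}\qHat(a,b)\,\muHat(a,b)$). The only minor miscalibration is that you flag the batching step as delicate, whereas the constraints of \ref{eq:relax strongly factor-revealing quadratic program} are identical to those of $\SFRP{\nHat,1,K}$, so that step carries over verbatim.
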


\begin{remark}
The constraints 
in program~\ref{eq:relax strongly factor-revealing quadratic program}
are the same as the constraints 
in
program~$\SFRP{\nHat, 1, K}$.
Furthermore, the objective function 
in program~\ref{eq:relax strongly factor-revealing quadratic program}
is a natural extension of the objective function of 
program~$\SFRP{\nHat, 1, 2}$.
Specifically, program~\hyperref[eq:relax strongly factor-revealing quadratic program]{$\mathcal{P}_{\texttt{SFR-2}}(\nHat)$} for $K=2$
recovers program~$\SFRP{\nHat, 1, 2}$.
\end{remark}

\begin{proof}[Proof of \Cref{thm:K greedy modified}]
The analysis follows the same argument as \Cref{thm:greedy modified} with two modification on
(a) the dual assignment construction~\eqref{eq:dual assignment};
and (b) the solution construction for \Cref{lem:dual assignment approx feasible}:
Let $\kcountedgeset$ be the hyperedge subset where each hyperedge $\edge\in\kcountedgeset$ are connected to $k$ different facilities 
in the Canonical $K$-Chance Greedy Algorithm.
We assume for every hyperedge $\edge \in \kcountedgeset$,
$\connectmapping(\edge, \edgeindex)\in\stores$ for every $\edgeindex\in[k]$,
and $\distance(\edge_{\edgeindex}, \connectmapping(\edge,\edgeindex)) \leq 
\distance(\edge_{\edgeindex + 1}, \connectmapping(\edge,\edgeindex + 1))$
for every $\edgeindex \in[k - 1]$.
This is without loss of generality, since the role of $K$ locations of a fixed hyperedge are ex ante symmetric in our model. 
For each $k\in[K]$ and each hyperedge $\edge\in\kcountedgeset$, we construct the dual assignment as
\begin{align*}
    \duale \gets \flowe\cdot 
    \left(
    \greedycounter(\edge) 
    - 
    \frac{1}{k}
    \left(
    \sum_{\edgeindex\in[k]}
    \distance(\edge_{\edgeindex}, \connectmapping(\edge,\edgeindex))
    \right)
    +
    \distance(\edge_{\texttt{1}}, \connectmapping(\edge,\texttt{1}))
    \right)
\end{align*}
Regarding the solution construction for \Cref{lem:dual assignment approx feasible} in the \KLFLP, we let variable $\distanceHatStar\primed(\edgetoellmapping(\edge))\gets \normalizefactor\cdot \greedycounter(\edge)$ for all hyperedge $\edge\in\EdgeTilde$ and leave all other variables' construction remaining the same.
Since then, all remaining analysis can be extended straightforwardly. We omit them to avoid
repetition.
\end{proof}

Note that $\Obj{\text{\hyperref[eq:relax strongly factor-revealing quadratic program]{$\mathcal{P}_{\texttt{SFR-K}}(\nHat)$}}} \leq \frac{K}{K'}
\Obj{\text{\hyperref[eq:relax strongly factor-revealing quadratic program]{$\mathcal{P}_{\texttt{SFR-K}'}(\nHat)$}}}$
for every $K \geq K'$, since any feasible solution $(\costHat,
\qHatBf,
\muHatBf,
\distanceHatBf,
\distanceHatStarBf)$ of program~\hyperref[eq:relax strongly factor-revealing quadratic program]{$\mathcal{P}_{\texttt{SFR-K}}(\nHat)$} can be converted into a feasible solution $(\costHat' \gets \costHat,
\qHatBf' \gets\qHatBf,
\muHatBf'\gets \frac{K'}{K}\muHatBf,
\distanceHatBf'\gets\frac{K'}{K}\distanceHatBf,
\distanceHatStarBf'\gets\frac{K'}{K}\distanceHatStarBf)$
of the program~\hyperref[eq:relax strongly factor-revealing quadratic program]{$\mathcal{P}_{\texttt{SFR-K}'}(\nHat)$}.
Therefore, it suffices to evaluate program~\ref{eq:relax strongly factor-revealing quadratic program} for small $K$'s which also provides upperbounds for larger $K$.
In particular, we numerically compute $\Obj{\RelaxSFRP{25}}$ for $K = 1, \dots, 20$
with software Gurobi,
and upperbound the approximation ratios of the Canonical $K$-Chance Greedy Algorithm in \Cref{table:approx ratio KFLP}. 

The linear dependence on $K$ in the approximation guarantee
is unavoidable 
for all polynomial-time algorithms.
Similar to our discussion in \Cref{apx:vertex cover},
the \KLFLP\ generalizes the vertex cover problem 
for $K$-uniform hypergraphs,
which is hard to 
approximate within any constant factor better than $K$
under the unique game conjecture \citep{KR-08}.
Observed that $\frac{1}{K}\cdot \Obj{\text{\hyperref[eq:relax strongly factor-revealing quadratic program]{$\mathcal{P}_{\texttt{SFR-K}}(25)$}}}$ is monotone decreasing in $K$ as expected in \Cref{table:approx ratio KFLP}.
In fact, we conjecture that $\lim\limits_{K\rightarrow \infty}\lim\limits_{\nHat\rightarrow\infty} \frac{1}{K}\cdot \Obj{\text{\ref{eq:relax strongly factor-revealing quadratic program}}} = 1$ and thus the Canonical $K$-Chance Greedy Algorithm attains asymptotic optimal approximation ratio of $K + o(1)$ as $K$ becomes large.

\begin{table}
    \centering
    \caption{The approximation ratio upper bounds (UB) of the Canonical $K$-Chance Greedy Algorithm.}
    \label{table:approx ratio KFLP}
    \begin{tabular}[t]{cccccccccccc}
    \toprule
    $K$    &  
     1 & 2 & 3 & 4 & 5 & 6 & 7 & 8 & 9 & 10 & 11
     \\
     \midrule
      % & 
     % 1.86 & 2.5 & 3.54 & 4.58 & 5.61 & 6.66 & 7.68 & 8.71 & 9.77 & 10.82 & 11.86
     UB & 1.864 & 2.497 & 3.538 & 4.58 & 5.611 & 6.659 & 7.685 & 8.714 & 9.769 & 10.816 & 11.855
     \\
    \bottomrule
    \\
     \addlinespace[-\aboverulesep] 
    \cmidrule[\heavyrulewidth]{1-11}
    $K$     
    & 12 & 13 & 14 & 15 & 16 & 17 & 18 & 19 & 20 & $\geq 21$ & 
     \\
    \cmidrule{1-11}
      % & 
     % 12.89 & 13.91 & 14.93 & 15.94 & 16.94 & 18.0 & 19.06 & 20.12 & 21.18 
     UB & 12.887 & 13.912 & 14.93 & 15.941 & 16.944 & 18.0 & 19.059 & 20.118 & 21.176 
     &  $1.059K$
     &
     \\
    \cmidrule[\heavyrulewidth]{1-11}
     \addlinespace[-\belowrulesep] \\
    \end{tabular}
\end{table}
 
\section{Missing Proofs}
\label{apx:missing proofs}

\subsection{Proof of \cref{thm:greedy modified}}
\label{apx:greedy modified}
\greedymodified*

\begin{proof}
Combining \Cref{lem:weakly factor-revealing program}
and \Cref{lem:from weakly to strongly factor revealing program}
finishes the proof.
\end{proof}

\subsection{Proof of \Cref{prop:approx ratio general gamma}}
\label{apx:approx ratio general gamma}
\approxratiogeneralgamma*

\begin{proof}
Consider constraint~\SFRCcontribution\ 
at $a = 1$,
\begin{align*}
    \costscalar \cdot \costHat &\geq 
    \qHat(2, 1)
    \cdot \plus{\discountfactor \cdot \muHat(2, 1) - \distanceHat(2, 1)}
    +
    \qHat(2, 2)
    \cdot \plus{\discountfactor \cdot \muHat(1, 1) - \distanceHat(2, 2)}
    \\
    &\geq 
    \qHat(2, 2)
    \cdot \discountfactor \cdot \muHat(1, 1) - \distanceHat(2, 2)
    % \\
    % &
    \overset{(a)}{\geq}
    \discountfactor \cdot \muHat(1, 1) - \distanceHat(2, 2)\
    \intertext{
where inequality~(a) holds due to constraint~\SFRCdensity\ at $b = 2$.
Hence,}
    \muHat(1, 1) &\leq \frac{1}{\discountfactor}\cdot \left(
    \costscalar\cdot \costHat + \distanceHat(2, 2)
    \right)
    \overset{(a)}{\leq} \frac{\costscalar}{\discountfactor}
    \intertext{
    where inequality~(a) holds due to constraints~\SFRCtotalcost\ and \SFRCdensity\ at $b = 2$.
    % }
    % \intertext{
    % \\
    Invoking constraint~\SFRCtriangleineq\ at $a = 1, a' = 2, b = 1, b' = 2$,}
    \discountfactor\cdot \muHat(2, 2) 
    &\leq \distanceHatStar(1, 1) + \distanceHat(1, 1)
    + \distanceHat(2, 2)
    \overset{(a)}{\leq}
    \muHat(1, 1) + \muHat(1, 1) 
    + \distanceHat(2, 2)
    \overset{(b)}{\leq}
    \frac{2\costscalar}{\discountfactor} + 1
    \leq \frac{3\costscalar}{\discountfactor}
    \intertext{
where inequality~(a) holds due to constraints~\SFRCconnectcost\ and 
\SFRCdistance\ at $a = 1, b = 1$;
and inequality~(b) holds since $\muHat(1, 1) \leq \frac{1}{\discountfactor}$
and $\distanceHat(2, 2) \leq 1$ implied by constraints~\SFRCtotalcost\ and \SFRCdensity\ at $b = 2$.
Putting all pieces together,}
&~~~~\Obj{\SFRP{2,\discountfactor,\costscalar}}
\\
&=
\qHat(1, 1)\cdot 
\left(\frac{1+\discountfactor}{\costscalar}\cdot \muHat(1, 1) - 
    \left(
    \frac{1 + \discountfactor}{\costscalar} - 1\right)\cdot \distanceHatStar(1, 1)\right)
    \\
&\qquad +
\qHat(2, 1)\cdot 
\left(\frac{1+\discountfactor}{\costscalar}\cdot \muHat(2, 1) - 
    \left(
    \frac{1 + \discountfactor}{\costscalar} - 1\right)\cdot \distanceHatStar(2, 1)\right)
\\
&\qquad
+
\qHat(2, 2)\cdot 
\left(\frac{1+\discountfactor}{\costscalar}\cdot \muHat(2, 2) - 
    \left(
    \frac{1 + \discountfactor}{\costscalar} - 1\right)\cdot \distanceHatStar(2, 2)
    \right)
\\
&\leq 
\frac{1+\discountfactor}{\costscalar}
\cdot 
\left(
\qHat(1, 1)\cdot\muHat(1, 1) + 
\qHat(2, 1)\cdot\muHat(2, 1) + 
\qHat(2, 2)\cdot\muHat(2, 2)  
\right)
\\
&\overset{(a)}{\leq}
\frac{1+\discountfactor}{\costscalar}
\cdot 
\left(
\qHat(1, 1)+ 
\qHat(2, 1) + 
\qHat(2, 2)
\right)\cdot\muHat(2, 2)  
\\
&\overset{(b)}{\leq}
\frac{1+\discountfactor}{\costscalar}
\cdot 2 
\cdot 
\frac{3\costscalar}{\discountfactor^2}
=
\frac{6(1 + \discountfactor)}{\discountfactor^2}
\end{align*}
where inequality~(a) holds due to constraint~\SFRCmonotonicity,
and inequality~(b) holds since $\muHat(2, 2) \leq \frac{3\costscalar}{\discountfactor^2}$ and 
constraint~\SFRCdensity.
\end{proof}

\subsection{Proof of \Cref{prop:approx ratio gamma one lower bound}}
\label{apx:approx ratio gamma one lower bound}
\approxratiogammaonelowerbound*

To prove \Cref{prop:approx ratio gamma one lower bound},
we introduce the following lemma.
\begin{lemma}
\label{lem:LB LP to instance}
    There exists a \TWOLFLP\ instance such that 
    the approximation of the 2-Chance Greedy Algorithm
    with discount factor $\discountfactor = 1$ and opening cost scalar $\costscalar = 2$ is at least 
    equal to 
    $\sup_{\mHat\in\naturals} \Obj{\LBLP{\mHat}}$,
    where program~\ref{eq:lower bound LP}
    is the following maximization program parameterized by $\mHat\in\naturals$,
    \begin{align}
\tag{$\LBLP{\mHat}$}
\label{eq:lower bound LP}
&\arraycolsep=1.4pt\def\arraystretch{2.2}
    \begin{array}{llll}
    \max\limits_{
    \substack{\costHat\geq 0,
    \\
    \muHatBf,
    \distanceHatBf,
    \distanceHatStarBf \geq 
    \zerobf
    }} &
     \displaystyle\sum\nolimits_{\ell\in[\mHat]}
    \muHat(\ell) 
    & \text{s.t.} & \\
    &
    \muHat(\ell) \leq \muHat(\ell')
    &
    \ell,\ell'\in[\mHat],
    \ell \leq \ell'
    \\
     &
    \muHat(\ell') \leq 
    \distanceHatStar(\ell) + \distanceHat(\ell)
    +
    \distanceHat(\ell')
    &
    \ell,\ell'\in[\mHat],
    \ell \leq \ell'
    &
    \\
    &   
    \distanceHatStar(\ell) \leq \muHat(\ell)
    &
    \ell\in[\mHat]
    \\
    &
    \displaystyle\sum\nolimits_{\ell'\in[\ell:\mHat]}
    \plus{\muHat(\ell) - \distanceHat(\ell')}
    \leq 2\costHat\qquad
    &
    \ell \in[\mHat]
    \\
      &
     \costHat 
     +
     \displaystyle\sum\nolimits_{\ell\in[\mHat]}
     \distanceHat(\ell)
     = 1
     &
    \end{array}
\end{align}
\end{lemma}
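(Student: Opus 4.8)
The plan is to carry out the second step of the three-step argument: turn an \emph{optimal} solution of $\LBLP{\mHat}$ into an explicit \TWOLFLP\ instance $\instance_{\mHat}$ on which the 2-Chance Greedy Algorithm with $\discountfactor=1,\costscalar=2$ is forced to behave exactly as the solution prescribes, so that its cost equals the program's objective while the optimum stays at $1$. Fix $\mHat$ and an optimal solution $(\costHat,\{\muHat(\ell),\distanceHat(\ell),\distanceHatStar(\ell)\}_{\ell\in[\mHat]})$. I may assume $\costHat>0$, since otherwise the normalization constraint gives $\sum_\ell\distanceHat(\ell)=1$ and the contribution constraint forces $\muHat(\ell)\le\distanceHat(\ell')$ for all $\ell'\ge\ell$, hence $\Obj{\LBLP{\mHat}}=\sum_\ell\muHat(\ell)\le 1$ and the statement is vacuous (every instance has ratio at least $1$).

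First I would build the instance. Take $3\mHat+1$ locations: a central location $i\opt$ with $\opencost_{i\opt}=\costHat$; and for each $\ell\in[\mHat]$ a home $u_\ell$, a work $v_\ell$, and a site $g_\ell$ with $\opencost_{g_\ell}=0$, while $\opencost_{u_\ell}=\opencost_{v_\ell}=+\infty$. There are $\mHat$ individuals, one per edge $\edge_\ell=(u_\ell,v_\ell)$ with $\flow_{\edge_\ell}=1$. The metric is the shortest-path metric of the weighted graph with links $\{i\opt,u_\ell\}$ of length $\distanceHat(\ell)$, $\{u_\ell,g_\ell\}$ of length $\muHat(\ell)$, and $\{i\opt,v_\ell\}$ of length $M$ for a constant $M>\sum_\ell\muHat(\ell)$. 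Then $\distance(\edge_\ell,i\opt)=\distanceHat(\ell)$ and $\distance(\edge_\ell,g_\ell)=\muHat(\ell)$, every work location and every cross-pair is far, and $\Cost{\OPT}\le\Cost{\{i\opt\}}=\costHat+\sum_\ell\distanceHat(\ell)=1$ by the normalization constraint.

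Next I would trace the execution. All sites $g_\ell$ are free and open at time $0$, and no other facility opens yet. The shortest path from $u_\ell$ to a foreign site $g_{\ell'}$ runs through $i\opt$ and has length $\distanceHat(\ell)+\distanceHat(\ell')+\muHat(\ell')$, which is at least $\muHat(\ell)$ — by monotonicity when $\ell'>\ell$, and by the triangle-inequality constraint together with $\distanceHatStar\le\muHat$ when $\ell'<\ell$. Hence the nearest open facility to $\edge_\ell$ is $g_\ell$ at distance $\muHat(\ell)$, so $\edge_\ell$ is connected through its home at candidate cost $\greedycounter(\edge_\ell)=\muHat(\ell)$, never through its far work location, and the candidate costs are monotone in $\ell$. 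The crux is that $i\opt$ is never opened: on any interval $t\in(\muHat(\ell-1),\muHat(\ell)]$ the unconnected edges are exactly those of index $\ge\ell$, each with candidate cost $t$, so the total contribution toward opening $i\opt$ is $\sum_{\ell'\ge\ell}\plus{t-\distanceHat(\ell')}$ (the already-connected edges contribute $0$ because their remaining work location $v_{\ell'}$ is far); this is maximized at $t=\muHat(\ell)$ and is bounded by $\costscalar\costHat=2\costHat$ precisely by the contribution constraint. Consequently $\SOL=\{g_\ell\}_{\ell\in[\mHat]}$ and $\Cost{\SOL}=\sum_\ell\opencost_{g_\ell}+\sum_\ell\distance(\edge_\ell,\SOL)=\sum_\ell\muHat(\ell)=\Obj{\LBLP{\mHat}}$.

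Putting these together, the ratio on $\instance_{\mHat}$ is $\Cost{\SOL}/\Cost{\OPT}\ge\sum_\ell\muHat(\ell)=\Obj{\LBLP{\mHat}}$, and letting $\mHat$ range over $\naturals$ yields the claimed lower bound. The hard part is exactly the execution verification: (i) checking that the shortest-path metric attains $\distanceHat(\ell)$ and $\muHat(\ell)$ with no shortcuts and that no edge is captured by a foreign $g_{\ell'}$, which is where the triangle-inequality and monotonicity constraints of $\LBLP{\mHat}$ enter and why such an instance is realizable at all; and (ii) confirming $i\opt$ never reaches its Event~(b) threshold, which is the role of the contribution constraint. The boundary cases — the free sites opening at time $0$, simultaneous connections, and the equality case of the contribution constraint (where $i\opt$ could tie) — I would dispatch with an arbitrarily small perturbation of $\opencost_{i\opt}$ and of the link lengths, which breaks ties, makes $i\opt$'s threshold strict, and inflates $\Cost{\OPT}$ by only an $o(1)$ factor that vanishes in the limit.
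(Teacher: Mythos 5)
Your construction is correct, and it follows the same overall strategy as the paper (take an optimal solution of $\LBLP{\mHat}$, realize it as a \TWOLFLP\ instance, trace the execution of the algorithm, and pin $\Cost{\OPT}$ at $1+\epsilon$ via the normalization constraint and an $\epsilon$-perturbation of the central facility's opening cost), but the gadget itself is genuinely different. The paper uses $4\mHat+1$ locations and, for each individual $\ell$, \emph{two} positively priced facilities — one near the home and one near the work location, each at distance $\distanceHatStar(\ell)$ and with opening cost $\muHat(\ell)-\distanceHatStar(\ell)$ — so that the greedy opens both via Event~(b), leaves each edge fully connected to two facilities, and pays $2\muHat(\ell)-\distanceHatStar(\ell)\ge\muHat(\ell)$ per individual; the $\distanceHatStar$ variables thus acquire a concrete meaning as the distances to the two opened facilities, and the construction exercises exactly the ``fully connected'' branch that drives the upper-bound analysis. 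You instead use $3\mHat+1$ locations, a single \emph{free} facility $g_\ell$ at distance $\muHat(\ell)$ from the home, and uniformly far work locations; all $g_\ell$ open at time $0$, each edge is captured through Event~(a) and stays only partially connected, and the greedy's cost is exactly $\sum_\ell\muHat(\ell)$ with no slack inequality. Your execution trace is consequently easier to verify (nothing to check about when the cheap facilities open), and you only invoke $\distanceHatStar$ through the implied inequality $\muHat(\ell')\le\muHat(\ell)+\distanceHat(\ell)+\distanceHat(\ell')$, so your argument in fact lower-bounds a slightly relaxed program. The constraints of $\LBLP{\mHat}$ play the same roles in both proofs: monotonicity and the triangle constraint prevent capture by foreign facilities, the contribution constraint keeps $i^\star$ closed (your observation that the left-hand side on $(\muHat(\ell-1),\muHat(\ell)]$ is maximized at the right endpoint is the key point and is right), and the perturbation breaks the Event~(b) equality. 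One cosmetic fix: you need $M$ to exceed $\max_\ell\distanceHat(\ell)$ as well (the program only guarantees $\distanceHat(\ell)\le 1$, not $\distanceHat(\ell)\le\sum_{\ell'}\muHat(\ell')$), so take $M>1+\sum_\ell\muHat(\ell)$; this changes nothing else.
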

It is worth highlighting that 
program~$\SFRP{\mHat, 1, 2}$
can be considered as a relaxation of 
program~\ref{eq:lower bound LP}.
In particular, from every 
feasible solution $(\costHat, \{\muHat(\ell), \distanceHat(\ell),\distanceHatStar(\ell)\}_{\ell\in[\mHat]})$
of program~\ref{eq:lower bound LP},
we can straightforwardly construct a feasible solution 
$(\costHat\primed,\{\muHat\primed(a, b), 
\distanceHat\primed(a, b), \distanceHatStar\primed(a, b), \qHat\primed(a, b)\}_{a\in[\mHat], b\in[a]})$
of program~$\SFRP{\mHat, 1, 2}$
as follows,
\begin{align*}
&\costHat\primed \gets \costHat,
\\
a\in[\mHat], b\in[a]:\qquad &
\muHat\primed(a, b) \gets \muHat(a),
\quad 
\distanceHat\primed(a, b) \gets \distanceHat(a),
\\
\qquad &
\distanceHatStar\primed(a, b) \gets \distanceHatStar(a),
\quad
\qHat\primed(a, b) \gets \indicator{a = b}
\end{align*}
and both solutions have the same objective value.

Given \Cref{lem:LB LP to instance},
the proof of \Cref{prop:approx ratio gamma one lower bound}
is based on the lowerbound of program~$\LBLP{500}$
numerically computed with software Gurobi.

\begin{proof}[Proof of \Cref{lem:LB LP to instance}]
Given any feasible solution 
$(\costHat\primed, \{\muHat\primed(\ell), \distanceHat\primed(\ell),
\distanceHatStar\primed(\ell)\})$
in program~\ref{eq:lower bound LP},
we can construct a \TWOLFLP\ instance such that the 
approximation ratio of the 2-Chance Greedy Algorithm with $\discountfactor = 1$
and $\costscalar = 2$
equals to the objective value of this solution.

Fix an arbitrary small $\epsilon > 0$.
In this \TWOLFLP\ instance, there are $n \triangleq 4 \mHat + 1$ locations
with facility opening cost as follows,
\begin{align*}
    \opencost_i = \left\{
    \begin{array}{ll}
     \infty    &\quad \forall i \in[2\mHat]  \\
     \muHat\primed(i - 2\mHat) - \distanceHatStar\primed(i - 2 \mHat) 
     & \quad \forall i \in[2\mHat + 1: 3\mHat] \\
     \muHat\primed(i - 3\mHat) - \distanceHatStar\primed(i - 3 \mHat)
     & \quad \forall i \in[3\mHat + 1: 4\mHat] \\
     \costHat\primed + \epsilon
     & \quad~~ i = 4\mHat + 1 
    \end{array}
    \right.
\end{align*}
The distance function $\distance$ satisfies that
\begin{align*}
    \forall i \in [\mHat]:
    &\qquad 
    \distance(i, i + 2\mHat) = \distanceHatStar\primed(i),
    \quad 
    \distance(i + \mHat, i + 3\mHat) = \distanceHatStar\primed(i),
    % \quad 
    % \distance(i, i + \mHat) = \infty, 
    \\
    &\qquad 
    \distance(i, n) = \distanceHat\primed(i),
    \quad\quad\quad
    \distance(i + \mHat, n) = \infty
\end{align*}
and triangle inequality holds with equality for all other pairs
of locations.

There are $\mHat$ individuals where each individual $i\in[\mHat]$
resides at location $i$ and works at location $i + \mHat$. 
Namely, the number of individuals $\{\flowe\}$ are 
\begin{align*}
    \flowe = \left\{
    \begin{array}{ll}
       1  &\quad \text{if $\edgehome \in [\mHat]$ 
       and $\edgework = \edgehome + \mHat$}  \\
       0  & \quad \text{o.w.}
    \end{array}
    \right.
\end{align*}

The optimal solution opens a facility at location $4\mHat + 1$
(i.e., $\OPT = \{4\mHat + 1\}$
with optimal total cost $\Cost{\OPT} = \opencost\primed + \epsilon
+ \sum_{i\in[\mHat]}\distanceHat\primed(i) = 1 + \epsilon$.

In contrast, consider the 2-Chance Greedy Algorithm with $\discountfactor = 1$ and $\costscalar = 2$.
At time stamp $\muHat\primed(1)$, 
the conditions of Event~(b) for opening facility $2\mHat + 1$ and facility $3\mHat + 1$
are satisfied due to individual $1$ who resides at location $i$
and works at location $\mHat + 1$.
Thus, the algorithm opens both facility $2\mHat + 1$ and facility $3\mHat + 1$,
and fully connects individual 1 to those two facilities 
through her home $1$ and work location $1 + \mHat$, respectively.
Then at time stamp $\muHat\primed(2)$,
facilities at location $2\mHat + 2$ and $3\mHat + 2$ are opened due to 
individual $2$.
Proceeding similarly, it can be seen that when the algorithm terminates, 
it outputs $\SOL = \{2\mHat + 1, \dots, 4\mHat\}$ with total cost
$\Cost{\SOL} = \sum_{i\in[\mHat]}2\muHat\primed(i) 
- \distanceHatStar\primed(i)\geq \sum_{i\in[\mHat]}\muHat\primed(i) $.

Putting all pieces together, as $\epsilon$ goes to zero,
the approximation of the 2-Chance Greedy Algorithm with $\discountfactor = 1$ and $\costscalar = 2$
converges to the objective value of solution 
$(\costHat\primed, \{\muHat\primed(\ell), \distanceHat\primed(\ell),
\distanceHatStar\primed(\ell)\})$ as desired.
\end{proof}

\subsection{Proof of \Cref{lem:structural lemma}}
\label{apx:structural lemma}
\structural*

\begin{proof}
We first show property~(i). By definition of $\permuTGD$,
if $\permuTGD(\edge,\edgeindex) < \permuTGD(\edge',\edgeindex')$,
edge $\edge$ is connected through home/work location $\edge_\edgeindex$
before the termination of the algorithm,
and thus
$\connectmapping(\edge,\edgeindex) \not=\nullsymbol$.
% By definition,
At time stamp $\permuTGD(\edge,\edgeindex)$,
facility~$\connectmapping(\edge,\edgeindex)$ has been opened, and 
edge $\edge'$ has not been connected through its home/work location
$\edge'_{\edgeindex'}$ yet.
This implies that
$\discountfactor \cdot \greedycounter(\edge') < 
\distance(\edge'_{\edgeindex'}, \connectmapping(\edge,\edgeindex))$.
Otherwise,  
the condition of Event~(a) is satisfied, i.e.,
edge $\edge'$ should be connected to
facility~$\connectmapping(\edge,\edgeindex)$ through 
its home/work location
$\edge'_{\edgeindex'}$ weakly before time stamp $\permuTGD(\edge,\edgeindex)$, which
is a contradiction.
Therefore,
\begin{align*}
    \discountfactor \cdot \greedycounter(\edge') 
    < 
\distance(\edge'_{\edgeindex'}, \connectmapping(\edge,\edgeindex))
\overset{}{\leq}
\distance(\edge_\edgeindex, \connectmapping(\edge,\edgeindex))
+
\distance(\edge_\edgeindex, i)
+
\distance(\edge'_{\edgeindex'}, i)
\end{align*}
where the second inequality holds due to the triangle inequality.

Next we prove property~(ii) by contradiction. 
Suppose there exists a location $i\in[n]$
and edge $\edge\in\Edge$
such that 
\begin{align*}
    \sum_{\edge'\in\Edge:\permuTGD(\edge',\config_i(\edge')) \geq \permuTGD(\edge,\config_i(\edge))}
    \plus{
    \discountfactor\cdot 
    \min\left\{
    \greedycounter(\edge),
    \greedycounter(\edge')
    \right\}
    -
    \distance(\edge'_{\config_i(\edge')}, i)
    }
    >
    \costscalar \cdot \opencost_i
\end{align*}
Now consider the algorithm at time stamp $\timeindex \triangleq \permuTGD(\edge,\config_i(\edge)) - \epsilon$ for sufficiently small $\epsilon$.
By definition, none of edges $\edge'\in\Edge$ 
with 
$\permuTGD(\edge',\config_i(\edge'))
\geq \permuTGD(\edge,\config_i(\edge))$
has been connected to any facility through its 
home/work location $\edge'_{\config_i(\edge')}$.
In a slight abuse of notation, 
let $\greedycounter_{\timeindex}(\edge')$
be the value of variable $\greedycounter(\edge')$
in the algorithm at time stamp $t$.
By construction, we know that 
$\greedycounter_{\timeindex}(\edge')
\geq \min\{\timeindex,\greedycounter(\edge')\}$,
and $\timeindex \geq \greedycounter(\edge)$.
Thus, we claim that facility $i$ is opened 
weakly before time stamp $\timeindex$, 
since
the condition of Event~(b) for 
opening facility $i$
is satisfied at time stamp $\timeindex$.
Moreover, there exists at least an edge $\edge'\in\Edge$ 
such that  
$\permuTGD(\edge',\config_i(\edge'))\geq \timeindex$
and $\discountfactor\cdot  
\greedycounter_{\timeindex}(\edge') \geq \distance(\edge'_{\config_i(\edge')}, i)$.
This implies that edge $\edge'$ 
is connected to facility $i$ 
through its $\edge'_{\config_i(\edge')}$ 
weakly before time stamp $\timeindex$,
which is a contradiction.

Finally, property~(iii) is guaranteed by the construction of the algorithm,
and the assumption that $\discountfactor \in[0, 1]$.
\end{proof}

% \subsection{Proof of \Cref{lem:scaled cost approx}}
% \label{apx:scaled cost approx}
% \input{Paper/appendix/apx-scaledcostapprox}

\subsection{Proof of \Cref{lem:lp relaxation}}
\label{apx:lp relaxation}
\lprelaxation*

\begin{proof}
Consider the following integer solution of program~\ref{eq:lp relaxtion}.
For each service configuration~$\starinstance =  (i, \EdgeTilde)$,
set $\allocstar = 1$
if the optimal solution $\OPT$
opens facility $i$ and serves all edges in $\EdgeTilde$
through facility $i$;
and $\allocstar = 0$ otherwise.
It can be verified that this solution is feasible 
and its objective value equals 
the optimal cost $\Cost{\OPT}$.
Therefore, 
$\Cost{\OPT} \geq \Obj{\text{\ref{eq:lp relaxtion}}}$.
\end{proof}

\subsection{Proof of \Cref{lem:dual assignment objective value}}
\label{apx:dual assignment objective value}
\dualassignmentobjectivevalue*
\begin{proof}
By the dual assignment~\eqref{eq:dual assignment}, we have
\begin{align*}
    \sum_{\edge\in\Edge}\duale
    % \\
    =&
    \sum_{\edge \in \singlecountedgeset}
    \flowe\cdot 
    \left(
    \frac{1+\discountfactor}{\costscalar}
    \cdot \greedycounter(\edge)
    -
    \left(
    \frac{1 + \discountfactor}{\costscalar} - 1
    \right)
    \cdot 
    \distance(\edgehome, \connectmapping(\edge,\Home))
    \right)
    \\
    &\quad +
    \sum_{\edge \in \doublecountedgeset}
    \flowe\cdot 
    \left(
    \frac{1 + \discountfactor}{\costscalar} \cdot \greedycounter(\edge) 
    - 
    \frac{1}{\costscalar}
    \left(
    \distance(\edgehome, \connectmapping(\edge,\Home))
    +
    \distance(\edgework, \connectmapping(\edge,\Work))
    \right)
    +
    \distance(\edgehome, \connectmapping(\edge,\Home))
    \right)
\end{align*}
We analyze the term of each individual on each edge $\edge$ on the right-hand side separately. 
For each individual on edge $\edge \in \singlecountedgeset$,
we have
\begin{align*}
    &~\frac{1+\discountfactor}{\costscalar}
    \cdot \greedycounter(\edge)
    -
    \left(
    \frac{1 + \discountfactor}{\costscalar} - 1
    \right)
    \cdot 
    \distance(\edgehome, \connectmapping(\edge,\Home))
    \\
    =&~
    \frac{1+\discountfactor}{\costscalar}
    \cdot 
    \left(
    \greedycounter(\edge)
    -
    \distance(\edgehome, \connectmapping(\edge,\Home))
    \right)
    +
    \distance(\edgehome, \connectmapping(\edge,\Home))
    \\
    \geq&~
    \frac{1}{\costscalar}
    \cdot 
    \left(
    \greedycounter(\edge)
    -
    \distance(\edgehome, \connectmapping(\edge,\Home))
    \right)
    +
    \distance(\edgehome, \connectmapping(\edge,\Home))
\end{align*}
where the last inequality holds since $\greedycounter(\edge)
    \geq 
    \distance(\edgehome, \connectmapping(\edge,\Home))$,
    which is implied by the construction of the 2-Chance Greedy Algorithm.

For each individual on edge $\edge \in \doublecountedgeset$, we have
\begin{align*}
    &~\frac{1 + \discountfactor}{\costscalar} \cdot \greedycounter(\edge) 
    - 
    \frac{1}{\costscalar}
    \left(
    \distance(\edgehome, \connectmapping(\edge,\Home))
    +
    \distance(\edgework, \connectmapping(\edge,\Work))
    \right)
    +
    \distance(\edgehome, \connectmapping(\edge,\Home))
    \\
    =&~
    \frac{1}{\costscalar}
    \cdot 
    \left(
    (1+\discountfactor)
    \greedycounter(\edge)
    -
    \distance(\edgehome, \connectmapping(\edge,\Home))
    -
    \distance(\edgework, \connectmapping(\edge,\Work))
    \right)
    +
    \distance(\edgehome, \connectmapping(\edge,\Home))
\end{align*}
Combining two pieces together,
we obtain
\begin{align*}
    \sum_{\edge\in\Edge}\duale &\geq 
    \frac{1}{\costscalar}
    \cdot 
    \left(
    \sum_{\edge\in\singlecountedgeset} 
    \flowe \cdot 
    \left(
    \greedycounter(\edge)
    -
    \distance(\edgehome, \connectmapping(\edge,\Home))
    \right)
    +
    \sum_{\edge\in\doublecountedgeset}
    \flowe \cdot  \left(
    (1+\discountfactor)
    \greedycounter(\edge)
    -
    \distance(\edgehome, \connectmapping(\edge,\Home))
    -
    \distance(\edgework, \connectmapping(\edge,\Work))
    \right)
    \right)
    \\
    &\qquad +
    \sum_{\edge\in\Edge} \flowe\cdot \distance(\edgehome, \connectmapping(\edge,\Home))
\end{align*}
To finish the proof, note that the construction (i.e., the condition of Event~(b)) of the 2-Chance Greedy Algorithm implies that 
the first term on the right-hand side is equal to the facility opening cost of solution $\SOL$, and the second term on the right-hand side is at least the connection cost of solution $\SOL$.
\end{proof}

\subsection{Proof of \Cref{lem:dual assignment approx feasible}}
\label{apx:dual assignment approx feasible}

\dualassignmentapproxfeasible*

\begin{proof}
Fix an arbitrary 
\TWOLFLP\ instance, and 
an arbitrary service region $\starinstance = (i, \EdgeTilde)\in\starspace$
such that $\opencost_i+\sum_{\edge\in\EdgeTilde}\distance(\edge, i) 
> 0$.\footnote{Otherwise, 
$\greedycounter(\edge) = 0$
for every edge $\edge\in\EdgeTilde$ and thus the dual constraint is satisfied 
with equality trivially.}
% For ease of presentation, 
% we assume that $\flowe = 1$ for every 
% $\edge\in \EdgeTilde$.\footnote{The analysis without
% this assumption can be extended by consider each individual
% separately.}
% Let
% $\edgetoellmapping$ be an arbitrary bijection 
% from $\EdgeTilde$ to $[\mHat]$.
% Let $\mHat = |\EdgeTilde|$, and
% $\permu(\edgetoellmapping(\edge))$ 
% be the time stamp when edge $\edge$
% is connected to facility $\connectmapping(\edge, \confige)$
% through home/work location $\edge_\confige$
% for each edge $\edge\in\EdgeTilde$.
% If $\connectmapping(\edge, \confige) = \nullsymbol$, 
% we let $\permu(\edgetoellmapping(\edge))$ 
% be the time stamp of the
% termination of the algorithm.
We construct a solution
$\{\costHat\primed,\muHat\primed(\ell),\distanceHat\primed(\ell),
\distanceHatStar\primed(\ell)\}$ for program~\ref{eq:weakly factor-revealing program} as follows:\footnote{Here we use superscript $\dagger$
to denote the solution in program~\ref{eq:weakly factor-revealing program}.}
\begin{align*}
    &\qquad\costHat\primed \gets \normalizefactor \cdot {\opencost_i},
    \\
    \edge\in\EdgeTilde:&
    \qquad
    \muHat\primed(\edgetoellmapping(\edge)) \gets 
    \normalizefactor \cdot {\greedycounter(\edge)},
    \quad 
    \distanceHat\primed(\edgetoellmapping(\edge)) \gets 
    \normalizefactor \cdot 
    {\distance(\edge, i)},
    \\
    \edge\in\EdgeTilde\cap \singlecountedgeset:
    &\qquad
    \distanceHatStar\primed(\edgetoellmapping(\edge)) \gets 
    \distanceHat(\edgehome,\connectmapping(\edge,\Home)),
    \\
    \edge\in\EdgeTilde\cap \doublecountedgeset:
    &\qquad
    \distanceHatStar\primed(\edgetoellmapping(\edge)) \gets 
    \normalizefactor\cdot \distance(\edge_{\config_i(\edge)}, \connectmapping(\edge, \config_i(\edge)))
\end{align*}
where $\normalizefactor = \frac{1}{\opencost_i+\sum_{\edge\in\EdgeTilde}\distance(\edge, i)}$ is the normalization factor which guarantees the feasibility of 
constraint~\WFRCtotalcost.
Moreover, \Cref{lem:structural lemma} and the solution construction
implies the feasibility of constraints~\WFRCtriangleineq, \WFRCcontribution, and
\WFRCconnectcost.

Finally, we argue that $\frac{\sum_{\edge\in\EdgeTilde} \duale}{\coststar}$ is upperbounded by the objective value of the constructed solution in program~\ref{eq:weakly factor-revealing program}. It is sufficient to show that for every edge $\edge\in \EdgeTilde$, 
\begin{align*}
    \frac{\duale}{\coststar} \leq 
     \frac{1 + \discountfactor}{\costscalar}
     \cdot \muHat\primed(\edgetoellmapping(\edge)) - 
    \left(
    \frac{1 + \discountfactor}{\costscalar} - 1
    \right)
    \cdot \distanceHatStar\primed(\edgetoellmapping(\edge))
\end{align*}
We show this inequality for two different cases separately. 

\smallskip
For each edge $\edge \in \EdgeTilde\cap \singlecountedgeset$, we have
\begin{align*}
     \frac{\duale}{\coststar}
     &\overset{(a)}{=}
     \frac{\frac{1+\discountfactor}{\costscalar}
    \cdot \greedycounter(\edge)
    -
    \left(
    \frac{1 + \discountfactor}{\costscalar} - 1
    \right)
    \cdot 
    \distance(\edgehome, \connectmapping(\edge,\Home))
    }{{}\opencost_i+\sum_{\edge'\in\EdgeTilde}
\distance(\edge', i)}
    \\
     &\overset{(b)}{=}
     \frac{\frac{1+\discountfactor}{\costscalar}
    \cdot \muHat\primed(\edgetoellmapping(\edge))
    -
    \left(
    \frac{1 + \discountfactor}{\costscalar} - 1
    \right)
    \cdot 
    \distanceHatStar\primed(\edgetoellmapping(\edge))
    }{{}\opencost\primed+\sum_{\ell\in[m]}
\distance\primed(\ell)}
    \\
    &\overset{(c)}{\leq}
     \frac{1 + \discountfactor}{\costscalar}
     \cdot \muHat\primed(\edgetoellmapping(\edge)) - 
    \left(
    \frac{1 + \discountfactor}{\costscalar} - 1
    \right)
    \cdot \distanceHatStar\primed(\edgetoellmapping(\edge))
\end{align*}
where equality~(a) holds due to $\flowe = 1$ and the dual assignment \eqref{eq:dual assignment};
equality~(b) holds due to the solution construction in program~\ref{eq:weakly factor-revealing program};
and inequality~(c) holds due to constraint~\WFRCtotalcost\ in program~\ref{eq:weakly factor-revealing program}.

\smallskip
For each edge $\edge \in \EdgeTilde\cap \doublecountedgeset$, we have
\begin{align*}
     \frac{\duale}{\coststar}
     &\overset{(a)}{=}
     \frac{    
     \frac{1 + \discountfactor}{\costscalar} \cdot \greedycounter(\edge) 
    - 
    \frac{1}{\costscalar}
    \left(
    \distance(\edgehome, \connectmapping(\edge,\Home))
    +
    \distance(\edgework, \connectmapping(\edge,\Work))
    \right)
    +
    \distance(\edgehome, \connectmapping(\edge,\Home))
    }{\opencost_i+\sum_{\edge'\in\EdgeTilde}
\distance(\edge', i)}
    \\
     &\overset{}{=}
     \frac{
     \frac{1 + \discountfactor}{\costscalar} \cdot \greedycounter(\edge) 
    - 
    \frac{1}{\costscalar}
    \distance(\edgework, \connectmapping(\edge,\Work))
    +
    \left(1 - \frac{1}{\costscalar}\right)
    \distance(\edgehome, \connectmapping(\edge,\Home))
     }{\opencost_i+\sum_{\edge'\in\EdgeTilde}
\distance(\edge', i)}
    \\
     &\overset{(b)}{\leq}
     \frac{
     \frac{1 + \discountfactor}{\costscalar} \cdot \greedycounter(\edge) 
    - 
    \frac{1}{\costscalar}
    \distance(\edgework, \connectmapping(\edge,\Work))
    +
    \left(1 - \frac{1}{\costscalar}\right)
    \distance(\edgework, \connectmapping(\edge,\Work))
     }{\opencost_i+\sum_{\edge'\in\EdgeTilde}
\distance(\edge', i)}
    \\
     &\overset{(c)}{\leq}
     \frac{\left(\frac{1+\discountfactor}{\costscalar}
    \cdot \muHat\primed(\edgetoellmapping(\edge))
    -
    \left(
    \frac{1 + \discountfactor}{\costscalar} - 1
    \right)
    \cdot 
    \distanceHatStar\primed(\edgetoellmapping(\edge))
    \right)}{\opencost\primed+\sum_{\ell\in[m]}
\distance\primed(\ell)}
    \\
    &\overset{(d)}{\leq}
     \frac{1 + \discountfactor}{\costscalar}
     \cdot \muHat\primed(\edgetoellmapping(\edge)) - 
    \left(
    \frac{1 + \discountfactor}{\costscalar} - 1
    \right)
    \cdot \distanceHatStar\primed(\edgetoellmapping(\edge))
\end{align*}
where equality~(a) holds due to $\flowe = 1$ and the dual assignment \eqref{eq:dual assignment};
inequality~(b) holds due to the assumption that $\distance(\edgehome, \connectmapping(\edge,\Home)) \leq 
 \distance(\edgework, \connectmapping(\edge,\Work))$ introduced in step 2 and $(1 - \sfrac{1}{\costscalar}) \geq 0$;
equality~(c) holds due to the solution construction in program~\ref{eq:weakly factor-revealing program}, the assumption that $\distance(\edgehome, \connectmapping(\edge,\Home)) \leq 
 \distance(\edgework, \connectmapping(\edge,\Work))$ introduced in step 2 and $\sfrac{2}{\costscalar} - 1 
 \geq \sfrac{(1 + \discountfactor)}{\costscalar} - 1 \geq 0$;
and inequality~(d) holds due to constraint~\WFRCtotalcost\ in program~\ref{eq:weakly factor-revealing program}.
\end{proof}

\subsection{Proof of \Cref{lem:from weakly to strongly factor revealing program}}
\label{apx:from weakly to strongly factor revealing program}

\weaklytostronglyFRP*

Before the proof of \Cref{lem:from weakly to strongly factor revealing program},
we first present a technical lemma as follows.

\begin{lemma}
\label{lem:distance upperbound}
For any $\discountfactor\in[0,1],~\mHat\in\naturals,~
\permu:[\mHat]\rightarrow\positivereals$
and any feasible solution $(\costHat, \{\muHat(\ell), \distanceHat(\ell),\distanceHatStar(\ell)\})$
of program~\ref{eq:weakly factor-revealing program},
there exists $\mHat\primed \in\naturals,~ \permu\primed:[\mHat\primed]\rightarrow\positivereals$
and 
a feasible solution $(\costHat\primed, \{\muHat\primed(\ell), \distanceHat\primed(\ell),\distanceHatStar\primed(\ell)\})$
of program~\WFRP{\mHat\primed,\permu\primed,\discountfactor}
such that $\distanceHat\primed(\ell)\leq \muHat\primed(\ell)$
for every $\ell\in[\mHat\primed]$ and 
has weakly higher objective value.
\end{lemma}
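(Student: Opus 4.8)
The plan is to build the new solution on the \emph{same} index set, taking $\mHat\primed=\mHat$, keeping $\costHat\primed=\costHat$, $\muHat\primed=\muHat$, and $\distanceHatStar\primed=\distanceHatStar$ untouched, and modifying only the $\distanceHat$-values together with the preorder. Call an index $\ell$ \emph{bad} if $\distanceHat(\ell)>\muHat(\ell)$ and \emph{good} otherwise. I would set $\distanceHat\primed(\ell)=\distanceHat(\ell)$ on good indices and $\distanceHat\primed(\ell)=\muHat(\ell)$ on bad ones, so that $\distanceHat\primed(\ell)\le\muHat\primed(\ell)$ holds everywhere by construction. Since neither $\muHat$ nor $\distanceHatStar$ is altered, the objective is literally unchanged, so the ``weakly higher objective'' requirement is immediate. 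For the preorder I would keep $\permu\primed(\ell)=\permu(\ell)$ on good indices and assign every bad index a common value strictly exceeding $\max_{\ell}\permu(\ell)$ (their mutual order is irrelevant). The whole content of the lemma is then to verify that this modified solution is feasible for \WFRP{\mHat\primed,\permu\primed,\discountfactor}.

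Two elementary observations, each using only $\discountfactor\in[0,1]$, would drive the feasibility check. First, a bad index $\ell'$ contributes $0$ to the summand of \WFRCcontribution \emph{both before and after} the reduction: since $\discountfactor\cdot\min\{\muHat(\ell),\muHat(\ell')\}\le\muHat(\ell')$ and $\muHat(\ell')\le\distanceHat(\ell')$ originally (and $=\distanceHat\primed(\ell')$ afterwards), the term $\plus{\discountfactor\cdot\min\{\muHat(\ell),\muHat(\ell')\}-\distanceHat(\ell')}$ vanishes in either case. Second, any instance of \WFRCtriangleineq whose \emph{later} index $\ell'$ (the one with the larger $\permu\primed$-value) satisfies $\distanceHat\primed(\ell')=\muHat(\ell')$ holds automatically, since its right-hand side is at least $\distanceHat\primed(\ell')=\muHat(\ell')\ge\discountfactor\cdot\muHat(\ell')$, which is exactly its left-hand side.

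With these in hand I would check the four constraints in turn. For \WFRCtriangleineq: if the later index is bad the constraint holds by the second observation; if the later index is good, then (because bad indices were pushed strictly to the end) the earlier index must also be good, their relative order and all values entering the constraint are unchanged, so it reduces to the original constraint. For \WFRCcontribution indexed by a good $\ell$: the set of good inner indices $\ell'$ with $\permu\primed(\ell')\ge\permu\primed(\ell)$ coincides with the original one (the order among good indices, ties included, is copied exactly) and their terms are unchanged, while every bad inner index contributes $0$ before and after by the first observation; hence the left-hand side is exactly preserved and remains $\le\costscalar\cdot\costHat$. For \WFRCcontribution indexed by a bad $\ell$: the sum ranges only over bad indices, each contributing $0$, so the left-hand side is $0\le\costscalar\cdot\costHat$. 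Finally \WFRCconnectcost is untouched, and \WFRCtotalcost only becomes slacker since $\sum_{\ell}\distanceHat\primed(\ell)\le\sum_{\ell}\distanceHat(\ell)$.

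The conceptual crux, and the step I expect to require the most care, is that \emph{neither} obvious one-variable fix works in isolation: simply lowering $\distanceHat(\ell_0)$ to $\muHat(\ell_0)$ can violate \WFRCtriangleineq when $\ell_0$ appears as the \emph{earlier} index (its reduced $\distanceHat$ no longer dominates a possibly large $\discountfactor\cdot\muHat(\ell')$ on the right), whereas raising $\muHat(\ell_0)$ to $\distanceHat(\ell_0)$ can violate \WFRCcontribution at $\ell=\ell_0$. The reordering is exactly what reconciles them: moving every bad index to the end of the preorder guarantees it is never the earlier index of a triangle constraint and only ever appears as a zero-valued inner term of a contribution constraint. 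The bookkeeping deserving attention is the claim that the good-index terms of \WFRCcontribution are preserved verbatim, which relies on the preorder being copied exactly onto the good indices; extending the argument from the identity ordering to a fully general $\permu\primed\in\positivereals^{\mHat\primed}$ is routine but must be stated carefully.
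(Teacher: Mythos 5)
Your proof is correct, but it takes a genuinely different route from the paper's. The paper fixes $\ell^*=\argmax_\ell \muHat(\ell)$, first inflates $\muHat(\ell^*)$ to $\distanceHat(\ell^*)$ if that index is itself bad, and then \emph{deletes} every remaining bad index, dumping its $\distanceHat$-mass onto both $\muHat(\ell^*)$ and $\distanceHat(\ell^*)$; feasibility is preserved because $\ell^*$ dominates every $\min\{\cdot,\cdot\}$ term, and the objective weakly increases because the gain $\frac{1+\discountfactor}{\costscalar}\cdot\distanceHat(\ell)$ from the transferred mass exceeds the dropped contribution of each bad index (using $\distanceHat(\ell)>\muHat(\ell)$ and $\distanceHatStar(\ell)\ge 0$). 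You instead keep the index set and the objective literally intact, truncate $\distanceHat$ down to $\muHat$ on bad indices, and demote those indices to the end of the preorder; your two observations (a bad inner index contributes $\plus{\cdot}=0$ to \WFRCcontribution\ before and after, and \WFRCtriangleineq\ is automatic whenever its later index has $\distanceHat\primed(\ell')=\muHat(\ell')\ge\discountfactor\cdot\muHat(\ell')$) correctly discharge all constraints, and your diagnosis of why the reordering is indispensable (a truncated bad index must never appear as the \emph{earlier} index of \WFRCtriangleineq) is exactly the right point. Your version buys an exactly preserved objective and no surgery on the index set at the cost of modifying $\permu$, whereas the paper's version keeps $\permu$ on the survivors but shrinks $\mHat$ and has to argue the objective comparison; since the lemma permits changing both $\mHat$ and $\permu$, either is a valid proof.
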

\begin{proof}
% We explicitly construct $\mHat\primed,\permu\primed$ and $(\costHat\primed, \{\muHat\primed(\ell), \distanceHat\primed(\ell),\distanceHatStar\primed(\ell)\})$
% from $\mHat,\permu$ and $(\costHat, \{\muHat(\ell), \distanceHat(\ell),\distanceHatStar(\ell)\})$.
Let $\ell^*$ be the index with largest $\muHat(\ell)$,
i.e., $\ell^*= \argmax_{\ell\in[\mHat]} \muHat(\ell)$.
If $\distanceHat(\ell^*) > \muHat(\ell^*)$,
then we can update $\muHat(\ell^*) \gets \distanceHat(\ell^*)$.
It can be verified that the updated solution is
still feasible in 
program~\ref{eq:weakly factor-revealing program}
(in particular, constraints \WFRCtriangleineq 
and \WFRCcontribution\
are still satisfied), and has 
higher objective value.

Now, we partition index set $[\mHat]$ into two subsets:
$\Delta = \{\ell\in[\muHat]: \distanceHat(\ell) 
\leq \muHat(\ell)\}$ and 
$\bar \Delta = [\mHat]\backslash \Delta$.
As we discussed in the previous paragraph,
$\ell^* \in \Delta$.
Let $g$ be an arbitrary bijection from $\Delta$
to $[|\Delta|]$.
Consider the following construction of parameters $\mHat\primed, \permu\primed$:
\begin{align*}
    &\qquad \mHat\primed \gets |\Delta|,
    \\
    \ell\in\Delta:
    &\qquad \permu\primed(g(\ell)) \gets \permu(\ell),
% \end{align*}
\intertext{and solution $(\costHat\primed, \{\muHat\primed(\ell), \distanceHat\primed(\ell),\distanceHatStar\primed(\ell)\})$:}
% \begin{align*}
    &\qquad \costHat\primed \gets \costHat, \\
    &\qquad 
    \muHat\primed(g(\ell^*)) \gets 
    \muHat(\ell^*) + \sum_{\ell\in\bar\Delta}\distanceHat(\ell),
    \quad
    \distanceHat\primed(g(\ell^*)) \gets 
    \distanceHat(\ell^*) + \sum_{\ell\in\bar\Delta}\distanceHat(\ell),
    \quad
    \distanceHatStar\primed(g(\ell^*)) \gets 
    \distanceHatStar(\ell^*) 
    \\
    \ell\in\Delta\backslash\{\ell^*\}:
    &\qquad
    \muHat\primed(g(\ell)) \gets 
    \muHat(\ell),
    \quad
    \distanceHat\primed(g(\ell)) \gets 
    \distanceHat(\ell),
    \quad
    \distanceHatStar\primed(g(\ell)) \gets 
    \distanceHatStar(\ell) 
\end{align*}
It is straightforward to verify that 
the constructed solution is feasible in 
program~\WFRP{\mHat\primed,\permu\primed,\discountfactor},
and has weakly higher objective value.
\end{proof}

Now we are ready to present the proof of \Cref{lem:from weakly to strongly factor revealing program}.
\begin{proof}[Proof of \Cref{lem:from weakly to strongly factor revealing program}]
Fix an arbitrary $\discountfactor\in[0, 1],~
\nHat\in\naturals,~
\mHat\in\naturals$ and $\permu:[\mHat]\rightarrow$.
Consider an arbitrary feasible solution feasible solution $(\costHat, \{\muHat(\ell), \distanceHat(\ell),\distanceHatStar(\ell)\})$
of program~\ref{eq:weakly factor-revealing program}
such that $\distanceHat(\ell) \leq \muHat(\ell)$ for each $\ell\in[\mHat]$.
It is sufficient for us to construct a feasible solution $(\costHat,\{\qHat(a,b ),
\muHat(a,b), \distanceHat(a,b),\distanceHatStar(a,b)\})$
of program~\ref{eq:strongly factor-revealing quadratic program}
with weakly higher objective value. 
Our argument proceeds in four steps as we sketched 
in \Cref{sec:strongly factor-revealing program 2-LFLP}.

\paragraph{Step 1- identifying pivotal index subset $L$ with monotone $\muHat(\ell)$.}
Consider a sequence of $k$ indexes 
$\ell_1 < \ell_2 < \dots < \ell_k$ 
for some $k\in\naturals$
such that 
\begin{align*}
\forall \ell\in[\mHat]:
&\qquad \permu(\ell_1) \leq \permu(\ell)
\\
\forall a\in[k - 1]:
&\qquad \permu(\ell_a) < \permu(\ell_{a + 1})
\\
\forall a\in[k-1]:
&\qquad 
\muHat(\ell_a) < \muHat(\ell_{a + 1})
        \\
     \forall a \in [k],~
     \forall\ell \in \{\ell'\in[\mHat]:
     \permu(\ell_a) \leq 
     \permu(\ell') <
     \permu(\ell_{a + 1})
     \}: 
    &\qquad \muHat(\ell) \leq 
    \muHat(\ell_a)
\end{align*}
where $\permu(\ell_{k + 1}) = \infty$.
We define pivotal index subset $L \triangleq 
\{\ell_a\}_{a\in[k]}$.
To see the existence of 
pivotal index subset~$L$,
consider the following iterative procedure which generates $L$:
(i) initialize $L =\emptyset$,
(ii) add index $\ell_k = \argmax_{\ell\in[\mHat]}\muHat(\ell)$
into $L$,
(iii) add index $\ell_{k - 1} = \argmax_{\ell\in[\mHat]: \permu(\ell) < \permu(\ell_k)}
\muHat(\ell)$ into $L$ (break tie in favor of smaller $\permu(\ell)$),
(iv) add index $\ell_{k - 2} = \argmax_{\ell\in[\mHat]: \permu(\ell) < \permu(\ell_{k - 1})}
\muHat(\ell)$ into $L$ (break tie in favor of smaller $\permu(\ell)$),
and (v) so on so forth.

\paragraph{Step 2- partitioning index set $[\mHat]$ 
based on pivotal index subset $L$.}

For each $a\in[k]$, $b\in[a]$, define set 
$L(a, b)$
as follows:
\begin{align*}
    L(a,b) \triangleq \{
\ell \in [\muHat]:
\permu(\ell_a) \leq \permu(\ell) < \permu(\ell_{a + 1})
\land
\muHat(\ell_{b - 1}) < \muHat(\ell) \leq \muHat(\ell_b)
\}
\end{align*}
where $\muHat(\ell_0) = 0$.
By definition of pivot index subset $L$, 
$\{L(a, b)\}_{b\in[a]}$ is a partition of
$\{\ell\in[\mHat]: \permu(\ell_a) \leq \permu(\ell) < \permu(\ell_{a + 1})\}$
for each $a\in[k]$,
and $\{L(a, b)\}_{a\in[k],b\in[a]}$
is a partition of index set $[\mHat]$.

\paragraph{Step 3- batching variables based on partitions
$\{L(a, b)\}_{a\in[k],b\in[a]}$}
In this step, we construct 
a solution 
$(\costHat, \{\qHat(a, b), \muHat(a, b),\distanceHat(a, b),
\distanceHatStar(a, b)\}_{a\in[k],b\in[a]})$
which satisfies 
constraints \SFRCmonotonicity, \SFRCtriangleineq, 
\SFRCdistance, \SFRCconnectcost, \SFRCtotalcost\
of program~\SFRP{k,\discountfactor,\costscalar}
as well as constraint~\SFRCcontributionstronger\ (parameterized by $k,\discountfactor,\costscalar$)
defined as follows
\begin{align*}
    \SFRCcontributionstronger \qquad
    &\sum_{a' \in [a:k]}
    \sum_{b' \in [a]}
    \qHat(a',b')\cdot \plus{\discountfactor\cdot \muHat(a', b') - \distanceHat(a',b')}
    \\
    &\qquad 
    +\sum_{a' \in [a:k]}
    \sum_{b' \in [a + 1:a']}
    \qHat(a',b')\cdot \plus{\discountfactor\cdot \muHat(a, a) - \distanceHat(a',b')}
    \leq \costscalar\cdot \costHat
    \qquad 
    a\in[k]
\end{align*}
It is straightforward to see that 
constraint~\SFRCcontribution\ in program~\SFRP{k,\discountfactor,\costscalar}
is a relaxation of \SFRCcontributionstronger.

Briefly speaking, we set $\muHat(a, b)$
(resp.\ $\distanceHat(a, b)$, $\distanceHatStar(a, b)$) 
for each 
two-dimensional index 
$(a, b)$
as the average of $\muHat(\ell)$
(resp.\ $\distanceHat(\ell)$, $\distanceHatStar(\ell)$) 
over single-dimensional index $\ell\in L(a, b)$.
The formal construction is as follows,
\begin{align*}
    & 
    \costHat\primed\gets \costHat
    \\
    a\in[k], b\in[a],
    |L(a,b)|\geq 1:
    \qquad 
    &\muHat\primed(a, b) \gets 
    \sum_{\ell\in L(a, b)} 
    \frac{
    \muHat(\ell)
    }{
    |L(a,b)|
    },
    \quad
    \distanceHat\primed(a, b) \gets 
    \sum_{\ell\in L(a, b)} 
    \frac{ \distanceHat(\ell)
    }{
    |L(a,b)|
    },
    \\
    &\distanceHatStar\primed(a, b) \gets 
    \sum_{\ell\in L(a, b)} 
    \frac{\distanceHatStar(\ell)
    }{
    |L(a,b)|
    },
    \quad
    \qHat\primed(a, b) \gets |L(a,b)|
    \\
    a\in[k], b\in[a],
    |L(a,b)| = 0:
    \qquad 
    &\muHat\primed(a, b) \gets 
    \muHat\primed(a - 1, b),
    \quad
    \distanceHat\primed(a, b) \gets \distanceHat\primed(a - 1, b),
    \\
    &\distanceHatStar\primed(a, b) \gets \distanceHatStar\primed(a - 1, b),
    \quad
    \qHat\primed(a, b) \gets |L(a,b)|
\end{align*}
Note that $|L(a, a)| \geq 1$ for each $a\in[k]$,
since $\ell_a\in L(a, a)$ by definition.
Therefore, the above construction (in particular, the part to handle $|L(a, b)| = 0$) is well-defined. 

It is easy to verify that the constructed solution has the same objective value 
as the original solution.
Now, we verify the feasibility of 
constraints \SFRCmonotonicity, \SFRCtriangleineq, \SFRCcontributionstronger,
\SFRCdistance, \SFRCconnectcost, \SFRCtotalcost, respectively.
\begin{enumerate}
    \item[\SFRCmonotonicity] It is implied by the construction of solution
    and the definition of partitions $\{L(a, b)\}$.
    \item[\SFRCtriangleineq] It is implied by \WFRCtriangleineq\ in
    program~\ref{eq:weakly factor-revealing program}
    and the construction of solution.
    Specifically,
    for any $a, a'\in[k], b\in[a],b'\in[a']$, 
    $a < a'$, 
    suppose\footnote{The arguments for other cases
    (i.e., 
    $|L(a, b)| = 0$ or $|L(a', b')| = 0$)
    are similar 
    and thus omitted here.} 
    $|L(a, b)| \geq 1$, $|L(a', b')| \geq 1$, then
    \begin{align*}
        \discountfactor\cdot \muHat\primed(a', b') 
        &\overset{(a)}{=} 
        \sum_{\ell'\in L(a', b')} \frac{\discountfactor\cdot \muHat(\ell')}{|L(a', b')|}
        \\
        &\overset{(b)}{\leq}
        \sum_{\ell'\in L(a', b')} \left(\frac{1}{|L(a', b')|}
        \sum_{\ell\in L(a, b)} \frac{1}{|L(a, b)|}
        \left(
        \distanceHatStar(\ell) + \distanceHat(\ell) +
        \distanceHat(\ell')
        \right)
        \right)
        \\
        &=
        \sum_{\ell\in L(a, b)} \frac{
        \distanceHatStar(\ell)}{|L(a, b)|}
        +
        \sum_{\ell\in L(a, b)} \frac{
        \distanceHat(\ell)}{|L(a, b)|}
        +
        \sum_{\ell'\in L(a', b')} \frac{\distanceHat(\ell')}{|L(a', b')|}
        \\
        &\overset{(c)}{=}
        \distanceHatStar\primed(a, b) + \distanceHat\primed(a, b) + \distanceHat\primed(a', b')
    \end{align*}
    where equalities~(a) (c) hold due to the solution construction;
    and inequality~(b) holds due to constraint \WFRCtriangleineq\ and 
    the construction of $L(a, b), L(a', b')$, which guarantees
    $\permu(\ell) < \permu(\ell')$ for every $\ell\in L(a, b)$, $\ell'\in L(a', b')$.
    \item[\SFRCcontributionstronger]
    It is implied by \WFRCcontribution\ 
    and the solution construction.
    Specifically, 
    for any $a\in[k]$,  
    \begin{align*}
        \costscalar\cdot \costHat\primed &\overset{(a)}{=} 
        \costscalar\cdot \costHat
        \\
        &\overset{(b)}{\geq} 
         \sum_{\ell'\in[\mHat]: \permu(\ell') \geq \permu(\ell_a)}
     \plus{\min\{\discountfactor\cdot \muHat(\ell_a),\muHat(\ell')\} - 
     \distanceHat(\ell')}
     \\
    &\overset{(c)}{=} 
     \sum_{a'\in[a:k]}\sum_{b'\in[a']}
     \sum_{\ell'\in L(a', b')}
     \plus{\min\{\discountfactor\cdot \muHat(\ell_a),\muHat(\ell')\} - 
     \distanceHat(\ell')}
     \\
     &\overset{(d)}{=}
     \sum_{a'\in[a:k]}
     \left(
     \sum_{b'\in[a]}
     \sum_{\ell'\in L(a', b')}
     \plus{\discountfactor\cdot \muHat(\ell') - 
     \distanceHat(\ell')}
     +
     \sum_{b'\in[a+1:a']}
     \sum_{\ell'\in L(a', b')}
     \plus{\discountfactor\cdot \muHat(\ell_a) - 
     \distanceHat(\ell')}
     \right)
     \\
    & \overset{(e)}{\geq}
    \sum_{a'\in[a:k]}
     \sum_{b'\in[a]}
     |L(a', b')|
     \cdot 
     \plus{
     \sum_{\ell'\in L(a', b')}
     \frac{\discountfactor\cdot \muHat(\ell') - \distanceHat(\ell')}
     {|L(a', b')|}
     }
     \\
     &\qquad\qquad 
    +\sum_{a'\in[a:k]}
     \sum_{b'\in[a+1:a']}
     |L(a', b')|
     \cdot 
     \plus{
     \sum_{\ell'\in L(a', b')}
     \frac{\discountfactor\cdot \muHat(\ell_a) - \distanceHat(\ell')}
     {|L(a', b')|}
     }
     \\
     &\overset{(f)}{=}
    \sum_{a'\in[a:k]}
     \sum_{b'\in[a]}
     \qHat\primed(a', b')
     \cdot 
     \plus{
     \discountfactor\cdot \muHat\primed(a', b') - \distanceHat\primed(a', b')
     }
     \\
     &\qquad\qquad 
    +\sum_{a'\in[a:k]}
     \sum_{b'\in[a+1:a']}
     \qHat\primed(a', b')
     \cdot 
     \plus{
     \discountfactor\cdot \muHat(\ell_a) - \distanceHat\primed(a', b')
     }
     \\
     &\overset{(g)}{\geq}
    \sum_{a'\in[a:k]}
     \sum_{b'\in[a]}
     \qHat\primed(a', b')
     \cdot 
     \plus{
     \discountfactor\cdot \muHat\primed(a', b') - \distanceHat\primed(a', b')
     }
     \\
     &\qquad\qquad 
    +\sum_{a'\in[a:k]}
     \sum_{b'\in[a+1:a']}
     \qHat\primed(a', b')
     \cdot 
     \plus{
     \discountfactor\cdot \muHat\primed(a, a) - \distanceHat\primed(a', b')
     }
    \end{align*}
    where inequalities~(a) (f) hold due to the solution construction;
    inequality~(b) holds due to \WFRCcontribution\ at $\ell = \ell_a$;
    equality~(c) holds due to the construction of sets
    $\{L(a', b')\}_{a'\in[a:n],b'\in[a']}$,
    which guarantees that $\{L(a', b')\}_{a'\in[a:n],b'\in[a']}$
    is a partition of $\{\ell'\in[\mHat]:\permu(\ell')\geq \permu(\ell_a)\}$;
    equality~(d) holds due to the construction of sets
    $\{L(a', b')\}$,
    which guarantees that 
    for every $a' \in [a:n]$
    and every $\ell'\in L(a', b')$,
    $\muHat(\ell') \leq \muHat(\ell_a)$ 
    if and only if $b' \in [a]$;
    inequality~(e) holds due to the convexity of $\plus{\cdot}$;
    and inequality~(g) holds since 
    the solution construction which guarantees 
    $\muHat(\ell_a) \geq \muHat\primed(a, a)$.

    \item[\SFRCdistance]
    It is implied by our initial assumption and the solution construction.
    
    \item[\SFRCconnectcost] 
    It is implied by \WFRCconnectcost\  and the solution construction.
    
    \item[\SFRCtotalcost] 
    It is implied by \WFRCtotalcost\  and the solution construction.
\end{enumerate}

\paragraph{Step 4- converting into a feasible solution of program~\ref{eq:strongly factor-revealing quadratic program}.}
In the last step, 
we further convert the solution obtained in step 3 for 
program~\SFRP{k,\discountfactor,\costscalar}
into a feasible solution for program~\ref{eq:strongly factor-revealing quadratic program}
for an arbitrary $\nHat\in\naturals$.

Note that by simultaneously
scaling $\{\qHat(a, b)\}$ down,
and $\{\muHat(a, b), \distanceHat(a, b), 
\distanceHatStar(a, b)\}$ up
by a 
multiplicative factor $\frac{k}{\nHat}$,\footnote{Namely, 
$\qHat(a, b) \gets \qHat(a, b) \cdot \frac{k}{\nHat}$,
and 
$\muHat(a, b) \gets \muHat(a, b) \cdot \frac{\nHat}{k}$,
$\distanceHat(a, b) \gets 
\distanceHat(a, b) \cdot \frac{\nHat}{k}$,
$\distanceHatStar(a, b) \gets 
\distanceHatStar(a, b) \cdot \frac{\nHat}{k}$.}
the solution still satisfies  
constraints \SFRCmonotonicity, \SFRCtriangleineq, \SFRCcontributionstronger,
\SFRCdistance, \SFRCconnectcost, \SFRCtotalcost.
Furthermore,
$\sum_{a\in[k]}\sum_{b\in[a]}\qHat(a, b) = k$,
and the objective value remains unchanged. 

Next, we first \emph{decompose} the current solution 
in program~\SFRP{k,\discountfactor,\costscalar}
into a solution 
in program~\SFRP{k\primed,\discountfactor,\costscalar}
for some larger $k\primed \in\naturals, k\primed\geq  k$ 
such that $\sum_{a\in[b:k\primed]} \qHat(a, b) \leq \epsilon$
for every $b\in[k\primed]$ and arbitrary small $\epsilon$.
Then, we \emph{batch} this solution in program~\SFRP{k\primed,\discountfactor,\costscalar}
into a feasible solution in program~\ref{eq:strongly factor-revealing quadratic program}
as desired. 

\paragraph{Step 4.a- decomposition.}

\begin{figure}
    \centering
    \subfloat[
    before stretch-modification]{\begin{tikzpicture}[scale=0.55, transform shape]
\begin{axis}[
axis line style=gray,
axis lines=middle,
        ytick = {4.7},
        yticklabels = {$b^*$},
        xtick = \empty,
        xticklabels = \empty,
x label style={at={(axis description cs:0.95,-0.01)},anchor=north},
y label style={at={(axis description cs:0.03,0.95)},anchor=south},
xlabel = {$a$},
ylabel = {$b$},
label style={font=\LARGE},
xmin=0,xmax=11.5,ymin=0,ymax=11.5,
width=0.9\textwidth,
height=0.9\textwidth,
samples=50]

\addplot[dotted, white!10!black] coordinates {
(0.05, 4.7) (4.35, 4.7)};

\addplot[fill=white!90!black] coordinates {
(0.4, 0.4)(0.4, 1.0)(1.0, 1.0)(1.0, 0.4)(0.4, 0.4)
};
\addplot[fill=white!90!black] coordinates {
(1.4, 0.4)(1.4, 1.0)(2.0, 1.0)(2.0, 0.4)(1.4, 0.4)
};
\addplot[fill=white!90!black] coordinates {
(1.4, 1.4)(1.4, 2.0)(2.0, 2.0)(2.0, 1.4)(1.4, 1.4)
};
\addplot[fill=white!90!black] coordinates {
(2.4, 0.4)(2.4, 1.0)(3.0, 1.0)(3.0, 0.4)(2.4, 0.4)
};
\addplot[fill=white!90!black] coordinates {
(2.4, 1.4)(2.4, 2.0)(3.0, 2.0)(3.0, 1.4)(2.4, 1.4)
};
\addplot[fill=white!90!black] coordinates {
(2.4, 2.4)(2.4, 3.0)(3.0, 3.0)(3.0, 2.4)(2.4, 2.4)
};
\addplot[fill=white!90!black] coordinates {
(3.4, 0.4)(3.4, 1.0)(4.0, 1.0)(4.0, 0.4)(3.4, 0.4)
};
\addplot[fill=white!90!black] coordinates {
(3.4, 1.4)(3.4, 2.0)(4.0, 2.0)(4.0, 1.4)(3.4, 1.4)
};
\addplot[fill=white!90!black] coordinates {
(3.4, 2.4)(3.4, 3.0)(4.0, 3.0)(4.0, 2.4)(3.4, 2.4)
};
\addplot[fill=white!90!black] coordinates {
(3.4, 3.4)(3.4, 4.0)(4.0, 4.0)(4.0, 3.4)(3.4, 3.4)
};
\addplot[fill=white!90!black] coordinates {
(4.4, 0.4)(4.4, 1.0)(5.0, 1.0)(5.0, 0.4)(4.4, 0.4)
};
\addplot[fill=white!90!black] coordinates {
(4.4, 1.4)(4.4, 2.0)(5.0, 2.0)(5.0, 1.4)(4.4, 1.4)
};
\addplot[fill=white!90!black] coordinates {
(4.4, 2.4)(4.4, 3.0)(5.0, 3.0)(5.0, 2.4)(4.4, 2.4)
};
\addplot[fill=white!90!black] coordinates {
(4.4, 3.4)(4.4, 4.0)(5.0, 4.0)(5.0, 3.4)(4.4, 3.4)
};
\addplot[fill=black] coordinates {
(4.4, 4.4)(4.4, 5.0)(5.0, 5.0)(5.0, 4.4)(4.4, 4.4)
};
\addplot[fill=white!90!black] coordinates {
(5.4, 0.4)(5.4, 1.0)(6.0, 1.0)(6.0, 0.4)(5.4, 0.4)
};
\addplot[fill=white!90!black] coordinates {
(5.4, 1.4)(5.4, 2.0)(6.0, 2.0)(6.0, 1.4)(5.4, 1.4)
};
\addplot[fill=white!90!black] coordinates {
(5.4, 2.4)(5.4, 3.0)(6.0, 3.0)(6.0, 2.4)(5.4, 2.4)
};
\addplot[fill=white!90!black] coordinates {
(5.4, 3.4)(5.4, 4.0)(6.0, 4.0)(6.0, 3.4)(5.4, 3.4)
};
\addplot[fill=black] coordinates {
(5.4, 4.4)(5.4, 5.0)(6.0, 5.0)(6.0, 4.4)(5.4, 4.4)
};
\addplot[fill=white!90!black] coordinates {
(5.4, 5.4)(5.4, 6.0)(6.0, 6.0)(6.0, 5.4)(5.4, 5.4)
};
\addplot[fill=white!90!black] coordinates {
(6.4, 0.4)(6.4, 1.0)(7.0, 1.0)(7.0, 0.4)(6.4, 0.4)
};
\addplot[fill=white!90!black] coordinates {
(6.4, 1.4)(6.4, 2.0)(7.0, 2.0)(7.0, 1.4)(6.4, 1.4)
};
\addplot[fill=white!90!black] coordinates {
(6.4, 2.4)(6.4, 3.0)(7.0, 3.0)(7.0, 2.4)(6.4, 2.4)
};
\addplot[fill=white!90!black] coordinates {
(6.4, 3.4)(6.4, 4.0)(7.0, 4.0)(7.0, 3.4)(6.4, 3.4)
};
\addplot[fill=black] coordinates {
(6.4, 4.4)(6.4, 5.0)(7.0, 5.0)(7.0, 4.4)(6.4, 4.4)
};
\addplot[fill=white!90!black] coordinates {
(6.4, 5.4)(6.4, 6.0)(7.0, 6.0)(7.0, 5.4)(6.4, 5.4)
};
\addplot[fill=white!90!black] coordinates {
(6.4, 6.4)(6.4, 7.0)(7.0, 7.0)(7.0, 6.4)(6.4, 6.4)
};
\addplot[fill=white!90!black] coordinates {
(7.4, 0.4)(7.4, 1.0)(8.0, 1.0)(8.0, 0.4)(7.4, 0.4)
};
\addplot[fill=white!90!black] coordinates {
(7.4, 1.4)(7.4, 2.0)(8.0, 2.0)(8.0, 1.4)(7.4, 1.4)
};
\addplot[fill=white!90!black] coordinates {
(7.4, 2.4)(7.4, 3.0)(8.0, 3.0)(8.0, 2.4)(7.4, 2.4)
};
\addplot[fill=white!90!black] coordinates {
(7.4, 3.4)(7.4, 4.0)(8.0, 4.0)(8.0, 3.4)(7.4, 3.4)
};
\addplot[fill=black] coordinates {
(7.4, 4.4)(7.4, 5.0)(8.0, 5.0)(8.0, 4.4)(7.4, 4.4)
};
\addplot[fill=white!90!black] coordinates {
(7.4, 5.4)(7.4, 6.0)(8.0, 6.0)(8.0, 5.4)(7.4, 5.4)
};
\addplot[fill=white!90!black] coordinates {
(7.4, 6.4)(7.4, 7.0)(8.0, 7.0)(8.0, 6.4)(7.4, 6.4)
};
\addplot[fill=white!90!black] coordinates {
(7.4, 7.4)(7.4, 8.0)(8.0, 8.0)(8.0, 7.4)(7.4, 7.4)
};
\addplot[fill=white!90!black] coordinates {
(8.4, 0.4)(8.4, 1.0)(9.0, 1.0)(9.0, 0.4)(8.4, 0.4)
};
\addplot[fill=white!90!black] coordinates {
(8.4, 1.4)(8.4, 2.0)(9.0, 2.0)(9.0, 1.4)(8.4, 1.4)
};
\addplot[fill=white!90!black] coordinates {
(8.4, 2.4)(8.4, 3.0)(9.0, 3.0)(9.0, 2.4)(8.4, 2.4)
};
\addplot[fill=white!90!black] coordinates {
(8.4, 3.4)(8.4, 4.0)(9.0, 4.0)(9.0, 3.4)(8.4, 3.4)
};
\addplot[fill=black] coordinates {
(8.4, 4.4)(8.4, 5.0)(9.0, 5.0)(9.0, 4.4)(8.4, 4.4)
};
\addplot[fill=white!90!black] coordinates {
(8.4, 5.4)(8.4, 6.0)(9.0, 6.0)(9.0, 5.4)(8.4, 5.4)
};
\addplot[fill=white!90!black] coordinates {
(8.4, 6.4)(8.4, 7.0)(9.0, 7.0)(9.0, 6.4)(8.4, 6.4)
};
\addplot[fill=white!90!black] coordinates {
(8.4, 7.4)(8.4, 8.0)(9.0, 8.0)(9.0, 7.4)(8.4, 7.4)
};
\addplot[fill=white!90!black] coordinates {
(8.4, 8.4)(8.4, 9.0)(9.0, 9.0)(9.0, 8.4)(8.4, 8.4)
};

\end{axis}

\end{tikzpicture}}
    ~~~~
    \subfloat[
    after stretch-modification]{\begin{tikzpicture}[scale=0.55, transform shape]
\begin{axis}[
axis line style=gray,
axis lines=middle,
        ytick = {4.7},
        yticklabels = {$b^*$},
        xtick = \empty,
        xticklabels = \empty,
x label style={at={(axis description cs:0.95,-0.01)},anchor=north},
y label style={at={(axis description cs:0.03,0.95)},anchor=south},
xlabel = {$a$},
ylabel = {$b$},
label style={font=\LARGE},
xmin=0,xmax=11.5,ymin=0,ymax=11.5,
width=0.9\textwidth,
height=0.9\textwidth,
samples=50]

\addplot[dotted, white!10!black] coordinates {
(0.05, 4.7) (4.35, 4.7)};

\addplot[fill=white!90!black] coordinates {
(0.4, 0.4)(0.4, 1.0)(1.0, 1.0)(1.0, 0.4)(0.4, 0.4)
};
\addplot[fill=white!90!black] coordinates {
(1.4, 0.4)(1.4, 1.0)(2.0, 1.0)(2.0, 0.4)(1.4, 0.4)
};
\addplot[fill=white!90!black] coordinates {
(1.4, 1.4)(1.4, 2.0)(2.0, 2.0)(2.0, 1.4)(1.4, 1.4)
};
\addplot[fill=white!90!black] coordinates {
(2.4, 0.4)(2.4, 1.0)(3.0, 1.0)(3.0, 0.4)(2.4, 0.4)
};
\addplot[fill=white!90!black] coordinates {
(2.4, 1.4)(2.4, 2.0)(3.0, 2.0)(3.0, 1.4)(2.4, 1.4)
};
\addplot[fill=white!90!black] coordinates {
(2.4, 2.4)(2.4, 3.0)(3.0, 3.0)(3.0, 2.4)(2.4, 2.4)
};
\addplot[fill=white!90!black] coordinates {
(3.4, 0.4)(3.4, 1.0)(4.0, 1.0)(4.0, 0.4)(3.4, 0.4)
};
\addplot[fill=white!90!black] coordinates {
(3.4, 1.4)(3.4, 2.0)(4.0, 2.0)(4.0, 1.4)(3.4, 1.4)
};
\addplot[fill=white!90!black] coordinates {
(3.4, 2.4)(3.4, 3.0)(4.0, 3.0)(4.0, 2.4)(3.4, 2.4)
};
\addplot[fill=white!90!black] coordinates {
(3.4, 3.4)(3.4, 4.0)(4.0, 4.0)(4.0, 3.4)(3.4, 3.4)
};
\addplot[fill=white!100!black] coordinates {
(4.4, 0.4)(4.4, 1.0)(5.0, 1.0)(5.0, 0.4)(4.4, 0.4)
};
\addplot[fill=white!100!black] coordinates {
(4.4, 1.4)(4.4, 2.0)(5.0, 2.0)(5.0, 1.4)(4.4, 1.4)
};
\addplot[fill=white!100!black] coordinates {
(4.4, 2.4)(4.4, 3.0)(5.0, 3.0)(5.0, 2.4)(4.4, 2.4)
};
\addplot[fill=white!100!black] coordinates {
(4.4, 3.4)(4.4, 4.0)(5.0, 4.0)(5.0, 3.4)(4.4, 3.4)
};
\addplot[fill=white!100!black] coordinates {
(4.4, 4.4)(4.4, 5.0)(5.0, 5.0)(5.0, 4.4)(4.4, 4.4)
};
\addplot[fill=white!90!black] coordinates {
(5.4, 0.4)(5.4, 1.0)(6.0, 1.0)(6.0, 0.4)(5.4, 0.4)
};
\addplot[fill=white!90!black] coordinates {
(5.4, 1.4)(5.4, 2.0)(6.0, 2.0)(6.0, 1.4)(5.4, 1.4)
};
\addplot[fill=white!90!black] coordinates {
(5.4, 2.4)(5.4, 3.0)(6.0, 3.0)(6.0, 2.4)(5.4, 2.4)
};
\addplot[fill=white!90!black] coordinates {
(5.4, 3.4)(5.4, 4.0)(6.0, 4.0)(6.0, 3.4)(5.4, 3.4)
};
\addplot[fill=white!50!black] coordinates {
(5.4, 4.4)(5.4, 5.0)(6.0, 5.0)(6.0, 4.4)(5.4, 4.4)
};
\addplot[fill=white!50!black] coordinates {
(5.4, 5.4)(5.4, 6.0)(6.0, 6.0)(6.0, 5.4)(5.4, 5.4)
};
\addplot[fill=white!90!black] coordinates {
(6.4, 0.4)(6.4, 1.0)(7.0, 1.0)(7.0, 0.4)(6.4, 0.4)
};
\addplot[fill=white!90!black] coordinates {
(6.4, 1.4)(6.4, 2.0)(7.0, 2.0)(7.0, 1.4)(6.4, 1.4)
};
\addplot[fill=white!90!black] coordinates {
(6.4, 2.4)(6.4, 3.0)(7.0, 3.0)(7.0, 2.4)(6.4, 2.4)
};
\addplot[fill=white!90!black] coordinates {
(6.4, 3.4)(6.4, 4.0)(7.0, 4.0)(7.0, 3.4)(6.4, 3.4)
};
\addplot[fill=white!50!black] coordinates {
(6.4, 4.4)(6.4, 5.0)(7.0, 5.0)(7.0, 4.4)(6.4, 4.4)
};
\addplot[fill=white!50!black] coordinates {
(6.4, 5.4)(6.4, 6.0)(7.0, 6.0)(7.0, 5.4)(6.4, 5.4)
};
\addplot[fill=white!90!black] coordinates {
(6.4, 6.4)(6.4, 7.0)(7.0, 7.0)(7.0, 6.4)(6.4, 6.4)
};
\addplot[fill=white!90!black] coordinates {
(7.4, 0.4)(7.4, 1.0)(8.0, 1.0)(8.0, 0.4)(7.4, 0.4)
};
\addplot[fill=white!90!black] coordinates {
(7.4, 1.4)(7.4, 2.0)(8.0, 2.0)(8.0, 1.4)(7.4, 1.4)
};
\addplot[fill=white!90!black] coordinates {
(7.4, 2.4)(7.4, 3.0)(8.0, 3.0)(8.0, 2.4)(7.4, 2.4)
};
\addplot[fill=white!90!black] coordinates {
(7.4, 3.4)(7.4, 4.0)(8.0, 4.0)(8.0, 3.4)(7.4, 3.4)
};
\addplot[fill=white!50!black] coordinates {
(7.4, 4.4)(7.4, 5.0)(8.0, 5.0)(8.0, 4.4)(7.4, 4.4)
};
\addplot[fill=white!50!black] coordinates {
(7.4, 5.4)(7.4, 6.0)(8.0, 6.0)(8.0, 5.4)(7.4, 5.4)
};
\addplot[fill=white!90!black] coordinates {
(7.4, 6.4)(7.4, 7.0)(8.0, 7.0)(8.0, 6.4)(7.4, 6.4)
};
\addplot[fill=white!90!black] coordinates {
(7.4, 7.4)(7.4, 8.0)(8.0, 8.0)(8.0, 7.4)(7.4, 7.4)
};
\addplot[fill=white!90!black] coordinates {
(8.4, 0.4)(8.4, 1.0)(9.0, 1.0)(9.0, 0.4)(8.4, 0.4)
};
\addplot[fill=white!90!black] coordinates {
(8.4, 1.4)(8.4, 2.0)(9.0, 2.0)(9.0, 1.4)(8.4, 1.4)
};
\addplot[fill=white!90!black] coordinates {
(8.4, 2.4)(8.4, 3.0)(9.0, 3.0)(9.0, 2.4)(8.4, 2.4)
};
\addplot[fill=white!90!black] coordinates {
(8.4, 3.4)(8.4, 4.0)(9.0, 4.0)(9.0, 3.4)(8.4, 3.4)
};
\addplot[fill=white!50!black] coordinates {
(8.4, 4.4)(8.4, 5.0)(9.0, 5.0)(9.0, 4.4)(8.4, 4.4)
};
\addplot[fill=white!50!black] coordinates {
(8.4, 5.4)(8.4, 6.0)(9.0, 6.0)(9.0, 5.4)(8.4, 5.4)
};
\addplot[fill=white!90!black] coordinates {
(8.4, 6.4)(8.4, 7.0)(9.0, 7.0)(9.0, 6.4)(8.4, 6.4)
};
\addplot[fill=white!90!black] coordinates {
(8.4, 7.4)(8.4, 8.0)(9.0, 8.0)(9.0, 7.4)(8.4, 7.4)
};
\addplot[fill=white!90!black] coordinates {
(8.4, 8.4)(8.4, 9.0)(9.0, 9.0)(9.0, 8.4)(8.4, 8.4)
};
\addplot[fill=white!90!black] coordinates {
(9.4, 0.4)(9.4, 1.0)(10.0, 1.0)(10.0, 0.4)(9.4, 0.4)
};
\addplot[fill=white!90!black] coordinates {
(9.4, 1.4)(9.4, 2.0)(10.0, 2.0)(10.0, 1.4)(9.4, 1.4)
};
\addplot[fill=white!90!black] coordinates {
(9.4, 2.4)(9.4, 3.0)(10.0, 3.0)(10.0, 2.4)(9.4, 2.4)
};
\addplot[fill=white!90!black] coordinates {
(9.4, 3.4)(9.4, 4.0)(10.0, 4.0)(10.0, 3.4)(9.4, 3.4)
};
\addplot[fill=white!50!black] coordinates {
(9.4, 4.4)(9.4, 5.0)(10.0, 5.0)(10.0, 4.4)(9.4, 4.4)
};
\addplot[fill=white!50!black] coordinates {
(9.4, 5.4)(9.4, 6.0)(10.0, 6.0)(10.0, 5.4)(9.4, 5.4)
};
\addplot[fill=white!90!black] coordinates {
(9.4, 6.4)(9.4, 7.0)(10.0, 7.0)(10.0, 6.4)(9.4, 6.4)
};
\addplot[fill=white!90!black] coordinates {
(9.4, 7.4)(9.4, 8.0)(10.0, 8.0)(10.0, 7.4)(9.4, 7.4)
};
\addplot[fill=white!90!black] coordinates {
(9.4, 8.4)(9.4, 9.0)(10.0, 9.0)(10.0, 8.4)(9.4, 8.4)
};
\addplot[fill=white!90!black] coordinates {
(9.4, 9.4)(9.4, 10.0)(10.0, 10.0)(10.0, 9.4)(9.4, 9.4)
};

\end{axis}

\end{tikzpicture}}
    \caption{Graphical example illustration of the decomposition procedure in 
    \textbf{step 4.a- decomposition}:
    given an initial solution with $k = 9$ obtained in \textbf{step 3},
    we construct a solution of program~\ref{eq:strongly factor-revealing quadratic program} with $k\primed = 10$.
    Here each solid box corresponds to a two-dimensional index $(a, b)$,
    and its color denotes the magnitude of $\qHat(a, b)$. 
    First, we identify index $b^* = 5$ which maximizes 
    $\sum_{a\in[b, k]}\qHat(a, b)$.
    Then, we let 
    $\muHat(\cdot, b^*), \distanceHat(\cdot,b^*), \distanceHatStar(\cdot, b^*)$ in 
    row $b^*$ be duplicated, and $\qHat(\cdot, b^*)$
    be halved (i.e., break the black boxes into 
    dark gray boxes in row $b^*$ and $b^* + 1$).
    All light gray boxes remain unchanged, but their indexes may be shifted (depending on their positions).
    Finally, we add dummy (i.e., white) boxes whose 
    $\qHat\primed(b^*, \cdot) = 0$ and 
    duplicate $\muHat(b^*, b), \distanceHat(b^*,b), \distanceHatStar(b^*,b)$ from 
    $\muHat(b^* - 1,b)$ if $b \in[b^* - 1]$,
    and $\muHat(b^*, b^*)$ if $b = b^*$.
    % $\muHat\primed(b^*, b) = \distanceHat\primed(b^*,b)=
    % \distanceHatStar\primed(b^*,\cdot) \gets \muHat(b^*, \cdot)$.
    }
    \label{fig:stretch}
\end{figure}
Here we use the following iterative decomposition argument.
Let $b^*$ be the index which maximizes $\sum_{a\in[b:k]}
\qHat(a, b)$.
Suppose $\sum_{a\in[b^*:k]}
\qHat(a, b^*) \geq \epsilon$.
Consider the following decomposition procedure which increases $k$ by one.\footnote{In step 4.a, we use superscript $\dagger$ to denote the solution after the decomposition modification.}
\begin{align*}
    \muHat\primed(a, b) &\gets \left\{
    \begin{array}{ll}
    \muHat(a - 1, b - 1)     
    &  \forall b\in[b^* + 1:k + 1], a\in[b:k + 1]\\
    \muHat(a - 1, b)     
    &  \forall  b\in[1:b^*], a\in [b^* + 1: k + 1] \\
    \muHat(a, b) 
    & \forall b \in[1:b^* - 1], a \in [b, b^* - 1] \\
    \muHat(b^* - 1, b)
    & \forall b \in[1:b^* - 1], a = b^*\\
    \muHat(b^*, b^*)
    &\;~ b = b^*, a = b^*
    \end{array}
    \right.
    \\
    \distanceHat\primed(a, b) &\gets \left\{
    \begin{array}{ll}
    \distanceHat(a - 1, b - 1)     
    &  \forall b\in[b^* + 1:k + 1], a\in[b:k + 1]\\
    \distanceHat(a - 1, b)     
    &  \forall  b\in[1:b^*], a\in [b^* + 1: k + 1] \\
    \distanceHat(a, b) 
    & \forall b \in[1:b^* - 1], a \in [b, b^* - 1] \\
    \muHat(b^* - 1, b)
    & \forall b \in[1:b^* - 1], a = b^*\\
    \muHat(b^*, b^*)
    &\;~ b = b^*, a = b^*
    \end{array}
    \right.
    \\
    \distanceHatStar\primed(a, b) &\gets \left\{
    \begin{array}{ll}
    \distanceHatStar(a - 1, b - 1)     
    &  \forall b\in[b^* + 1:k + 1], a\in[b:k + 1]\\
    \distanceHatStar(a - 1, b)     
    &  \forall  b\in[1:b^*], a\in [b^* + 1: k + 1] \\
    \distanceHatStar(a, b) 
    & \forall b \in[1:b^* - 1], a \in [b, b^* - 1] \\
    \muHat(b^* - 1, b)
    & \forall b \in[1:b^* - 1], a = b^*\\
    \muHat(b^*, b^*)
    &\;~ b = b^*, a = b^*
    \end{array}
    \right.
    \\
    \qHat\primed(a, b) &\gets \left\{
    \begin{array}{ll}
    \qHat(a - 1, b - 1)     
    &  \forall b\in[b^* + 2:k + 1], a\in[b:k + 1]\\
    \frac{1}{2} \cdot \qHat(a - 1, b^*)
    &
    \forall b \in [b^*: b^* + 1], a \in[b^* + 1: k + 1] 
    \\
    \qHat(a, b) 
    &
    \forall b \in [1: b^* - 1], a\in[b: b^* - 1]
    \\
    \qHat(a - 1, b)
    &
    \forall b \in [1: b^* - 1], a \in [b^* + 1, k + 1] 
    \\
    0
    & 
    \forall b \in [1:b^*], a = b^*
    \end{array}
    \right.
    \\
    \costHat\primed &\gets \costHat,
    \qquad 
    k\primed \gets k + 1
\end{align*}
See a graphical illustration of the decomposition procedure in 
\Cref{fig:stretch}.

It is straightforward to verify that the objective value 
remains unchanged after the decomposition procedure,
and 
constraints \SFRCmonotonicity, \SFRCtriangleineq, \SFRCcontributionstronger,
\SFRCdistance, \SFRCconnectcost, \SFRCtotalcost, 
as well as
$\sum_{a\in[n\primed]}\sum_{b\in[a]}\qHat\primed(a, b) = \nHat$ 
are satisfied.
The only non-trivial verification is the feasibility of 
constraint is \SFRCtriangleineq.
Here, we verify two cases respectively.\footnote{The arguments for 
the other cases are similar and thus omitted here.}
\begin{itemize}
    \item[-] 
    For $a\in[1:b^* - 1]$, $a' = b^*$,
    $b\in[a], b'\in[a']$, note that 
    \begin{align*}
        \discountfactor\cdot \muHat\primed(a', b') 
        &\overset{(a)}{=}
        % \muHat(b^*, b')
        % \overset{(b)}{=}
        \discountfactor\cdot \distanceHat\primed(a', b')
        \leq 
        \distanceHatStar\primed(a, b)
        +
        \distanceHat\primed(a, b)
        +
        \distanceHat\primed(a', b')
    \end{align*}
    where equality~(a) holds by construction.
    
    \item[-] For $a = b^*$, $a'\in[b^* + 2:k\primed]$, $b\in[a - 1], 
    b'\in[a']$, note that 
    \begin{align*}
         \discountfactor\cdot \muHat\primed(a', b') 
        &
        \overset{(a)}{=}
        \left\{
        \begin{array}{ll}
         \discountfactor\cdot \muHat(a' - 1, b' - 1)  
         &  
         \quad \text{if $b' \in [b^* + 1:k\primed]$}\\
         \discountfactor\cdot \muHat(a' - 1, b')    & 
         \quad \text{if $b' \in [1:b^*]$}
        \end{array}
        \right.
        \\
        &\overset{(b)}{\leq}
        \left\{
        \begin{array}{ll}
         \distanceHatStar(b^* - 1, b) + 
         \distanceHat(b^* - 1, b) +
         \distanceHat(a' - 1, b' - 1)  
         &  
         \quad \text{if $b' \in [b^* + 1:k\primed]$}\\
         \distanceHatStar(b^* - 1, b) + 
         \distanceHat(b^* - 1, b) +
         \distanceHat(a' - 1, b')    & 
         \quad \text{if $b' \in [1:b^*]$}
        \end{array}
        \right.
        \\
        &\overset{(c)}{\leq}
        \left\{
        \begin{array}{ll}
         \muHat(b^* - 1, b) + 
         \muHat(b^* - 1, b) +
         \distanceHat(a' - 1, b' - 1)  
         &  
         \quad \text{if $b' \in [b^* + 1:k\primed]$}\\
         \muHat(b^* - 1, b) + 
         \muHat(b^* - 1, b) +
         \distanceHat(a' - 1, b')    & 
         \quad \text{if $b' \in [1:b^*]$}
        \end{array}
        \right.
    \\
    &\overset{(d)}{=}
    \distanceHatStar\primed(a, b) + \distanceHat\primed(a, b)
    +
    \distanceHat\primed(a', b')
    \end{align*}
    where equalities~(a) (d) hold by construction;
    inequality~(b) holds due to \SFRCtriangleineq\ in the original
    solution;
    % at $a = b^* - 1$, $b = b$;
    and inequality~(c) holds due to \SFRCdistance\ \SFRCconnectcost\
    in the original solution.
    
\end{itemize}

\paragraph{Step 4.b- batching.}
Suppose we repeat \textbf{step 4.a- decomposition} 
% on solution $(\costHat, \{\qHat(a, b), \muHat(a, b),\distanceHat(a, b)$,
% $\distanceHatStar(a, b)\}_{a\in[k\primed], b\in[a]})$
until the following event happens:\footnote{It is guaranteed 
that the event happens in the limit.
% By
% repeating \textbf{step 2.a- stretch},
% in the limit,
% we obtain a continuous version of solution
%  $(\costHat, \{\qHat(a, b), \muHat(a, b),\distanceHat(a, b),
% \distanceHatStar(a, b)\})$, i.e., 
% index $(a, b)$ is no longer discrete, and $\{\qHat(a, b)\}$
% becomes density.
% Thus, it is guaranteed that 
% the event happens in the limit.
For ease of presentation, we assume the event 
happens when $k$ is still finite. 
Our analysis extends straightforward when the event happens in the limit.} 
there exists sequence $0 = b_0 < b_1 < b_2 < \dots
% <b_{\tau} <\dots 
< b_{\nHat} = k$
such that
for every $t\in[\nHat]$,
$\sum_{b\in [b_{t - 1} + 1: b_t]}
\sum_{a\in[b:k]}\qHat(a, b) = 1$.

To simplify the notation, 
for each $t\in[\nHat], \tau \in[t]$
define $C(t, \tau) = \{(a, b):
a\in[b_{t - 1} + 1:b_t],
b\in[b_{\tau - 1} + 1:\min\{a, b_\tau\}\}$.
By definition, $\{C(t, \tau)\}_{t\in[\nHat], \tau\in[t]}$
is a partition of index set $\{(a, b):a\in[k], b\in[a]\}$.

Now we construct a feasible solution of program~\ref{eq:strongly factor-revealing quadratic program} using a batching procedure defined as follows:\footnote{In step 4.b, we use superscript $\dagger$ to denote the solution after batching.}
\begin{align*}
    &\qquad \costHat\primed \gets \costHat
    \\
    t \in [\nHat],
    \tau \in [t]:&
    \qquad 
    \muHat\primed(t, \tau) \gets
    \frac{
    \sum_{(a, b)\in C(t, \tau)}
    % \sum_{a\in[b_{t - 1} + 1: b_t]}
    % \sum_{b\in[b_{\tau - 1} + 1: \min\{a, b_{\tau}\}]}
    \qHat(a, b) \cdot \muHat(a, b)
    }{
    \sum_{(a, b)\in C(t, \tau)}
    % \sum_{a\in[b_{t - 1} + 1: b_t]}
    % \sum_{b\in[b_{\tau - 1} + 1: \min\{a, b_{\tau}\}]}
    \qHat(a, b) 
    }\\&
    \qquad 
    \distanceHat\primed(t, \tau) \gets
    \frac{
    \sum_{(a, b)\in C(t, \tau)}
    % \sum_{a\in[b_{t - 1} + 1: b_t]}
    % \sum_{b\in[b_{\tau - 1} + 1: \min\{a, b_{\tau}\}]}
    \qHat(a, b) \cdot \distanceHat(a, b)
    }{
    \sum_{(a, b)\in C(t, \tau)}
    % \sum_{a\in[b_{t - 1} + 1: b_t]}
    % \sum_{b\in[b_{\tau - 1} + 1: \min\{a, b_{\tau}\}]}
    \qHat(a, b) 
    }\\&
    \qquad 
    \distanceHatStar\primed(t, \tau) \gets
    \frac{
    \sum_{(a, b)\in C(t, \tau)}
    % \sum_{a\in[b_{t - 1} + 1: b_t]}
    % \sum_{b\in[b_{\tau - 1} + 1: \min\{a, b_{\tau}\}]}
    \qHat(a, b) \cdot \distanceHatStar(a, b)
    }{
    \sum_{(a, b)\in C(t, \tau)}
    % \sum_{a\in[b_{t - 1} + 1: b_t]}
    % \sum_{b\in[b_{\tau - 1} + 1: \min\{a, b_{\tau}\}]}
    \qHat(a, b) 
    }\\&
    \qquad 
    \qHat\primed(t, \tau) \gets
    {
    \sum_{(a, b)\in C(t, \tau)}
    % \sum_{a\in[b_{t - 1} + 1: b_t]}
    % \sum_{b\in[b_{\tau - 1} + 1: \min\{a, b_{\tau}\}]}
    \qHat(a, b) 
    }
\end{align*}
When $
    \sum_{(a, b)\in C(t, \tau)} \qHat(a, b)= 0$,
    % for some 
% $t\in[\nHat], \tau\in[t]$,
we set $\muHat\primed(t,\tau),
\distanceHat\primed(t,\tau),
\distanceHatStar(t, \tau)$
to
be the unweighted average,
and $\qHat\primed(t, \tau) = 0$.
See a graphical illustration of the batching procedure
in \Cref{fig:compression}.

\begin{figure}
    \centering
    \begin{tikzpicture}[scale=0.55, transform shape]
\begin{axis}[
axis line style=gray,
axis lines=middle,
        ytick = {1.7,4.7, 5.7, 9.7},
        yticklabels = {$b_1$, $b_2$, $b_3$, $b_4$},
        xtick = \empty,
        xticklabels = \empty,
x label style={at={(axis description cs:0.95,-0.01)},anchor=north},
y label style={at={(axis description cs:0.03,0.95)},anchor=south},
xlabel = {$a$},
ylabel = {$b$},
label style={font=\LARGE},
xmin=0,xmax=11.5,ymin=0,ymax=11.5,
width=0.9\textwidth,
height=0.9\textwidth,
samples=50]

\addplot[dotted, white!10!black] coordinates {
(0.05, 1.7) (1.35, 1.7)};

\addplot[dotted, white!10!black] coordinates {
(0.05, 4.7) (4.35, 4.7)};

\addplot[dotted, white!10!black] coordinates {
(0.05, 5.7) (5.35, 5.7)};

\addplot[dotted, white!10!black] coordinates {
(0.05, 9.7) (9.35, 9.7)};

\addplot[fill=white!90!black] coordinates {
(0.4, 0.4)(0.4, 1.0)(1.0, 1.0)(1.0, 0.4)(0.4, 0.4)
};
\addplot[fill=white!90!black] coordinates {
(1.4, 0.4)(1.4, 1.0)(2.0, 1.0)(2.0, 0.4)(1.4, 0.4)
};
\addplot[fill=white!90!black] coordinates {
(1.4, 1.4)(1.4, 2.0)(2.0, 2.0)(2.0, 1.4)(1.4, 1.4)
};
\addplot[fill=white!90!black] coordinates {
(2.4, 0.4)(2.4, 1.0)(3.0, 1.0)(3.0, 0.4)(2.4, 0.4)
};
\addplot[fill=white!90!black] coordinates {
(2.4, 1.4)(2.4, 2.0)(3.0, 2.0)(3.0, 1.4)(2.4, 1.4)
};
\addplot[fill=white!90!black] coordinates {
(2.4, 2.4)(2.4, 3.0)(3.0, 3.0)(3.0, 2.4)(2.4, 2.4)
};
\addplot[fill=white!90!black] coordinates {
(3.4, 0.4)(3.4, 1.0)(4.0, 1.0)(4.0, 0.4)(3.4, 0.4)
};
\addplot[fill=white!90!black] coordinates {
(3.4, 1.4)(3.4, 2.0)(4.0, 2.0)(4.0, 1.4)(3.4, 1.4)
};
\addplot[fill=white!90!black] coordinates {
(3.4, 2.4)(3.4, 3.0)(4.0, 3.0)(4.0, 2.4)(3.4, 2.4)
};
\addplot[fill=white!90!black] coordinates {
(3.4, 3.4)(3.4, 4.0)(4.0, 4.0)(4.0, 3.4)(3.4, 3.4)
};
\addplot[fill=white!90!black] coordinates {
(4.4, 0.4)(4.4, 1.0)(5.0, 1.0)(5.0, 0.4)(4.4, 0.4)
};
\addplot[fill=white!90!black] coordinates {
(4.4, 1.4)(4.4, 2.0)(5.0, 2.0)(5.0, 1.4)(4.4, 1.4)
};
\addplot[fill=white!90!black] coordinates {
(4.4, 2.4)(4.4, 3.0)(5.0, 3.0)(5.0, 2.4)(4.4, 2.4)
};
\addplot[fill=white!90!black] coordinates {
(4.4, 3.4)(4.4, 4.0)(5.0, 4.0)(5.0, 3.4)(4.4, 3.4)
};
\addplot[fill=white!90!black] coordinates {
(4.4, 4.4)(4.4, 5.0)(5.0, 5.0)(5.0, 4.4)(4.4, 4.4)
};
\addplot[fill=white!90!black] coordinates {
(5.4, 0.4)(5.4, 1.0)(6.0, 1.0)(6.0, 0.4)(5.4, 0.4)
};
\addplot[fill=white!90!black] coordinates {
(5.4, 1.4)(5.4, 2.0)(6.0, 2.0)(6.0, 1.4)(5.4, 1.4)
};
\addplot[fill=white!90!black] coordinates {
(5.4, 2.4)(5.4, 3.0)(6.0, 3.0)(6.0, 2.4)(5.4, 2.4)
};
\addplot[fill=white!90!black] coordinates {
(5.4, 3.4)(5.4, 4.0)(6.0, 4.0)(6.0, 3.4)(5.4, 3.4)
};
\addplot[fill=white!90!black] coordinates {
(5.4, 4.4)(5.4, 5.0)(6.0, 5.0)(6.0, 4.4)(5.4, 4.4)
};
\addplot[fill=white!90!black] coordinates {
(5.4, 5.4)(5.4, 6.0)(6.0, 6.0)(6.0, 5.4)(5.4, 5.4)
};
\addplot[fill=white!90!black] coordinates {
(6.4, 0.4)(6.4, 1.0)(7.0, 1.0)(7.0, 0.4)(6.4, 0.4)
};
\addplot[fill=white!90!black] coordinates {
(6.4, 1.4)(6.4, 2.0)(7.0, 2.0)(7.0, 1.4)(6.4, 1.4)
};
\addplot[fill=white!90!black] coordinates {
(6.4, 2.4)(6.4, 3.0)(7.0, 3.0)(7.0, 2.4)(6.4, 2.4)
};
\addplot[fill=white!90!black] coordinates {
(6.4, 3.4)(6.4, 4.0)(7.0, 4.0)(7.0, 3.4)(6.4, 3.4)
};
\addplot[fill=white!90!black] coordinates {
(6.4, 4.4)(6.4, 5.0)(7.0, 5.0)(7.0, 4.4)(6.4, 4.4)
};
\addplot[fill=white!90!black] coordinates {
(6.4, 5.4)(6.4, 6.0)(7.0, 6.0)(7.0, 5.4)(6.4, 5.4)
};
\addplot[fill=white!90!black] coordinates {
(6.4, 6.4)(6.4, 7.0)(7.0, 7.0)(7.0, 6.4)(6.4, 6.4)
};
\addplot[fill=white!90!black] coordinates {
(7.4, 0.4)(7.4, 1.0)(8.0, 1.0)(8.0, 0.4)(7.4, 0.4)
};
\addplot[fill=white!90!black] coordinates {
(7.4, 1.4)(7.4, 2.0)(8.0, 2.0)(8.0, 1.4)(7.4, 1.4)
};
\addplot[fill=white!90!black] coordinates {
(7.4, 2.4)(7.4, 3.0)(8.0, 3.0)(8.0, 2.4)(7.4, 2.4)
};
\addplot[fill=white!90!black] coordinates {
(7.4, 3.4)(7.4, 4.0)(8.0, 4.0)(8.0, 3.4)(7.4, 3.4)
};
\addplot[fill=white!90!black] coordinates {
(7.4, 4.4)(7.4, 5.0)(8.0, 5.0)(8.0, 4.4)(7.4, 4.4)
};
\addplot[fill=white!90!black] coordinates {
(7.4, 5.4)(7.4, 6.0)(8.0, 6.0)(8.0, 5.4)(7.4, 5.4)
};
\addplot[fill=white!90!black] coordinates {
(7.4, 6.4)(7.4, 7.0)(8.0, 7.0)(8.0, 6.4)(7.4, 6.4)
};
\addplot[fill=white!90!black] coordinates {
(7.4, 7.4)(7.4, 8.0)(8.0, 8.0)(8.0, 7.4)(7.4, 7.4)
};
\addplot[fill=white!90!black] coordinates {
(8.4, 0.4)(8.4, 1.0)(9.0, 1.0)(9.0, 0.4)(8.4, 0.4)
};
\addplot[fill=white!90!black] coordinates {
(8.4, 1.4)(8.4, 2.0)(9.0, 2.0)(9.0, 1.4)(8.4, 1.4)
};
\addplot[fill=white!90!black] coordinates {
(8.4, 2.4)(8.4, 3.0)(9.0, 3.0)(9.0, 2.4)(8.4, 2.4)
};
\addplot[fill=white!90!black] coordinates {
(8.4, 3.4)(8.4, 4.0)(9.0, 4.0)(9.0, 3.4)(8.4, 3.4)
};
\addplot[fill=white!90!black] coordinates {
(8.4, 4.4)(8.4, 5.0)(9.0, 5.0)(9.0, 4.4)(8.4, 4.4)
};
\addplot[fill=white!90!black] coordinates {
(8.4, 5.4)(8.4, 6.0)(9.0, 6.0)(9.0, 5.4)(8.4, 5.4)
};
\addplot[fill=white!90!black] coordinates {
(8.4, 6.4)(8.4, 7.0)(9.0, 7.0)(9.0, 6.4)(8.4, 6.4)
};
\addplot[fill=white!90!black] coordinates {
(8.4, 7.4)(8.4, 8.0)(9.0, 8.0)(9.0, 7.4)(8.4, 7.4)
};
\addplot[fill=white!90!black] coordinates {
(8.4, 8.4)(8.4, 9.0)(9.0, 9.0)(9.0, 8.4)(8.4, 8.4)
};
\addplot[fill=white!90!black] coordinates {
(9.4, 0.4)(9.4, 1.0)(10.0, 1.0)(10.0, 0.4)(9.4, 0.4)
};
\addplot[fill=white!90!black] coordinates {
(9.4, 1.4)(9.4, 2.0)(10.0, 2.0)(10.0, 1.4)(9.4, 1.4)
};
\addplot[fill=white!90!black] coordinates {
(9.4, 2.4)(9.4, 3.0)(10.0, 3.0)(10.0, 2.4)(9.4, 2.4)
};
\addplot[fill=white!90!black] coordinates {
(9.4, 3.4)(9.4, 4.0)(10.0, 4.0)(10.0, 3.4)(9.4, 3.4)
};
\addplot[fill=white!90!black] coordinates {
(9.4, 4.4)(9.4, 5.0)(10.0, 5.0)(10.0, 4.4)(9.4, 4.4)
};
\addplot[fill=white!90!black] coordinates {
(9.4, 5.4)(9.4, 6.0)(10.0, 6.0)(10.0, 5.4)(9.4, 5.4)
};
\addplot[fill=white!90!black] coordinates {
(9.4, 6.4)(9.4, 7.0)(10.0, 7.0)(10.0, 6.4)(9.4, 6.4)
};
\addplot[fill=white!90!black] coordinates {
(9.4, 7.4)(9.4, 8.0)(10.0, 8.0)(10.0, 7.4)(9.4, 7.4)
};
\addplot[fill=white!90!black] coordinates {
(9.4, 8.4)(9.4, 9.0)(10.0, 9.0)(10.0, 8.4)(9.4, 8.4)
};
\addplot[fill=white!90!black] coordinates {
(9.4, 9.4)(9.4, 10.0)(10.0, 10.0)(10.0, 9.4)(9.4, 9.4)
};

\addplot[dashed, line width=0.8mm] coordinates {
(0.3, 0.3)(0.3, 2.1)(2.1, 2.1)(2.1, 0.3)(0.3, 0.3)
};
\addplot[dashed, line width=0.8mm] coordinates {
(2.3, 0.3)(2.3, 2.1)(5.1, 2.1)(5.1, 0.3)(2.3, 0.3)
};
\addplot[dashed, line width=0.8mm] coordinates {
(2.3, 2.3)(2.3, 5.1)(5.1, 5.1)(5.1, 2.3)(2.3, 2.3)
};
\addplot[dashed, line width=0.8mm] coordinates {
(5.3, 0.3)(5.3, 2.1)(6.1, 2.1)(6.1, 0.3)(5.3, 0.3)
};
\addplot[dashed, line width=0.8mm] coordinates {
(5.3, 2.3)(5.3, 5.1)(6.1, 5.1)(6.1, 2.3)(5.3, 2.3)
};
\addplot[dashed, line width=0.8mm] coordinates {
(5.3, 5.3)(5.3, 6.1)(6.1, 6.1)(6.1, 5.3)(5.3, 5.3)
};
\addplot[dashed, line width=0.8mm] coordinates {
(6.3, 0.3)(6.3, 2.1)(10.1, 2.1)(10.1, 0.3)(6.3, 0.3)
};
\addplot[dashed, line width=0.8mm] coordinates {
(6.3, 2.3)(6.3, 5.1)(10.1, 5.1)(10.1, 2.3)(6.3, 2.3)
};
\addplot[dashed, line width=0.8mm] coordinates {
(6.3, 5.3)(6.3, 6.1)(10.1, 6.1)(10.1, 5.3)(6.3, 5.3)
};
\addplot[dashed, line width=0.8mm] coordinates {
(6.3, 6.3)(6.3, 10.1)(10.1, 10.1)(10.1, 6.3)(6.3, 6.3)
};

\end{axis}

\end{tikzpicture}
    \caption{Graphical example illustration of \textbf{step 4.b- batching}:
    given an initial solution with $k = 10$ in \textbf{step 4.a}, 
    we construct a feasible solution of program~\ref{eq:strongly factor-revealing quadratic program} with $\nHat = 4$.
    Here each solid box corresponds to an two-dimensional index $(a, b)$ of the initial solution.
    First, we identify $b_1 = 2, b_2 = 5, b_3 = 6, b_4 = 10$
    such that 
    for every $t\in[\nHat]$,
    $\sum_{b\in [b_{t - 1} + 1: b_t]}
    \sum_{a\in[b:\nHat]}\qHat(a, b) = 1$.
    Then, we take the average for each dashed box,
    which corresponds to an two-dimensional index $(t, \tau)$ in program~\ref{eq:strongly factor-revealing quadratic program}.
    }
    \label{fig:compression}
\end{figure}

It is straightforward to verify that 
the constructed solution has the same objective value.
Finally, we verify that the constructed solution 
satisfies all constraints in program~\ref{eq:strongly factor-revealing quadratic program}, which finishes our proof.
\begin{enumerate}
    \item[\SFRCmonotonicity] It is implied by the solution construction 
    and \SFRCmonotonicity\ in the original solution.
    \item[\SFRCtriangleineq] It follows a similar argument as 
    \textbf{step 3} and is omitted here.
    \item[\SFRCcontribution] It follows a similar argument as 
    \textbf{step 3}.
    Specifically, for any $t\in[\nHat]$, we have
    \begin{align*}
        \costscalar\cdot \costHat\primed 
        &\overset{(a)}{=}
        \costscalar\cdot \costHat
        \\
        &\overset{(b)}{\geq}
        \sum_{a'\in[b_t:k]}
     \sum_{b'\in[a]}
     \qHat(a', b')
     \cdot 
     \plus{
     \discountfactor\cdot \muHat(a', b') - \distanceHat(a', b')
     }
     \\
     &\qquad\qquad 
     +\sum_{a'\in[b_t:k]}
     \sum_{b'\in[a+1:a']}
     \qHat(a', b')
     \cdot 
     \plus{
     \discountfactor\cdot \muHat(b_t, b_t) - \distanceHat(a', b')
     }
        \\
        &\overset{}{\geq}
        \sum_{a'\in[b_t + 1:k]}
     \sum_{b'\in[a]}
     \qHat(a', b')
     \cdot 
     \plus{
     \discountfactor\cdot \muHat(a', b') - \distanceHat(a', b')
     }
     \\
     &\qquad
     +\sum_{a'\in[b_t + 1:k]}
     \sum_{b'\in[a+1:a']}
     \qHat(a', b')
     \cdot 
     \plus{
     \discountfactor\cdot \muHat(b_t, b_t) - \distanceHat(a', b')
     }
        \\
        &\overset{(c)}{\geq}
        \sum_{t' \in[t + 1:\nHat]}
        \sum_{\tau'\in[t]}
        \left(
        \sum_{(a', b')\in C(t', \tau')}\qHat(a', b')
        \right)
        \cdot 
        \plus
        { 
        \frac{
        \sum_{(a', b')\in C(t', \tau')}\qHat(a', b')\cdot 
        (\discountfactor\cdot \muHat(a',b') - \distanceHat(a', b'))
        }{
        \sum_{(a', b')\in C(t', \tau')}\qHat(a', b')
        }
        }
     \\
     &\qquad
        +\sum_{t' \in[t + 1:\nHat]}
        \sum_{\tau'\in[t + 1:t']}
        \left(
        \sum_{(a', b')\in C(t', \tau')}\qHat(a', b')
        \right)
        \cdot 
        \plus
        { 
        \frac{
        \sum_{(a', b')\in C(t', \tau')}\qHat(a', b')\cdot 
        (\discountfactor\cdot \muHat(b_t,b_t) - \distanceHat(a', b'))
        }{
        \sum_{(a', b')\in C(t', \tau')}\qHat(a', b')
        }
        }
        \\
        &\overset{(d)}{=}
        \sum_{t' \in[t + 1:\nHat]}
        \sum_{\tau'\in[t]}
        \qHat\primed(t',\tau')
        \cdot 
        \plus
        { 
        \discountfactor\cdot \muHat\primed(t',\tau') -
        \distanceHat\primed(t', \tau')
        }
     \\
     &\qquad
        +\sum_{t' \in[t + 1:\nHat]}
        \sum_{\tau'\in[t + 1:t']}
        \qHat\primed(t',\tau')
        \cdot 
        \plus
        {
        \discountfactor\cdot \muHat(b_t, b_t) - 
        \distanceHat\primed(t', \tau')
        }
        \\
        &\overset{(e)}{\geq}
        \sum_{t' \in[t + 1:\nHat]}
        \sum_{\tau'\in[t]}
        \qHat\primed(t',\tau')
        \cdot 
        \plus
        { 
        \discountfactor\cdot \muHat\primed(t',\tau') -
        \distanceHat\primed(t', \tau')
        }
     \\
     &\qquad
        +\sum_{t' \in[t + 1:\nHat]}
        \sum_{\tau'\in[t + 1:t']}
        \qHat\primed(t',\tau')
        \cdot 
        \plus
        {
        \discountfactor\cdot \muHat\primed(t, t) - 
        \distanceHat\primed(t', \tau')
        }
    \end{align*}
    where equalities~(a) (d) hold by construction;
    inequality~(b) holds due to \SFRCcontributionstronger\ at $a = b_t$
    in the original solution;
    inequality~(c) holds due to the convexity of $\plus{\cdot}$;
    and 
    inequality~(e) holds due to 
    the construction of 
    $\muHat\primed(t, t)$ and 
    \SFRCmonotonicity\ in the original solution,
    which guarantees $\muHat\primed(t, t) \leq 
    \muHat(b_t, b_t)$.
    % \item[(iv.2)] It follows a similar argument as 
    % \textbf{step 1}.
    % Specifically,
    % \begin{align*}
    %     ss
    % \end{align*}
    \item[\SFRCdistance] It is implied by the solution construction
    and \SFRCdistance\ in the original solution.
    \item[\SFRCconnectcost] It is implied by the solution construction
    and \SFRCconnectcost\ in the original solution.
    \item[\SFRCtotalcost] It is implied by the solution construction
    and \SFRCtotalcost\ in the original solution.
    \item[\SFRCdensity] It is implied by the definition 
    of sequence $(b_1, \dots, b_{\nHat})$.
\end{enumerate}
\end{proof}

\end{document}